\newcommand{\comm}[2]{\left[#1,#2\right]}
\newcommand{\Hor}{\mathscr{H}} 
\newcommand{\cB}{\mathcal{B}} 
\newcommand{\sB}{\mathscr{B}} 
\newcommand{\sC}{\mathscr{C}} 
\newcommand{\sD}{\mathscr{D}} 
\newcommand{\M}{\mathcal{M}} 
\newcommand{\sK}{\mathscr{K}} 
\newcommand{\N}{\mathcal{N}} 
\newcommand{\cT}{\mathcal{T}} 
\newcommand{\cC}{\mathcal{C}} 
\newcommand{\K}{\mathsf{K}} 
\newcommand{\fA}{\mathfrak{A}} 
\newcommand{\sA}{\mathscr{A}} 
\newcommand{\F}{\mathscr{F}} 
\newcommand{\W}{\mathscr{W}} 
\theoremstyle{plain}
\newtheorem{thm}{Theorem}[subsection]
\newtheorem{lem}[thm]{Lemma}
\newtheorem{prop}[thm]{Proposition}
\theoremstyle{definition}
\newtheorem{mydef}[thm]{Definition}
\numberwithin{equation}{section}
\begin{document}

\setlength{\abovecaptionskip}{0.0cm}
\setlength{\belowcaptionskip}{0.0cm}
\setlength{\baselineskip}{24pt}

\pagestyle{fancy}
\lhead{}
\chead{}
\rhead{\thepage}
\lfoot{}
\cfoot{}
\rfoot{}

\fancypagestyle{plain}
{
	\fancyhf{}
	\lhead{}
	\chead{}
	\rhead{\thepage}
	\lfoot{}
	\cfoot{}
	\rfoot{}
}

\renewcommand{\headrulewidth}{0pt}


\frontmatter 

\thispagestyle{empty}

\begin{figure}[h]
	\includegraphics[scale=0.8]{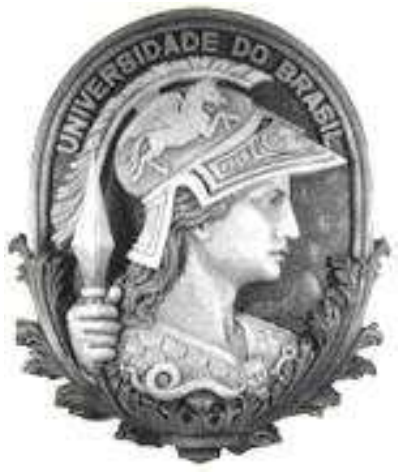}
\end{figure}

\vspace{15pt}

\begin{center}

\textbf{UNIVERSIDADE FEDERAL DO RIO DE JANEIRO}

\textbf{INSTITUTO DE F{\'I}SICA}

\vspace{30pt}

{\Large \bf Explicit construction of Hadamard states for Quantum Field Theory in curved spacetimes}

\vspace{25pt}

{\large \bf Marcos Carvalho Brum de Oliveira}

\vspace{35pt}

\begin{flushright}
\parbox{10.3cm}{Tese de Doutorado apresentada ao Programa de P{\'o}s-Gradua\c c\~ao em F\'\i sica do Instituto de F\'\i sica da Universidade Federal do Rio de Janeiro - UFRJ, como parte dos requisitos necess{\'a}rios \`a obten\c c\~ao do t\'\i tulo de Doutor em Ci\^encias (F\'\i sica).}

\vspace{18pt}

{\large \bf Orientador: Sergio Eduardo de Carvalho Eyer Jor\'{a}s}

\vspace{12pt}

{\large \bf Coorientador: Klaus Fredenhagen}
\end{flushright}

\vspace{90pt}

\textbf{Rio de Janeiro}

\textbf{Maio de 2014}

\end{center}



\newpage
\includepdf[pages={-}]{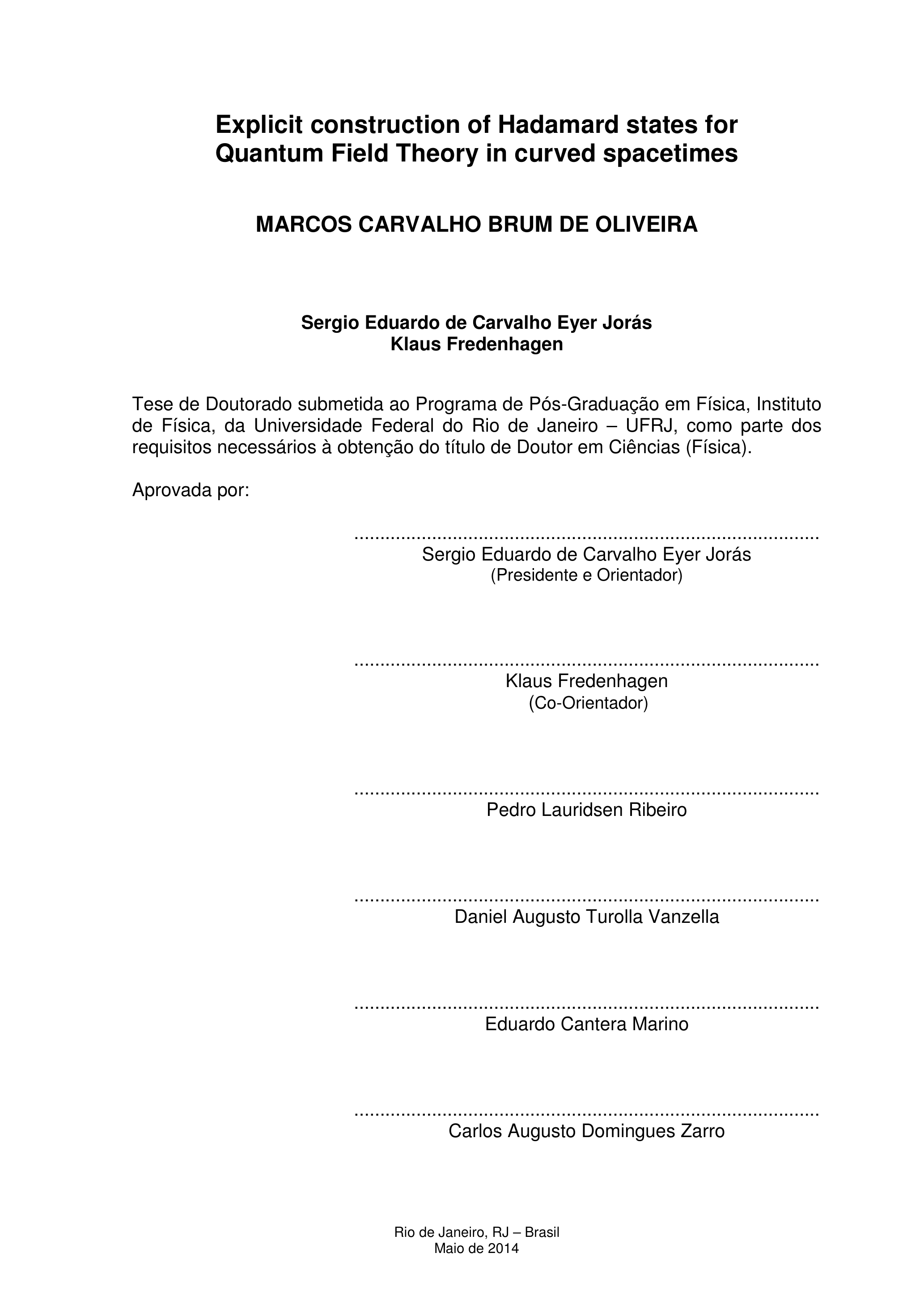}



\newpage
\includepdf[pages={-}]{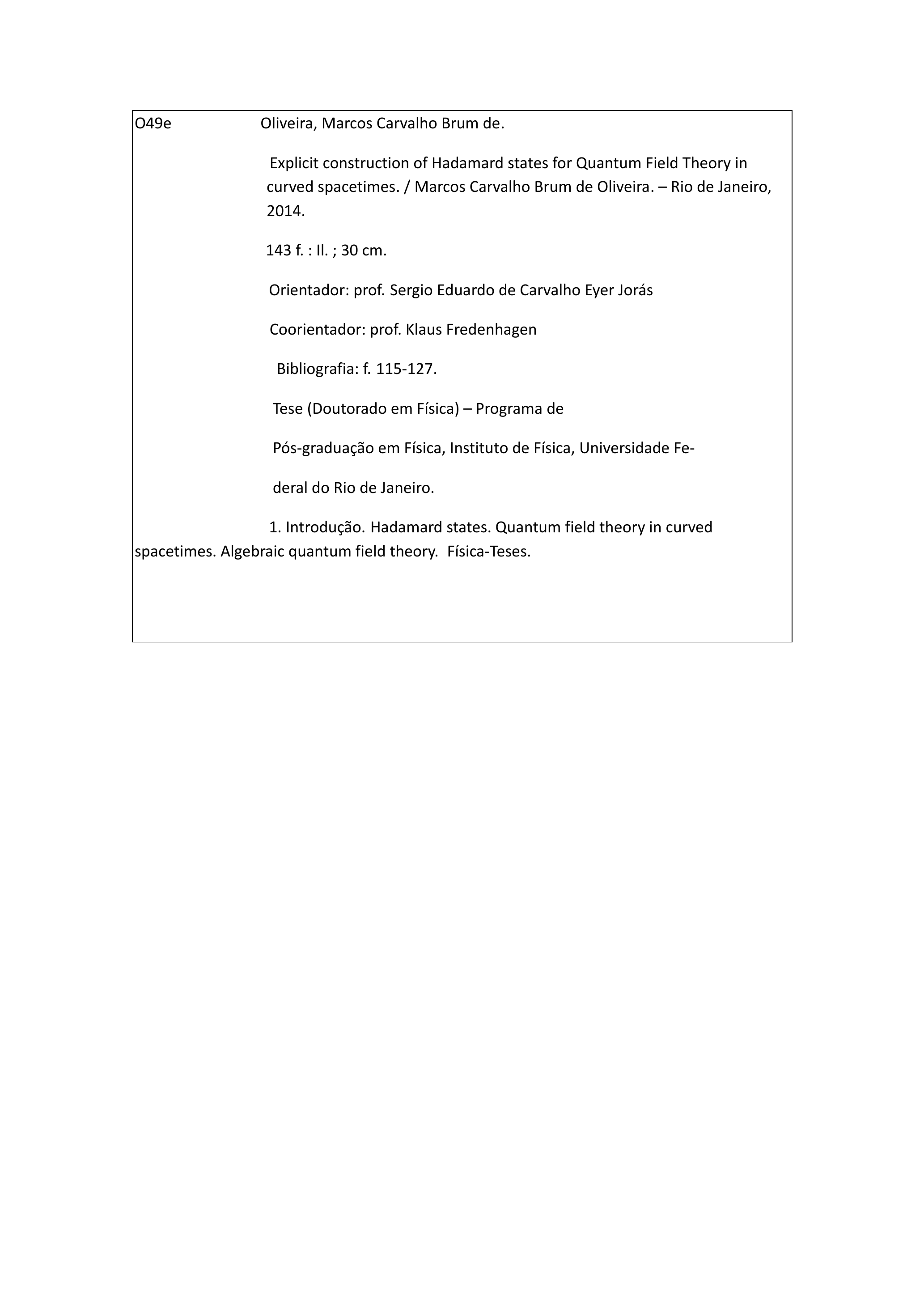}



\newpage

\noindent

\vspace*{20pt}
\begin{center}
{\LARGE\bf Abstract}\\
\vspace{15pt}
{\Large\bf Explicit construction of Hadamard states for Quantum Field Theory in curved spacetimes}\\
\vspace{6pt}
{\bf Marcos Carvalho Brum de Oliveira}\\
\vspace{12pt}
{\bf Advisor: Sergio Eduardo de Carvalho Eyer Jor\'{a}s}\\
{\bf Co-advisor: Klaus Fredenhagen}\\
\vspace{20pt}
\parbox{14cm}{\uline{Abstract} of the Doctorate Thesis presented to the Graduate Program in Physics of the Physics Institute of the Universidade Federal do Rio de Janeiro - UFRJ, as part of the necessary pre-requisites for the obtainment of the title Doctor in Sciences (Physics).}
\end{center}
\vspace*{35pt}

A crucial feature of the quantum field theory in curved spacetimes is the absence of a state which corresponds to the vacuum state in Minkowski spacetime. Generically, this absence is due to the absence of a timelike Killing vector. It was recently proved that it is not possible to exist a unique prescription for the construction of a vacuum state which remains valid in any globally hyperbolic spacetime \citep{FewsterVerch12,FewsterVerch_sf12}. On the other hand, the algebraic approach to the quantum field theory formulates the theory, taking, as a starting point, the algebraic relations satisfied by the local field operators associated to each spacetime region. In this approach, states play, fundamentally, the role of the expectation values of observables.

In this way, the question of the determination of a state becomes intimately related to the renormalizability. Since the end of the 1970's, culminating in the first decade of this century, it was shown that the knowledge about the location of the singularities of the states permits us to reduce the determination of the perturbative renormalizability of an interacting theory to the usual power-counting method, as in the Minkowski spacetime \citep{BruFre00,HoWa02,HoWa03,HoWa05}. More specifically, the authors of \citep{Wald77,BruFreKoe96,HoWa01,HoWa02,Moretti03,HoWa05} showed that the normal ordering of important observables, such as the energy-momentum tensor, with respect to the {\it Hadamard states}, have finite fluctuations in these states. Such states have all their singularities located on the future light cone. This condition is the local remnant of the spectral condition in Minkowski spacetime, together with the fact that the two-point functions of these states are bissolutions of the field equations \citep{KayWald91,Radzikowski96,RadzikowskiVerch96}. With this characterization at hand, the question of the determination of states in curved spacetimes shifted its focus from the determination of states which would be analogous to the vacuum state to focus on the construction of Hadamard states.

In this Thesis we will explicitly construct Hadamard states with different features. In the two first cases \citep{BrumFredenhagen14,ThemBrum13}, the states will be constructed in static and in cosmological spacetimes. In one of the cases \citep{ThemBrum13}, the states will be constructed in such a way as to minimize the expectation value of the energy-momentum tensor of a free massive scalar field, multiplied by a suitable smooth function of compact support, in a cosmological spacetime. The final form of the state clearly reduces to the vacuum in a static spacetime. Besides, we will show that these are Hadamard states. They do not contradict the above mentioned nonexistence of a unique prescription because they depend on the chosen smooth function. In the other case \citep{BrumFredenhagen14}, we will calculate the positive part of the spectral decomposition of the commutator function in a specific subset of the spacetime and will show that, in static and in cosmological spacetimes, the kernel of this function, once again multiplied by a suitable smooth function of compact support, corresponds to the kernel of Hadamard states. In both cases, the construction of the states does not rely on invariance under the action of a group of symmetries. Such a group may not even exist, in these cases.

In the last case we will construct a state in the Schwarzschild-de Sitter spacetime, which describes a universe with a black hole and a cosmological constant. This state is invariant under the action of the group of symmetries of this spacetime, although it is not defined in its whole Kruskal extension. In this way, we avoid the situation in which the authors of \citep{KayWald91} showed that a Hadamard state cannot exist. We will show that the state we construct is Hadamard and that it is not thermal, differently from the case of states constructed in spacetimes containing only one event horizon \citep{DappiaggiMorettiPinamonti09,GibbonsHawking77}. This result is not yet published.

\vspace{15pt}

\textbf{Keywords:} Hadamard states. Quantum field theory in curved spacetimes. Algebraic quantum field theory.



\newpage

\noindent

\vspace*{20pt}
\begin{center}
{\LARGE\bf Resumo}\\
\vspace{15pt}
{\Large\bf Explicit construction of Hadamard states for Quantum Field Theory in curved spacetimes}\\
\vspace{6pt}
{\bf Marcos Carvalho Brum de Oliveira}\\
\vspace{12pt}
{\bf Orientador: Sergio Eduardo de Carvalho Eyer Jor\'{a}s}\\
{\bf Coorientador: Klaus Fredenhagen}\\
\vspace{20pt}
\parbox{14cm}{Resumo da Tese de Doutorado apresentada ao Programa de P{\'o}s-Gradua\c c\~ao em F\'\i sica do Instituto de F\'\i sica da Universidade Federal do Rio de Janeiro - UFRJ, como parte dos requisitos necess{\'a}rios \`a obten\c c\~ao do t\'\i tulo de Doutor em Ci\^encias (F\'\i sica).}
\end{center}
\vspace*{35pt}

Um ponto crucial da Teoria Qu\^antica de Campos em espa\c cos-tempos Curvos \'{e} a aus\^encia de um estado que corresponda ao estado de v\'{a}cuo no espa\c co-tempo de Minkowski. Em geral, esta aus\^encia se d\'{a} pela falta de um vetor de Killing temporal. Recentemente, foi provado que n\~ao \'{e} poss\'\i vel que haja uma prescri\c c\~ao para a constru\c c\~ao de um estado de v\'{a}cuo que seja v\'{a}lida para qualquer espa\c co-tempo globalmente hiperb\'{o}lico \citep{FewsterVerch12,FewsterVerch_sf12}. Por outro lado, a abordagem alg\'{e}brica para a Teoria Qu\^antica de Campos constr\'{o}i toda a teoria de uma maneira formal, a partir das rela\c c\~oes alg\'{e}bricas satisfeitas pelos operadores de campo locais associados a cada regi\~ao do espa\c co-tempo. Nesta abordagem, os estados assumem, fundamentalmente, o papel de valores esperados de observ\'{a}veis.

Desta forma, o problema da determina\c c\~ao de um estado fica intimamente relacionado \`a renormalizabilidade da teoria. Desde o final da d\'{e}cada de 1970, culminando na primeira d\'{e}cada deste s\'{e}culo, mostrou-se que o conhecimento sobre a localiza\c c\~ao das singularidades dos estados permite reduzir a determina\c c\~ao da renormalizabilidade perturbativa de uma teoria interagente ao m\'{e}todo usual de {\it power-counting}, tal como no espa\c co-tempo de Minkowski \citep{BruFre00,HoWa02,HoWa03,HoWa05}. Mais especificamente, os autores de \citep{Wald77,BruFreKoe96,HoWa01,HoWa02,Moretti03,HoWa05} mostraram que o ordenamento normal de observ\'{a}veis importantes, como o tensor energia-momento, com respeito aos {\it Estados de Hadamard}, possuem flutua\c c\~oes finitas nestes estados. Tais estados tem todas as suas singularidades localizadas sobre o cone de luz futuro. Esta condi\c c\~ao \'{e} o remanescente local da condi\c c\~ao espectral no espa\c co-tempo de Minkowski, combinada com o fato de que as fun\c c\~oes de dois-pontos destes estados \~ao bissolu\c c\~oes das equa\c c\~oes de campo \citep{KayWald91,Radzikowski96,RadzikowskiVerch96}. Com esta caracteriza\c c\~ao, a quest\~ao da determina\c c\~ao de estados em espa\c cos-tempos curvos deixou de ser focada na determina\c c\~ao de estados an\'{a}logos ao estado de v\'{a}cuo para focar na constru\c c\~ao de estados de Hadamard.

Nesta Tese, n\'{o}s constru\'\i mos, explicitamente, alguns estados de Hadamard com caracter\'\i sticas diferentes. Nos dois primeiros casos \citep{BrumFredenhagen14,ThemBrum13}, os estados s\~ao constru\'\i dos em espa\c cos-tempos est\'{a}ticos e em espa\c cos-tempos cosmol\'{o}gicos. Em um dos casos \citep{ThemBrum13}, os estados s\~ao constru\'\i dos de forma a minimizarem o valor esperado do tensor energia-momento de um campo escalar massivo (livre), multiplicado por uma fun\c c\~ao suave de suporte compacto adequada, em um espa\c co-tempo cosmol\'{o}gico. A forma final do estado deixa claro que esta prescri\c c\~ao se reduziria ao v\'{a}cuo num espa\c co-tempo est\'{a}tico. Al\'{e}m disso, n\'{o}s mostramos que estes s\~ao estados de Hadamard. Eles n\~ao contradizem a inexist\^encia de uma prescri\c c\~ao \'{u}nica, mencionada acima, porque eles dependem da fun\c c\~ao suave escolhida. No outro caso \citep{BrumFredenhagen14}, n\'{o}s calculamos a parte positiva da decomposi\c c\~ao espectral da fun\c c\~ao comutador num determinado subespa\c co do espa\c co-tempo e mostramos que, em espa\c cos-tempos est\'{a}ticos e em espa\c cos-tempos cosmol\'{o}gicos, o n\'{u}cleo distribucional desta fun\c c\~ao, mais uma vez multiplicado por uma fun\c c\~ao suave de suporte compacto adequada, corresponde ao n\'{u}cleo distribucional de estados de Hadamard. Nestes dois casos, a constru\c c\~ao dos estados n\~ao est\'{a} apoiada na invari\^ancia sob a a\c cao de um grupo de simetrias. Tal grupo pode nem existir, nestes casos.

No \'{u}ltimo caso n\'{o}s constru\'\i mos um estado no espa\c co-tempo de Schwarzschild-de Sitter, que descreve um universo com um buraco negro e constante cosmol\'{o}gica. Este estado \'{e} invariante sob a a\c c\~ao do grupo de simetrias deste espa\c co-tempo, apesar de n\~ao estar definido em toda a sua extens\~ao de Kruskal. Desta forma, n\'{o}s evitamos recair nas situa\c c\~oes nas quais os autores de \citep{KayWald91} mostraram que n\~ao \'{e} poss\'\i vel existir um estado de Hadamard. N\'{o}s mostramos que o estado constru\'\i do \'{e} um estado de Hadamard e que ele n\~ao \'{e} um estado t\'{e}rmico, diferente do caso de estados constru\'\i dos num espa\c co-tempo contendo apenas um horizonte de eventos \citep{DappiaggiMorettiPinamonti09,GibbonsHawking77}. Este resultado ainda n\~ao foi publicado.

\vspace{15pt}

\textbf{Palavras-chave:} Estados de Hadamard. Teoria qu\^antica de campos em espa\c cos tempos curvos. Teoria qu\^antica de campos alg\'{e}brica.



\newpage

\noindent

\vspace*{20pt}

\begin{center}

{\LARGE\bf Acknowledgements}

\end{center}

\vspace*{40pt}

I wish to thank my Advisor, Sergio Jor\'{a}s, from the Federal University of Rio de Janeiro (UFRJ), for suggesting the theme of quantum fields in de Sitter spacetime, which was both the initial and final points of my PhD research. The constant stimulus, the freedom, the discussions, the personal conversations and the friendship were of fundamental importance for the completion of this Thesis. I also wish to thank my Co-Advisor, Klaus Fredenhagen, from the University of Hamburg (UHH), for welcoming me for a one-year stay there. It was an extremely profitable experience, both professionally and personally. I also wish to thank him for patiently introducing me to the algebraic approach to quantum field theory and to the change of perspective brought by it. The encouragement and the suggestion of the theme of Hadamard states in globally hyperbolic spacetimes, along with the many explanations and discussions, were also of fundamental importance for the completion of the Thesis. To both of you, I owe a large debt of gratitude. I also wish to thank professors Daniel Vanzella, Pedro Ribeiro, Eduardo Marino and Carlos Zarro for the very interesting discussions and useful comments at my Thesis Defense. You have surely contributed a lot to it. To professors Carlos Farina, Marcus Ven\'\i cius, Ribamar Reis, Maur\'\i cio Calv\~ao, Miguel Quartin and Ioav Waga, from UFRJ, for discussions, conversations and support, thank you. To my dear friend Marcelo Vargas, for the friendship and everything that comes along with it, thank you. To my dear friends Thiago Hartz, Vin\'\i cios Miranda, Rodrigo Neumann, Josephine Rua and Daniel Micha, for showing that a temporary spacelike separation does not weaken the friendship, thank you. To Bruno Lago, Emille Ishida, Daniel Kroff, Tiago Castro, M\'{a}rcio O'Dwyer, Jo\~ao Paulo Vianna, Jos\'{e} Hugo Elsas, Reinaldo de Mello, Daniel Niemeyer, Daniela Szilard, Anderson Kendi, Michael Moraes, Maur\'\i cio Hippert, Elvis Soares, Ana B\'{a}rbara, M\'{a}rcio Taddei, Wilton Kort-Kamp, thank you. To the secretary of the Graduate School of the Physics Institute of UFRJ, thanks for minimizing the effects of bureaucracy. To Daniel Paulino and Christian Pfeifer, thanks for the overseas friendship. To Falk Lindner, Thomas-Paul Hack, Andreas Degner, Stefano Porto, Martin Sprenger, Katarzyna Rejzner, thanks for all the help with getting adapted to Hamburg. To the above mentioned europeans (and the pseudo-european) and to Reza Safari, Sion Chan-Lang, Mojtaba Taslimi, Paniz Imani, Maikel de Vries, Marco Benini, Daniel Siemssen, Claudio Dappiaggi, Nicola Pinamonti and Romeo Brunetti, for the many discussions which helped me a lot through algebraic quantum field theory and everything related to it, the conversations and the company, thank you. To Elizabeth Monteiro Duarte, secretary of the II. Institute for Theoretical Physics at UHH, for conducting me through the intrincate german bureaucracy and for speaking portuguese, thank you. Agrade\c co a meus pais Mauro Brum e Deise L\'{u}cia, pelo amor incondicional. Ao meu irm\~ao Mateus, por ser meu irm\~ao, com todo o significado da palavra. Aos meus av\'{o}s Djalma e Odahina, porque ``o futuro brilha''. Aos meus tios e tias Paulo, Sandra, Maria L\'{u}cia e Cezar, e \`as minhas primas Fernanda, Carolina, Isabela e B\'{a}rbara, pelo carinho e pelo apoio imensur\'{a}veis. Aos meus irm\~aos do Centro Esp\'\i rita Pedro de Alc\^antara e do Movimento Esp\'\i rita, por me dedicarem tanto carinho e me ajudarem a me tornar uma pessoa melhor. Aos meus amigos brasileiros-alem\~aes Mariley, Anderson, Andre, Karina, Vania, Patr\'\i cia, Sonia, Anna, Fani, Iriana, Alcione, Regina e a todos do Grupo Esp\'\i rita Schwester Scheilla, por me terem aberto as portas da Casa e do cora\c c\~ao. A Deus, pela vida.



\newpage

\begin{flushright}
\begin{minipage}{10cm}
``No explorer knows where his work will lead him, but he persists with the confidence that it is useful to know what is out there.''

Jason Rosenhouse and Laura Taalman, {\it Tak1ng Sudoku Ser1ously}

\vspace{0.2cm}

``Do not worry about your difficulties in mathematics; I can assure you that mine are still greater.''

Albert Einstein -- In Calaprice, {\it The Ultimate Quotable Einstein}

\vspace{0.2cm}

``Somos o que pensamos, com a condi\c c\~ao de pensarmos com for\c ca, vontade e persist\^encia.''

L\'{e}on Denis, {\it O Problema do Ser e do Destino}
\end{minipage}
\end{flushright}



\newpage
\phantomsection
\tableofcontents

\newpage
\phantomsection
\listoffigures

\mainmatter

\chapter{Introduction}\label{intro}

A crucial feature of quantum field theory in curved spacetimes is the absence of a state analogous to the vacuum state in Minkowski spacetime. In general, this absence is due to the nonexistence of a timelike Killing vector field. A direct consequence of this absence is the lack of a preferred Hilbert space, along with the usual interpretation of fields as operators on this space. From a more physical point of view, this is intimately related to the absence of a unique notion of particles, a feature which is dramatically demonstrated in the Parker and Hawking effects \citep{Parker69,Hawking74}. The absence of a vacuum state has nowadays the status of a no-go theorem \citep{FewsterVerch12,FewsterVerch_sf12} which is valid under very general conditions. In the algebraic approach, this problem is alleviated because the quantum field theory is formulated from the assignment of an algebra of fields, constructed from the space of solutions of the field equations, to each spacetime region. From this algebra one can construct a Hilbert space on which the fields are represented as operators and this Hilbert space contains a cyclic vector, from which one can construct a Fock space of states \citep{BraRob-I}. Thus algebraic states can be constructed, but the absence of a preferred notion of state persists.

The absence of the vacuum state, whose physical properties are well understood, claims for a description of the desirable physical properties of a state. One such property is that the expectation values of observables are renormalizable. It is known that the energy-momentum tensor, normal ordered with respect to a reference {\it Hadamard state}, is well defined \citep{Wald77,Moretti03}. A Hadamard state is a state whose two-point function has all its singularities lying along the future null cone. This characterization is reminiscent of the spectral condition in Minkowski spacetime. It was first rigorously given as an imposition on the form of the antisymmetric part of the two-point function of the state \citep{KayWald91}. More recently, it was elegantly described as a restriction on the wavefront set of the two-point function of the state \citep{Radzikowski96,RadzikowskiVerch96}. It was used to show that the Hadamard condition suffices for the quantum field theory to be renormalizable and to allow that interacting fields be perturbatively incorporated into this framework \citep{BruFreKoe96,BruFre00,HoWa01,HoWa02}. On the other hand, it is still an open question whether the Hadamard condition is necessary, in general, for renormalizability \citep{FewsterVerch13}. From a more geometrical point of view, the authors of \citep{FullingSweenyWald78,FullingNarcowichWald81} used deformation arguments to prove that Hadamard states exist in a general globally hyperbolic spacetime. Although general, their construction is quite indirect. Explicitly constructing Hadamard states in specific (classes of) spacetimes is still a hard task, which is tackled in this Thesis.

The first two examples we will construct here, in chapters \ref{chap_sle} and \ref{chap_vac-like}, are of classes of states constructed in expanding spacetimes. These can be useful as initial states for the analysis of the backreaction problem. The first of these classes, the {\it States of Low Energy} (SLE) \citep{ThemBrum13}, is defined from the requirement that the expectation value of the energy-momentum tensor, smeared with a suitable test function, is minimized. We will achieve this minimization by exploiting the degrees of freedom of the solutions of the equations of motion. The smearing is necessary because, in spite of the fact that the expectation value of the energy-momentum tensor of a scalar field on a Hadamard state is coherently renormalizable, it possesses no lower bound \citep{EpsGlaJaffe65}. On the other hand, if instead of calculating the renormalized energy density at a particular point of spacetime, one smears it with the square of a smooth test function of compact support along the worldline of a causal observer, one finds that the resulting quantity cannot be arbitrarily negative. It was more recently shown that this result can also be obtained by smearing over a spacelike submanifold of spacetime. These results are known as Quantum Energy Inequalities (QEIs) \citep{Fewster00,FewsterSmith08}. At last, we will prove that the SLE are Hadamard states by comparing them to the adiabatic states \citep{LuRo90}, which satisfy the Hadamard property \citep{JunSchrohe02}, and exploiting properties of the wavefront set of the difference of two distributions.

The second of these classes, the {\it ``Vacuum-like'' states} \citep{BrumFredenhagen14}, originates from a modification of an attempt to construct vacuum states in a general globally hyperbolic spacetime \citep{AfshordiAslanbeigiSorkin12}. Their idea is based on the fact that the commutator function may be considered as the integral kernel of an antisymmetric operator on some real Hilbert space, as discussed long ago by, e.g., Manuceau and Verbeure \citep{ManuceauVerbeure68}. Under some technical conditions, the polar decomposition of this operator yields an operator having the properties of the imaginary unit, and a positive operator in terms of which a new real scalar product can be defined. The new scalar product then induces a pure quasifree state. This method of constructing a state can be applied, e.g., for a free scalar quantum field on a static spacetime where the energy functional provides a quadratic form on the space of Cauchy data in terms of which a Hilbert space can be defined. The result is the ground state with respect to time translation symmetry (see, e.g. \citep{Kay78}). They thus constructed a state in a relatively compact globally hyperbolic spacetime, isometrically embedded in a larger globally hyperbolic spacetime, whose kernel is the positive spectral part of the commutator function in the smaller spacetime. Fewster and Verch \citep{FewsterVerch-SJ12} proved that the commutator function, in this submanifold, is a bounded operator, thus rendering the state well defined, but this state, in general, is not a Hadamard state. Our modification consists of, instead of taking the positive spectral part of the commutator function as integral kernel of the two-point function, we multiply this operator, in both variables, by a positive smooth, compactly supported function, which is equal to 1 in the submanifold. We show that this smearing, in some cases in which the construction can be explicitly carried out, is sufficient for the state to be Hadamard.

In chapter \ref{chap_Sch-dS} we will construct a Hadamard state in the Schwarzschild-de Sitter spacetime \citep{BrumJoras14}. Since this spacetime possesses horizons, a natural question to ask is whether such a state will be thermal, as in the Schwarzschild spacetime \citep{Hawking74,KayWald91} and in the de Sitter spacetime \citep{GibbonsHawking77}. The authors of \citep{GibbonsHawking77,KayWald91} affirmed that such a state would not be thermal because each of the horizons in Schwarzschild-de Sitter spacetime would work as a ``thermal reservoir'', each at a different temperature, a situation in which there is no possibility of attaining thermal equilibrium. The authors of \citep{KayWald91} went even further and proved the nonexistence of a state in the maximally extended Kruskal diagram of the Schwarzschild-de Sitter spacetime by proving that such would give rise to contradictions related to causality\footnote{The point of view adopted in \citep{KayWald91} is more robust because the concepts of thermal state and thermal reservoir are not equivalent. While the former is associated to the concept of ``absolute temperature'', the latter is associated to the concepts of ``calorimetric temperature'' and thermal equilibrium, and these two notions of temperature are not, in general, equivalent \citep{BuchholzSolveen13,Solveen12}.}. We thus construct the state, not in the whole extended Kruskal region, but in the ``physical'' (nonextended) region between the singularity at $r=0$ and the singularity at $r=\infty$. In this sense, our state is not to be interpreted as the Hartle-Hawking state in this spacetime. Neither can it be interpreted as the Unruh state because, in the de Sitter spacetime, the Unruh state can be extended to the whole spacetime while retaining the Hadamard property \citep{NarnhoferPeterThirring96}. Our state cannot have such a feature. We constructed the state solely from the geometrical features of the Schwarzschild-de Sitter spacetime and, consequently, it is invariant under the action of its group of symmetries. We use the bulk-to-boundary technique \citep{DappiaggiMorettiPinamonti06,DappiaggiMorettiPinamonti09b,DappiaggiMorettiPinamonti09c,DappiaggiMorettiPinamonti09} to show that it can be isometrically mapped to the tensor product of two states, one defined on a subset of each event horizon. Each one of these states is a KMS state (this concept will be explained in section \ref{sec-alg_state}) at the temperature of the corresponding horizon. Since these temperatures are different, the resulting state is not KMS. Moreover, we will use results of \citep{DafermosRodnianski07} and an adaptation of the argument presented in \citep{DappiaggiMorettiPinamonti09} to show that it is a Hadamard state.

Before these constructions, we will present in chapter \ref{chap_MathMethods} the necessary mathematical tools, the concept of Hadamard states and the Quantum Energy Inequalities.

Our conclusions will be presented in the final chapter \ref{chap_concl}.

\chapter{Scalar Field quantization on Globally Hyperbolic Spacetimes}\label{chap_MathMethods}

We will first introduce the mathematical concepts needed for the definition of the states in the following chapters. Afterwards, in section \ref{sec_fieldquant}, we will use these concepts to present the general field quantization, define Hadamard states, introduce the Quantum Energy Inequalities and show the quantization in some classes of spacetimes which will be used later.

\section{Preliminaries}\label{sec_preliminaries}

\subsection{The geometry of a globally hyperbolic spacetime}

Let $\M$ be a smooth manifold on which is defined a Lorentzian metric $g$ (such a manifold will be called {\it lorentzian manifold}), i.e., a metric with signature\footnote{All manifolds considered in this thesis are four dimensional.} $(+---)$. A vector $t\in\cT\M$ is said to be: {\it timelike} if $g(t,t)<0$; {\it null} if $g(t,t)=0$; {\it spacelike} if $g(t,t)>0$. Moreover, a vector is said to be {\it causal} if it is either timelike or null. If the tangent vectors to a continuous and piecewise smooth curve $\gamma$ have all the same character at each point of $\gamma$, then this curve is said to have this same character. A {\it causal curve} is a curve whose tangent vectors are everywhere {\it causal}. Similarly, one can define {\it timelike}, {\it null} or {\it spacelike} curves. These definitions can be extended to subsets of the manifold $\M$, when all the curves contained in the subset have the same character. In a local coordinate chart, a given coordinate is said to be (timelike, spacelike, null) if this coordinate is a parameter along a (timelike, spacelike, null) curve.

Let $(E,\M ,\pi)$ be a vector bundle, where $E$ and $\M$ are smooth manifolds and $\pi:E\rightarrow\M$ is a surjective map. Let $s:\M\rightarrow E$ be a smooth map. If
\[\pi\circ s=\textrm{id}:\M\rightarrow\M \; ,\]
then $s$ is called a {\it smooth section} of the vector bundle $(E,\M ,\pi)$ \citep{ChernChenLam99}. The space of smooth sections in $E$ is denoted by $\Gamma^{\infty}(\M,E)$. If $E=\M\times\mathbb{K}$ ($\mathbb{K}$ is either $\mathbb{R}$ or $\mathbb{C}$), the so-called {\it line bundle}, the space of smooth sections in $E$ will be denoted by $\cC^{\infty}(\M;\mathbb{K})$\footnote{For the reader not acquainted with the concept of sections of a manifold, we give a few examples. Let $E=\M\times\mathbb{R}^{n}(\mathbb{C}^{n})$. The sections in $E$ are the real(complex)-valued functions on $\M$. Besides, the sections in $E=\cT\M$ are the {\it vector fields} on $\M$; the sections in $E=\cT^{\ast}\M$ are the {\it covector fields} on $\M$, and the sections in $E=\Lambda^{k}\cT^{\ast}\M$ are the {\it k-forms} on $\M$ ($\Lambda^{k}(V)$ is the space of all antisymmetric contravariant tensors of order $k$, built from the $k$-th antisymmetrized tensor power of the vector space $V$).}.

If a continuous, timelike vector field $t\in\Gamma^{\infty}(\M,\cT\M)$ can be defined, the manifold is said to be {\it time-orientable}. A spacetime is a smooth time-oriented lorentzian manifold, and will be denoted by $(\M,g)$, where $\M$ is the smooth manifold and $g$ is the lorentzian metric. A causal vector $v$ is said to be {\it future-directed} (respectively {\it past-directed}) if $g(t,v)<0$ (respectively $g(t,v)>0$). This definition can be extended to curves as well \citep{ONeill83}. The {\it chronological future} of $p\in\M$, denoted by $I^{+}(p)$, is the set of points $q\in\M$ for which there exists a future-directed timelike curve with base point at $p$ that passes through $q$. Similarly, one can define the {\it chronological past} of a point $p$, denoted by $I^{-}(p)$, and the {\it causal future/past} of $p$, denoted by $J^{\pm}(p)$, only by substituting {\it future} by {\it causal} in the former definition. A point $p\in\M$ is said to be a {\it future endpoint} of a curve $\gamma$ if $\gamma$ is parametrized by a parameter $\lambda$ and, for every neighborhood $O$ of $p$ there exists $\lambda_{0}$ such that $\forall \lambda'>\lambda_{0}$, $\gamma(\lambda')\subset O$. The curve $\gamma$ is said to be {\it future inextendible} if it has no future endpoints. The definition of {\it past endpoint} and {\it past inextendible} curve are similar.

A subset $S\subset\M$ is said to be {\it achronal} if there do not exist $p,q\in S$ such that $q\in I^{+}(p)$, i.e., $I^{+}(S)\cap S=\emptyset$. If $S$ is a closed, achronal set, the {\it future domain of dependence} of $S$, denoted by $D^{+}(S)$, is the set of points $p\in\M$ such that every past inextendible causal curve through $p$ intersects S. Similarly, one defined the {\it past domain of dependence} of $S$, $D^{-}(S)$, and $D(S)\coloneqq D^{+}(S)\cup D^{-}(S)$. A closed achronal set $\Sigma$ such that $D(\Sigma)=\M$ is called a {\it Cauchy hypersurface}.

A {\it convex normal neighborhood} is an open set such that, for any two points in the neighborhood, say $x_{1}$ and $x_{2}$, there exists a unique geodesic contained in the set which connects $x_{1}$ and $x_{2}$. An {\it open conic neighborhood} is a neighborhood which is invariant under the action of $\mathbb{R}_{+}$ by multiplication. We say $\mathcal{O}$ is a {\it causal normal neighborhood} of a Cauchy hypersurface $\Sigma$ of a globally hyperbolic spacetime $(\M,g)$ if $\Sigma$ is a Cauchy hypersurface for $\mathcal{O}$ and, for any two points $x_{1},x_{2}\in\mathcal{O}$, $x_{1}\in J^{+}(x_{2})$, one can find a convex normal neighborhood containing $J^{-}(x_{1})\cap J^{+}(x_{2})$.

Globally hyperbolic spacetimes $\M$ are smooth, orientable, time orientable and paracompact manifolds\footnote{A manifold $\M$ is said to be {\it paracompact} if every open cover of $\M$ has a locally finite subcover.}, also possessing smooth Cauchy hypersurfaces \citep{BernalSanchez03,Wald84}. They have the topological structure $\M = \mathbb{R}\times\Sigma$. In the following, when we refer to a given coordinate chart, the coordinate $x^{0}$ is to be understood as the timelike coordinate, and the coordinates $(x^{1},x^{2},x^{3})$ are to be understood as the spacelike coordinates.

\subsection{Wave equation in globally hyperbolic spacetimes}\label{subsec_wave-eq}

Since throughout this work we will only be interested in sections which are functions (or distributions), we will define differential operators only as operators on functions (distributions). For the corresponding definitions in more general bundles, see \citep{BarGinouxPfaffle07}

A {\it linear differential operator} of order at most $k$ is a $\mathbb{K}$-linear map
\[L:\cC^{\infty}(\M;\mathbb{K})\rightarrow\cC^{\infty}(\M;\mathbb{K}) \; .\]
In a local coordinate chart,
\[Ls=\sum_{|\alpha|\leq k}A^{\alpha}\frac{\partial^{|\alpha|}s}{\partial x^{\alpha}} \; .\]
The summation is taken over all multiindices $\alpha=(\alpha_{0},\alpha_{1},\alpha_{2},\alpha_{3})$ with $\sum_{l=0}^{3}\alpha_{l}\leq k$. If $L$ is a linear differential operator of order at most $k$, but not of order at most $k-1$, $L$ is said to be {\it of order} $k$.

The {\it principal symbol} of the operator $L$ is the map
\[\sigma_{L}:\cT^{\ast}\M\rightarrow \mathbb{K} \; .\]
Take a local coordinate chart in a neighborhood of a point $p\in\M$, in which $L=\sum_{|\alpha|\leq k}A^{\alpha}\partial^{|\alpha|}/\partial x^{\alpha}$. For every $\xi=\sum_{l=0}^{3}\xi_{l}\cdot dx^{l}\in\cT^{\ast}_{p}\M$,
\[\sigma_{L}(\xi)\coloneqq\sum_{|\alpha|=k}\xi_{\alpha}A^{\alpha}(p) \; .\]
The principal symbol of a differential operator is independent of the coordinate chart chosen.

The zeroes of $\sigma_{L}$ outside of the zero section of the cotangent bundle, i.e., the points $(p,\xi)$ with $\xi\in\cT^{\ast}_{p}\M\diagdown\{0\}$ such that $\sigma_{L}(\xi)=0$, are called the {\it characteristics} of $L$. The curves in $\cT^{\ast}\M$ along which $\sigma_{L}$ vanishes identically are called the {\it bicharacteristics} of $L$.

A {\it normally hyperbolic operator} is a second-order differential operator $P$ whose principal symbol is given by the metric, i.e.,
\[\sigma_{P}(\xi)=g^{-1}(\xi,\xi) \; ,\]
where, in a local coordiante chart in a neighborhood of a point $p\in\M$, $g^{-1}(\xi,\xi)=\sum_{\mu,\nu=0}^{3}g^{\mu\nu}\xi_{\mu}\xi_{\nu}$.

A {\it wave equation} is an equation of the form $Pu=f$, where $P$ is a normally hyperbolic operator, the right hand side $f$ is given and the section $u$ is to be determined. It is well known that the wave equation of a massive scalar field in a globally hyperbolic spacetime admits unique retarded and advanced fundamental solutions, which are maps $\mathds{E}^{\pm}:\cC_{0}^{\infty}(\M,\mathbb{K})\rightarrow\cC^{\infty}(\M,\mathbb{K})$, such that, for $f\in\cC_{0}^{\infty}(\M,\mathbb{K})$,
\begin{equation}
 \left(\Box +m^{2}\right)\mathds{E}^{\pm}f=\mathds{E}^{\pm}\left(\Box +m^{2}\right)f=f
 \label{KGfund}
\end{equation}
and
\[\textrm{supp}(\mathds{E}^{\pm}f)\subset J^{\pm}(\textrm{supp}f) \; .\]

The functions $f\in\cC_{0}^{\infty}(\M,\mathbb{K})$ are called test sections and $P\coloneqq \Box +m^{2}$ will denote the differential operator. From the fundamental solutions, one defines the {\it advanced-minus-retarded-operator} $\mathds{E}\coloneqq \mathds{E}^{-}-\mathds{E}^{+}$ as a map $\mathds{E}:C_{0}^{\infty}(\M,\mathbb{K})\rightarrow C^{\infty}(\M,\mathbb{K})$.

As stated in the introduction, it is important to allow the solutions of the wave equation to have singularities. For this purpose, we will now introduce the concept of distributional sections. As a starting point, we will present a notion of convergence on smooth 
functions.

Let $\phi\, ,\, \phi_{n}\in\cC^{\infty}_{0}(\M,\mathbb{K})$, where $n\in\mathbb{N}$. We will denote the sequence $\{\phi_{1},\phi_{2},\ldots\}$ by $(\phi_{n})_{n}$. We say that the sequence $(\phi_{n})_{n}$ converges to $\phi$ in $\cC^{\infty}_{0}(\M,\mathbb{K})$ if the following two conditions hold:
\begin{enumerate}[label=(\alph*)]
\item There is a compact set $K\subset\M$ such that the supports of all $\phi_{n}$ are contained in $K$.
\item The sequence $(\phi_{n})_{n}$ converges to $\phi$ in all $\cC^{k}$-norms over $K$, i.e., for each $k\in\mathbb{N}$
\end{enumerate}
\[\lVert \phi-\phi_{n} \rVert_{\cC^{k}(K)}\underset{n\rightarrow\infty}{\rightarrow}0 \; .\]

A $\mathbb{K}$-linear map $F:\cC^{\infty}_{0}(\M,\mathbb{K})\rightarrow\mathbb{K}$ is called a distribution in $\mathbb{K}$ if it is continuous in the sense that for all convergent sequences $\phi_{n}\rightarrow\phi$ in $\cC^{\infty}_{0}(\M,\mathbb{K})$ one has $F[\phi_{n}]\rightarrow F[\phi]$. We write $\left(\cC^{\infty}_{0}\right)^{'}(\M,\mathbb{K})$ for the space of all distributions in $\mathbb{K}$.

One can also find in \citep{BarGinouxPfaffle07} the observation that any linear differential operator $P:\cC^{\infty}(\M,\mathbb{K})\rightarrow\cC^{\infty}(\M,\mathbb{K})$ extends canonically to a linear operator $P:\left(\cC^{\infty}_{0}\right)^{'}(\M,\mathbb{K})\rightarrow\left(\cC^{\infty}_{0}\right)^{'}(\M,\mathbb{K})$ by
\[(PT)[\phi]\coloneqq T[P^{\ast}\phi]\]
where $\phi\in\cC^{\infty}_{0}(\M,\mathbb{K})$ and $P^{\ast}$ is called the formal adjoint of $P$. If a sequence $(\phi_{n})_{n}$ converges in $\cC^{\infty}_{0}(\M,\mathbb{K})$ to 0, then the sequence $(P^{\ast}\phi_{n})_{n}$ converges to 0 because $P^{\ast}$ is a differential operator. Hence $(PT)[\phi_{n}]=T[P^{\ast}\phi_{n}]\rightarrow 0$. Therefore $PT$ is also a distribution.

We will now characterize the location of the singularities of a distribution, introducing the concept of {\it wavefront sets}. Let $v$ be a distribution of compact support. Its Fourier transform is defined as $\hat{v}(k)=v(e^{i\langle x,k \rangle})$, where $\langle x,k \rangle =\sum_{n=1}^{m}x_{n}k_{n}$, and it possesses the same properties as the usual Fourier transform of functions (see \citep{Hormander-I}, but note that we do not use the same sign conventions as in this reference). If $\forall N \in \mathbb{N}_{0} \; ,\, \exists C_{N} >0$ such that
\begin{equation}
 \lvert \hat{v}(k) \rvert \leqslant C_{N}\left(1+\lvert k \rvert\right)^{-N} \; ,\, k \in \mathbb{R}^{n} \; ,
 \label{regsupp}
\end{equation}
then $v$ is in $C_{0}^{\infty}(\mathbb{R}^{n},\mathbb{K})$. If for a $k \in \mathbb{R}^{n}\diagdown\{0\}$ there exists a cone $V_{k}$ such that for every $p\in V_{k}$, \eqref{regsupp} holds, then $k$ is a direction of rapid decrease for $v$. Accordingly, the singular support ({\it singsupp}) of $v$ is defined as the set of points having no neighborhood where $v$ is in $C^{\infty}$. Moreover, we define the cone $\Sigma(v)$ as the set of points $k \in \mathbb{R}^{n}\diagdown\{0\}$ having no conic neighborhood $V$ such that \eqref{regsupp} is valid when $k \in V$.

For a general distribution $u \in \left(\cC_{0}^{\infty}\right)'(X,\mathbb{K})$, where $X$ is an open set in $\mathbb{R}^{n}$ and $\phi \in \cC_{0}^{\infty}(X,\mathbb{R})$, $\phi(x)\neq 0$, we define
\[\Sigma_{x}(u) \coloneqq \bigcap_{\phi}\Sigma(\phi u) \; .\]

\begin{mydef}\label{smooth-wf}
 If $u \in \left(\cC_{0}^{\infty}\right)'(X,\mathbb{K})$ then the {\it wave front set} of $u$ is the closed subset of $X \times (\mathbb{R}^{n}\diagdown\{0\})$ defined by
\[WF(u)=\{(x,k) \in X \times (\mathbb{R}^{n}\diagdown\{0\}) \arrowvert \, k \in \Sigma_{x}(u)\} \; .\]
\end{mydef}
We remark that the projection of $WF(u)$ on $\M$ is $singsupp\, u$. In \citep{Hormander-I} it was proved that the wave front set of a distribution $u$ defined on a smooth manifold $X$ is a closed subset of $\cT^{\ast}X\diagdown\{0\}$ which is conic in the sense that the intersection with the vector space $\cT^{\ast}_{x}X$ is a cone for every $x\in X$. The restriction to a coordinate patch $X_{\kappa}$ is equal to $\kappa^{\ast}WF(u\circ\kappa^{-1})$.

Moreover, the wave front set provides such a refinement on the location of the singularities of a distribution that we can now define the product of two distributions at the same point, as stated in the following theorem, which we quote without proof from \citep{Hormander-I}:
\begin{thm}
If $u,v\in\left(\cC_{0}^{\infty}\right)'(X,\mathbb{K})$ then the product $uv$ can be defined as the pullback of the tensor product $u\otimes v$ by the diagonal map\footnote{For $x\in X$, $\Delta(x)\coloneqq (x,x)$.} $\Delta:X\rightarrow X\times X$ unless $(x,\xi)\in Wf(u)$ and $(x,-\xi)\in Wf(v)$ for some $(x,\xi)$. When the product is defined we have
\begin{align*}
WF(uv)\subset\{(x,\xi+\eta);\, &(x,\xi)\in WF(u)\textrm{ or }\xi=0, \\
&(x,\eta)\in WF(v)\textrm{ or }\eta=0\} \; .
\end{align*}
\end{thm}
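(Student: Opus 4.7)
The plan is to deduce both statements (existence of the product and the bound on its wavefront set) as consequences of two more basic results from the theory of distributions on manifolds: a bound on the wavefront set of a tensor product, and the general pullback theorem for distributions. Concretely, I would define
\[ uv \;\coloneqq\; \Delta^{\ast}(u\otimes v), \qquad \Delta:X\to X\times X,\ \Delta(x)=(x,x), \]
whenever the right-hand side makes sense, and then I must verify (i) when this pullback is defined, and (ii) what the wavefront set of the pullback is.

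First I would establish the tensor product bound
\[ WF(u\otimes v)\subset (WF(u)\times WF(v))\cup ((\mathrm{supp}\,u\times\{0\})\times WF(v))\cup (WF(u)\times(\mathrm{supp}\,v\times\{0\})). \]
The argument is local and essentially computational: pick test functions $\phi,\psi\in\cC_0^\infty$ non-vanishing at $x,y$ respectively and examine the Fourier transform
\[ \widehat{(\phi u)(\psi v)}(\xi,\eta)=\widehat{\phi u}(\xi)\,\widehat{\psi v}(\eta). \]
Using the definition of $\Sigma(\cdot)$, if $(\xi,\eta)\neq 0$ and $\xi$ (resp.\ $\eta$) is a direction of rapid decrease for $\widehat{\phi u}$ (resp.\ $\widehat{\psi v}$), then one obtains rapid decrease of the product in a conic neighbourhood of $(\xi,\eta)$ by a standard splitting argument (small $|\eta|/|\xi|$ versus small $|\xi|/|\eta|$ versus both large). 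This yields the tensor product bound.

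Next I would recall the pullback theorem: if $f:Y\to Z$ is smooth and $w\in(\cC_0^\infty)'(Z,\mathbb{K})$ has wavefront set disjoint from the set of normals of $f$,
\[ N_f=\{(f(y),\zeta)\in \cT^{\ast}Z\setminus\{0\}\ :\ {}^t df(y)\zeta=0\}, \]
then $f^\ast w$ is well defined as a distribution on $Y$ and
\[ WF(f^\ast w)\subset f^\ast WF(w)\coloneqq \{(y,{}^tdf(y)\zeta)\ :\ (f(y),\zeta)\in WF(w)\}. \]
I would then specialise this to $f=\Delta$. Since ${}^t d\Delta(x):(\xi,\eta)\mapsto \xi+\eta$, the set of normals is $\{(x,x;\xi,\eta): \xi+\eta=0\}$. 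Combining with the tensor product bound, the disjointness condition $WF(u\otimes v)\cap N_\Delta=\emptyset$ fails exactly when there exists $(x,\xi)$ with $(x,\xi)\in WF(u)$ and $(x,-\xi)\in WF(v)$, which is precisely the obstruction stated in the theorem. Under the complementary hypothesis, $\Delta^{\ast}(u\otimes v)$ exists, and the bound on $WF(uv)$ follows by pulling the tensor product bound back through $\Delta$: a covector at $x$ belongs to $WF(uv)$ only if it has the form $\xi+\eta$ for some $(x,x;\xi,\eta)\in WF(u\otimes v)$, which gives the three cases $\xi,\eta$ both non-zero (from $WF(u)\times WF(v)$), $\xi=0$, or $\eta=0$.

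The main obstacle will be the proof of the pullback theorem itself, which is the substantive input. The clean way is to approximate $w$ by smooth functions and show that, under the disjointness hypothesis, the natural regularisations $f^\ast w_\varepsilon$ converge in an appropriate Hörmander topology on distributions with prescribed wavefront set (the topology that makes $u_\varepsilon\to u$ iff $u_\varepsilon\to u$ weakly and the Fourier transforms of $\chi u_\varepsilon$ decay uniformly outside any conic neighbourhood of $WF(u)$). The verification that the diagonal $\Delta$ is transversal in the required sense exactly to those $(u,v)$ for which no $(x,\xi)$ with $(x,\xi)\in WF(u)$, $(x,-\xi)\in WF(v)$ exists is the key bookkeeping step, and everything else then reduces to quoting the general pullback result.
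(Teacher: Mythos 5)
The paper states this theorem without proof, citing H\"ormander's book, so there is no in-paper argument to compare against; your proposal reconstructs precisely the standard proof from that reference, namely the wavefront bound for $u\otimes v$ combined with the pullback theorem applied to the diagonal map, whose set of normals is $\{(x,x;\xi,-\xi),\ \xi\neq 0\}$. The plan is correct as it stands --- note only that for the existence claim you need just the upper bound on $WF(u\otimes v)$, not the ``exactly'' characterization of when disjointness fails --- and the genuinely hard input, the pullback theorem, is correctly identified and sketched along H\"ormander's approximation argument.
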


Another useful property of $WF$, which will be used later, is that for two distributions $\phi$ and $\psi$,
\begin{equation}
 WF(\phi +\psi)\subseteq WF(\phi) \cup WF(\psi) \; .
 \label{WFsum} 
\end{equation}
If the $WF$ of one of the distributions is empty, i.e., if one of them is smooth, then this inclusion becomes an equality.

The authors of \citep{DuistermaatHormander72} proved that the singularities of the solutions of a differential operator $P$ with real principal symbol propagate along the bicharacteristics of $P$. This implies that through every point in {\it singsupp} of $u$ ($u$ is a distribution satisfying $Pu=f\in\cC^{\infty}(\M,\mathbb{K})$) there is a bicharacteristic curve which stays in the {\it singsupp}.

There exists a refinement of the notion of wave front set, based on the concept of {\it Sobolev spaces}. A distribution $u \in \left(\cC_{0}^{\infty}\right)'(X,\mathbb{R}^{n})$, where $X$ is an open set in $\mathbb{R}^{n}$, is said to be {\it microlocally} $H^{s}$ at $(x,k) \in \mathbb{R}^{n}\times(\mathbb{R}^{n}\diagdown\{0\})$ if there exists a conic neighborhood $V$ of $k$ and $\phi \in C_{0}^{\infty}(\mathbb{R}^{n})$, $\phi(x)\neq 0$, such that
\begin{equation}
\int_{V}d^{n}k \left(1+\lvert k \rvert^{2}\right)^{s}\lvert [\phi u]^{\wedge}(k) \rvert^{2}\leqslant \infty \; .
\label{Sobolev-space}
\end{equation}
\begin{mydef}\label{Sobolev-wf}
The Sobolev wave front set $WF^{s}$ of a distribution $u \in \left(\cC_{0}^{\infty}\right)'(X,\mathbb{R}^{n})$ is the complement, in ${\mathcal T}^{*}X \diagdown \{0\}$, of the set of all pairs $(x,k)$ at which $u$ is microlocally $H^{s}$.
\end{mydef}
Junker and Schrohe \citep{JunSchrohe02} showed 
that, by choosing a suitable partition of unity, this definition can be extended for any paracompact smooth manifold $\M$.

\subsection{Algebras and states}\label{sec-alg_state}

An algebra is a vector space $\fA$ over a field $K$ such that $\fA$ is equipped with a product law which associates the product $AB$, where $A,B\in\fA$, to an element of $\fA$. This product is also required to be associative and distributive. Throughout this thesis, we will take the field $K$ to be the field of complex numbers $\mathbb{C}$. The mapping $A\in\fA\rightarrow A^{\ast}\in\fA$ is called an involution (or adjoint operation) if $A^{\ast\ast}=A$, $(AB)^{\ast}=B^{\ast}A^{\ast}$ and $(\alpha A+\beta B)^{\ast}=\overline{\alpha}A^{\ast}+\overline{\beta}B^{\ast}$. An involutive algebra is called a $\ast$-algebra.

By a normed algebra we mean an algebra with a norm. This norm defines a topology on the algebra $\fA$, the so-called uniform topology, whose open sets are given by
\[\mathcal{U}(A;\epsilon)=\{B;\, B\in\fA,\, \lVert B-A\rVert<\epsilon\} \; ,\]
where $\epsilon>0$. If the algebra $\fA$ is complete with respect to this norm, then it becomes a Banach algebra. A $\ast$-algebra which is complete with respect to the uniform topology and further has the property $\lVert A^{\ast}\rVert=\lVert A\rVert$ is a Banach $\ast$-algebra.

\begin{mydef}\label{def_Banach-star}
A Banach $\ast$-algebra $\fA$ which has the property
\[\lVert A^{\ast}A\rVert=\lVert A\rVert^{2}\]
for every $A\in\fA$ is a C$^{\ast}$-algebra.
\end{mydef}

States $\omega$ are functionals over an unital $\ast$-algebra $\fA$ with the following properties:
\begin{description}
 \item [Linearity] $\omega(\alpha A+\beta B)=\alpha\omega(A)+\beta\omega(B)$, $\alpha$, $\beta\in \mathbb{C}$, $A$, $B\in \fA$;
 \item [Positive-semidefiniteness] $\omega(A^{\ast}A)\geq 0$;
 \item [Normalization] $\omega(\mathds{1})=1$.
\end{description}
We call a state {\it pure} if it is not a convex combination of two distinct states, i.e.,
\[\nexists \, \omega_{1},\omega_{2}\, \textrm{distinct states over }\fA,\, \, \textrm{and }\lambda \in (0,1)\, \, |\, \omega=\lambda\omega_{1}+(1-\lambda)\omega_{2} \; .\]

The existence of representations of the fields as operators on a certain Hilbert space is achieved by means of the GNS (named after Gel'fand, Naimark and Segal) construction \citep{BraRob-I}, which we will sketch now: given a $\ast$-algebra $\fA$ and a state $\omega$ over this algebra, we can convert this $\ast$-algebra into a pre-Hilbert space (a normed vector space endowed with a positive semi-definite scalar product which is not topologically complete, i.e., the limit of a sequence of vectors does not necessarily lie inside the vector space), by introducing the positive semidefinite scalar product
\[\langle A,B\rangle=\omega(A^{\ast}B) \; .\]
Next we define the left ideal given by
\[\mathcal{N}_{\omega}=\left\{A\, ;A \in \fA\; ,\, \omega(A^{\ast}A)=0\right\} \; .\]
With the help of this ideal, we define equivalence classes
\[\psi_{A}=\{\hat{A};\hat{A}=A+I,\, I\in\mathcal{N}_{\omega}\} \; .\]
This latter space (which is also a vector space) is a strict pre-Hilbert space (a normed vector space endowed with a linear product) with respect to the scalar product
\[\left(\psi_{A},\psi_{B}\right)=\langle A,B\rangle=\omega(A^{\ast}B) \; .\]
$\left(\psi_{A},\psi_{B}\right)$ is clearly independent of the particular representative within the equivalence class. Strict pre-Hilbert spaces can be linearly embedded as a dense subspace of a Hilbert space. We define the completion of this space as the representation space $\mathscr{H}_{\omega}$.

The representatives $\pi_{\omega}(A)$ are defined by their action on the dense subspace of $\mathscr{H}_{\omega}$ spanned by $\psi_{B}$, $B\in\fA$:
\[\pi_{\omega}(A)\psi_{B}=\psi_{AB} \; .\]
This is again independent of the particular representative within the equivalence class. Besides, $\pi_{\omega}(A)$ is a linear operator with bounded closure, which will also be denoted by $\pi_{\omega}(A)$. It is easy to see that
\[\pi_{\omega}(A_{1})\pi_{\omega}(A_{2})\psi_{B}=\psi_{A_{1}A_{2}B} \; ,\]
thus $\pi_{\omega}(A_{1})\pi_{\omega}(A_{2})=\pi_{\omega}(A_{1}A_{2})$. We thus have constructed the representation $\left(\mathscr{H}_{\omega},\pi_{\omega}\right)$.

Now that we have a representation of the algebra $\fA$ as operators on a Hilbert space, it would be interesting to construct a cyclic vector $\Omega_{\omega}\in\mathscr{H}_{\omega}$. We define $\Omega_{\omega}\coloneqq\psi_{\mathds{1}}$. Indeed,
\[\left(\Omega_{\omega},\psi_{A}\Omega_{\omega}\right)=\left(\psi_{\mathds{1}},\psi_{A}\right)=\omega(A) \; .\]
The set $\{\pi_{\omega}(A)\Omega_{\omega};\, A\in\fA\}$ is the dense set of equivalence classes $\{\psi_{A};\, A\in\fA\}$, hence $\Omega_{\omega}$ is cyclic for $\left(\mathscr{H}_{\omega},\pi_{\omega}\right)$. The triple $\left(\mathscr{H}_{\omega},\pi_{\omega},\Omega_{\omega}\right)$ is unique up to unitary equivalence and the representation is irreducible if and only if the state $\omega$ is pure. Furthermore, any vector $\Psi\in\mathscr{H}_{\omega}$ defines a state
\[\omega_{\Psi}(A)=\langle\Psi|\pi_{\omega}(A)|\Psi\rangle \; .\]

We will finish this preliminary section with the definition of KMS states. These states generalize the concept of thermal states to situations where the density matrix cannot be defined \citep{Haag96}.

In the study of nonrelativistic statistical mechanics one usually analyses a system of $\mathcal{N}$ particles (this number is not necessarily fixed) enclosed in a box of finite volume $\mathcal{V}$ with energy $\mathcal{E}$. The thermodynamic limit is taken when $\mathcal{N},\mathcal{V},\mathcal{E}\rightarrow\infty$, but the ratios $\mathcal{N}/\mathcal{V}$, $\mathcal{E}/\mathcal{V}$ remain finite. In this setting, the Gibbs canonical ensemble describes the situation where the temperature is fixed. For quantum systems, to which a hamiltonian $H$ is assigned, the density matrix is defined as
\[\rho_{\beta}=Z^{-1}e^{-\beta H} \quad \textrm{where} \quad Z=\textrm{tr }e^{-\beta H} \; ,\]
where $\beta=(\kappa T)^{-1}$. The density matrix is a trace-class operator with trace $\textrm{tr }\rho_{\beta}=1$. The expectation value of a bounded operator $A$ is given by $\omega_{\beta}(A)=\textrm{tr }\rho_{\beta}A$. If one considers now the time evolution of this operator, given by
\[\alpha_{t}(A)=e^{itH}Ae^{-itH} \; ,\]
together with the cyclicity of the trace, we get (for $B$ another bounded operator)
\begin{align}
\omega_{\beta}(\alpha_{t}(A)B)&=Z^{-1}\textrm{tr }e^{-\beta H}e^{itH}Ae^{-itH}B=Z^{-1}\textrm{tr }Be^{iH(t+i\beta)}Ae^{-itH} \nonumber \\
&=Z^{-1}\textrm{tr }e^{-\beta H}Be^{iH(t+i\beta)}Ae^{-iH(t+i\beta)}=\omega_{\beta}(B\alpha_{t+i\beta}(A)) \; .
\label{KMS-density}
\end{align}

The KMS condition, named after Kubo, Martin and Schwinger, comes from the observation made by the authors of \citep{HHW67} (see also \citep{BraRob-II}) that the equality
\[\omega_{\beta}(\alpha_{t}(A)B)=\omega_{\beta}(B\alpha_{t+i\beta}(A))\]
remains valid even when one cannot define a density matrix. They arrived at this conclusion starting with the two functions
\begin{align*}
F_{A,B}(z)&=\omega_{\beta}(B\alpha_{z}(A)) \\
G_{A,B}(z)&=\omega_{\beta}(\alpha_{z}(A)B) \; ,
\end{align*}
with $z\in\mathbb{C}$. The function $F$ is an analytic function of $z$ in the strip
\[0<\textrm{Im }z<\beta\]
and $G$ is an analytic function of $z$ in the strip
\[-\beta<\textrm{Im }z<0 \; .\]
For real values of $z$, both functions are bounded and continuous, and one obtains $G(t)$ as the boundary value of $F(z)$ as $z\rightarrow t+i\beta$,
\[G_{A,B}(t)=F_{A,B}(t+i\beta) \; .\]
Further properties of KMS states, also in curved spacetimes, can be found in the recent review \citep{Sanders13b}.

\section{Quantized scalar field}\label{sec_fieldquant}

\subsection{General axioms}

As seen in the Introduction, the approach to Quantum Field Theory usual in Minkowski spacetime, starting from the (unique) vacuum state invariant under the Poincar\'e symmetries, is not suitable for a quantum field theory defined in a globally hyperbolic spacetime. One should instead start from the algebra of observables. A set of general axioms that any sensible algebra of observables must satisfy was first written down by Haag and Kastler in 1964 \citep{HaagKastler64}, but only for the case of a quantum field theory in Minkowski spacetime. Those axioms were generalized to quantum fields in globally hyperbolic spacetimes only in 1980 by Dimock \citep{Dimock80}.

The algebra of observables is a C$^{\ast}$-algebra, which means that, after the introduction of a state $\omega$ and the corresponding GNS triple, the elements of the algebra are represented by bounded operators on the Hilbert space $H_{\omega}$. The C$^{\ast}$-algebra can be adequately enlarged in order to rigorously encompass nonlinear functionals of the observables, as well as interacting fields \citep{BruFreKoe96,BruFre00,HoWa02}. On the other hand, it is more convenient to analyse the backreaction of the fields on the background geometry if these are given the structure of a $\ast$-algebra because, after the introduction of a state $\omega$, the elements of a $\ast$-algebra are represented by unbounded operators on the corresponding Hilbert space $H_{\omega}$. This problem will not be dealt with here, but it served as a motivation for this thesis. We will thus present the axioms having in mind that the fields are given the structure of a $\ast$-algebra:

For every bounded region $\mathcal{O}$ of spacetime we will consider the algebra $\fA(\mathcal{O})$ of fields whose support lie within $\mathcal{O}$. $\fA(\mathcal{O})$ is considered to be a unital ${\ast}$-algebra.

{\it Axiom 1}: Isotony \\
If a $\ast$-algebra $\fA$ is assigned to two contractible open and bounded regions $\mathcal{O}_{1}$ and  $\mathcal{O}_{2}$ such that $\mathcal{O}_{2}\supseteq\mathcal{O}_{1}$, then $\fA(\mathcal{O}_{2})\supseteq\fA(\mathcal{O}_{1})$. The $\ast$-algebra $\fA(\M)$ is defined as the union of all $\fA(\mathcal{O})$ with $\mathcal{O}\subseteq\M$.

{\it Axiom 2}: Covariance \\
Let $G$ be the group of isometries of spacetime which preserve orientation and time orientation. Then there should be a representation $\alpha$ of $G$ by automorphisms of $\fA$ such that
\[\alpha_{g}\fA(\mathcal{O})=\fA(g\mathcal{O}) \; .\]

{\it Axiom 3}: Einstein Causality \\
If $\mathcal{O}_{1}$ and  $\mathcal{O}_{2}$ are two causally separated regions, i.e., $\mathcal{O}_{1}\cap J(\mathcal{O}_{2})=\emptyset$, then $\forall a_{1}\in\fA(\mathcal{O}_{1})\, \textrm{and}\, a_{2}\in\fA(\mathcal{O}_{2}),\, \comm{a_{1}}{a_{2}}=0$.

{\it Axiom 4}: Time slice axiom \\
For a globally hyperbolic region $\mathcal{O}$ with a Cauchy hypersurface $\Sigma$, the algebra in any globally hyperbolic neighborhood $\mathcal{O}_{1}$ of $\Sigma$ already contain the information about the algebra in $\mathcal{O}$,
\[\fA(\mathcal{O})\subseteq\fA(\mathcal{O}_{1}) \; .\]

The algebra of free fields, which we will construct in the following, is an example of a $\ast$-algebra satisfying these axioms. Particularly the last axiom is a direct consequence of the fact that the fields are solutions of a hyperbolic differential equation as shown by Dimock \citep{Dimock80}. For interacting fields, this analysis is more involved and it was only recently proved \citep{ChillFre09} that the algebra of interacting fields also satisfies this axiom.

\subsection{The $CCR$ and {\it Weyl} algebras}\label{sec-CCR-Weyl}

The explicit construction of the algebra of solutions of the wave equation is made in a few steps. Firstly we will introduce the so-called {\it Borchers-Uhlmann} algebra, a $\ast$-algebra of smooth sections on the manifold. Afterwards we will require that the elements of this algebra also satisfy the wave equation and the commutation relations, thus obtaining the $CCR$-algebra. At last we will introduce the {\it Weyl} algebra, having the structure of a C$^{\ast}$-algebra. The $CCR$-algebra suffices for the presentation of the first result of this thesis, the construction of States of Low Energy in Expanding spacetimes (see chapter \ref{chap_sle}). The second result of this thesis, the construction of ``Vacuum-like'' Hadamard states (chapter \ref{chap_vac-like}), will not make much use of the Weyl algebra. Our last result, the construction of a Hadamard state in Schwarzschild-de Sitter spacetime (chapter \ref{chap_Sch-dS}), will largely use the Weyl algebra.

We still need to mention a couple of facts before we construct the algebra of fields on a globally hyperbolic spacetime. First, we note that the Lorentzian metric $g$ generates a measure on the spacetime and the inner product on the space of test sections $\cC_{0}^{\infty}(\M,\mathbb{K})$ is defined by
\begin{equation}
 (f,f')_{g}\coloneqq \int\textrm{d}^{4}x\sqrt{|g|}\, \overline{f}(x)f'(x) \; .
  \label{innerprod}
\end{equation}
Second, the fields will be required to solve the Klein-Gordon equation, whose differential operator is $P\coloneqq \Box +m^{2}$. From the fundamental solutions of $P$, one defines the {\it advanced-minus-retarded operator} $\mathds{E}\coloneqq \mathds{E}^{-}-\mathds{E}^{+}$, as explained in subsection \ref{subsec_wave-eq}. Using $\mathds{E}$, the antisymmetric form is defined as
\begin{equation}
 \sigma(f,f')\coloneqq -\int\textrm{d}^{4}x\sqrt{|g|}\, f(x)(\mathds{E}f')(x) \eqqcolon -E(f,f') \; .
 \label{symplform}
\end{equation}
This antisymmetric form is degenerate, because if $f$ and $f'$, both elements of $\cC_{0}^{\infty}(\M,\mathbb{K})$, are related by $f=Pf'$, then $\forall f'' \in \cC_{0}^{\infty}(\M,\mathbb{K})$ we have
\[\sigma(f'',f)=0 \; .\]
Therefore the domain of the antisymmetric form must be replaced by the quotient space\footnote{$\textrm{Ran}P$ is the {\it range} of the operator $P$, that is, the elements $f\in\cC_{0}^{\infty}(\M,\mathbb{K})$ such that $f=Ph$ for some $h\in\cC_{0}^{\infty}(\M,\mathbb{K})$. Moreover, $\mathrm{Ker}\mathds{E}=\textrm{Ran}P$.} $\cC_{0}^{\infty}(\M,\mathbb{K})/\textrm{Ran}P\eqqcolon \K(\M)$.

We now construct the algebra of fields. To each $f\in\K(\M)$ we assign the abstract symbol $\Phi(f)$ and construct the universal tensor algebra:
\[\sA \coloneqq \bigoplus_{n=0}^{\infty}\K(\M)^{\otimes n} \; ,\]
where $\K(\M)^{(0)} \equiv \mathds{1}$. We then endow this algebra with a complex conjugation as a $\ast$-operation and take its quotient with the closed two-sided $\ast$-ideal generated by:
\begin{enumerate}[label=(\roman{*})]
 \item $\Phi(\overline{f})=\Phi(f)^{*}$;
 \item $\Phi(\alpha f+\beta f')=\alpha\Phi(f)+\beta\Phi(f')$, where $\alpha\, ,\, \beta \in\mathbb{C}$;
 \item $\Phi(Pf)=0$;
 \item $\comm{\Phi(f)}{\Phi(f')}=-i\sigma(f,f')\mathds{1}$, where $\comm{\cdot}{\cdot}$ is the commutator and $\mathds{1}$ is the unit element. \label{comm}
\end{enumerate}
The algebra then becomes a unital $\ast$-algebra, the $CCR$-algebra, which will be called $\F$.

The quantum field $\Phi$ is then a linear map from the space of test sections $\mathcal{C}_{0}^{\infty}(\M,\mathbb{K})$ to a unital $\ast$-algebra weakly satisfying the equation of motion (see item (iii) above). Hence $\Phi$, formally written as 
\[\Phi(f)=\int\textrm{d}^{4}x\sqrt{|g|}\, \phi(x)f(x) \; ,\]
may be understood as an algebra valued distributional solution of the Klein Gordon equation. Moreover, a topology can be assigned to this algebra \citep{ArakiYamagami82}, turning $\F$ into a topological, unital $\ast$-algebra. This algebra is unique up to isomorphism.

Dimock \citep{Dimock80} showed that the $CCR$-algebra can be equivalently constructed using the initial-value fields, by setting $\phi = \mathds{E}f$ and $\psi = \mathds{E}f'$, where $f$ and $f'$ are test sections: one defines the restriction operators $\rho_{0}: \phi \mapsto \phi_{\upharpoonright \Sigma} \eqqcolon \phi_{0}$ and $\rho_{1}: \phi \mapsto (\partial_{n}\phi)_{\upharpoonright \Sigma} \eqqcolon \phi_{1}$ (and similarly for $\psi$), where $\partial_{n}$ is the derivative in the direction of the vector $n$, normal to $\Sigma$. The new space of functions is given by
\begin{equation}
 L(\Sigma)=\left\{(\phi_{0},\phi_{1})\in C_{0}^{\infty}(\Sigma,\mathbb{K})\times C_{0}^{\infty}(\Sigma,\mathbb{K})\right\} \; ,
 \label{Cauchysymplspace}
\end{equation}
and the symplectic form, by
\begin{equation}
 \sigma(f,f')=-\int_{\Sigma}\textrm{d}^{3}x\sqrt{|g_{\upharpoonright \Sigma}|}\, \left(\overline{\phi}_{0}(x)\psi_{1}(x)-\psi_{0}(x)\overline{\phi}_{1}(x)\right) \; .
 \label{Cauchysymplform}
\end{equation}
The symplectic form defined above does not dependend on the Cauchy hypersurface on which it is calculated and it is preserved by the isomorphic mapping $\beta:K\rightarrow L\textrm{\ ,\ }\phi\mapsto (\phi_{1},\phi_{2})$.

The symbols $\Phi(f)$ are unbounded. In order to obtain an algebra of bounded operators, we will construct the so-called {\it Weyl algebra} as follows. Operating on the quocient space $\mathcal{C}_{0}^{\infty}(\M,\mathbb{K})/\mathrm{Ker}\mathds{E} \times \mathcal{C}_{0}^{\infty}(\M,\mathbb{K})/\mathrm{Ker}\mathds{E}$, the anti-symmetric form $\sigma$ becomes nondegenerate. We thus define the real vector space $L\coloneqq \mathrm{Re}\left(\mathcal{C}_{0}^{\infty}(\M,\mathbb{R})/\mathrm{Ker}\mathds{E}\right)$ and hence $(L,\sigma)$ is a real symplectic space where $\sigma$ is the symplectic form. From the elements of this real symplectic space one can define the symbols $W(f)$, $f\in L$, satisfying
\begin{enumerate}[label=(\Roman{*})]
 \item $W(0)=\mathds{1}$;
 \item $W(-f)=W(f)^{*}$;
 \item For $f,g\in L$, $W(f)W(g)=e^{-i\frac{\sigma(f,g)}{2}}W(f+g)$.
\end{enumerate}
The relations (II) and (III) are known as {\it Weyl relations}. The algebra constructed from the formal finite sums
\[\W(L,\sigma)\coloneqq\sum_{i}a_{i}W(f_{i})\]
admits a unique $C^{\ast}$-norm \citep{BraRob-II}. The completion of this algebra with respect to this norm is the so-called {\it Weyl algebra}. From the nondegenerateness of the symplectic form one sees that $W(f)=W(g)$ iff $f=g$.

\subsection{Quasifree states and the Hadamard condition}\label{secstates}

On quantum field theory in Minkowski spacetime there is a unique state which is invariant under the action of the Poincar\'e group, the {\it vacuum} state. In a curved spacetime there is, in general, no group of symmetries and, hence, no state can be singled out by invariance arguments. In the particular case of stationary spacetimes, i.e., spacetimes possessing a timelike Killing vector field, one can indeed construct a vacuum state \citep{Kay78,Wald94}.

Before we proceed, let us give some important information about the $n$-point functions corresponding to states. Throughout this thesis we will focus on states which are completely described by their two-point functions, the so-called {\it quasifree states}. All odd-point functions vanish identically and the higher even-point functions can be written as
\[w^{(2n)}_{\omega}(f_{1} \otimes \ldots \otimes f_{2n}) = \sum_{p}\prod_{k=1}^{n} w^{(2)}_{\omega}(f_{p(k)},f_{p(k+n)})\; .\]
Here, $w^{(2n)}_{\omega}$ is the $2n$-point function associated to the state $\omega$, $w^{(2)}_{\omega}(f_{p(k)},f_{p(k+n)})\equiv\omega(f_{p(k)},f_{p(k+n)})$ and the sum runs over all permutations of $\{1,\ldots ,n\}$ which satisfy $p(1)< \ldots <p(n)$ and $p(k)<p(k+n)$.

The two-point function of a state can be decomposed in its symmetric and anti-symmetric parts\footnote{Throughout this thesis, all test functions will be real valued.} ($f_{1},f_{2}\in L$)
\begin{equation}
w_{\omega}^{(2)}(f_{1},f_{2})=\mu(f_{1},f_{2})+\frac{i}{2}\sigma(f_{1},f_{2}) \; ,
\label{quasifree-2ptfcn}
\end{equation}
where $\mu(\cdot,\cdot)$ is a real linear symmetric product which majorizes the symplectic product, i.e.
\begin{equation}
|\sigma(f_{1},f_{2})|^{2}\leq 4\mu(f_{1},f_{1})\mu(f_{2},f_{2}) \; .
\label{mu_geq_sigma}
\end{equation}
The state is pure if and only if the inequality above is saturated, i.e., $\forall f_{1}\in L$,
\begin{equation}
\mu(f_{1},f_{1})=\frac{1}{4}\underset{f_{2}\neq 0}{\textrm{l.u.b.}}\frac{|\sigma(f_{1},f_{2})|^{2}}{\mu(f_{2},f_{2})} \; ,
\label{pure-state}
\end{equation}
where l.u.b. is the {\it least upper bound} (in infinite dimensions, a maximum over $f_{2}\neq 0$ will possibly not be attained). Since the symplectic form is uniquely determined, the characterization of the quasifree state amounts to the choice of the real linear symmetric product $\mu$. Thus, the choice of a Hilbert space is equivalent to the choice of $\mu$. This equivalence is reinforced by the following proposition, whose proof can be found in the appendix A of \citep{KayWald91}:
\begin{prop}\label{one-particle_structure}
Let $L$ be a real vector space on which are defined both a bilinear symplectic form, $\sigma$, and a bilinear positive symmetric form, $\mu$, satisfying \eqref{mu_geq_sigma}. Then, one can always find a complex Hilbert space $\mathscr{H}$, with scalar product $\langle\cdot|\cdot\rangle_{\mathscr{H}}$, together with a real linear map $K:L\rightarrow\mathscr{H}$ such that
\begin{tabbing}
\= (iii) \=  $\sigma(f_{1},f_{2})=2\textrm{Im}\langle Kf_{1}|Kf_{2} \rangle_{\mathscr{H}}$, $\forall f_{1},f_{2}\in L$. \kill
\> (i) \>  the complexified range of $K$, $KL+iKL$, is dense in $\mathscr{H}$; \\
\> (ii) \>  $\mu(f_{1},f_{2})=\textrm{Re}\langle Kf_{1}|Kf_{2} \rangle_{\mathscr{H}}$, $\forall f_{1},f_{2}\in L$; \\
\> (iii) \>  $\sigma(f_{1},f_{2})=2\textrm{Im}\langle Kf_{1}|Kf_{2} \rangle_{\mathscr{H}}$, $\forall f_{1},f_{2}\in L$.
\end{tabbing}
\end{prop}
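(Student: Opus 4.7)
The plan is to realize $\mathscr{H}$ as the Hilbert-space completion of the complexification $L_{\mathbb{C}} := L \oplus iL$ equipped with a canonical sesquilinear form built out of $\mu$ and $\sigma$; the map $K$ will then be nothing other than the natural inclusion of $L$ into $L_{\mathbb{C}}$ followed by the quotient by the null space and the completion.

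First I would extend $\mu$ and $\sigma$ to $\mathbb{C}$-bilinear forms on $L_{\mathbb{C}} \times L_{\mathbb{C}}$ and, writing $\overline{f_{1} + i f_{2}} := f_{1} - i f_{2}$ for the natural conjugation, introduce
\[ \omega(f, g) := \mu(\bar{f}, g) + \tfrac{i}{2}\sigma(\bar{f}, g) , \quad f, g \in L_{\mathbb{C}} . \]
The symmetry of $\mu$ together with the antisymmetry of $\sigma$ make $\omega$ Hermitian and sesquilinear (antilinear in the first argument, linear in the second); both properties reduce to a one-line check using $\overline{B(a,b)} = B(\bar{a}, \bar{b})$ valid for the $\mathbb{C}$-bilinear extension of any real bilinear form $B$. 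The crux is then the positivity $\omega(f,f) \geq 0$: for $f = f_{1} + i f_{2}$ a short computation yields $\omega(f, f) = \mu(f_{1}, f_{1}) + \mu(f_{2}, f_{2}) - \sigma(f_{1}, f_{2})$, and the majorization hypothesis \eqref{mu_geq_sigma} together with AM--GM gives $|\sigma(f_{1}, f_{2})| \leq 2\sqrt{\mu(f_{1}, f_{1})\mu(f_{2}, f_{2})} \leq \mu(f_{1}, f_{1}) + \mu(f_{2}, f_{2})$, whence $\omega(f, f) \geq 0$. I expect this to be the main obstacle of the proof; it is also the only point at which the hypothesis \eqref{mu_geq_sigma} plays an essential role.

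With positivity secured, the null space $N := \{f \in L_{\mathbb{C}} : \omega(f, f) = 0\}$ is a complex subspace of $L_{\mathbb{C}}$ by Cauchy--Schwarz, so $\omega$ descends to a strictly positive-definite inner product on $L_{\mathbb{C}}/N$, whose Hilbert-space completion I take as $\mathscr{H}$ with inner product $\langle \cdot | \cdot \rangle_{\mathscr{H}}$. Setting $Kf := [f]_{N} \in \mathscr{H}$ for $f \in L$ gives an $\mathbb{R}$-linear map, and the evaluation $\langle Kf_{1} | Kf_{2} \rangle_{\mathscr{H}} = \omega(f_{1}, f_{2}) = \mu(f_{1}, f_{2}) + \tfrac{i}{2}\sigma(f_{1}, f_{2})$ for $f_{1}, f_{2} \in L$ delivers (ii) and (iii) at once. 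For (i), $KL + iKL$ is by construction the image of $L + iL = L_{\mathbb{C}}$ in the completion, hence dense.
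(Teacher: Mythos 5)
Your construction is correct: the sesquilinear form $\omega(f,g)=\mu_{\mathbb{C}}(\bar f,g)+\tfrac{i}{2}\sigma_{\mathbb{C}}(\bar f,g)$ on the complexification is Hermitian, and its positive semi-definiteness is exactly where \eqref{mu_geq_sigma} enters, via $|\sigma(f_{1},f_{2})|\le 2\sqrt{\mu(f_{1},f_{1})\mu(f_{2},f_{2})}\le \mu(f_{1},f_{1})+\mu(f_{2},f_{2})$, after which quotienting by the null space (a complex subspace by Cauchy--Schwarz) and completing gives $(K,\mathscr{H})$ satisfying (i)--(iii), with $KL+iKL$ dense by construction. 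The thesis itself supplies no proof but defers to Appendix A of \citep{KayWald91}; your argument is the standard self-contained route --- identifying $\mathscr{H}$ as the GNS-type completion associated with the quasifree two-point function $\mu+\tfrac{i}{2}\sigma$ of \eqref{quasifree-2ptfcn} --- so it is an acceptable substitute for the cited proof and needs no repair.
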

The pair $(K,\mathscr{H})$ is uniquely determined up to an isomorphism, and it is called the {\it one-particle structure}. Moreover, we have $w_{\omega}^{(2)}(f_{1},f_{2})=\langle Kf_{1}|Kf_{2} \rangle_{\mathscr{H}}$ and the quasifree state with this two-point function is pure if and only if $KL$ alone is dense in $\mathscr{H}$.


In spite of the obvious simplification introduced by the restriction to quasifree states, this is a little bit superfluous for our present purpose, the construction of Hadamard states for quantum field theory in curved spacetimes. As we will see below, the condition for a state to be called a Hadamard state is a restriction on the form of its singularity structure. Recently Sanders \citep{Sanders10} proved that, if the two-point function of the state is of Hadamard form, is a bissolution of the field equations and the commutation relations, with the commutator given by the advanced-minus-retarded operator of the wave operator, then the state os a Hadamard state. This result shows that the analysis of the two-point function of a state suffices to determine whether the state is Hadamard. Since we are interested in concrete examples, we will maintain the restriction of quasifree states. We will now explain the concept of Hadamard states.

The concept of Hadamard states is a local remnant of the spectral condition in Minkowski spacetime, together with the fact that the two-point function of this state is a bissolution of the field equations. There the spectral condition provides sufficient control on the singularities of the $n$-point functions, opening the possibility of extending the states to correlation functions of nonlinear functions of the field as, e.g., the energy momentum tensor. These nonlinear functions are incorporated, in Minkowski spacetime, by means of normal ordering and the Wick product \citep{StreaterWightman64}. The first rigorous form of the two-point function of a Hadamard state was written by Kay and Wald \citep{KayWald91} as
\begin{align}
w_{\omega}^{(2)}(x,y)^{T,n}_{\epsilon} &=\frac{\chi(x,y)}{(2\pi)^{2}}\left(\frac{\Delta^{2}}{\sigma+2i\epsilon(T(x)-T(y))+\epsilon^{2}}+v^{(N)}\ln(\sigma+2i\epsilon(T(x)-T(y))+\epsilon^{2})\right) \nonumber \\
&+H^{(N)}(x,y) \; .
\label{Had-2ptfcn}
\end{align}
Below we will present the modern definition of Hadamard states, which is equivalent to this one and will be used in the following chapters. Here $T$ is a global time function, $x$ and $y$ are causally related points such that $J^{+}(x)\cap J^{-}(y)$ and $J^{+}(y)\cap J^{-}(x)$ are contained within a convex normal neighborhood, $n$ is an integer, $\epsilon$ is a strictly positive real number, $\sigma$ is the squared geodesic distance, $\Delta$ is the van Vleck-Morette determinant,
\[v^{(N)}(x,y)=\sum_{m=0}^{N}v_{m}(x,y)\sigma^{m} \; ,\]
and $H^{(N)}$ is a $\cC^{N}$ function. $\sigma$ is well defined and smooth within an open neighborhood $\mathscr{O}\in\M\times\M$ of $(x,y)$. For $\Sigma$ a Cauchy hypersurface of $\M$ and $N$ a causal normal neighborhood of $\Sigma$, let $\mathscr{O}'\in N\times N$ be an open neighborhood of the set of pairs of causally related points such that the closure of $\mathscr{O}'$ is contained in $\mathscr{O}$. Then $\chi(x,y)$ is such that $\chi(x,y)\equiv 0$ if $(x,y)\notin \mathscr{O}$ and $\chi(x,y)\equiv 1$ if $(x,y)\in \mathscr{O}'$.

It is remarkable that the singular part of this two-point function, the term between parentheses, is a purely geometrical term. The dependence on the state is contained in the $\cC^{N}$ function $H^{(N)}$, whence it is possible to define the renormalized quantum field theory for the whole class of Hadamard states at once. In the late 1970's it was proved that the expectation value of the energy-momentum tensor on such a state in a globally hyperbolic spacetime can be renormalized by the point-splitting procedure \citep{Wald77}.

The result on the last paragraph is based on the assumption that Hadamard states do exist in a general globally hyperbolic spacetime. Such an existence is not obvious, a priori, but it was long shown to be true. Fulling, Sweeny and Wald \citep{FullingSweenyWald78} proved that if the two-point function of a state is of Hadamard form in an open neighborhood of a Cauchy hypersurface, then it retains this form everywhere. A few years later Fulling, Narcowich and Wald \citep{FullingNarcowichWald81} proved that if a globally hyperbolic spacetime contains a static region (an asymptotically flat spacetime, for example), then the vacuum state defined in this region is a Hadamard state, and it retains its singularity structure throughout the spacetime. Furthermore, they showed that one can smoothly deform the geometry to the past of a Cauchy hypersurface $\Sigma_{1}$ so that it becomes ultrastatic\footnote{Ultrastatic spacetimes are static spacetimes such that the timelike Killing vector is everywhere unitary. In the coordinate system of \eqref{metric-timedecomp-static}, $\alpha\equiv 1$.} to the past (say, to the past of another Cauchy hypersurface $\Sigma_{2}$) while retaining global hyperbolicity. Therefore the vacuum state defined in the ultrastatic region will give rise to a Hadamard state in the undeformed region (the future of $\Sigma_{1}$), and this will define a Hadamard state in the whole (undeformed) spacetime. Although rather indirect, for the sake of well-definiteness of the theory, this existence argument suffices, but for practical purposes, a more explicit construction is desired. We will provide some explicit examples of Hadamard states in the next chapters.

The Hadamard condition as presented in equation \eqref{Had-2ptfcn}, although rigorous, makes explicit use of the global time function, which is not uniquely defined. A purely geometrical characterization of Hadamard states was only achieved in the seminal works of Radzikowski (with the collaboration of Verch) \citep{Radzikowski96,RadzikowskiVerch96}, where the Hadamard condition was written in terms of the wave front set of the two-point function corresponding to the state:
\begin{mydef}\label{Hadamard-wf}
A quasifree state $\omega$ is said to be a {\it Hadamard state} if its two-point distribution $w_{\omega}^{(2)}$ has the following wave front set:
\begin{equation}
 WF(w_{\omega}^{(2)})=\left\{\left(x_{1},k_{1};x_{2},-k_{2}\right) | \left(x_{1},k_{1};x_{2},k_{2}\right)\in {\mathcal T}^{*}\left(\M\times\M\right) \diagdown \{0\} ; (x_{1},k_{1})\sim (x_{2},k_{2}) ; k_{1}\in \overline{V}_{+}\right\}
 \label{Wfcond}
\end{equation}
where $(x_{1},k_{1})\sim (x_{2},k_{2})$ means that there exists a null geodesic connecting $x_{1}$ and $x_{2}$, $k_{1}$ is the cotangent vector to this geodesic at $x_{1}$ and $k_{2}$, its parallel transport, along this geodesic, at $x_{2}$. $\overline{V}_{+}$ is the closed forward light cone of $\mathcal{T}^{*}_{x_{1}}\M$.
\end{mydef}
To facilitate the writing, we will call this set $C^{+}$ and say that a quasifree state is Hadamard if its two-point function has this wave front set:
\begin{equation}
WF(w_{\omega}^{(2)})=C^{+}\; .
\label{HadWF-C+}
\end{equation}

Since the antisymmetric part of a Hadamard two-point function is the commutator function $\mathds{E}$, the difference between the two-point functions of different Hadamard states is symmetric. But the symmetric part of the wave front set of a Hadamard function is empty, hence it is a smooth function. This fact will play a fundamental role when we come to the proof that the states which will be constructed below are Hadamard states.

This definition of Hadamard states in terms of the $WF$ set allows the incorporation of interacting field theories at the perturbative level in the algebraic approach \citep{BruFreKoe96,BruFre00,HoWa02}. In this setting, the renormalization of the expectation value of the energy-momentum tensor was done by \citep{Moretti03} (see the recent review \citep{BeniniDappiaggiHack13} for further references). Conversely to the just cited papers, which showed that the Hadamard condition is sufficient to obtain a coherently renormalizable quantum field theory, the authors of \citep{FewsterVerch13} proved that, in ultrastatic slab spacetimes\footnote{Spacetimes possessing a timelike Killing vector which is everywhere orthogonal to the Cauchy hypersurfaces and normalized.} with compact Cauchy hypersurfaces, the requirement that certain Wick polynomials have finite fluctuations enforces the Hadamard condition.

As a concluding remark, we mention that, although the Hadamard condition is physically well-motivated and mathematically well defined, we have no general criterion to select one state among the class of Hadamard states --- actually, such a nonexistence has the status of a no-go theorem \citep{FewsterVerch12,FewsterVerch_sf12} (conversely, a general assignment of the Weyl algebra to any globally hyperbolic spacetime by considering the algebras defined on open subsets isometrically embedded in the manifold was achieved in \citep{BFV03}). However it is known that the set of Hadamard states comprises a {\it locally quasiequivalence class}. We will now present this concept.

In finite dimensional spaces one can start from the Stone-von Neumann Theorem (see Theorem 2.2.1 in \citep{Wald94} and Theorem 7.5 of \citep{Simon72} for a proof):
\begin{thm}
Any two strongly-continuous, irreducible, unitary representations of the Weyl relations over a finite dimensional symplectic space are unitarily equivalent\footnote{By {\it unitarily equivalent} we mean that there is a unitary transformation mapping the operators of one representation to the operators of the other representation.}.
\end{thm}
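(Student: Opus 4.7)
The plan is to construct, in any strongly continuous unitary representation of the Weyl relations over a finite-dimensional symplectic space $(V,\sigma)$, a canonical ``vacuum'' unit vector whose two-point generating function is representation-independent, and then use the Weyl relations to build a unitary intertwiner between any two irreducible such representations.

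First, since $V$ has even (finite) dimension $2n$, I would fix a compatible complex structure $J$: a real-linear automorphism of $V$ with $J^{2}=-1$, $\sigma(Jf,Jg)=\sigma(f,g)$, and $\mu(f,g):=\sigma(f,Jg)$ positive definite. Let $dm$ be the Lebesgue measure on $V$ normalised by $\mu$. For a strongly continuous unitary representation $\pi$ of the Weyl relations, define
\[
P_\pi := \frac{1}{(2\pi)^n} \int_V e^{-\mu(f,f)/4}\, \pi(W(f))\, dm(f).
\]
This integral converges in the weak operator topology because $\|\pi(W(f))\|=1$ and the Gaussian weight is integrable; strong continuity of $f\mapsto \pi(W(f))$ makes $P_\pi$ a well-defined bounded operator.

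Next, using only the Weyl relation $W(f)W(g)=e^{-i\sigma(f,g)/2}W(f+g)$, I would establish three facts: (a) $P_\pi^{*}=P_\pi$, from the $f\mapsto -f$ symmetry of the Gaussian combined with $W(-f)=W(f)^{*}$; (b) $P_\pi^{2}=P_\pi$, obtained by writing the double integral, substituting $h=f+f'$, completing the square in $f$, and evaluating the resulting purely Gaussian integral; and (c) the coherent-state identity $P_\pi\,\pi(W(g))\,P_\pi = e^{-\mu(g,g)/4}\,P_\pi$, by a similar manipulation. Properties (b) and (c) force $P_\pi$ to be a projection whose range is one-dimensional whenever it is nonzero and whenever $\pi$ is irreducible. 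Non-triviality of $P_\pi$ is the delicate point: if $P_\pi$ vanished, averaging $\pi(W(\cdot))$ against a sufficiently rich family of Gaussian-like test functions would force $\pi(W(f))=0$ for all $f$ via Fourier inversion on $V$, contradicting unitarity. Choosing $\Omega_\pi\in\textrm{Ran}(P_\pi)$ with $\|\Omega_\pi\|=1$, property (c) yields
\[
\langle \Omega_\pi,\, \pi(W(f))\Omega_\pi\rangle = e^{-\mu(f,f)/4},
\]
a quantity manifestly independent of $\pi$.

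Finally, for two strongly continuous irreducible unitary representations $\pi_1,\pi_2$ with vacua $\Omega_1,\Omega_2$, the Weyl relations and the previous identity give
\[
\langle \pi_i(W(f))\Omega_i,\, \pi_i(W(g))\Omega_i\rangle = e^{i\sigma(f,g)/2}\, e^{-\mu(f-g,f-g)/4},
\]
independent of $i$. The map $U:\pi_1(W(f))\Omega_1\mapsto \pi_2(W(f))\Omega_2$ thus extends linearly to an isometry between dense subspaces (density follows from irreducibility, since the closed linear span of $\{\pi_i(W(f))\Omega_i\}_{f\in V}$ is $\pi_i$-invariant and contains $\Omega_i$), and its unique unitary extension intertwines $\pi_1$ and $\pi_2$. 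The main obstacle is the pair of Gaussian integral computations underlying (b) and (c) together with the proof that $P_\pi\neq 0$: these rely only on strong continuity and the Weyl relations but demand careful Fourier-analytic bookkeeping on the finite-dimensional real vector space $V$, and the finite dimensionality is essential precisely because it makes the Gaussian weight integrable and the Fourier inversion argument available.
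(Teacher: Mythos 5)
Your proposal is essentially correct, but note that the thesis does not actually prove this theorem: it is quoted verbatim with a pointer to Theorem 2.2.1 of \citep{Wald94} and Theorem 7.5 of \citep{Simon72} for a proof, so there is no internal argument to compare against. What you have reconstructed is von Neumann's classical averaging proof, which is the standard route taken in such references: with $\mu(f,g)=\sigma(f,Jg)$ for a compatible complex structure $J$ and the Lebesgue measure normalised so that $\int e^{-\mu(f,f)/2}\,dm(f)=(2\pi)^{n}$, the Gaussian computations you defer do close exactly with the paper's convention $W(f)W(g)=e^{-i\sigma(f,g)/2}W(f+g)$ --- completing the square leaves no residual phase in either $P_{\pi}^{2}=P_{\pi}$ or $P_{\pi}\pi(W(g))P_{\pi}=e^{-\mu(g,g)/4}P_{\pi}$, because the imaginary contributions cancel against the cross terms $\mu(\cdot,J\cdot)$. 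Two steps deserve to be made explicit rather than asserted: (i) one-dimensionality of $\mathrm{Ran}\,P_{\pi}$ follows from the coherent-state identity plus irreducibility, since two orthogonal vectors $\xi,\eta$ in the range would give $\langle\xi,\pi(W(g))\eta\rangle=e^{-\mu(g,g)/4}\langle\xi,\eta\rangle=0$ for all $g$, contradicting cyclicity of $\eta$; and (ii) the non-vanishing of $P_{\pi}$ is cleanest via conjugation, $\pi(W(g))P_{\pi}\pi(W(g))^{\ast}=(2\pi)^{-n}\int e^{-\mu(f,f)/4}e^{-i\sigma(g,f)}\pi(W(f))\,dm(f)$, so $P_{\pi}=0$ would force the Fourier transform of the continuous integrable function $f\mapsto e^{-\mu(f,f)/4}\langle\psi,\pi(W(f))\phi\rangle$ to vanish for all $\psi,\phi$, hence $\pi(W(0))=0$, a contradiction; this is where both strong continuity and finite dimensionality enter, exactly as you indicate. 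The final intertwining step is standard once the vacuum expectation values are seen to be representation independent. In short: a correct sketch of the canonical proof, with the only caveats being the bookkeeping you yourself flag.
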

This theorem has no analog in an infinite-dimensional space. Nonetheless there are other general concepts about representations which will allow us to find classes of states which are physically equivalent.
\begin{mydef}
Two representations $\pi_{1}$ and $\pi_{2}$ of the Weyl algebra $\W$ defined on a globally hyperbolic spacetime $\M$ are called {\it quasiequivalent} if every subrepresentation of the first contains a representation which is unitarily equivalent to a subrepresentation of the second. $\pi_{1}$ and $\pi_{2}$ are called {\it locally quasiequivalent} if the ${\pi_{j}}_{\upharpoonright\W(\mathscr{O})}$ , $j=1,2$, are quasiequivalent for arbitrary open regions $\mathscr{O}$ with compact closure in $\M$.
\end{mydef}

In \citep{ArakiYamagami82} it is proved that
\begin{prop}
If two quasifree states $\omega_{1}$ and $\omega_{2}$ have quasiequivalent GNS representations $\pi_{1}$ and $\pi_{2}$, then the scalar products $\mu_{1}$ and $\mu_{2}$ induce the same topology on $L$, i.e., there are constants $C$ and $C'$ such that, $\forall f\in L$,
\[C\mu_{1}(f,f)\leq\mu_{2}(f,f)\leq C'\mu_{1}(f,f) \; .\]
\end{prop}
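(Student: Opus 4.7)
The plan is to pass to the one-particle structures $(K_j,\mathscr{H}_j)$ produced by Proposition~\ref{one-particle_structure} for the two quasifree states $\omega_j$, $j=1,2$. In these terms one has $\mu_j(f,f)=\lVert K_jf\rVert_{\mathscr{H}_j}^{2}$, and the GNS representation $\pi_j$ is (unitarily equivalent to) the Fock representation on the symmetric Fock space $\Gamma_s(\mathscr{H}_j)$ in which $\pi_j(W(f))$ acts as the Fock Weyl operator with argument $K_jf$. Reducing the claim to this language, it suffices to produce constants $C,C'>0$ with $C\,\lVert K_1f\rVert^{2}\leq\lVert K_2f\rVert^{2}\leq C'\,\lVert K_1f\rVert^{2}$ for all $f\in L$.

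The decisive step is to extract, from the quasiequivalence of $\pi_1$ and $\pi_2$, a bounded invertible real-linear map $T:\overline{K_1L}^{\mathbb{R}}\to\overline{K_2L}^{\mathbb{R}}$ satisfying $TK_1=K_2$. The route I would take is as follows. Quasiequivalence implies that the folia of $\pi_1$ and $\pi_2$ coincide; in particular $\omega_2$ is a normal state in the representation $\pi_1$, realized by a density matrix $\rho$ on $\Gamma_s(\mathscr{H}_1)$. Using the quasifree generating functionals
\[\omega_j\bigl(W(tf)\bigr)=e^{-t^{2}\mu_j(f,f)/2},\]
the ultraweak continuity of the normal state $A\mapsto\mathrm{tr}(\rho\,\pi_1(A))$, and the strong continuity of the one-parameter Weyl group $t\mapsto\pi_1(W(tf))$ whose infinitesimal quadratic form is precisely $\mu_1$, one controls the behaviour of $\omega_2(W(tf))$ as $t\to 0$ uniformly in $f$ on sets where $\mu_1$ is bounded. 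Expanding both generating functionals to second order in $t$ then yields a bound of the form $\mu_2(f,f)\leq C'\mu_1(f,f)$. Polarization and a closed-graph argument promote the assignment $K_1f\mapsto K_2f$ to a bounded real-linear operator $T$ on $\overline{K_1L}^{\mathbb{R}}$. The symmetric argument with the roles of $\omega_1$ and $\omega_2$ interchanged yields the boundedness of $T^{-1}$.

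Given $T$, the conclusion is immediate: for every $f\in L$,
\[\mu_2(f,f)=\lVert K_2f\rVert^{2}=\lVert TK_1f\rVert^{2}\leq\lVert T\rVert^{2}\,\mu_1(f,f),\]
and by the same token
\[\mu_1(f,f)=\lVert T^{-1}K_2f\rVert^{2}\leq\lVert T^{-1}\rVert^{2}\,\mu_2(f,f),\]
so the equivalence of topologies holds with $C=\lVert T^{-1}\rVert^{-2}$ and $C'=\lVert T\rVert^{2}$.

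The main obstacle is the construction of $T$ itself, i.e.\ extracting a bounded real-linear intertwiner between the one-particle Hilbert spaces out of the purely algebraic relation of quasiequivalence at the level of Fock-type representations. This is the technical heart of the Araki--Yamagami analysis; the subtle point is upgrading the pointwise information supplied by normality of $\omega_2$ relative to $\pi_1$ into a \emph{uniform} comparison of the quadratic forms $\mu_1$ and $\mu_2$, which uses essentially that the Weyl generators of the Fock representation $\pi_1$ have a well-behaved infinitesimal quadratic form, namely $\mu_1$ itself.
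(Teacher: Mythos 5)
First, a point of reference: the paper does not prove this proposition at all — it is quoted from Araki and Yamagami \citep{ArakiYamagami82} — so your proposal can only be judged on its own merits. Your overall route is viable: realize the GNS triple of each quasifree state on the Fock space $\Gamma_{s}(\mathscr{H}_{j})$ over its one-particle space, with $\pi_{j}(W(f))$ the Fock Weyl operator $W_{F}(K_{j}f)$ and the Fock vacuum as GNS vector (note this uses Araki's lemma that the vacuum is cyclic for the Weyl operators over the real subspace $K_{j}L$ precisely because $K_{j}L+iK_{j}L$ is dense, i.e.\ item (i) of Proposition \ref{one-particle_structure}); use quasiequivalence to conclude that $\omega_{2}$ lies in the folium of $\pi_{1}$, hence $\omega_{2}(\cdot)=\mathrm{tr}(\rho\,\pi_{1}(\cdot))$ for a density matrix $\rho$ on $\Gamma_{s}(\mathscr{H}_{1})$; deduce $\mu_{2}\leq C'\mu_{1}$ from the behaviour of $t\mapsto\omega_{2}(W(tf))$ near $t=0$; symmetrize. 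One structural remark: the intertwiner $T$ is superfluous and slightly circular — the two-sided bound you need in order to define $T$ and bound $T$, $T^{-1}$ \emph{is} the proposition, so re-deriving the bound from $\lVert T\rVert$ adds nothing.

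The genuine gap is exactly the step you flag as the heart: the uniformity in $f$, over $\{\mu_{1}(f,f)\leq 1\}$, of the control of $\mathrm{tr}(\rho\,\pi_{1}(W(tf)))$ as $t\to 0$. Strong continuity of $t\mapsto\pi_{1}(W(tf))$ and ultraweak continuity of the normal state are pointwise statements and do not by themselves give a bound uniform over the (noncompact) $\mu_{1}$-unit ball; and the mechanism you propose — ``expanding both generating functionals to second order in $t$'' — fails in general, because the second derivative of $t\mapsto\mathrm{tr}(\rho\,W_{F}(tK_{1}f))$ at $t=0$ is the $\rho$-expectation of the square of the field operator (the generator of $s\mapsto W_{F}(sK_{1}f)$), which need not be finite for an arbitrary density matrix and is in any case the variance in $\rho$, not $\mu_{1}(f,f)$ (that is the \emph{vacuum} variance). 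The repair is standard but must be supplied: approximate $\rho$ in trace norm by finite-rank operators whose ranges consist of finite-particle vectors, use the number-operator estimate $\lVert\Phi_{F}(h)\psi\rVert\leq\sqrt{2}\,\lVert h\rVert\,\lVert(N+\mathds{1})^{1/2}\psi\rVert$ to get $\sup_{\lVert h\rVert\leq\delta}\lVert(W_{F}(h)-\mathds{1})\psi\rVert\to 0$ for each such $\psi$, and conclude $\sup_{\lVert h\rVert\leq\delta}\lvert\mathrm{tr}(\rho W_{F}(h))-1\rvert\to 0$ as $\delta\to 0$. Then no Taylor expansion is needed: choose $\delta$ so that this supremum is at most $1/2$; for $\mu_{1}(f,f)\leq 1$ the exact identity $\omega_{2}(W(\delta f))=e^{-\delta^{2}\mu_{2}(f,f)/2}=\mathrm{tr}(\rho W_{F}(\delta K_{1}f))\geq 1/2$ gives $\mu_{2}(f,f)\leq 2\ln 2/\delta^{2}$, hence $\mu_{2}\leq C'\mu_{1}$ by homogeneity (with $\mu_{1}(f,f)=0\Rightarrow\mu_{2}(f,f)=0$ by scaling), and exchanging the two states gives the other inequality. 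With that lemma inserted your argument closes; as written, the decisive uniform estimate is asserted rather than proven, and the expansion route you indicate would not deliver it.
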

Two quasifree states are called quasiequivalent if their corresponding GNS representations are quasiequivalent.

Because of this last Proposition, quasiequivalent states are considered to be physically equivalent states. This result is reinforced by the next theorem, which states that, in some particular subsets of spacetime, any two Hadamard states are locally quasiequivalent (advancing ideas originally presented by Haag, Narnhofer and Stein \citep{HNS84}). The proof of the theorem can be found in \citep{Verch94}.
\begin{thm}
Let $\omega_{1}$ and $\omega_{2}$ be two quasifree Hadamard states on the Weyl algebra $\W(L,\sigma)$ of the Klein-Gordon field in the globally hyperbolic spacetime $(\M,g)$ and let $\pi_{1}$ and $\pi_{2}$ be their associated GNS representations. Then ${\pi_{1}}_{\upharpoonright\W(\mathscr{O})}$ and ${\pi_{2}}_{\upharpoonright\W(\mathscr{O})}$ are quasiequivalent for every open subset $\mathscr{O}\subset\M$ with compact closure.
\end{thm}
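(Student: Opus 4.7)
The plan is to reduce the theorem to a standard operator-theoretic criterion for quasiequivalence of quasifree states, using the Hadamard characterization in terms of wave front sets as the key input. First I would use the fact that since $\omega_{1}$ and $\omega_{2}$ satisfy the canonical commutation relations with the same commutator function $\mathds{E}$, the antisymmetric parts of their two-point functions coincide; consequently, the difference $w^{(2)}_{\omega_{1}} - w^{(2)}_{\omega_{2}}$ is a \emph{symmetric} distribution on $\M\times\M$.

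Next, I would apply the wave front set analysis. By Definition \ref{Hadamard-wf}, both $w^{(2)}_{\omega_{i}}$ have wave front set equal to $C^{+}$. By property \eqref{WFsum}, the wave front set of the difference lies in $C^{+}$. But since the difference is symmetric under the exchange $(x_{1},x_{2})\mapsto(x_{2},x_{1})$, its wave front set is invariant under the corresponding involution on $\mathcal{T}^{\ast}(\M\times\M)\setminus\{0\}$, which sends $C^{+}$ to $C^{-}$ (the set with $k_{1}\in\overline{V}_{-}$). Hence $WF(w^{(2)}_{\omega_{1}}-w^{(2)}_{\omega_{2}})\subseteq C^{+}\cap C^{-}=\emptyset$, so the difference is a smooth function on $\M\times\M$. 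Restricting to $\mathscr{O}\times\mathscr{O}$, with $\mathscr{O}$ having compact closure, yields a smooth, compactly controlled integral kernel.

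Now I would translate this smoothness into the operator-theoretic input. By Proposition \ref{one-particle_structure}, each state $\omega_{i}$ corresponds to a one-particle structure $(K_{i},\mathscr{H}_{i})$. The goal is to invoke the Araki–Yamagami (and Van Daele) criterion: two pure quasifree states on the Weyl algebra over a symplectic space are quasiequivalent if and only if the operator relating the two one-particle structures differs from a partial isometry by a Hilbert–Schmidt operator; equivalently, one needs the symmetric difference $\mu_{1}-\mu_{2}$, viewed as a bilinear form on the one-particle space, to define a trace-class (in particular Hilbert–Schmidt) operator. Applying this criterion requires showing that the bilinear form with smooth compactly supported kernel $(w^{(2)}_{\omega_{1}}-w^{(2)}_{\omega_{2}})_{\upharpoonright\mathscr{O}\times\mathscr{O}}$ gives rise to such an operator when considered against the Hilbert space norm induced by either $\mu_{i}$; here one uses Sobolev-type bounds on $\mu_{i}$ following from the Hadamard wave front set and Definition \ref{Sobolev-wf}. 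From the equivalence of $\mu_{1}$ and $\mu_{2}$ on $L(\mathscr{O})$ at the level of induced topologies, the converse direction of the Proposition preceding this Theorem (from \citep{ArakiYamagami82}) delivers the quasiequivalence of ${\pi_{1}}_{\upharpoonright\W(\mathscr{O})}$ and ${\pi_{2}}_{\upharpoonright\W(\mathscr{O})}$.

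The main obstacle is the last step: converting the purely microlocal statement (smoothness of the difference of two-point functions on a relatively compact domain) into the Hilbert–Schmidt/trace-class estimate on the concrete one-particle operator difference, in a way that is independent of the choice of reference Cauchy data and invariant under symplectic changes of representation. This requires a careful identification of $\mathscr{H}_{i}$ with a Sobolev-like completion of the quotient space $\K(\M)$ restricted to $\mathscr{O}$, a suitable choice of compactly supported cutoff to produce bounded Hilbert–Schmidt norms, and the verification that, because $\mathscr{O}$ has compact closure, the smooth kernel integrates to a Hilbert–Schmidt operator in the topology induced by the Hadamard Hilbert space structure. Once this technical bridge is in place, quasiequivalence is automatic from the cited operator-algebraic theorems.
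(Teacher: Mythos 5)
Your first step is correct and is exactly the observation the paper itself records after Definition \ref{Hadamard-wf}: the antisymmetric parts of $w^{(2)}_{\omega_{1}}$ and $w^{(2)}_{\omega_{2}}$ both equal $\tfrac{i}{2}\sigma$, the difference is symmetric, and since the exchange map sends $C^{+}$ to its ``past-pointing'' mirror, $WF(w^{(2)}_{\omega_{1}}-w^{(2)}_{\omega_{2}})\subseteq C^{+}\cap C^{-}=\emptyset$, so the difference is smooth. Note, however, that the paper does not prove the theorem at all; it quotes it from Verch \citep{Verch94}, and the route you sketch (microlocal smoothness plus the Araki--Yamagami criterion) is indeed the route of that cited proof. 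The problem is that everything you defer to ``the technical bridge'' is precisely the nontrivial content of the theorem, and your final reduction is not valid as stated. The Proposition from \citep{ArakiYamagami82} quoted in the paper only gives the \emph{necessary} direction (quasiequivalence $\Rightarrow$ the $\mu_{i}$ induce the same topology); there is no ``converse direction'' to invoke. Equivalence of the topologies of $\mu_{1}$ and $\mu_{2}$ on $L(\mathscr{O})$ is far from sufficient: the actual Araki--Yamagami criterion demands in addition a Hilbert--Schmidt condition on (roughly) the difference of the square roots of the operators representing $\mu_{1}$ and $\mu_{2}$ on a common one-particle Hilbert space, and this does not follow formally from $\mu_{1}-\mu_{2}$ having a smooth, compactly supported kernel, because a smoothing bilinear form is only ``small'' relative to a topology you have not yet identified.

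Concretely, two things are missing. First, you must show that \emph{every} quasifree Hadamard state induces, on Cauchy data localized in a relatively compact region, one and the same topology, and identify it with a Sobolev-type topology (of order $H^{1/2}\oplus H^{-1/2}$ on the data); only then do the two one-particle structures live on a common Hilbert space on which operator comparisons make sense. This step uses the explicit Hadamard form (or wave front set) in an essential quantitative way, not merely the smoothness of the difference, and it is where purity fails to help you: the restrictions of $\omega_{1},\omega_{2}$ to $\W(\mathscr{O})$ are not pure, so the ``partial isometry plus Hilbert--Schmidt'' version of the criterion for pure states is not available and the general Araki--Yamagami form must be used. Second, one must convert the smooth kernel on $\overline{\mathscr{O}}\times\overline{\mathscr{O}}$ into the required Hilbert--Schmidt (in fact trace-class) estimate \emph{in that Sobolev topology}, which involves genuine operator estimates (compact Sobolev embeddings, control of the square-root difference via operator monotonicity or interpolation), not just the statement that smooth kernels on compact sets are Hilbert--Schmidt on $L^{2}$. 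Since both of these are exactly the points you list as ``the main obstacle'' and leave unaddressed, the proposal is a correct strategy outline but not a proof: the heart of Verch's argument remains to be supplied.
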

This means that any two quasifree Hadamard states are locally quasiequivalent.

\subsection{Quantum Energy Inequalities}\label{sec_QEI}

We have already seen that Hadamard states lead to a sensible renormalization of the expectation value of the energy-momentum tensor, even in curved spacetimes. However, if one evaluates the renormalized expectation value of the energy-momentum tensor at a point of spacetime and tries to minimize this value by changing the Hadamard state, one finds that this value has no lower bound. In other words, the renormalized expectation value of the energy-momentum tensor on a Hadamard state evaluated at a point of spacetime can be arbitrarily negative \citep{EpsGlaJaffe65}. This result is valid both for Minkowski and curved spacetimes.

A negative energy flux would cause the violation of the second law of thermodynamics, as proved by Ford \citep{Ford78}. In this same paper it was shown that such a violation could be avoided if the magnitude and duration of the negative energy flux are limited by an inequality of the form
\begin{equation}
\lvert F \rvert \lesssim\tau^{-2} \; ,
\label{negFlux-bound}
\end{equation}
where $F$ is the flux of negative energy and $\tau$ is the duration of the flux. Some examples where this inequality is satisfied were worked out in this same paper, for certain classes of states in two-dimensional Minkowski spacetime. The same author showed later \citep{Ford91} that inequalities of this form occur also in four dimensional Minkowski spacetime and for all classes of states (in four-dimensional Minkowski spacetime, the negative flux is bounded by $\tau^{-4}$). Instead of calculating the renormalized expectation value at a point of spacetime, now one must multiply it with a peaked function whose integral in time is unity and calculate the flux of this ``smeared'' energy-density.

In curved spacetimes the situation becomes more dramatic. The energy density measured by a causal observer with 4-velocity $u$ is evaluated from the energy momentum tensor by
\begin{equation}
\rho=T_{\mu\nu}u^{\mu}u^{\nu} \; .
\label{energy_density}
\end{equation}
If the energy-momentum tensor is derived from classical fields, it will automatically satisfy the so-called {\it Weak Energy Condition} (WEC),
\begin{equation}
T_{\mu\nu}u^{\mu}u^{\nu}\geq 0 \; .
\label{WEC}
\end{equation}
By continuity, this inequality is also satisfied if the timelike vector $u$ is substituted by a null vector $K$, thus giving rise to the {\it Null Energy Condition} (NEC),
\begin{equation}
T_{\mu\nu}K^{\mu}K^{\nu}\geq 0 \; .
\label{NEC}
\end{equation}
The WEC finds application in the singularity theorems \citep{HawkingEllis73}. The absence of a lower bound to the energy density in curved spacetimes could give rise to pathological spacetimes, with formation of naked singularities \citep{FordRoman90,FordRoman92}, traversable wormholes \citep{Visser95} and allowing faster-than-light travel \citep{Alcubierre94}.

As stated at the beginning of this section, quantum fields do not satisfy the WEC so, in principle, they could give rise to the exotic phenomena described in the above paragraph. Bounds such as the one in equation \eqref{negFlux-bound} largely limit the possibility of occurrence of such phenomena \citep{Ford09}. It is then important to know how general these bounds are.

Fewster \citep{Fewster00} proved such an inequality true for an arbitrary globally hyperbolic spacetime, the so-called {\it Quantum Energy Inequality} (QEI). We will now present a very brief outline of the proof.

Let $(\M,g)$ be an $N$-dimensional lorentzian manifold and consider the $CCR$-algebra $\F$ constructed in subsection \ref{sec-CCR-Weyl}. In order to calculate the expectation value of the energy density in a certain state, we have to take into account that the product of two distributions at the same point of spacetime is an ill-defined object. Hence we start by the tensor product $\phi(x)\phi(x')$, which is an well-defined bidistribution. In this way, the expectation value of the energy density in any such "point-split" state is a well-define quantity. Moreover, as we saw in the previous section, any two Hadamard states differ by only a smooth function. Therefore, let $\gamma$ be a timelike curve in $\M$, parametrized by proper time $\tau$ and with unit tangent vector $u(\tau)$ at $\gamma(\tau)$. We will call $u^{\mu}u^{\nu}T_{\mu\nu}\eqqcolon T$. Let $\omega_{H}$ and $\omega_{H'}$ be two different Hadamard states. The quantity
\begin{equation}
\langle T\rangle_{\omega_{H}}(\tau,\tau')-\langle T\rangle_{\omega_{H'}}(\tau,\tau')
\label{point-split-difference}
\end{equation}
is smooth. We will call each of these terms the expectation value of the regularized energy density, in the respective Hadamard state (denoted below by $\langle T^{reg}\rangle_{\omega}$ when no reference to the spacetime points is made). Since this quantity is smooth, the limit $\tau'\rightarrow \tau$ defines a smooth distribution, which we will call the expectation value of the renormalized energy density in the state $\omega_{H}$ with respect to the state $\omega_{H'}$ (we will omit the reference to the second state, always taking it as an arbitrary, fixed Hadamard state). We will denote it by the symbol $:T:$,
\[\langle :T: \rangle_{\omega}(\tau)=\lim_{\tau'\rightarrow \tau}\left(\langle T \rangle_{\omega_{H}}(\tau,\tau')-\langle T \rangle_{\omega_{H'}}(\tau,\tau')\right) \; .\]
For any Hadamard state $\omega$ on $\F$, we define the expectation value of the renormalized energy density as
\begin{equation}
\rho_{\omega}(\tau)\coloneqq\langle u^{\mu}(\tau)u^{\nu}(\tau):T_{\mu\nu}(\gamma(\tau)): \rangle_{\omega} \; .
\end{equation}
$\rho_{\omega}$ is the restriction to the diagonal $\tau'=\tau$ of the smooth quantity given in equation \eqref{point-split-difference}. This procedure for renormalization is known as the {\it point-splitting procedure} \citep{HackPhD,Moretti03,Wald94}.

Let now $f$ be any smooth compactly supported real-valued function. It follows that
\begin{align*}
\int d\tau(f(\tau))^{2}\rho_{\omega}(\tau)=&\int_{0}^{\infty}\frac{d\alpha}{\pi}\int d\tau d\tau' f(\tau)f(\tau')e^{-i\alpha(\tau-\tau')}\langle :T: \rangle_{\omega}(\tau,\tau') \\
&\int_{0}^{\infty}\frac{d\alpha}{\pi}\int d\tau d\tau' \overline{f}_{\alpha}(\tau)f_{\alpha}(\tau')\langle :T: \rangle_{\omega}(\tau,\tau') \; ,
\end{align*}
where $f_{\alpha}(\tau)\coloneqq f(\tau)e^{i\alpha\tau}$. The restriction of the integration to $(0,\infty)$ is possible because $\langle :T: \rangle_{\omega}$ is symmetric in $\tau$, $\tau'$.

The first observation we make, in order to find the lower bound, is that the distribution $\langle T \rangle_{\omega}(\tau,\tau')$ is a distribution of positive type, i.e.,
\begin{equation}
\int d\tau d\tau' \overline{f}_{\alpha}(\tau)f_{\alpha}(\tau')\langle T \rangle_{\omega}(\tau,\tau')\geq 0 \; .
\label{pos_type}
\end{equation}
This steems from the fact that the set of normals of the map $\varphi:(\tau,\tau')\mapsto(\gamma(\tau),\gamma(\tau'))$, given by
\[N_{\varphi}=\{(\gamma(\tau),k;\gamma(\tau'),k')\,|\,k_{a}u^{a}(\tau)=k'_{b'}u^{b'}(\tau')=0\} \; ,\]
contains only the zero covector (because no nonzero null covector $k$ can annihilate a timelike vector $u$). Now, since $\omega$ is a Hadamard state, $WF(\omega)\cap N_{\varphi}=\emptyset$, therefore the pullback $\varphi^{\ast}\omega$ mapping a distribution defined in $\M\times\M\ni(\gamma(\tau),\gamma(\tau'))$ into a distribution defined in $\mathbb{R}^{2}\ni(\tau,\tau')$ is well defined. That $\langle T \rangle_{\omega}(\tau,\tau')$ is of positive type is a direct consequence of the fact that the original distribution defined in $\M\times\M$ is, by construction, of positive type. Further analysis of $WF(\langle T \rangle_{\omega})$ shows that the integral in equation \eqref{pos_type} decreases rapidly as $\alpha\rightarrow\infty$. Since $\omega_{0}$ is also a Hadamard state, these results are valid to $\langle T \rangle_{\omega_{0}}$ as well.

We are now very close to the result. Since $\omega$ and $\omega_{0}$ are Hadamard states, $\langle :T: \rangle_{\omega}$ is smooth. The fact that $f$ is real-valued allows us to write $\overline{f}_{\alpha}(\tau)$ as $f_{-\alpha}(\tau)$. Therefore
\begin{equation}
\int d\tau d\tau' f_{-\alpha}(\tau)f_{\alpha}(\tau')\langle T \rangle_{\omega_{0}}(\tau,\tau')=[f\otimes f\langle T \rangle_{\omega_{0}}]^{\wedge}(-\alpha,\alpha) \; ,
\end{equation}
where $[\cdot]^{\wedge}(-\alpha,\alpha)$ denotes the Fourier transform of the term between brackets. The result, now, is obvious
\begin{equation}
\int d\tau(f(\tau))^{2}\langle :T: \rangle_{\omega}(\tau,\tau) \geq \int_{0}^{\infty}\frac{d\alpha}{\pi}[f\otimes f\langle T \rangle_{\omega_{0}}]^{\wedge}(-\alpha,\alpha) \; .
\label{QEI}
\end{equation}
The result is valid for any real-valued compactly supported smooth function $f$.

Such an inequality is called {\it difference QEI}. The value of the bound depends explicitly on the reference Hadamard state $\omega_{0}$ and, as pointed in \citep{Fewster00}, depends also on a choice of orthonormal frame. This type of inequality will be recalled in chapter \ref{chap_sle} in the construction of States of Low Energy. In chapter \ref{chap_vac-like} the idea of smearing with real-valued compactly supported smooth functions will be recalled.

Difference QEIs have also been derived for free spin-$1/2$ \citep{FewsterVerch02} and spin-$1$ fields \citep{FewsterPfenning03} and for free spin-$3/2$ fields in Minkowski spacetime \citep{HuLingZhang06,YuWu04}. The QEI was also derived for the nonminimally coupled scalar field \citep{FewsterOsterbrink08}, for interacting fields, from the Operator Product Expansion \citep{BostelmannFewster09} and for the massive Ising model \citep{BostelmannCadamuroFewster13}. A different type of inequality was derived in \citep{FewsterSmith08} for the minimally coupled free scalar field, the so-called {\it Absolute QEI}. In this case, the renormalized energy density was smeared over a timelike submanifold of the spacetime. In this case, the bound does not depend on any reference Hadamard state.

\subsection{Static and Expanding Spacetimes}\label{subsec-static_expand}

On a general globally hyperbolic spacetime, one can always choose a coordinate system on which the metric takes the form \citep{BarGinouxPfaffle07,Wald84}
\begin{equation}
 ds^{2}=\Gamma dt^{2}-h_{t} \; ,
 \label{metric-GH}
\end{equation}
where $\Gamma$ is a positive smooth function and $h_{t}$ is a Riemannian metric on $\Sigma$ depending smoothly on $t\in\mathbb{R}$. But the Klein-Gordon equation arising from such a metric,
\begin{equation}
(\Box +m^{2})\phi=0 \; ,
\label{KG}
\end{equation}
is not, in general, separable. We will analyse two classes of spacetimes for which this equation is separable: the so-called {\it static spacetimes} and the class of {\it expanding spacetimes}. Moreover, in these spacetimes, we will be able to write explicit expressions for the advanced-minus-retarded operator. For simplicity, we will only consider spacetimes with compact Cauchy hypersurfaces.

Static spacetimes are spacetimes with a timelike Killing vector $k$ and with Cauchy hypersurfaces which are orthogonal to the Killing vector; these spacetimes admit a coordinate system in which the metric assumes the form
\begin{equation}
ds^{2}=\alpha^{2}(\uline{x})\left(dt^{2}-h_{ij}(\uline{x})d\uline{x}^{i}d\uline{x}^{j}\right) \; ,
\label{metric-timedecomp-static}
\end{equation}  
where all coefficients are smooth functions defined on a Cauchy hypersurface $\Sigma$ \citep{Fulling89}. Throughout this thesis, $\uline{x}$ will denote the spatial coordinates of a point on the manifold. We will also use $x=(t,\uline{x})$ for the spacetime coordinates.

In static spacetimes, the Klein-Gordon equation \eqref{KG} becomes
\begin{equation}
 \alpha^{-2}\left(\frac{\partial^{2}}{\partial t^{2}}+K\right)\phi=0 \; ,
  \label{KG-static}
\end{equation}
where
\[K=-\alpha^{2}\left[\frac{1}{\sqrt{|g|}}\partial_{j}(\sqrt{|g|}h^{jk}\partial_{k})+m^{2}\right] \; ,\]
($|g|=|\textrm{det}(g_{\mu\nu})|$). On the Hilbert space $L^2(\Sigma,\alpha^{-2}\sqrt{|g|})$, the operator $K$ is symmetric and positive. According to Kay \citep{Kay78}, if the spacetime is uniformly static, i.e., $\alpha$ is bounded fom above and from below away from zero, the operator $K$ is even essentially self-adjoint on the domain $\mathcal{C}^{\infty}_0(\Sigma,\mathbb{C})$. We give in Theorem \ref{theorem_essential_self-adjointness} below a proof of essential self-adjointness of $K$ without any assumptions on $\alpha$. The self-adjoint closure of $K$ will be denoted by the same symbol. Due to the compactness of $\Sigma$, it has a discrete spectrum with an orthonormal system of smooth eigenfunctions $\psi_{j}$ and positive eigenvalues $\lambda_j$, $j\in\mathbb{N}$ with $\lambda_{j}\geq\lambda_{k}$ for $j>k$. Moreover, due to Weyl's estimate, the sums $\sum_{j}\lambda_{j}^{-p}$ converge for $p>\frac{d}{2}$ where $d$ is the dimension of $\Sigma$ \citep{Jost11}.

The advanced-minus-retarded-operator, in this case, has the integral kernel
\begin{equation}
\mathds{E}(t,\uline{x};t',\uline{x'})=-\sum_{j}\frac{1}{\omega_{j}}\sin((t-t')\omega_{j})\psi_{j}(\uline{x})\overline{\psi}_{j}(\uline{x'})\; ,
\label{E-static}
\end{equation}
with $\omega_{j}=\sqrt{\lambda_j}$ \citep{Fulling89}. This sum converges in the sense of distributions, i.e., if $f,f'\in\cC_{0}^{\infty}(\M,\mathbb{C})$, then
\[\left|\int f(x)\mathds{E}(x;x')f'(x')\sqrt{|g(x)|}\sqrt{|g(x')|}dxdx'\right|\]
gives a finite number. This sort of convergence is also usually called {\it weak convergence}. We will now present the proof of essential self-adjointness of the operator $K$.

\begin{thm}\label{theorem_essential_self-adjointness}
Let $\M=\mathbb{R}\times \Sigma$ be a globally hyperbolic spacetime with metric $g=\alpha^2(dt^2-h)$ where $h$ is a Riemannian metric on the manifold $\Sigma$ and $\alpha$ a smooth nowhere vanishing function on $\Sigma$. Then
\begin{enumerate}
\item \label{a} $(\Sigma,h)$ is a complete metric space.
\item \label{b} The d'Alembertian  on $\M$  is of the form
\[\square_g=\alpha^{-2}(\partial_{t}^{2}+K)\]
where $K$ is an elliptic differential operator on $\Sigma$ which is positive and self-adjoint on $L^2(\Sigma,\alpha^{-2}\sqrt{|\mathrm{det}g|})$ with core $\mathcal{C}^{\infty}_0(\Sigma)$. 
\end{enumerate}
\end{thm}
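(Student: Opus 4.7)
The proof splits naturally into the geometric statement (\ref{a}) and the analytic statement (\ref{b}); the second depends on the first.

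For (\ref{a}) the plan is a short causal-geometry argument. Suppose $(\Sigma,h)$ is not complete and pick a curve $\gamma\colon[0,b)\to\Sigma$, parametrized by $h$-arclength, whose image has no limit point in $\Sigma$ as $s\to b$. Lift it to $\M$ by $c(s)=(s,\gamma(s))$. Because the conformal factor $\alpha^{2}$ preserves the null cones of $dt^{2}-h$ and $|\dot\gamma|_{h}=1$, the curve $c$ is future-directed null for $g$. Its $t$-coordinate is bounded by $b$, so any accumulation point of $c$ in $\M$ would project to an accumulation point of $\gamma$ in $\Sigma$; hence $c$ is future-inextendible. But $c$ never reaches the Cauchy hypersurface $\{t_{0}\}\times\Sigma$ for any $t_{0}>b$, contradicting global hyperbolicity of $\M$.

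For (\ref{b}) the decomposition is a one-line calculation: with $n=\dim\Sigma$ and $\sqrt{|g|}=\alpha^{n+1}\sqrt{\det h}$, the formula $\square_{g}u=|g|^{-1/2}\partial_{\mu}(|g|^{1/2}g^{\mu\nu}\partial_{\nu}u)$ together with $t$-independence of $\alpha$ and $h$ gives $\square_{g}=\alpha^{-2}(\partial_{t}^{2}+K)$ with
\begin{equation*}
K=-\frac{1}{\alpha^{n-1}\sqrt{\det h}}\,\partial_{i}\!\left(\alpha^{n-1}\sqrt{\det h}\,h^{ij}\partial_{j}\right),
\end{equation*}
elliptic on $\Sigma$ with principal symbol $h^{ij}\xi_{i}\xi_{j}$. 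Since $\alpha^{-2}\sqrt{|g|}=\alpha^{n-1}\sqrt{\det h}$, the weight in the $L^{2}$ inner product is exactly the one that makes $K$ formally symmetric, and integration by parts on $\mathcal{C}^{\infty}_{0}(\Sigma)$ yields
\begin{equation*}
\langle\phi,K\phi\rangle=\int_{\Sigma}h^{ij}\,\partial_{i}\bar\phi\,\partial_{j}\phi\,\alpha^{n-1}\sqrt{\det h}\,d^{n}x\geq 0,
\end{equation*}
so $K$ is symmetric and nonnegative on its core.

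The delicate step is essential self-adjointness, and this is where (\ref{a}) is used. My plan is to invoke Chernoff's theorem: on a complete Riemannian manifold, a symmetric second-order elliptic operator whose scalar principal symbol is the Riemannian metric is essentially self-adjoint on $\mathcal{C}^{\infty}_{0}$, provided the associated wave equation has finite propagation speed. Completeness of $(\Sigma,h)$ is precisely (\ref{a}); conjugating $K$ by the unitary multiplication operator $\phi\mapsto\alpha^{(n-1)/2}\phi$ from the weighted space onto $L^{2}(\Sigma,\sqrt{\det h}\,d^{n}x)$ recasts it as $-\Delta_{h}+V$ for a smooth potential $V$ built algebraically from $\alpha$ and its first two derivatives, so the hypotheses of Chernoff's theorem are matched. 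The required finite propagation speed is automatic: $\square_{g}$ is normally hyperbolic on the globally hyperbolic $\M$, so it admits unique retarded and advanced Green operators whose support properties translate, when restricted to Cauchy data on $\Sigma$, into the standard bounded-speed estimate for $\cos(t\sqrt{K})$ and $\sin(t\sqrt{K})/\sqrt{K}$ on $(\Sigma,h)$.

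The main obstacle---and the reason why no boundedness assumption on $\alpha$ is made---is precisely this last step: the potential $V$ produced by the conjugation is not uniformly bounded below, so one cannot deduce essential self-adjointness from a Kato--Rellich-type perturbation of $-\Delta_{h}$ and must genuinely use the finite-propagation-speed argument on the complete manifold $(\Sigma,h)$. Once $K$ is shown essentially self-adjoint, positivity of its closure follows by continuity from the integration-by-parts identity above.
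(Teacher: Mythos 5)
Your proof of item (1) contains an unjustified step. Global hyperbolicity of $\M$ only guarantees that \emph{some} Cauchy hypersurface exists; it does not tell you that the level sets $\{t_{0}\}\times\Sigma$ are Cauchy, and in the incomplete situation you are arguing against they are precisely \emph{not} Cauchy (your own curve $c$ witnesses this), so the concluding sentence assumes what is at stake. The construction can be completed without that assumption: causal relations of $g$ and of $dt^{2}-h$ coincide, and for the ultrastatic metric $(t_{1},x_{1})\in J^{+}((t_{0},x_{0}))$ iff $d_{h}(x_{0},x_{1})\le t_{1}-t_{0}$; since $d_{h}(\gamma(0),\gamma(s))\le s<b$, the whole curve $c$ lies in the causal diamond $J^{+}((-b-1,\gamma(0)))\cap J^{-}((2b+1,\gamma(0)))$, which is compact by global hyperbolicity, so $c(s_{n})$ with $s_{n}\to b$ has a convergent subsequence; as $s\mapsto\gamma(s)$ is $d_{h}$-Cauchy, this yields an endpoint of $\gamma$, a contradiction. (The paper instead simply quotes the standard equivalence of \citep{Kay78}, transferred to $g$ by conformal invariance of the causal structure.)

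The essential self-adjointness step is the more serious gap. The statement you attribute to Chernoff --- completeness plus finite propagation speed of the associated wave equation imply essential self-adjointness of a symmetric second-order elliptic operator with metric principal symbol --- is false without a semiboundedness hypothesis: on the complete manifold $\mathbb{R}$, the operator $-d^{2}/dx^{2}-x^{4}$ is limit-circle at both ends, hence not essentially self-adjoint on $\mathcal{C}^{\infty}_{0}(\mathbb{R})$, although its wave equation propagates supports at unit speed. The wave-equation method needs the conserved energy $\lVert\dot u\rVert^{2}+\langle u,Ku\rangle$ to be positive semidefinite, i.e.\ exactly the lower bound you explicitly set aside when you stress that $V$ is unbounded below and that finite propagation speed alone must carry the argument. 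Since you have already shown $K\ge 0$ on $\mathcal{C}^{\infty}_{0}(\Sigma)$, the step can be repaired, either by citing essential self-adjointness of semibounded Schr\"odinger operators with smooth potentials on complete manifolds (Povzner--Wienholtz--Oleinik type results) or by running the energy argument directly, but as written it does not stand. For comparison, the paper's route avoids both the conjugation and any appeal to completeness or to a Riemannian black box: working in the weighted space, where $K$ is in divergence form with no zeroth-order term, it uses well-posedness of the Cauchy problem on the globally hyperbolic $\M$ to obtain a group $V(t)$ on $\mathcal{C}^{\infty}_{0}(\Sigma)\oplus\mathcal{C}^{\infty}_{0}(\Sigma)$ which is unitary for the semi-definite energy product $\langle\phi_{1},K\psi_{1}\rangle+\langle\phi_{2},\psi_{2}\rangle$, and then eliminates deficiency vectors: if $\langle K\phi,\psi\rangle=\pm i\langle\phi,\psi\rangle$ for all test functions $\phi$, the bounded function $f(t)=\langle V(t)(0,\phi),(0,\psi)\rangle$ satisfies $f''=\mp if$, hence vanishes, so $\psi=0$.
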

\begin{proof}
We follow the papers of Chernoff \citep{Chernoff73} and Kay \citep{Kay78}.

\eqref{a} The spacetime $\M$ is conformally equivalent to an ultrastatic spacetime with metric $dt^2-h$. The latter is globally hyperbolic if and only if $(\Sigma,h)$ is complete \citep{Kay78}. Hence the same holds true for $\M$.

\eqref{b} In local coordinates, $K$ assumes the form
\[K=-\alpha^2\left(\gamma^{-1}\partial_j\gamma \alpha^{-2}h^{jk}\partial_k+m^{2}\right)\]
with $\gamma=\sqrt{|\mathrm{det}g|}$. The principal symbol of $K$ is $\sigma_K=h^{jk}\xi_j\xi_k$, hence $K$ is elliptic. Moreover, on $L^2(\Sigma,\gamma \alpha^{-2})=:\mathcal{H}$ we have, for $\phi,\psi\in\mathcal{C}^{\infty}_0(\Sigma)$
\begin{align*}
\langle \phi,K\psi\rangle=-\int dx\,\overline{\phi(x)}\partial_j\gamma \alpha^{-2}h^{jk}\partial_k\psi(x)=\int dx\, \gamma \alpha^{-2}h^{kj}\overline{\partial_j\phi(x)}\partial_k\psi(x)=\langle K\phi,\psi\rangle \ ,
\end{align*}
hence $K$ is symmetric and positive, thus one can form the Friedrichs extension and obtains a self-adjoint positive operator.

It remains to prove that $K$ is essentially self-adjoint on $\mathcal{C}^{\infty}_0(\Sigma)$. For this purpose we use a variation of the method of Chernoff and exploit the fact that the Cauchy problem for normally hyperbolic differential equations is well posed on globally hyperbolic spacetimes.

Let $V(t)$ denote the operator on $\mathcal{D}:=\mathcal{C}^{\infty}_0(\Sigma)\oplus \mathcal{C}^{\infty}_0(\Sigma) $ defined by
\[V(t)\left(\begin{array}{c}\phi_1\\ \phi_2\end{array}\right)=\left(\begin{array}{c}\phi(t)\\ \dot{\phi}(t)\end{array}\right)\] 
where $t\mapsto \phi(t)$ is the solution of the Cauchy problem with initial conditions $\phi(0)=\phi_1$ and $\dot{\phi}(0)=\phi_2$.
The 1-parameter group $t\mapsto V(t)$ satisfies the differential equation
\[\frac{d}{dt}V(t)=iAV(t)\]
with
\[iA=\left(\begin{array}{cc}0 & 1\\-K & 0\end{array}\right)\ . \]
We now equip $\mathcal{D}$ with a positive semidefinite scalar product such that $V(t)$ becomes unitary. We set
\[\langle \phi,\psi\rangle=\langle \phi_1,K\psi_1\rangle+\langle\phi_2,\psi_2\rangle\]
where on the right hand side we use the scalar product of $\mathcal{H}=L^2(\Sigma,\gamma \alpha^{-2})$. (The first component of the scalar product vanishes for $m^2=0$ on functions which are constant on every connected component of $\Sigma$, hence if $\Sigma$ has compact connected components the scalar product is not definite). Then
\[\frac{d}{dt}\langle V(t)\phi,V(t)\psi\rangle=0\]
which implies that $V(t)$ is unitary.

We proceed similarly to the proof of Chernoff's Theorem \citep{Chernoff73}. Let $\psi\in \mathcal{H}=L^2(\Sigma,\gamma\alpha^{-2})$ such that for all $\phi\in\mathcal{C}^{\infty}_0(\Sigma)$ 
\[\langle K\phi,\psi\rangle=i\langle \phi,\psi\rangle \ .\]
We consider the function
\[f:t\mapsto \left\langle V(t)\left(\begin{array}{c}0\\ \phi\end{array}\right),\left(\begin{array}{c}0\\ \psi\end{array}\right)\right\rangle\]
Due to the unitarity of $V(t)$, this function is bounded. By the assumption on $\psi$, it satisfies the differential equation
\[\frac{d^2}{dt^2}f(t)=-\left\langle K\left(V(t)\left(\begin{array}{c}0\\ \phi\end{array}\right)\right)_2,\psi\right\rangle=-i\left\langle \left(V(t)\left(\begin{array}{c}0\\ \phi\end{array}\right)\right)_2,\psi\right\rangle=-if(t)\ .\]
But the only bounded solution of this equation vanishes, hence $\psi$ is orthogonal to $\mathcal{C}^{\infty}_0(\Sigma)$ on the Hilbert space $\mathcal{H}$, hence $\psi=0$. The same argument holds if we replace $i$ by $-i$ in the defining condition on $\psi$.
This proves that $K$ is essentially self-adjoint on $\mathcal{H}$.

\end{proof}

We will return to the subject of quantum fields in static spacetimes in chapter \ref{chap_vac-like} when we will construct Hadamard states in these spacetimes. Now we will present the other class of spacetimes which will appear in this thesis.

The class of expanding spacetimes comes from the general case after we make the assumptions that $\Gamma\equiv 1$ and that the metric on the spatial hypersurfaces can be written in the following form:
\begin{equation*}
h_{t}=c(t)^{2}h_{ij}(\uline{x})d\uline{x}^{i}d\uline{x}^{j}\; .
\end{equation*}
Here, $c(t)$ is a smooth positive function of time, the so-called {\it scale factor}, and $h_{ij}(\uline{x})$ is again the metric on the Riemannian hypersurfaces. The metric assumes the form
\begin{equation}
ds^{2}=dt^{2}-c(t)^{2}h_{ij}(\uline{x})d\uline{x}^{i}d\uline{x}^{j}\; ,
\label{metric-timedecomp-expanding}
\end{equation}
and the Klein-Gordon equation for a scalar field $\phi(t,\uline{x})$ assumes, then, the form
\begin{equation}
\left(\partial_{t}^{2}+3\frac{\dot{c}(t)}{c(t)}\partial_{t}-\frac{\Delta_{h}}{c(t)^{2}}+m^{2}\right)\phi(t,\uline{x})=0 \; .
\label{KG-expanding}
\end{equation}

The Laplace operator $-\Delta_{h}$ is essentially self-adjoint on the compact Riemannian space $(\Sigma,h)$. Its unique self-adjoint extension (denoted by the same symbol) is an operator on $L^{2}(\Sigma,\sqrt{\lvert h\rvert})$ with discrete spectrum \citep{Jost11}. Again we use the orthonormal basis of eigenfunctions $\psi_{j}$ and the associated nondecreasing sequence of positive eigenvalues $\lambda_j$ of $-\Delta_h$. An ansatz for a solution is
\begin{equation}
 \phi(t,\uline{x})=T_{j}(t)\overline{\psi}_{j}(\uline{x}) \; ,
 \label{KGfield}
\end{equation}
and $T_{j}$ must satisfy
\begin{equation}
 \left(\partial_{t}^{2}+3\frac{\dot{c}(t)}{c(t)}\partial_{t}+\omega_{j}^{2}(t)\right)T_{j}(t)=0 \; ,
 \label{timeKG}
\end{equation}
where
\begin{equation}
\omega_{j}^{2}(t)\coloneqq\frac{\lambda_{j}}{c(t)^{2}}+m^{2} \; .
\label{frequency}
\end{equation}

The two linearly independent real valued solutions of equation \eqref{timeKG} can be combined in a complex valued solution which satisfies the normalization condition
\begin{equation}
\dot{T}_{j}(t)\overline{T}_{j}(t)-T_{j}(t)\dot{\overline{T}}_{j}(t)=\frac{i}{c(t)^{3}} \; .
\label{timeSymplprod}
\end{equation}
Since the left-hand side is the Wronskian $W[T_{j},\overline{T}_{j}]$, $T_{j}(t)$ and $\overline{T}_{j}(t)$ are linearly independent. From this linear independence, if $S_{j}(t)$ and $\overline{S}_{j}(t)$ are also linearly independent solutions of \eqref{timeKG}, we can write
\begin{equation}
 T_{j}(t)=\alpha_{j}S_{j}(t)+\beta_{j}\overline{S}_{j}(t)\; .
 \label{Bogtransf}
\end{equation}
Since $S_{j}(t)$ must also satisfy \eqref{timeSymplprod}, the parameters $\alpha_{j}$ and $\beta_{j}$ are then subject to
\begin{equation}
 |\alpha_{j}|^{2}-|\beta_{j}|^{2}=1\; .
 \label{Bogcoef}
\end{equation}
The solutions to equation \eqref{timeKG} have only two free parameters. On the other hand, taking into account the absolute values and phases of $\alpha_{j}$ and $\beta_{j}$, subject to \eqref{Bogcoef}, we would have three free parameters. One of these is then a free parameter. In chapter \ref{chap_sle} we will choose $\beta_{j}$ to be a real parameter. In chapter \ref{chap_vac-like} the relative phase will be chosen differently.

The advanced-minus-retarded operator now has the integral kernel
\begin{equation}
 \mathds{E}(t,\uline{x};t',\uline{x'})=\sum_{j}\frac{(\overline{T}_{j}(t)T_{j}(t')-T_{j}(t)\overline{T}_{j}(t'))}{2i}\psi_{j}(\uline{x})\overline{\psi}_{j}(\uline{x'}) \; .
 \label{E-prop}
\end{equation}
This sum converges in the sense of distributions, as the sum in equation \eqref{E-static}.

\chapter{States of Low Energy}\label{chap_sle}

In this chapter we will construct explicit examples of Hadamard states, the {\it States of Low Energy} (SLE). These will be the states that minimize the smeared renormalized expectation value of the energy density in expanding spacetimes. The existence of these states fills a gap left open by the QEIs in the sense that those inequalities prove that the smeared renormalized expectation value of the energy density has a lower bound (where the expectation value is taken on a Hadamard state), but they do not show that it is possible to construct a Hadamard state on which this quantity takes its minimum value. Moreover, as will be seen later, these states reduce to the vacuum state in static spacetimes. Therefore, this is a direct generalization of the concept of vacuum state to any expanding spacetime. Furthermore, we generalize the construction given in \citep{Olbermann07} of SLE in Robertson-Walker spacetimes.

On section \ref{sec_2ptfcn-expand-hom} we will construct two-point functions from the solutions of the Klein-Gordon equation in general expanding spacetimes and in expanding spacetimes with homogeneous spatial hypersurfaces, both with compact Cauchy hypersurfaces (the definition of homogeneity will be presented below). On section \ref{SLE-est} we will construct the SLE. On section \ref{SLE-Hadamard} we will show that the SLE are Hadamard states.

\section{Quasifree states in Expanding and Homogeneous Spacetimes}\label{sec_2ptfcn-expand-hom}

\subsection{Expanding Spacetimes}

In this subsection, we are going to construct quasifree states in the spacetimes which we denoted {\it expanding spacetimes} (see section \ref{subsec-static_expand}). We will show how the two-point functions corresponding to quasifree states arise directly from the GNS construction.

In the expanding spacetimes the KG operator separates as a laplacian operator on the Cauchy hypersurfaces and an ordinary differential operator (see the discussion between equations \eqref{KG-expanding} and \eqref{frequency}). Given a state $\omega$ and the corresponding cyclic vector $\Omega_{\omega}\in\mathscr{H}_{\omega}$, we will expand the representation of the field in terms of the operator $a$ and its adjoint, $a^{\dagger}$, such that
\[a|\Omega_{\omega}\rangle=0\; .\]
The representation of the field $\phi$ (taken as an element of the $CCR$-algebra) on the Hilbert space generated by the GNS construction will be denoted by the symbol $\hat{\phi}$. It can be expressed as
\begin{equation}
 \hat{\phi}(t,\uline{x})=\frac{1}{\sqrt{2}}\sum_{j}\left[a_{j}\overline{T}_{j}(t)\psi_{j}(\uline{x})+a^{\dagger}_{j}T_{j}(t)\overline {\psi}_{j}(\uline{x})\right]\; .
 \label{fieldrepr}
\end{equation}
The proof, to be given in section \ref{subsecfulfillhadamard}, that the arising state is a Hadamard state, will also entail that this sum converges in the sense of distributions. The operator $a$ and its adjoint also follow the mode decomposition. The initial-value expression of the commutation relation satisfied by the fields (equation \eqref{Cauchysymplform}), the orthonormality of the eigenfunctions of the laplacian and the normalization satisfied by the functions $T_{j}(t)$ (equation \eqref{timeSymplprod}) imply that the operators $a$ and $a^{\dagger}$ satisfy the usual commutation relations: \[\Big[a_{j},a_{j'}\Big]=\Big[a^{\dagger}_{j},a^{\dagger}_{j'}\Big]=0\textrm{\quad} \Big[a_{j},a^{\dagger}_{j'}\Big]=\delta_{jj'}\; ,\]
where $\delta_{jj'}$ is the Kronecker delta.

Evaluated on the state $\omega$, the two-point function of this field operator is ($f$ and $f'$ are test functions of compact support)
\begin{align}
 w^{(2)}_{\omega}\left(\phi(f)\phi(f')\right) &=\langle \Omega_{\omega}|\hat{\phi}(f)\hat{\phi}(f')|\Omega_{\omega} \rangle \nonumber \\
 &=\int\textrm{d}^{4}x\sqrt{|g(x)|}\textrm{d}^{4}x'\sqrt{|g(x')|}\, f(t,\uline{x})f'(s,\uline{x'})\sum_{j}\overline{T}_{j}(t)T_{j}(s)\psi_{j}(\uline{x})\overline{\psi}_{j}(\uline{x'}) \nonumber \\
 &\eqqcolon\int\textrm{d}^{4}x\sqrt{|g(x)|}\textrm{d}^{4}x'\sqrt{|g(x')|}\, f(t,\uline{x})f'(s,\uline{x'})w^{(2)}_{\omega}(\uline{x},\uline{x'}) \; .
 \label{GNS2ptfcn}
\end{align}
Regarding the convergence of this sum, we remark that on section \ref{subsecfulfillhadamard} we will show that the state whose two-point function is given by \eqref{GNS2ptfcn} is a Hadamard state.

This simple expression is valid for any expanding spacetime with compact Cauchy hypersurfaces. In the case the spacetime possesses symmetries, one can proceed differently. The analysis we will present now is valid for spacetimes whose compact spatial hypersurfaces possess a group of symmetry acting on them. For a construction of quasifree Hadamard states on symmetric spacetimes whose Cauchy hypersurfaces are not necessarily compact, see \citep{AvetisyanVerch12}.

\subsection{Homogeneous Spacetimes}

The spatial hypersurfaces are Riemannian submanifolds, and we will now present the definition of homogeneity on them \citep{Jost11}.

\begin{mydef}
 Let $G$ be a group of isometries from the Riemannian manifold $\Sigma$ to itself, i.e., $g\in G$ is a diffeomorphism from $\Sigma$ to itself and $\forall g\in G$, $g^{*}h=h$, where $h$ is the metric on $\Sigma$. If for every pair of points $p,q\in\Sigma\textrm{, }\exists g'\in G$ such that $g'p=q$, then the group $G$ is said to act transitively on $\Sigma$. A Riemannian manifold with a transitive group of isometries is called {\it homogeneous}.
\end{mydef}

The action of the group $G$ as a group of isometries at a point $p\in\Sigma_{t}$, where $\Sigma_{t}$ is a Cauchy hypersurface labelled by the value of the time parameter, is $G:\Sigma_{t}\ni p \mapsto gp\in\Sigma_{t}$. The homogeneous spaces can be classified according to their Lie-group structure \citep{Osinovsky73} and are designated as Bianchi I-IX spaces. From this classification, one can construct globally hyperbolic spacetimes whose Cauchy hypersurfaces are isometric to one of those homogeneous spaces. Such spacetimes are called {\it Bianchi spacetimes}.

We are interested in spacetimes with compact Riemannian hypersurfaces without boundary because, in this case, each eigenvalue of the laplacian has finite multiplicity \citep{Berard86}, thus simplifying both the mode decomposition presented earlier and the construction of symmetric states\footnote{The problem of mode decomposition for spacetimes with noncompact homogeneous Riemannian hypersurfaces was treated in \citep{Avetisyan12}. We are indebted to the author of that paper for stressing the validity of our treatment.}.  The Bianchi spaces I-VIII are topologically equivalent to $\mathbb{R}^{3}$, therefore they are noncompact spaces. The symmetry structure of the Bianchi IX space is given by the SU(2) group, which is already compact. Among the noncompact ones, the simplest is Bianchi I, which has a commutative group structure. One can form a compact space from the Bianchi I space by taking the quotient between this group and the group $\mathbb{Z}$ of integer numbers. The resulting space is the 3-torus, a compact space without boundary. We remark that there exists more than one method of compactification (see \citep{Tanimoto04} and references therein).

$G$ has a unitary representation $U\oplus U$ on $L^2(\Sigma,\sqrt{|h|})\oplus L^2(\Sigma,\sqrt{|h|})$, given by $U(g)f=f\circ g^{-1}$, such that
\begin{equation}
 \alpha_{g}\left(\phi(f)\right) = \phi(f\circ g^{-1}) \; .
 \label{automorph}
\end{equation}
A quasifree state $\omega$ is said to be symmetric if $\forall g\in G$, $\omega \circ \alpha_{g} =\omega$. The quasifree state symmetric under the action of the group $G$ will be denoted by $\omega_{G}$.

The Riemannian metric on $\Sigma$ induces a scalar product on $L^2(\Sigma)\oplus L^2(\Sigma)$. Working with the initial-value fields (see section \ref{sec-CCR-Weyl}) $F=\left(F_{0},F_{1}\right)=\left(\rho_{0}\mathds{E}f,\rho_{1}\mathds{E}f\right)$ and $F'=\left(F'_{0},F'_{1}\right)=\left(\rho_{0}\mathds{E}f',\rho_{1}\mathds{E}f'\right)$, we have
\begin{equation}
 \left(F,F'\right)_{L^{2}}=\int\textrm{d}^{3}\uline{x}\sqrt{|h|}\, \left(\overline{F}_{0}F'_{0}+\overline{F}_{1}F'_{1}\right) \; .
\end{equation}

Schwarz's nuclear theorem \citep{ReedSimon-I} states that to any two-point function $S$ in the space of initial-value fields is associated an element of the dual to $L^2(\Sigma)$:
\[L^2(\Sigma)\ni F' \, \mapsto \, S(\cdot ,F')\in \left(L^2(\Sigma)\right)^{*} \; .\]
Now, from Riesz's representation theorem, to the element $S(\cdot ,F')\in \left(L^2(\Sigma)\right)^{*}$ there exists associated an element $\hat{S}(F')\in L^2(\Sigma)$ such that, $\forall F\in L^2(\Sigma)$,
\begin{equation}
 S(F,F')=\left(F,\hat{S}(F')\right) \; .
 \label{twoptriesz}
\end{equation}

Using the eigenfunctions $\psi_{j}$ of the laplacian, a generalized Fourier transform can be defined:
\begin{equation}
 \widetilde{F}_{j}:=\left(\psi_{j},F\right)_{L^2}=\int \textrm{d}^{3}\uline{x}\sqrt{|h|}\, \overline{\psi}_{j}(\uline{x})F(\uline{x}) \; .
 \label{Fouriertransf}
\end{equation}
The fact that the $\psi_{j}$ form a complete basis of orthonormal eigenfunctions of the laplacian operator allows us to write
\begin{equation}
 S\left(F,F'\right)=\sum_{j}\langle \widetilde{\overline{F}}_{j},\widetilde{\hat{S} (F')}_{j} \rangle \; ,
\end{equation}
where
\begin{equation}
 \langle \widetilde{\overline{F}}_{j},\widetilde{\hat{S} (F')}_{j} \rangle =\sum_{l=0}^{1}\widetilde{\overline{F}_{l}}_{j}\left(\widetilde{\hat{S} (F')_{l}}\right)_{j} \; .
\end{equation}
This last sum is over $l$ ranging from $0$ to $1$ because $F$ and $F'$ represent the initial-value fields $\left(F_{0},F_{1}\right)$ and $\left(F'_{0},F'_{1}\right)$.

We have thus constructed a homogeneous two-point function. Kolja Them (\citep{Them10}, in German) showed that this two-point function gives rise to a quasifree homogeneous state. This proof follows from \citep{LuRo90}, the only difference being that, in RW spacetimes, the commutant of each symmetry group consists of diagonalizable operators (see Appendix A of that reference), i.e., operators $T$ such that
\[\left(\widetilde{Tf}\right)_{j}=t_{j}\widetilde{f}_{j} \; ,\]
and this will not be generally true in our case. There, this fact led to the conclusion that the operation of $\hat{S}$ on a test function, evaluated in Fourier space, simply amounted to a multiplication by a function of the mode, i.e., $\widetilde{\hat{S} (F')}_{j}=\widetilde{\hat{S}}_{j}\widetilde{F'}_{j}$, but this is not true here. Nevertheless the same analysis made there for the operator $\widetilde{\hat{S}}_{j}$ can be made here for $\widetilde{\hat{S} (F')}_{j}$. In our case, this results in the construction of a quasifree homogeneous state, while there the state was also isotropic.

The two-point function of the homogeneous state has integral kernel
\begin{equation}
 w^{(2)}_{\omega_{G}}(x,x')=\sum_{j}\overline{T}_{j}(t)T_{j}(t')\psi_{j}(\uline{x})\overline{\psi}_{j}(\uline{x'})
 \label{2ptfcnHom}
\end{equation}
and $T_{j}$ has initial conditions at time $t_{0}$ given by
\[T_{j}(t_{0})=q_{j} \; \; , \; \; \dot{T}_{j}(t_{0})=c^{-3}(t_{0})p_{j} \; ,\]
where $q_{j}$ and $p_{j}$ are polynomially bounded functions. Elliptic regularity guarantees the boundedness of $\psi_{j}(\uline{x})\overline{\psi}_{j}(\uline{x'})$.

We note that this two-point function has the same form as \eqref{GNS2ptfcn}, but there it was formulated for general expanding spacetimes, therefore it has a wider range of aplicability than the one here.

\section{States of Low Energy in Expanding Spacetimes}\label{SLE-est}

We will now construct the States of Low Energy in expanding spacetimes without spatial symmetries but with compact Cauchy hypersurfaces without boundary. We will show that the construction in homogeneous spacetimes is a particular case of the one presented in this section. The SLE will be chosen by a simple minimization process as those states on which the expectation value of the renormalized energy density is minimized. Besides, we will point out the differences between our construction and the original one, given in \citep{Olbermann07}.

The renormalization of the energy density will be attained by means of the point-splitting method. In the absence of spatial symmetries, the energy density must be dependent on position. Therefore we will need to smear its renormalized expectation value over a spatially extended spacelike submanifold. Since the Cauchy hypersurfaces are compact, we can perform the smearing with test functions which do not depend on the spatial position. We want to stress here that in the RW spacetimes with negative or null spatial curvature there was no need to integrate in space, but this was necessary in the case of positive spatial curvature \citep{Olbermann07}.

We will choose as observers a congruence of geodesic curves which are everywhere orthogonal to the Cauchy hypersurfaces. We also require that their four-velocity is future-pointing. This means that for every such observer, its four-velocity $\dot{\gamma}$ is orthogonal to every vector $X\in\mathcal{T}_{p}\Sigma$, for every point $p$ in $\Sigma$. In the coordinate system we have chosen, with metric of the form \eqref{metric-timedecomp-expanding}, the ortogonality condition becomes
\begin{equation}
 g(\dot{\gamma},X)=-c^{2}(t)h_{kl}(\uline{x})\dot{\gamma}^{k}(t,\uline{x})X^{l}(t,\uline{x})=0\, \therefore \dot{\gamma}^{k}\equiv 0 \; .
\end{equation}
We also require that the four velocity is normalized, thus
\begin{equation}
 g(\dot{\gamma},\dot{\gamma})=(\dot{\gamma}^{0})^{2}=1\, \therefore \dot{\gamma}^{0}=1 \; .
\end{equation}

The energy density measured by the chosen observers is evaluated from the energy-momentum tensor $T_{\mu\nu}(x)$ as
\begin{equation}
 \rho(x)=T_{\mu\nu}(x)\dot{\gamma}^{\mu}\dot{\gamma}^{\nu}=T_{00}(x)\; .
\end{equation}

The expectation value of the regularized energy density on a state $\omega$ is
\begin{equation}
 \langle \hat{T}^{reg}\rangle_{\omega}(x,x')=\left[\frac{1}{2}\nabla_{0}|_{x}\nabla_{0}|_{x'}+\frac{1}{2}\nabla^{c}|_{x}\nabla_{c}|_{x'}+\frac{1}{2}m^{2}\right]w^{(2)}_{\omega}(x,x')\; .
 \label{Treg_u}
\end{equation}
The expectation value of the renormalized energy density is encountered by subtracting from the above expression the expectation value of the regularized energy density on a reference Hadamard state $\omega_{0}$,
\begin{equation}
 \langle \hat{T}^{ren}\rangle_{\omega}=\langle \hat{T}^{reg}\rangle_{\omega}-\langle \hat{T}^{reg}\rangle_{\omega_{0}} \; ,
\label{pointsplit-general}
\end{equation}
and then taking the coincidence limit $x'\rightarrow x$. This limit is taken by parallel transporting the point $x'$ to the point $x$ along the unique geodesic connecting them. This process introduces additional terms to the expectation value of the renormalized energy density, which are purely geometrical terms \citep{HackPhD,Moretti03,Wald94}. Since these are independent of the state $\omega$ they turn out to be irrelevant for the determination of the SLE. Therefore, they will not be written in the following.

The expectation value of the renormalized energy density will be calculated on a quasifree state with two-point function given by \eqref{GNS2ptfcn}. Let us now take a closer look at this two-point function. By performing the Bogolubov transformation \eqref{Bogtransf}, $w^{(2)}_{\omega}(x,x')$ becomes
\begin{align}
 w^{(2)}_{\omega}(x,x')=\sum_{j} &\left[(1+\beta_{j}^{2})\overline{S}_{j}(t)S_{j}(s)+\beta_{j}^{2}S_{j}(t)\overline{S}_{j}(s) \right. \nonumber \\
 &\left. +2\beta_{j}\sqrt{1+\beta_{j}^{2}}\textrm{Re}\left(e^{i\theta_{j}}S_{j}(t)S_{j}(s)\right)\right]\psi_{j}(\uline{x})\overline{\psi}_{j}(\uline{x'}) \nonumber \\
 &\eqqcolon \sum_{j}w^{(2)}_{\omega_{j}}(x,x') \; .
 \label{Bogstate}
\end{align}
where $\alpha_{j}=e^{i\theta_{j}}\sqrt{1+\beta_{j}^{2}}$ ($\beta_{j}$ was chosen to be real --- see remarks after equation \eqref{Bogcoef}). The last equality in \eqref{Bogstate} shows that the minimization amounts to finding the Bogolubov parameters $\beta_{j}$ and $\theta_{j}$ which minimize the contribution of each mode to the expectation value of the renormalized energy density. Since the last term in equation \eqref{pointsplit-general} is independent of the state $\omega$, and therefore independent of the Bogolubov parameters, this term becomes irrelevant for the present purposes. Regarding the convergence of the sum in equation \eqref{Bogstate}, see the remarks after equation \eqref{GNS2ptfcn}.

The definition of $w^{(2)}_{\omega_{j}}$ allows us to make a mode decomposition of the expectation value of the regularized energy density on the state $\omega$. We thus define
\begin{equation}
 \langle \hat{T}^{reg}\rangle_{\omega_{j}}(x,x')\coloneqq\left[\frac{1}{2}\nabla_{0}|_{x}\nabla_{0}|_{x'}+\frac{1}{2}\nabla^{c}|_{x}\nabla_{c}|_{x'}+\frac{1}{2}m^{2}\right]w^{(2)}_{\omega_{j}}(x,x') \; .
 \label{Treg-mode}
\end{equation}
Inserting \eqref{Bogstate} into \eqref{Treg-mode} and taking the coincidence limit $x'\rightarrow x$, we find
\begin{align}
 &\langle \hat{T}\rangle_{\omega_{j}}(t,\uline{x}) \coloneqq \lim_{x'\rightarrow x}\langle \hat{T}^{reg}\rangle_{\omega_{j}}(x,x')= \nonumber \\
 &\frac{1}{2}(1+2\beta_{j}^{2})\left\{\lvert\dot{S}_{j}(t)\rvert^{2}\lvert \psi_{j}(\uline{x})\rvert^{2}+\lvert S_{j}(t)\rvert^{2}\left(c(t)^{-2}h^{kl}(\uline{x})\nabla_{k}\psi_{j}(\uline{x})\nabla_{l}\overline{\psi}_{j}(\uline{x})+m^{2}\lvert \psi_{j}(\uline{x})\rvert^{2}\right)\right\} \nonumber \\
 &+\frac{1}{2}2\beta_{j}\sqrt{1+\beta_{j}^{2}}\textrm{Re}\left\{e^{i\theta_{j}}\left[\left(\dot{S}_{j}(t)\right)^{2}\lvert \psi_{j}(\uline{x})\rvert^{2} \right. \right. \nonumber \\
 &\left. \left. +\left(S_{j}(t)\right)^{2}\left(c(t)^{-2}h^{kl}(\uline{x})\nabla_{k}\psi_{j}(\uline{x})\nabla_{l}\overline{\psi}_{j}(\uline{x})+m^{2}\lvert \psi_{j}(\uline{x})\rvert^{2}\right)\right]\right\}\; .
 \label{EnergyGNS}
\end{align}

It is clear from equation \eqref{EnergyGNS} that, if $\langle \hat{T} \rangle_{\omega_{j}}$ is not smeared also in space, we will not be able to choose constant parameters $\beta_{j}$ and $\theta_{j}$ which minimize the energy density. The smeared energy density is now
\begin{equation}
 E_{j} \coloneqq \int_{\mathbb{R}}\textrm{d}t\, f^{2}(t)\int_{\Sigma}\textrm{d}^{3}x\sqrt{|h|}\, \langle \hat{T}\rangle_{\omega_{j}}(t,\uline{x}) \; .
\end{equation}
This should be interpreted as a heuristic formula, since this is just the coincidence limit of the expectation value of the regularized energy density, not the renormalized one. However, as stated above, this is the term which must be analyzed in order to construct the SLE.

Since the spatial hypersurfaces are compact without boundary, we calculate
\begin{align}
 \int_{\Sigma}\textrm{d}^{3}x\sqrt{|h|}\, h^{kl}(\uline{x})\nabla_{k}\psi_{j}(\uline{x})\nabla_{l}\overline{\psi}_{j}(\uline{x}) &=-\int_{\Sigma}\textrm{d}^{3}x\sqrt{|h|}\, \psi_{j}(\uline{x})\Delta_{h}\overline{\psi}_{j}(\uline{x}) \nonumber \\
 &=\lambda_{j}\int_{\Sigma}\textrm{d}^{3}x\sqrt{|h|}\, \lvert \psi_{j}(\uline{x})\rvert^{2}=\lambda_{j}\; .
 \label{laplcompsurfc}
\end{align}

Therefore,
\begin{align}
 E_{j} &=(1+2\beta_{j}^{2})\frac{1}{2}\int\textrm{d}t\, f^{2}(t)\left(|\dot{S}_{j}(t)|^{2}+\omega_{j}^{2}(t)|S_{j}(t)|^{2}\right) \nonumber \\
 &+2\beta_{j}\sqrt{1+\beta_{j}^{2}}\frac{1}{2}\textrm{Re}\left\{e^{i\theta_{j}}\int\textrm{d}t\, f^{2}(t)\left((\dot{S}_{j}(t))^{2}+\omega_{j}^{2}(t)S_{j}(t)^{2}\right)\right\} \nonumber \\
 &=(1+2\beta_{j}^{2})c_{1j}+2\beta_{j}\sqrt{1+\beta_{j}^{2}}\textrm{Re}(e^{i\theta_{j}}c_{2j})\; ,
 \label{EnergyGNSmin}
\end{align}
where
\begin{align}
&c_{1j}=\frac{1}{2}\int\textrm{d}t\, f^{2}(t)\left(|\dot{S}_{j}(t)|^{2}+\omega_{j}^{2}(t)|S_{j}(t)|^{2}\right) \label{c1j} \\
&c_{2j}=\frac{1}{2}\int\textrm{d}t\, f^{2}(t)\left((\dot{S}_{j}(t))^{2}+\omega_{j}^{2}(t)S_{j}(t)^{2}\right) \; . \label{c2j}
\end{align}
It is easy to see that, by choosing
\begin{equation}
 \beta_{j}=\sqrt{\frac{c_{1j}}{2\sqrt{c_{1j}^{2}-|c_{2j}|^{2}}}-\frac{1}{2}}\textrm{\quad and \quad}\alpha_{j}=e^{i\theta_{j}}\sqrt{\frac{c_{1j}}{2\sqrt{c_{1j}^{2}-|c_{2j}|^{2}}}+\frac{1}{2}}
 \label{Bogmin}
\end{equation}
and
\begin{equation}
\theta_{j}=-\textrm{Arg}c_{2j}+\pi\; ,
\label{thetamin}
\end{equation}
we minimize \eqref{EnergyGNSmin}. We will refer to these states of low energy as $\omega_{SLE}$, and their two-point functions will be referred to as $w^{(2)}_{\omega_{SLE}}$. The proof that these states are of the Hadamard form will be left for the next section.

We remark that the SLE are dependent on the test function used in the smearing. In spite of that, Degner \citep{Degner09} calculated the particle production process on such states in RW spacetimes and showed that the rate of production is not strongly dependent on the test function chosen. This dependence would only be dropped if the terms between parentheses in equations \eqref{c1j} and \eqref{c2j} could be taken out of the integrals. This would be the case if and only if $c(t)=$ constant, and in such a case we would have $c_{2j}\equiv 0$ and the SLE would reduce to the static vacuum.

The SLE constructed here are different from the ones constructed by Olbermann because here we smear the expectation value of the energy density over a spacelike slab of spacetime (containing entirely a Cauchy hypersurface and extended in time), while there the integration over a spatially extended region was not in general necessary. Besides, the treatment given here does not depend on the occurrence of spatial symmetries. We also note that if we had chosen an arbitrary causal observer, the energy density would contain terms of the form $\dot{\gamma}^{0}\dot{\gamma}^{l}\nabla_{0}|_{x}\nabla_{l}|_{x'}w^{(2)}_{\omega}(x,x')$, which could spoil the positivity of \eqref{c1j}, thus compromising the minimization of the energy density. Such a problem would also occur in the homogeneous, but anisotropic case.

\section{Fulfillment of the Hadamard condition by the SLE}\label{SLE-Hadamard}

We will now show that the SLE are Hadamard states. As stated in section \ref{secstates}, Hadamard states are completely characterized by the singularity structure of their two-point function, which means that the difference between the two-point functions corresponding to different Hadamard states must be a smooth function. Therefore we will compare the two-point function corresponding to the SLE to another one, corresponding to a given Hadamard state, and check that this difference is smooth. For this purpose, we will use the concept of adiabatic states, which are known to be, under certain conditions, Hadamard states \citep{JunSchrohe02}. Moreover, this will give us an explicit ansatz for $T_{j}(t)$.

The definition of adiabatic states given in \citep{JunSchrohe02} and the proof that these are Hadamard states involves a refinement of the Hadamard condition which uses the Sobolev wave front sets. On the following, we will first present the definition of Hadamard states and of adiabatic states in terms of the Sobolev wave front set. In the sequel, we will present the iteration procedure, formalized in \citep{LuRo90}, which provides the explicit ansatz for $T_{j}(t)$. After that we will use these as tools to show that the SLE constructed in the former section satisfy the Hadamard condition.

\subsection{Adiabatic States} \label{adiabaticstates}$ $

Recalling the definition of Sobolev wave front set (equation \eqref{Sobolev-space} and definition \eqref{Sobolev-wf}), Junker and Schrohe \citep{JunSchrohe02} proved that 
\begin{lem}\label{Hadamard-Sobolev}
 For every Hadamard state $\omega_{H}$ we have
\begin{equation}
 WF^{s}(w^{(2)}_{\omega_{H}})=\left\{\begin{array}{cc}
                           \emptyset \textrm{\ ,\ } &s< -1/2 \\
			   C^{+} \textrm{\ ,\ } &s\geqslant -1/2
                          \end{array}
\right. \; ,
\end{equation}
\end{lem}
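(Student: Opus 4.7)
The plan is to reduce the problem to an explicit Fourier-analytic computation on a local Minkowski model and then quote the standard microlocal regularity of hypersurface-supported distributions. First I would use the fact, recalled at the end of subsection \ref{secstates}, that any two Hadamard states have two-point functions differing by a smooth function, so all Hadamard states share the same Sobolev wavefront set. It therefore suffices to compute $WF^{s}$ for the universal Hadamard parametrix in equation \eqref{Had-2ptfcn}. On a convex normal neighborhood of a point on the null cone, introducing Riemann normal coordinates identifies the geometric singular piece $\Delta^{1/2}/(\sigma + 2i\epsilon(T(x)-T(y)) + \epsilon^{2})$, up to $\mathcal{C}^{N}$ terms, with the standard Minkowski vacuum two-point function. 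By the coordinate invariance of $WF^{s}$ noted just after Definition \ref{Sobolev-wf}, this reduces the problem to analyzing the Minkowski model.

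Next comes the Fourier computation. In Minkowski the vacuum two-point function has Fourier transform (in the relative coordinate) proportional to $\theta(k^{0})\delta(k^{2}-m^{2})$, a distribution carried by the smooth codimension-one mass shell. The upper bound $WF^{s}(w^{(2)}_{\omega_{H}}) \subseteq C^{+}$ for every $s$ is immediate from the inclusion $WF^{s}(u) \subseteq WF(u)$ combined with Definition \ref{Hadamard-wf}. The global regularity for $s < -1/2$ follows from the standard estimate that a surface measure on a smooth codimension-one hypersurface $S \subset \mathbb{R}^{n}$ satisfies $|\widehat{\phi\,\delta_{S}}(k)|^{2} = O((1+|k|^{2})^{-1})$ in conic neighborhoods of conormal directions, so that the defining integral \eqref{Sobolev-space} converges precisely when $s < -1/2$. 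For the reverse inclusion when $s \geq -1/2$ and $(x_{1},k_{1};x_{2},-k_{2}) \in C^{+}$, the same estimate becomes a lower bound in a conic neighborhood of $k_{1}$, since the spectral density $\theta(k^{0})\delta(k^{2}-m^{2})$ is non-negative and transversely non-degenerate on the mass shell; the integral then diverges at and beyond the threshold $s=-1/2$.

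The main obstacle is transporting this Minkowski lower bound faithfully to the Hadamard parametrix on a curved background. One must verify that the subleading terms in \eqref{Had-2ptfcn}, namely $v^{(N)}\log(\sigma + 2i\epsilon(T(x)-T(y)) + \epsilon^{2})$ and the $\mathcal{C}^{N}$ remainder $H^{(N)}$, are each microlocally in $H^{s'}$ for some $s' > -1/2$ (by taking $N$ sufficiently large), so that they cannot cancel the leading singularity at the critical Sobolev index. A partition-of-unity argument combined with the coordinate-patch pullback formula for $WF^{s}$, as carried out in \citep{JunSchrohe02}, then glues the local computation into the global statement on the manifold, yielding both claimed values of $WF^{s}(w^{(2)}_{\omega_{H}})$.
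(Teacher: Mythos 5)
A preliminary remark: the thesis does not prove this lemma but quotes it from \citep{JunSchrohe02}, so your proposal has to stand on its own. Its overall plan --- reduce to the local Hadamard parametrix, compare with the Minkowski two-point function, extract the threshold $-1/2$ from the conormal structure, and check that the subleading terms are microlocally better than $H^{-1/2}$ --- is the right one and matches the spirit of the literature. Two of your steps, however, have genuine gaps.

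The first is the reduction itself: in Riemann normal coordinates $\sigma(x,y)$ agrees with the flat quadratic form in $x-y$ only to higher order (and $\Delta^{1/2}$ with $1$ only approximately), so the difference between the leading term $\chi\,\Delta^{1/2}/(\sigma+2i\epsilon(T(x)-T(y))+\epsilon^{2})$ and the Minkowski kernel is a less singular distribution, \emph{not} a $\mathcal{C}^{N}$ function. For the inclusion $WF^{s}\subseteq C^{+}$ this is harmless, but for the equality at $s\geq -1/2$ you need precisely that this difference (and not only $v^{(N)}\log(\cdot)$ and $H^{(N)}$) is microlocally in $H^{s'}$ for some $s'>-1/2$; that is the nontrivial content of the transport step, and your proposal does not supply it. The second gap is the Fourier estimate: the claimed bound $|\widehat{\phi\,\delta_{S}}(k)|^{2}=O((1+|k|^{2})^{-1})$ in conormal directions is false for a general codimension-one hypersurface (a flat piece gives no decay at all; the rate depends on curvature), and even where such a pointwise bound holds it does not produce the stated threshold: integrating $(1+|k|^{2})^{s-1}$ over a solid cone in $\mathbb{R}^{n}$ gives convergence for $s<1-n/2$, i.e.\ $s<-1$ in the relative variable ($n=4$) or $s<-3$ on $\M\times\M$ ($n=8$), not $s<-1/2$. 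The exponent $-1/2$ comes instead from the conormal order: $\widehat{\phi u}$ is, modulo rapidly decreasing terms, concentrated near the forward mass shell with density $O(|k|^{-1})$, so \eqref{Sobolev-space} collapses to a radial integral of $\lambda^{2s}$ along the shell, convergent exactly for $s<-1/2$; equivalently, a delta-type conormal distribution on a hypersurface is microlocally $H^{s}$ precisely for $s<-1/2$, independently of the ambient dimension. Finally, for the reverse inclusion you must exhibit divergence for \emph{every} cutoff $\phi$ with $\phi(x_{1},x_{2})\neq 0$, not for one convenient choice; positivity of the spectral measure does give a lower bound (e.g.\ taking $\phi=\psi\ast\tilde{\psi}$ so that $\hat{\phi}\geq 0$), but this and the passage to arbitrary localizations must be argued, after which monotonicity of $s\mapsto WF^{s}$ yields the claim for all $s\geq -1/2$. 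With these repairs your argument would reproduce the result; as written, the critical exponent is asserted rather than derived.
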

where $C^{+}$ is the set of points which composes the smooth wave front set of a Hadamard state (see definition \eqref{Hadamard-wf} and equation \eqref{HadWF-C+}).

The adiabatic states are formulated iteratively (see below). For the $N$-th order of iteration, the adiabatic states $\omega_{N}$ are defined by the singularity structure of their two-point function:
\begin{mydef}\label{Adiabatic-def}
 A quasifree state $\omega_{N}$ on the $CCR$-algebra $\F$ is an {\it Adiabatic State} of order $N$ if, $\forall s<N+3/2$,
\begin{equation}
 WF^{s}(w^{(2)}_{\omega_{N}})=C^{+} \; . 
\end{equation}
\end{mydef}
When we compare the definition \eqref{smooth-wf} of smooth wave front set with the definition \eqref{Sobolev-wf} of Sobolev wave front set, we see that, while the former one only indicates the directions, in cotangent space, where the singularities of a distribution are located, the latter one indicates also the degree of this singularity.

Comparing the above definition with Lemma \eqref{Hadamard-Sobolev}, we have:
\begin{equation}
 WF^{s}(w^{(2)}_{\omega_{H}}-w^{(2)}_{\omega_{N}})=\emptyset \textrm{\quad ,\quad} \forall s<N+3/2 \; .
 \label{AdiabaticHadamard}
\end{equation}
Junker and Schrohe showed that the explicit construction given in \citep{LuRo90} satisfies the $WF^{s}$ condition. Furthermore, they defined adiabatic states on general globally hyperbolic spacetimes with compact Cauchy hypersurface, hence the definition is also valid on the expanding spacetimes considered here. We will present this construction now.

The adiabatic ansatz determines the initial conditions of the solutions to the field equation \eqref{timeKG}. A solution to this equation, $S_{j}(t)$, assumes, as initial values,
\begin{equation}
 S_{j}(t_{0})=W_{j}(t_{0}) \textrm{\qquad ;\qquad} \dot{S}_{j}(t_{0})=\dot{W}_{j}(t_{0}) \; .
\end{equation}

$W_{j}$ is of the WKB form:
\begin{equation}
 W_{j}(t)=\frac{1}{\sqrt{2\Omega_{j}(t)c(t)^{3}}}\exp\left(i\int_{t_{0}}^{t}\textrm{d}t'\, \Omega_{j}(t')\right) \; .
 \label{WKB}
\end{equation}
This form automatically satisfies the normalization, equation \eqref{timeSymplprod}. For it to satisfy the Klein-Gordon equation \eqref{timeKG}, $\Omega_{j}(t)$ must satisfy
\begin{equation}
(\Omega_{j})^{2}=\omega_{j}^{2}-\frac{3(\dot{c})^{2}}{4c^{2}}-\frac{3\ddot{c}}{2c}+\frac{3(\dot{\Omega}_{j})^{2}}{4(\Omega_{j})^{2}}-\frac{\ddot{\Omega}_{j}}{2\Omega_{j}} \; .
\end{equation}
A solution to this equation could be attempted iteratively:
\begin{align}
 \Omega_{j}^{(0)} &=\omega_{j} \label{WKBiteration_initial} \\
 (\Omega_{j}^{(N+1)})^{2} &=\omega_{j}^{2}-\frac{3(\dot{c})^{2}}{4c^{2}}-\frac{3\ddot{c}}{2c}+\frac{3(\dot{\Omega}_{j}^{(N)})^{2}}{4(\Omega_{j}^{(N)})^{2}}-\frac{\ddot{\Omega}_{j}^{(N)}}{2\Omega_{j}^{(N)}} \; .
 \label{WKBiteration}
\end{align}
Clearly, there could be values of $t$ and $j$ for which the right hand side of \eqref{WKBiteration} is negative and the iteration breaks down. L\"uders and Roberts \citep{LuRo90} showed that, within a certain interval of time $\mathscr{I}$ and for large values of $\lambda_{j}$, $\Omega_{j}^{(N)}(t)$ is strictly positive and
\begin{equation}
\Omega_{j}^{(N)}(t)=\mathcal{O}((1+\lambda_{j})^{1/2}) \; .
\label{Omega-adiabatic}
\end{equation}
All derivatives of $\Omega_{j}^{(N)}$ have the same asymptotic behavior.

At a generic instant of time,
\begin{equation}
 S_{j}(t)=\varsigma_{j}(t)W_{j}(t)+\xi_{j}(t)\overline{W}_{j}(t) \; .
 \label{Adansatz}
\end{equation}
The variables $\varsigma$ and $\xi$ are determined by the following coupled integral equations:
\begin{align}
\varsigma_{j}(t)&=1-i\int_{t_{1}}^{t}R_{j}(t')\left[\varsigma_{j}(t')+\xi_{j}(t')\exp\left(-2i\int_{t_{0}}^{t'}dt''\, \Omega_{j}(t'')\right)\right]dt' \label{Ad_varsigma} \\
\xi_{j}(t)&=i\int_{t_{1}}^{t}R_{j}(t')\left[\xi_{j}(t')+\varsigma_{j}(t')\exp\left(2i\int_{t_{0}}^{t'}dt''\, \Omega_{j}(t'')\right)\right]dt' \; , \label{Ad_xi}
\end{align}
where $R_{j}(t)$ is determined by
\begin{equation}
2R_{j}\Omega_{j}=(\Omega_{j})^{2}-\frac{3(\dot{\Omega}_{j})^{2}}{4(\Omega_{j})^{2}}+\frac{\ddot{\Omega}_{j}}{2\Omega_{j}}-\omega_{j}^{2}+\frac{3(\dot{c})^{2}}{4c^{2}}+\frac{3\ddot{c}}{2c} \; .
\end{equation}
Standard techniques of integral equations, which can be found in appendix A of \citep{LuRo90}, lead to the conclusion that there exist constants $C_{\xi} \, ,\, C_{\varsigma}>0$ such that
\begin{equation}
\lvert  \xi_{j}^{(N)}(t) \rvert \leqslant C_{\xi}(1+\lambda_{j})^{-N-1/2} \; \; , \; \; \lvert  1-\varsigma_{j}^{(N)}(t) \rvert \leqslant C_{\varsigma}(1+\lambda_{j})^{-N-1/2}
\label{WKBpar-j}
\end{equation}
(the same estimates above are valid for $\lvert \dot{\xi}_{j}^{(N)}(t) \rvert$ and $\lvert \dot{\varsigma}_{j}^{(N)}(t) \rvert$). Finally,
\begin{equation}
 \lvert W_{j}^{(N)}(t) \rvert =\mathcal{O}((1+\lambda_{j})^{-1/4}) \; \; \textrm{and} \; \; \lvert \dot{W}_{j}^{(N)}(t) \rvert =\mathcal{O}((1+\lambda_{j})^{1/4}) \; .
 \label{WKBsol-j}
\end{equation}

Now we will show that $WF^{s}(w^{(2)}_{\omega_{SLE}}-w^{(2)}_{\omega_{N}})=\emptyset$ and, by property \eqref{WFsum}, we will have $WF^{s}(w^{(2)}_{\omega_{SLE}}-w^{(2)}_{\omega_{H}})=\emptyset$.

\subsection{Fulfillment of conditions} \label{subsecfulfillhadamard}$ $

From \eqref{Adansatz}, \eqref{WKBpar-j} and \eqref{WKBsol-j},
\begin{equation}
 \partial_{t}^{k}S_{j}^{(N)}(t)=\mathcal{O}((1+\lambda_{j})^{k/2-1/4}) \; .
 \label{Ad-J}
\end{equation}

The two-point function corresponding to the SLE is given by \eqref{Bogstate}, where $c_{1j}$, $c_{2j}$ and $\beta_{j}$ are given by \eqref{c1j}, \eqref{c2j} and \eqref{Bogmin}, respectively. Since $c_{1j} > \lvert c_{2j} \rvert $,
\begin{equation}
 2\beta_{j}^{2} \approx \frac{1}{2}\frac{\lvert c_{2j} \rvert^{2}}{c_{1j}^{2}}+\frac{1}{4}\frac{\lvert c_{2j} \rvert^{4}}{c_{1j}^{4}}+\ldots 
\end{equation}

From \eqref{c1j} and \eqref{Ad-J}, it is immediate to see that
\begin{equation}
 c_{1j}=\mathcal{O}((1+\lambda_{j})^{1/2}) \; .
 \label{c1j-J}
\end{equation}
The analysis of the behavior of $\lvert c_{2j} \rvert $ is more involved. For this we need to estimate the scalar products of the WKB functions. The first one already appeared in equation \eqref{WKBsol-j}:
\begin{equation}
\left(W_{j}^{(N)},W_{j}^{(N)}\right)=\int_{I} dt\frac{1}{2c(t)\Omega_{j}^{(N)}(t)}=\mathcal{O}((1+\lambda_{j})^{-1/4}) \; .
\end{equation}
On the other hand, the scalar product
\begin{equation}
\left(\overline{W}_{j}^{(N)},W_{j}^{(N)}\right)=\int_{I}dt\frac{1}{2c(t)\Omega_{j}^{(N)}(t)}\exp{2i\int_{t_0}^t\Omega_{j}^{(N)}(t')dt'}
\end{equation}
is rapidly decaying in $\lambda_j$. This follows from the stationary phase approximation. It can be directly seen by exploiting the identity
\[\exp{2i\int_{t_0}^t\Omega_j^{(N)}(t')dt'}=\frac{1}{2i\Omega_j^{(N)}(t)}\frac{\partial}{\partial t}\exp{2i\int_{t_0}^t\Omega_j^{(N)}(t')dt'}\]
several times and subsequent partial integration. The estimates on $\Omega_j^{(N)}$ and its derivatives, together with the smoothness of $c(t)$, then imply the rapid decay. With these estimates at hand, we write (we will omit the index $^{(N)}$ to simplify the notation)
\begin{align}
S_{j}(t) &= |W_{j}|\left(\varsigma_{j} e^{i\int\Omega_{j}(t')dt'}+\xi_{j} e^{-i\int\Omega_{j}(t')dt'}\right) \; \therefore \nonumber \\
\left(S_{j}(t)\right)^{2} &=|W_{j}|^{2}\left(\varsigma_{j}^{2} e^{2i\int\Omega_{j}(t')dt'}+\xi_{j}^{2} e^{-2i\int\Omega_{j}(t')dt'}+2\varsigma_{j}\xi_{j}\right) \label{S^2-WKB}
\end{align}
and
\begin{align}
\left(\dot{S}_{j}(t)\right)^{2} &=|\dot{W}_{j}|^{2}\left(\varsigma_{j}^{2} e^{2i\int\Omega_{j}(t')dt'}+\xi_{j}^{2} e^{-i\int\Omega_{j}(t')dt'}+2\varsigma_{j}\xi_{j}\right) \nonumber \\
&+2|W_{j}||\dot{W}_{j}|\left(\varsigma_{j} e^{i\int\Omega_{j}(t')dt'}+\xi_{j} e^{-i\int\Omega_{j}(t')dt'}\right)\left[(\dot{\varsigma}+i\Omega_{j}\varsigma)e^{i\int\Omega_{j}(t')dt'}+(\dot{\xi}-i\Omega_{j}\xi)e^{-i\int\Omega_{j}(t')dt'}\right] \nonumber \\
&+|W_{j}|^{2}\left[(\dot{\varsigma}+i\Omega_{j}\varsigma)^{2}e^{2i\int\Omega_{j}(t')dt'}+(\dot{\xi}-i\Omega_{j}\xi)^{2}e^{-2i\int\Omega_{j}(t')dt'}+(\dot{\varsigma}+i\Omega_{j}\varsigma)(\dot{\xi}-i\Omega_{j}\xi)\right] \; . \label{dotS^{2}-WKB}
\end{align}
From \eqref{WKBpar-j} and \eqref{WKBsol-j}, we have
\begin{align}
\omega_{j}^{2}\left(S_{j}(t)^{(N)}\right)^{2} &=\mathcal{O}(\lambda_{j}^{-N}) \; ; \label{S^2_j} \\
\left(\dot{S}_{j}^{(N)}(t)\right)^{2} &=\mathcal{O}(\lambda_{j}^{-N}) \; . \label{dotS^2_j}
\end{align}

Using these results, we have
\begin{equation}
\lvert c_{2j}^{(N)} \rvert =\mathcal{O}(\lambda_{j}^{-N}) \; .
\label{c2j-J}
\end{equation}
Therefore,
\begin{equation}
 \beta_{j}^{(N)}=\mathcal{O}(\lambda_{j}^{-N-1/2}) \; .
 \label{Bog-J}
\end{equation}

Now, we need similar estimates for the eigenfunctions and eigenvalues of the laplacian. The asymptotic behavior of the eigenvalues is directly given by Weyl's estimate \citep{Jost11}:
\begin{equation}
\lambda_{j}=\mathcal{O}(j^{2/m}) \; ,
\label{lambda-J}
\end{equation}
where $m$ is the dimension of the Riemannian manifold.

For the estimate on $\psi_{j}$, we start by defining the spectral function of the Laplace operator as the kernel of the projection operator on the subspace of all its eigenfunctions whose corresponding eigenvalues are smaller than a certain value $\lambda$:
\begin{equation}
 e(\uline{x},\uline{y},\lambda) \coloneqq \sum_{\lambda_{j}\leqslant\lambda}\psi_{j}(\uline{x})\overline{\psi}_{j}(\uline{y}) \; .
 \label{spectralfunction}
\end{equation}
Elliptic regularity guarantees that this sum is bounded. H\"ormander \citep{Hormander-IV} proved that, for any differential operator $Q_{\uline{x},\uline{y}}$ of order $\mu$, the following inequality is valid:
\begin{equation}
 \lvert Q_{\uline{x},\uline{y}}(e(\uline{x},\uline{y},\lambda)) \rvert \leqslant C_{Q}\lambda^{m+\mu} \; ,
\end{equation}
where $m$ is the dimension of $\Sigma$. Combining this result with the Weyl's estimate and restricting to $m=3$, we obtain
\begin{equation}
 \lvert \partial^{\lvert k\rvert}\psi_{j}(\uline{x}) \rvert^{2} \leqslant C_{3,k}\lambda_{j}^{3+2\lvert k \rvert} \; \therefore \; \lvert \partial^{\lvert k\rvert}\psi_{j} \rvert = \mathcal{O}(j^{1+2\lvert k\rvert /3}) \; .
 \label{eigen-lapl-J}
\end{equation}

Now, we proceed to the proof that the SLE are Hadamard states. As stated at the beginning of this section, adiabatic states $\omega_{N}$ in spacetimes with metric \eqref{metric-timedecomp-expanding} and compact Cauchy hypersurface are Hadamard states. To show that the SLE are Hadamard, it suffices to show that
\[w^{(2)}_{\omega_{SLE}}-w^{(2)}_{\omega_{N}} \in H^{s}(\M \times \M) \; ,\]
for $s<N+3/2$. Moreover, since \citep{ReedSimon-II}
\[\forall s<k-\frac{1}{2}\dim(\M \times \M) \; , \; C^{k}(\M \times \M) \subset H^{s}(\M \times \M) \; ,\]
all that is needed is to show that $\exists \, k>0$ such that
\begin{equation}
 w^{(2)}_{\omega_{SLE}}-w^{(2)}_{\omega_{N}} \in C^{k}(\M \times \M) \; .
 \label{SLE-Hadamard-cond}
\end{equation}

The difference between the two-point functions is given by
\begin{equation}
 (w^{(2)}_{\omega_{SLE}}-w^{(2)}_{\omega_{N}})(t,\uline{x};t',\uline{x'})=\sum_{j}\left(\overline{T}_{j}(t)T_{j}(t')-\overline{S}_{j}^{(N)}(t)S_{j}^{(N)}(t')\right)\psi_{j}(\uline{x})\overline{\psi}_{j}(\uline{x'}) \; .
 \label{SLE-N}
\end{equation}
We will verify the convergence of this sum by estimating the asymptotic behavior of its derivatives:
\begin{equation}
 \partial_{x,x'}^{\lvert k\rvert}(w^{(2)}_{\omega_{SLE}}-w^{(2)}_{\omega_{N}})(x;x')=\sum_{j}\partial_{x,x'}^{\lvert k\rvert}\left[\left(\overline{T}_{j}(t)T_{j}(t')-\overline{S}_{j}^{(N)}(t)S_{j}^{(N)}(t')\right)\psi_{j}(\uline{x})\overline{\psi}_{j}(\uline{x'})\right] \; .
 \label{partial-SLE-N}
\end{equation}

Since $T_{j}(t)$ is obtained from the adiabatic ansatz, it should be viewed as $T_{j}^{(N)}(t)$. Performing the Bogolubov transformation to write the two-point function of the SLE as \eqref{Bogstate}, we find
\begin{align}
 \overline{T}_{j}^{(N)}(t)T_{j}^{(N)}(t')-\overline{S}_{j}^{(N)}(t)S_{j}^{(N)}(t') &=(\beta_{j}^{(N)})^{2}\left[S_{j}^{(N)}(t)\overline{S}_{j}^{(N)}(t')+S_{j}^{(N)}(t')\overline{S}_{j}^{(N)}(t)\right] \nonumber \\
 &+2\beta_{j}^{(N)}\sqrt{1+(\beta_{j}^{(N)})^{2}}\textrm{Re}\left[e^{i\theta_{j}}S_{j}^{(N)}(t)S_{j}^{(N)}(t')\right] \; .
 \label{time-SLE-N}
\end{align}
From \eqref{Ad-J} and \eqref{Bog-J}, the last term on the rhs of \eqref{time-SLE-N} has the largest order in $j$. For that reason, this is the only term which we will take into account in the verification of the convergence of the sum.

Rewriting the estimates \eqref{Ad-J} and \eqref{Bog-J} in terms of $j$, we have
\begin{align}
 \partial_{t}^{k}S_{j}^{(N)}(t) &=\mathcal{O}(j^{k/3-1/6}) \label{S-J} \; , \\
 \beta_{j}^{(N)} &=\mathcal{O}(j^{-2N/3-1/3}) \; . \label{beta-J}
\end{align}

It is then easy to see that the derivative of largest order in \eqref{partial-SLE-N} is $\partial_{\uline{x},\uline{x'}}^{\lvert k\rvert}$:
\begin{equation}
 \partial_{\uline{x},\uline{x'}}^{\lvert k\rvert}\left[\left(\overline{T}_{j}(t)T_{j}(t')-\overline{S}_{j}^{(N)}(t)S_{j}^{(N)}(t')\right)\psi_{j}(\uline{x})\overline{\psi}_{j}(\uline{x'})\right]=\mathcal{O}\left(j^{\frac{4\lvert k\rvert}{3}-\frac{2N}{3}+\frac{4}{3}}\right) \; .
\end{equation}

The sum in \eqref{partial-SLE-N} will be absolutely convergent if
\begin{equation}
 \frac{4\lvert k\rvert}{3}-\frac{2N}{3}+\frac{4}{3}<-1 \; \therefore \; \lvert k\rvert<\frac{N}{2}-\frac{7}{4} \; .
\end{equation}
This means that
\begin{equation}
 w^{(2)}_{\omega_{SLE}}-w^{(2)}_{\omega_{N}} \in C^{\lfloor \frac{N}{2}-\frac{7}{4} \rfloor}(\M \times \M) \; ,
\end{equation}
where
\begin{equation}
 \lfloor x \rfloor \coloneqq \left\{\begin{array}{cc}
                           \max\{m \in \mathbb{Z}|m \leqslant x\} \textrm{\ ,\ } & x>0 \\
			   0 \textrm{\ ,\ } &x \leqslant 0
                          \end{array}
\right. \; .
\end{equation}

Finally,
\begin{equation}
 WF^{s}(w^{(2)}_{\omega_{SLE}}-w^{(2)}_{\omega_{N}})=\emptyset \textrm{\quad for \quad}s<\frac{N}{2}-\frac{23}{4} \; .
 \label{WF-SLE-N}
\end{equation}
Since $N+\frac{3}{2}>\frac{N}{2}-\frac{23}{4}$, the above equality means that $\forall s>-1/2\, ,\, \exists N\in \mathbb{Z_{+}}$ such that the adiabatic states are Hadamard states and, at the same time, satisfy \eqref{WF-SLE-N}\footnote{The equality \eqref{WF-SLE-N} is valid $\forall s\in \mathbb{R}$, but for $s<-1/2$ the Sobolev wave front set of a Hadamard state is itself empty.} . Therefore,
\begin{equation}
 WF(w^{(2)}_{\omega_{SLE}}-w^{(2)}_{\omega_{H}})=\emptyset \; .
 \label{SLE-Hadamard-fulfilled}
\end{equation}

This proves that the States of Low Energy constructed on globally hyperbolic spacetimes with metric of the form \eqref{metric-timedecomp-expanding} and compact Cauchy hypersurface are Hadamard states. We remark that this proof is also valid for the SLE constructed on homogeneous spacetimes above.

\chapter{``Vacuum-like'' Hadamard states}\label{chap_vac-like}

In this chapter we will provide another example of Hadamard states, the {\it ``Vacuum-like'' states}. These were introduced by Afshordi, Aslanbeigi and Sorkin \citep{AfshordiAslanbeigiSorkin12}, after previous works of Johnston \citep{Johnston09} and Sorkin \citep{Sorkin11}, under the name ``Sorkin-Johnston states'' (S-J states), in an attempt to find a prescription to construct states which would be valid in any globally hyperbolic spacetime and would be singled out by the field dynamics. They also aimed at applications in cosmological problems. As proved by Fewster and Verch, such a unique general prescription is not possible \citep{FewsterVerch12,FewsterVerch_sf12}. The same authors proved that the construction given in \citep{AfshordiAslanbeigiSorkin12} does not lead to a Hadamard state in ultrastatic spacetimes \citep{FewsterVerch-SJ12}.

We will consider a relatively compact globally hyperbolic spacetime $\M$ isometrically embedded in another, larger globally hyperbolic spacetime $\N$. We will construct quasifree Hadamard states such that the kernel of their two-point functions, in the interior of $\M$, coincides with the positive spectral part of the advanced-minus-retarded operator of $\N$ restricted to $\M$ (such a proposal was put forward, for the case of the Dirac field, by Finster \citep{Finster11}). The difference to the construction of the S-J states \citep{AfshordiAslanbeigiSorkin12} is that, instead of assuming that the kernel of the two-point function goes abruptly to zero at the border of (the closure of) $\M$, we consider that the kernel consists of the positive spectral part of the advanced-minus-retarded operator of $\N$ multiplied by a smooth compactly supported function which is identically equal to 1 in the interior of $\M$. In the particular cases of static and expanding spacetimes, we will show that this is sufficient for the arising states to be Hadamard states. The states we construct here will be called ``modified S-J states''.

In subsection \ref{S-J_states} we will present the construction of the modified S-J states. In subsection \ref{subsec_static-Had} we will prove that, in static spacetimes, the states constructed are Hadamard states. In subsection \ref{subsec_expand-Had} we will prove the Hadamard property in expanding spacetimes.

\section{``Vacuum-like'' Hadamard states}\label{S-J_states}

The construction of the modified S-J states starts from the observation that the advanced-minus-retarded operator, operating on square-integrable functions in a globally hyperbolic spacetime $\M$, isometrically embedded in another globally hyperbolic manifold $\N$, with relatively compact image, is a bounded operator. We give a sketch of the proof in the following\footnote{See also the appendix of \citep{FewsterVerch-SJ12}.}
\begin{thm}
Let $\M$ and $\N$ be globally hyperbolic spacetimes such that there exists an isometry $\Psi :\M \rightarrow \N$ which embeds $\M$ isometrically in $\N$. Let also $\Psi(\M)$ be a causally convex and relatively compact subset of $\N$. Then the advanced-minus-retarded operator of $\M$, $\mathds{E}_{\M}$, is a bounded operator.
\end{thm}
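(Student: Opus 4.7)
The plan is to use the causal convexity and relative compactness hypotheses to reduce the claim to a classical $L^{2}$ energy estimate for the Klein--Gordon equation on a finite time slab. The two hypotheses play complementary roles: causal convexity will let me identify $\mathds{E}_{\M}$ with (the pullback of) the restriction of $\mathds{E}_{\N}$, whereas relative compactness will confine the supports of $\mathds{E}_{\M}^{\pm}f$ to a bounded region of $\N$ on which the standard tools apply uniformly.

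First I would verify that, for any $f\in\cC_{0}^{\infty}(\M,\mathbb{C})$, the pushforward $\Psi_{*}f$, extended by zero outside $\Psi(\M)$, is a smooth compactly supported section on $\N$, and that $\Psi^{*}\bigl(\mathds{E}_{\N}^{\pm}\Psi_{*}f\bigr)=\mathds{E}_{\M}^{\pm}f$ pointwise on $\M$. The key geometric fact is that causal convexity forbids any causal curve in $\N$ from leaving and then re-entering $\Psi(\M)$, so the retarded/advanced solutions built intrinsically in $\M$ coincide with those inherited from $\N$. Consequently $\mathrm{supp}\bigl(\mathds{E}_{\M}^{\pm}f\bigr)\subset J^{\pm}_{\N}(\mathrm{supp}\,f)\cap\overline{\Psi(\M)}$, which is compact.

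Next I would invoke the Bernal--S\'anchez theorem to foliate a neighbourhood of $\overline{\Psi(\M)}$ by the level sets $\Sigma_{\tau}$ of a smooth Cauchy temporal function $\tau$ on $\N$. Since $\overline{\Psi(\M)}$ is compact, $\tau$ takes values in some bounded interval $[\tau_{-},\tau_{+}]$ there. For $f\in\cC_{0}^{\infty}(\M,\mathbb{C})$ and $u=\mathds{E}_{\M}^{+}f$, the relation $Pu=f$ together with the standard energy inequality for the Klein--Gordon stress tensor yields
\[\|u(\cdot,\tau)\|_{L^{2}(\Sigma_{\tau}\cap\Psi(\M))}^{2}+\|\nabla u(\cdot,\tau)\|_{L^{2}(\Sigma_{\tau}\cap\Psi(\M))}^{2}\leq C_{1}\int_{\tau_{-}}^{\tau}\|f(\cdot,\sigma)\|_{L^{2}(\Sigma_{\sigma}\cap\Psi(\M))}^{2}\,d\sigma,\]
with $C_{1}$ depending only on the geometry of the compact set $\overline{\Psi(\M)}$. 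Integrating in $\tau$ over the bounded interval and applying Fubini gives $\|\mathds{E}_{\M}^{+}f\|_{L^{2}(\M)}\leq C_{2}\|f\|_{L^{2}(\M)}$; the same bound holds for $\mathds{E}_{\M}^{-}f$ by a time-reversed argument, and hence for their difference $\mathds{E}_{\M}f$.

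Since $\cC_{0}^{\infty}(\M,\mathbb{C})$ is dense in $L^{2}(\M)$, the operator $\mathds{E}_{\M}$ then extends uniquely to a bounded operator on the whole Hilbert space. The main obstacle I anticipate lies in the uniformity of the energy estimate: although Cauchy hypersurfaces of $\M$ need not themselves be compact, relative compactness of $\Psi(\M)$ forces each slice $\Sigma_{\tau}\cap\Psi(\M)$ to have finite volume, and causal convexity prevents the supports of $u$ and $f$ from spilling across $\partial\Psi(\M)$. This is precisely what makes the constants $C_{1},C_{2}$ depend only on the compact ambient geometry and keeps the time interval $[\tau_{-},\tau_{+}]$ bounded, so that the Gronwall-type estimate does not diverge --- without either hypothesis the argument would fail.
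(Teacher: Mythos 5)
Your proposal follows essentially the same route as the paper's proof: identify $\mathds{E}_{\M}^{\pm}f$ with $\Psi^{*}\mathds{E}_{\N}^{\pm}\Psi_{*}f$ by uniqueness of the advanced/retarded fundamental solutions, apply the standard $L^{2}$ energy estimate on a finite time slab of $\N$ containing the relatively compact image $\Psi(\M)$, and conclude that $\mathds{E}_{\M}$ is bounded on $L^{2}(\M)$. The paper simply quotes this energy estimate (Appendix III of Choquet-Bruhat) on a slab $\mathfrak{I}\times\Sigma$ and restricts, whereas you spell out the Gronwall-type argument and the role of causal convexity explicitly; the substance is the same.
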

\begin{proof}
Consider $\M=\mathds{I}\times\Sigma$, where $\mathds{I}=(-t_{0},t_{0})$ is a limited interval of time, and $\N=\mathbb{R}\times\Sigma$. The energy estimate found in Appendix III of \citep{Choquet09} gives, for any test function satisfying equation \eqref{KGfund} in an open, relatively compact subset $\mathcal{O}\in \mathfrak{I}\times\Sigma$, where $\mathfrak{I}\in\mathbb{R}$,
\begin{equation}
 \lVert \mathds{E}^{\pm}_{\N}f \rVert_{L^{2}(\mathfrak{I}\times\Sigma)}\leqslant C_{\mathfrak{I},\mathcal{O}}\lVert f \rVert_{L^{2}(\mathfrak{I}\times\Sigma)}
\end{equation}
where $C_{\mathfrak{I},\mathcal{O}}$ is a constant which only depends on $\mathfrak{I}$ and $\mathcal{O}$, and the norms are defined with respect to the volume measure on $\N$.

From the uniqueness of the advanced and retarded fundamental solutions, we have
\begin{equation}
\mathds{E}_{\M}f=\Psi^{*}\mathds{E}_{\N}\Psi_{*}f \; .
\end{equation}
for every $f\in C_{0}^{\infty}(\M,\mathbb{R})$, where $\Psi^{*}$, $\Psi_{*}$ are, respectively, the pull-back and push-forward associated to $\Psi$. Choosing $\mathfrak{I}$ such that $\Psi(\M)\in\mathfrak{I}\times\Sigma$,
\begin{equation}
 \lVert \mathds{E}^{\pm}_{\M}f \rVert_{L^{2}(\M)}\leqslant \lVert \mathds{E}^{\pm}_{\N}\Psi_{*}f \rVert_{L^{2}(\mathfrak{I}\times\Sigma)} \leqslant C_{\mathfrak{I},\Psi(\M)}\lVert \Psi_{*}f \rVert_{L^{2}(\mathfrak{I}\times\Sigma)} = C_{\mathfrak{I},\Psi(\M)}\lVert f \rVert_{L^{2}(\M)} \; .
 \label{norm_E_Mf-f}
\end{equation}
$L^{2}(\M)$ is a Banach space with norm
\begin{equation}
 \lVert \mathds{E}^{\pm}_{\M} \rVert \coloneqq \sup_{f \neq 0}\frac{\lVert \mathds{E}^{\pm}_{\M}f \rVert_{L^{2}(\M)}}{\lVert f \rVert_{L^{2}(\M)}} \; .
\end{equation}
Using \eqref{norm_E_Mf-f}, we have, finally,
\begin{equation}
 \lVert \mathds{E}_{\M} \rVert \leqslant 2C_{\mathfrak{I},\Psi(\M)} \; 
 \label{norm_E_M}
\end{equation}
thus concluding that $\mathds{E}_{\M}$ is a bounded operator on $L^{2}(\M)$.
\end{proof}

Let $f\in \mathcal{C}_{0}^{\infty}(\N,\mathbb{R})$ be a real-valued test function such that $f\equiv 1 $ on $\Psi(\M)$. We define the bounded self-adjoint operator $A$
\begin{equation}
 {A} \coloneqq if\mathds{E}_{\N} f \; ,
 \label{A-iE}
\end{equation}
where $f$ acts by multiplication on $L^2(\N)$. This is the point where our construction departs from the one done by \citep{AfshordiAslanbeigiSorkin12}. If we replace $f$ by the characteristic function of $\M$ we obtain the operator introduced by them and carefully analyzed in \citep{FewsterVerch-SJ12}.

Since $A$ is a bounded self-adjoint operator on the Hilbert space $L^2(\N)$ (with volume measure), then there exists a unique {\it projection valued measure} which allows us to write
\begin{equation}
 A=\int\lambda \, dP_{\lambda} \; ,
\end{equation}
where $dP_{\lambda}$ is called the {\it spectral measure}. The numbers $\lambda$ are the eigenvalues of $A$, and $\lambda\in[-\lVert A\rVert ; \lVert A\rVert]$ \citep{ReedSimon-I}. The {\it positive part} of $A$ is defined as
\begin{equation}
 A^{+}=\int_{[0,\lVert A\rVert]}\lambda \, dP_{\lambda} \; .
 \label{pos_A}
\end{equation}

The modified S-J state $\omega_{SJ_{f}}$ is now defined as the quasifree state on the spacetime $\M$ whose two-point function is given by
\begin{equation}
W_{SJ_{f}}(q,r) \coloneqq (  q , A^{+}r  ) \; ,
\label{W-SJ}
\end{equation}
for real-valued test functions $q,r$ on $\M$. Considering the adjoint of $A$, $A^{\ast}$, we define $|A|\coloneqq(A^{\ast}A)^{1/2}$. We can then write the positive part of $A$ as
\begin{equation}
A^{+}=\frac{|A|+A}{2} \; .
\end{equation}
The operator $|A|$ is a symmetric operator and it gives rise to the symmetric product
\[\mu_{SJ_{f}}(q,r)=(q,|A|r) \; .\]
Note that the antisymmetric part of  the two-point function coincides with $i\mathds{E}_\M$. This is due to the fact that the intersection of the kernel of $A$ with $L^2(\M)$ coincides with the kernel of $\mathds{E}_{\M}$. In particular, the integral kernel of $A^{+}$, restricted to $\M$, is a bisolution of the Klein-Gordon equation. This bisolution can be uniquely extended to the domain of dependence of $\M$ (which coincides with $\N$ in the case considered here). The state $\omega_{SJ_{f}}$ is a pure state, as can be seen in the following

\begin{thm}
Let $\M$ be a globally hyperbolic subspacetime of another globally hyperbolic spacetime $\N$, and let $\Sigma\subset\M$ be a Cauchy hypersurface of $\N$. Then for every real-valued $f\in\mathcal{C}^{\infty}_0(\N,\mathbb{R})$ with $f\equiv1$ on $\M$, the modified S-J state
\[\omega_{SJ_f}(W(\phi))=e^{-\frac{1}{2}(\phi,|f\mathds{E}_\N f|\phi)}\]
with $\phi\in\mathcal{C}^{\infty}_{0}(\M,\mathbb{R})$, is pure. Here $\mathds{E}_\N$ is the commutator function on $\N$ and $|\cdot|$ denotes the modulus of the operator.
\end{thm}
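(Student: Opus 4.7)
My plan is to construct an explicit one-particle structure $(K,\mathscr{H})$ for $\omega_{SJ_{f}}$ and apply Proposition \ref{one-particle_structure}, by which purity amounts to density of $KL$ in $\mathscr{H}$ as a real subspace. Throughout, $A:=if\mathds{E}_{\N}f$ is a bounded self-adjoint operator on $L^{2}(\N,\mathbb{C})$, so its spectral resolution yields the closed positive-spectral subspace $\mathscr{H}_{+}:=P_{(0,\|A\|]}L^{2}(\N,\mathbb{C})$, a bounded positive operator $A^{+}$, and a positive square root $\sqrt{A^{+}}$ that is injective with dense range on $\mathscr{H}_{+}$ and vanishes on $\mathscr{H}_{+}^{\perp}$.

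I take $\mathscr{H}:=\mathscr{H}_{+}$ with the inner product inherited from $L^{2}(\N,\mathbb{C})$, and define the real-linear map $K\colon L\to\mathscr{H}$ by $K\phi:=\sqrt{A^{+}}\,\phi$ for $\phi\in L:=\mathcal{C}^{\infty}_{0}(\M,\mathbb{R})$. Then
\[\langle K\phi,K\psi\rangle_{\mathscr{H}}=(\phi,A^{+}\psi)_{L^{2}(\N)}=\tfrac{1}{2}(\phi,|A|\psi)+\tfrac{1}{2}(\phi,A\psi).\]
Complex conjugation sends $A$ to $-A$ (its integral kernel is real and antisymmetric) and hence commutes with $|A|=\sqrt{A^{2}}$; so for real $\phi,\psi$ the quantity $(\phi,|A|\psi)$ is real while $(\phi,A\psi)$ is purely imaginary. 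Therefore $\mathrm{Re}\langle K\phi,K\psi\rangle=\tfrac{1}{2}(\phi,|A|\psi)=:\mu(\phi,\psi)$ and $\mathrm{Im}\langle K\phi,K\psi\rangle=\tfrac{1}{2}\sigma(\phi,\psi)$ with $\sigma(\phi,\psi)=-i(\phi,A\psi)=(\phi,f\mathds{E}_{\N}f\psi)$, which coincides with the symplectic form $\sigma_{\M}$ on $L$ since $f\equiv 1$ on $\M$ and $\ker A\cap L^{2}(\M)=\ker\mathds{E}_{\M}$ as noted before the theorem. Thus items (ii), (iii) of Proposition \ref{one-particle_structure} hold, and the prescription $\omega_{SJ_{f}}(W(\phi))=e^{-\frac{1}{2}(\phi,|A|\phi)}$ identifies $\omega_{SJ_{f}}$ with the quasifree state built from this one-particle structure.

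It remains to establish that $\sqrt{A^{+}}L$ is dense in $\mathscr{H}_{+}$. Suppose $\chi\in\mathscr{H}_{+}$ is orthogonal to $\sqrt{A^{+}}L$; then $(\sqrt{A^{+}}\chi,\phi)_{L^{2}(\N)}=0$ for every $\phi\in\mathcal{C}^{\infty}_{0}(\M,\mathbb{R})$, so both real and imaginary parts of the complex function $\eta:=\sqrt{A^{+}}\chi\in\mathscr{H}_{+}\subset L^{2}(\N,\mathbb{C})$ vanish as distributions on $\M$. By injectivity of $\sqrt{A^{+}}$ on $\mathscr{H}_{+}$ it suffices to show $\eta=0$. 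Since $\mathscr{H}_{+}\subset\overline{\mathrm{Ran}\,A}$ in $L^{2}(\N)$, approximate $\eta$ in $L^{2}(\N)$ by vectors of the form $A\xi_{n}=if\,u_{n}$, where $u_{n}:=\mathds{E}_{\N}(f\xi_{n})$ is a smooth Klein-Gordon solution on $\N$ with compactly supported Cauchy data on $\Sigma$. Because $f\equiv 1$ on $\M$ and $\eta$ vanishes there, the $u_{n}$ tend to $0$ in $L^{2}(\M)$. An energy-type estimate applied on $\Sigma\subset\M$ (which is a Cauchy hypersurface of $\N$) then controls the Cauchy data of $u_{n}$ on $\Sigma$ by their bulk $L^{2}$-norm on a slab $\mathcal{O}\subset\M$ around $\Sigma$, forcing the data to vanish in the limit; well-posedness of the Klein-Gordon Cauchy problem on $\N$ yields $u_{n}\to 0$ in $L^{2}_{loc}(\N)$, so $fu_{n}\to 0$ in $L^{2}(\N)$ and $\eta=0$.

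The main obstacle is this last step: obtaining a quantitative $L^{2}$-control of Cauchy data on $\Sigma$ by the $L^{2}$-norm of the solution on a bulk neighbourhood of $\Sigma$ is not an immediate consequence of the standard energy identities, which naturally live in $H^{1}\oplus L^{2}$. In the static and expanding settings treated later in this chapter the Klein-Gordon operator separates and $A$ becomes explicitly diagonalizable via a Fourier basis on the compact spatial slice, in which case the density claim reduces mode by mode to a classical density statement and the obstruction disappears; purity then follows in those cases, which is what is needed for the applications.
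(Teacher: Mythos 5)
Your strategy---exhibit a one-particle structure $(K,\mathscr{H}_{+})$ with $K=\sqrt{A^{+}}$ and prove purity as density of $KL$---is legitimate, and the first half (the identification of $\mu$ and $\sigma$ and the verification of items (ii), (iii) of Proposition \ref{one-particle_structure}) is fine. But the whole content of the theorem sits in the density step, and there you have a genuine gap, which you yourself acknowledge. Two distinct problems. First, the orthogonal-complement argument conflates real and complex orthogonality: purity requires density of the \emph{real}-linear subspace $KL$, so you must take $\chi$ orthogonal with respect to $\mathrm{Re}\langle\cdot,\cdot\rangle_{\mathscr{H}_{+}}$; from that you only learn $\int\phi\,\mathrm{Re}(\eta)=0$ for real $\phi$, i.e.\ that the \emph{real part} of $\eta=\sqrt{A^{+}}\chi$ vanishes on $\M$ --- and since conjugation anticommutes with $A$, $\mathrm{Re}\,\eta=\tfrac12(\eta+\overline{\eta})$ mixes $\mathscr{H}_{+}$ with $\mathscr{H}_{-}$, so this does not give $\eta_{\upharpoonright\M}=0$. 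If instead you intend complex orthogonality, triviality of that complement only yields density of $KL+iKL$, which is condition (i) of Proposition \ref{one-particle_structure}, not purity. Second, the analytic step you lean on does not exist in the form you need: energy identities control Cauchy data in $H^{1}\oplus L^{2}$ by like norms, and an $L^{2}$ bound on a solution in a slab around $\Sigma$ only controls its data at roughly the level $L^{2}\oplus H^{-1}$ (highly oscillatory solutions have small bulk $L^{2}$ norm with order-one data), so ``the data are forced to vanish in the limit'' does not follow. Your closing retreat to the static and expanding cases does not prove the theorem as stated, which concerns an arbitrary globally hyperbolic $\M\subset\N$ containing a Cauchy surface $\Sigma$ of $\N$.

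The paper divides the labour differently and completely. Purity is quoted from Manuceau--Verbeure for the quasifree state defined on the Weyl algebra over the \emph{large} symplectic space $\mathcal{C}^{\infty}_{0}(\N,\mathbb{R})/\mathrm{Ker}\,\mathds{E}_{\N}f$ with $\sigma([\phi_{1}],[\phi_{2}])=(\phi_{1},f\mathds{E}_{\N}f\phi_{2})$ (on that space the analogue of your density statement is essentially automatic, since $P_{+}$ applied to real functions already exhausts $\mathscr{H}_{+}$). The real work is a support lemma in the style of Fulling--Sweeny--Wald: every class modulo $\mathrm{Ker}\,\mathds{E}_{\N}f$ contains a representative supported in $\M$, obtained by splitting $\phi=\phi_{+}+\psi+\phi_{-}$ and setting $\psi_{+}=P(1-\chi)\mathds{E}^{-}_{\N}f\phi_{+}$ for a suitable cutoff $\chi$ (and analogously for $\phi_{-}$); this is precisely where the hypothesis that $\Sigma\subset\M$ is a Cauchy surface of $\N$ enters, and it is exactly the ingredient your density argument lacks. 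Indeed, granting that lemma, $\sqrt{A^{+}}\,\mathcal{C}^{\infty}_{0}(\M,\mathbb{R})$ has the same closure as $\sqrt{A^{+}}\,\mathcal{C}^{\infty}_{0}(\N,\mathbb{R})$ and your scheme closes --- but then the proof is the paper's, not an energy estimate.
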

\begin{proof}
We consider the Weyl algebra over the symplectic space $(L,\sigma)$ with, now, 
\[L=\mathrm{Re}\left(\mathcal{C}^{\infty}_{0}(\N,\mathbb{R})/\mathrm{Ker}\mathds{E}_\N f\right)\]
and
\[\sigma([\phi_1],[\phi_2])=(\phi_1,f\mathds{E}_\N f\phi_2) \; .\]
Due to the compactness of the support of $f$, the operator $f\mathds{E}_\N f$ is bounded on this Hilbert space, and, according to the results of Manuceau and Verbeure \citep{ManuceauVerbeure68} mentioned in the Introduction, we can define a pure state on the Weyl algebra by setting 
\[\omega(W(\phi))=e^{-\frac12(\phi,|f\mathds{E}_\N f|\phi)}\]
where $|f\mathds{E}_\N f|=\sqrt{-f\mathds{E}_\N f^2\mathds{E}_\N f}$.

It remains to prove that the Weyl algebra above coincides with the Weyl algebra over $\M$ with the symplectic form defined by the commutator function $\mathds{E}_\M$ on $\M$. For this purpose we prove that the corresponding symplectic spaces are equal. Since the restriction of $\mathds{E}_\N$ to $\M$ coincides with $\mathds{E}_\M$ and since $f\equiv1$ on $\M$, the symplectic space associated to $\M$ is a symplectic subspace of $(L,\sigma)$. We now show that this subspace is actually equal to $(L,\sigma)$. This amounts to prove that every rest class $[\phi]\in L$ with $\phi\in\mathcal{C}^{\infty}_0(\N,\mathbb{R})$ contains an element $\phi_0$ with $\mathrm{supp}\phi_0\subset \M$.

Here we proceed similarly to Fulling, Sweeny and Wald \citep{FullingSweenyWald78}. Let $\phi\in\mathcal{C}^{\infty}_0(\N,\mathbb{R})$. We may decompose $\phi=\phi_{+}+\psi+\phi_{-}$ with $\mathrm{supp}\phi_\pm\subset J_\pm(\Sigma)$ and $\mathrm{supp}\psi\subset\M$. Let $\chi\in\mathcal{C}^{\infty}(\N,\mathbb{R})$ such that $\chi\equiv1$ on $J_+(\Sigma)$ and $\mathrm{supp}\chi\subset J_+(\Sigma_-)$ for a Cauchy hypersurface $\Sigma_-$ of $\M$ in the past of $\Sigma$. Set
\[\psi_+=P(1-\chi)\mathds{E}^-_\N f\phi_+\]
where $P$ is the Klein Gordon operator and $\mathds{E}_\N^-$ the advanced propagator. By the required properties of $\chi$, $\psi_{+}$ vanishes where $\chi$ is constant, hence $\mathrm{supp}\psi_+\subset \M$. In particular $f\psi_+=\psi_+$. We are left with showing that $\phi_+-\psi_+\in\mathrm{Ker}\mathds{E}_\N f$,
\[\mathds{E}_\N f(\phi_+-\psi_+)=\mathds{E}_\N(f\phi_+-\psi_+)=\mathds{E}_\N P\chi \mathds{E}^-_\N f\phi_+=0\ ,\]
where in the last step we used the fact that $\chi \mathds{E}^-_N f\phi_+$ has compact support. For $\phi_-$ an analogous argument works and yields an element $\psi_-\in[\phi_-]$ with $\mathrm{supp}\psi_-\subset \M$. Thus we find that $\phi_0=\psi_++\psi+\psi_-$ has the properties required above.
\end{proof}

The question now arises whether the modified S-J states are Hadamard states. We will prove this to be true in two situations, static spacetimes and expanding spacetimes. We remark that the proofs rely only upon the fact that $f\in \mathcal{C}_{0}^{\infty}(\N,\mathbb{R})$ is a real-valued test function such that $f_{\upharpoonright \M}\equiv 1 $. 
If we change $f$ we will, in general, obtain a different Hadamard state. Thus the states we construct here are not uniquely singled out by the spacetime geometry.

In the spacetimes considered below, the operator $\mathds{E}$ will be decomposed into a sum over the eigenprojections $\psi_{j}\overline{\psi}_{j}$ 
of the spatial part of the Klein-Gordon operator (since it possesses, in both cases, a complete base of orthonormalized eigenfunctions; see section \ref{subsec-static_expand}). We will choose our smearing function $f$ to depend only on time. Then, what we will have to do is to analyze, for each $j$, the operators $A_{j}$ defined as
\begin{equation}
A(t,\uline{x};t',\uline{x'}) \eqqcolon \sum_{j}A_{j}(t',t)\psi_{j}(\uline{x})\overline{\psi}_{j}(\uline{x'})\; .
\label{A_j}
\end{equation}

\subsection{Static spacetimes}\label{subsec_static-Had}

Taken as an operator on $L^{2}(\mathbb{R})$, $A_{j}$ has the integral kernel (see \eqref{E-static})
\begin{equation}
 A_{j}(t',t)=\frac{i}{\omega_{j}}f(t')\left(\sin (\omega_{j}t'-\theta_j)\cos (\omega_{j}t-\theta_j)-\cos (\omega_{j}t'-\theta_j)\sin (\omega_{j}t-\theta_j)\right)f(t).
\end{equation}
This expression does not depend on the phase $\theta_j$ due to the addition theorem of trigonometric functions.
We choose 
$\theta_j$ such that
\[\int dtf(t)^2\cos(\omega_jt-\theta_j)\sin(\omega_jt-\theta_j)=0\ .\]
Such a choice is possible since the integrand changes its sign if $\theta_j$ is shifted by $\pi/2$.

Since $A_{j}^{*}(t',t)\equiv \overline{A_{j}(t,t')}$, we find
\begin{equation}
|A_j|(t',t)=\frac{1}{\omega_j^2}\left(||S_j||^2C_j(t)C_j(t')+||C_j||^2S_j(t)S_j(t')\right)
\end{equation}  
with
\[S_j(t)=f(t)\sin(\omega_jt-\theta_j)\ ,\ C_j(t)=f(t)\cos(\omega_jt-\theta_j)\ ,\]
\[||S_j||^2 \coloneqq \int dtS_{j}(t)^{2}\; ,\]
and similarly for $||C_j||^2$. Hence the positive part of $A_{j}$ has the integral kernel
\[A_{j}^{+}(t',t)=\frac{1}{2\omega_j||C_j||||S_j||}\left(||S_j||C_j(t)-i||C_j||S_j(t)\right)\left(||S_j||C_j(t')+i||C_j||S_j(t')\right)\ .\]

Setting
\begin{equation}
 \delta_{j}\coloneqq 1-\frac{||C_{j}||}{||S_{j}||}\; ,
 \label{delta}
\end{equation}
we write
\begin{equation}
 A_{j}^{+}(t,t')=
 \frac{1}{2\omega_j}\left(\frac{1}{1-\delta_j}C_j(t)-iS_j(t)\right)\left(C_j(t')+i(1-\delta_j)S_j(t')\right)\; .
\end{equation}
Therefore, the two-point function on $\M$ is
\begin{align}
 W_{SJ_{f}}(t,\uline{x};t',\uline{x}')=\sum_{j}\frac{1}{2\omega_{j}}&\left(\frac{1}{1-\delta_{j}}C_j(t)-iS_j(t)\right)\left(C_j(t')+i(1-\delta_j)S_j(t')\right)\psi_j(\uline{x}) \overline{\psi}_j(\uline{x}') \; .
 \label{W-SJ-static}
\end{align}

A practical way to verify that this state is a Hadamard state is to compare it with another Hadamard state and check whether the difference $w$ of the two-point functions is smooth. For this comparison, we use the two-point function of the static ground state, restricted to $\M$:
\begin{equation}
 W_{0}(t,\uline{x};t',\uline{x}')=\sum_{j}\, \frac{e^{-i\omega_{j}(t-t')}}{2\omega_{j}}\psi_{j}(\uline{x})\overline{\psi}_{j}(\uline{x}') \; .
 \label{W-H-static}
\end{equation}
These two-point functions would coincide if $\delta_j=0$. Further we note that multiplying the latter by $f(t)f(t')$ gives the same function, since $f_{\upharpoonright \M}\equiv 1$.

We state our result as a theorem:

\begin{thm}\label{Had-static}
Let $\N=\mathbb{R}\times \Sigma$ be a static spacetime with metric $g=a^{2}dt^{2}-h$, where $h$ is a Riemannian metric on the compact manifold $\Sigma$ and $a$ is a smooth everywhere positive function on $\Sigma$. Let $I$ be a finite interval and $f$ a smooth real-valued function on $\mathbb{R}$ with compact support which is identical to 1 in $I$. Then the modified S-J state $\omega_{SJ_f}$ as constructed above on $\M=I\times\Sigma$ is a Hadamard state.  
\end{thm}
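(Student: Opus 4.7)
The plan is to compare $W_{SJ_f}$ with the two-point function $W_0$ of the static ground state restricted to $\M$, which is already known to be Hadamard. Since the antisymmetric parts of both two-point functions coincide with $\frac{i}{2}\mathds{E}_{\M}$, their difference is symmetric, so showing that $W_{SJ_f}-W_0$ is smooth on $\M\times\M$ will imply, via the wavefront set additivity \eqref{WFsum}, that $\mathrm{WF}(W_{SJ_f})=\mathrm{WF}(W_0)=C^{+}$, so that $\omega_{SJ_f}$ is Hadamard. Because $f\equiv 1$ on $\M$, we may freely multiply $W_0$ by $f(t)f(t')$ without changing it on $\M\times\M$, which makes the mode-by-mode comparison with the compactly-supported functions $C_j,S_j$ natural.

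First I would write out the modewise difference. A direct expansion using $e^{-i\omega_j(t-t')}=\cos(\omega_jt-\theta_j)\cos(\omega_jt'-\theta_j)+\sin(\omega_jt-\theta_j)\sin(\omega_jt'-\theta_j)-i[\sin(\omega_jt-\theta_j)\cos(\omega_jt'-\theta_j)-\cos(\omega_jt-\theta_j)\sin(\omega_jt'-\theta_j)]$ and comparison with the factorized form of $A_j^+$ yields
\begin{equation*}
W_{SJ_f}(x,x')-f(t)f(t')W_0(x,x')=\sum_j\frac{1}{2\omega_j}\left[\frac{\delta_j}{1-\delta_j}C_j(t)C_j(t')-\delta_j\, S_j(t)S_j(t')\right]\psi_j(\uline{x})\overline{\psi}_j(\uline{x}') .
\end{equation*}
Thus everything hinges on controlling $\delta_j=1-\|C_j\|/\|S_j\|$.

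The main step is to show that $\delta_j$ decays faster than any inverse polynomial in $\omega_j$. Since $\|C_j\|^2+\|S_j\|^2=\int f(t)^2\,dt$ is independent of $j$, it suffices to estimate $\|C_j\|^2-\|S_j\|^2=\int f(t)^2\cos(2\omega_jt-2\theta_j)\,dt=\mathrm{Re}\,[e^{-2i\theta_j}\widehat{f^2}(2\omega_j)]$. Because $f^2\in\mathcal{C}^{\infty}_0(\mathbb{R})$, its Fourier transform is rapidly decreasing, so $\|C_j\|^2-\|S_j\|^2=O(\omega_j^{-N})$ for every $N$, and the same therefore holds for $\delta_j$. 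The auxiliary identity $\int f(t)^2\cos(\omega_jt-\theta_j)\sin(\omega_jt-\theta_j)\,dt=0$ used to fix $\theta_j$ is itself a rapidly decreasing Fourier coefficient of $f^2$, so the choice of $\theta_j$ only shifts by quantities that leave the rapid-decay estimates intact.

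Finally I would combine this with the polynomial bounds already collected in Chapter~\ref{chap_MathMethods}: Weyl's estimate $\lambda_j=O(j^{2/\dim\Sigma})$ together with H\"ormander's pointwise bound $|\partial^{|k|}\psi_j|=O(\lambda_j^{(\dim\Sigma-1)/2+|k|})$ give polynomial growth for all spatial derivatives of $\psi_j(\uline{x})\overline{\psi}_j(\uline{x}')$, while time differentiation of $C_j,S_j$ produces at most polynomial factors of $\omega_j$. The rapid decay of $\delta_j$ therefore absorbs any finite number of derivatives, so the series defining $W_{SJ_f}-f\otimes f\cdot W_0$ converges in every $C^k(\M\times\M)$ norm. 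Hence the difference is smooth on $\M\times\M$, and since $f\otimes f\cdot W_0=W_0$ there, the claim follows. I expect the principal obstacle to be the careful bookkeeping of the uniform-in-$j$ estimates on the derivatives of $C_j,S_j$, and to confirm that the rapid decay of $\delta_j$ is preserved when dividing by $1-\delta_j$; both reduce to elementary Fourier-transform estimates once the stationary-phase/smooth-compact-support structure is isolated.
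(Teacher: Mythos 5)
Your proposal is correct and follows essentially the same route as the paper: compare $W_{SJ_f}$ with the static ground state $W_0$ (multiplied by $f(t)f(t')$, harmless since $f\equiv 1$ on $\M$), reduce the modewise difference to the coefficients $\delta_j$, and deduce faster-than-polynomial decay of $\delta_j$ from the rapid decrease of $\widehat{f^2}(2\omega_j)$, which then absorbs all polynomially growing derivative factors. The only (inessential) divergence is the convergence mechanism for the mode sum: you invoke pointwise H\"ormander/Weyl eigenfunction bounds as in the SLE chapter, whereas the paper establishes $L^2$-convergence of all derivatives via the elliptic estimate $\lVert D\psi\rVert_2\le c_D\lVert K^m\psi\rVert_2$ together with the Schatten-class summability $\sum_j\omega_j^{-p}<\infty$; both close the argument.
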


\begin{proof}
The difference $\colon W_{SJ_{f}}\colon$ between $ W_{SJ_{f}}$ and $W_{0}$ is
\begin{equation}
 \colon W_{SJ_{f}}\colon (t,\uline{x};t',\uline{x}')=\sum_{j}\, \frac{\delta_{j}}{2\omega_{j}}\left[\frac{1}{1-\delta_{j}}C_{j}(t')C_{j}(t)-S_{j}(t')S_{j}(t)\right]\psi_{j}(\uline{x})\overline{\psi}_{j}(\uline{x}') \; .
 \label{W-SJH-static}
\end{equation}
To prove that $\omega_{SJ_{f}}$ is a Hadamard state it suffices to show that $\colon W_{SJ_{f}}\colon$ is smooth.
Since the eigenfunctions $\psi_j$ of the elliptic operator $K$ are smooth, each term in the expansion above is smooth, and it suffices to prove that the sum converges in the sense of smooth functions. This can be done by proving that, for all derivatives, the sum converges in $L^2(\M\times\M)$.

For this purpose we first exploit the fact that the $L^2$-norms of derivatives of functions on $\Sigma$ can be estimated in terms of the operator $K$. Namely, for every differential operator $D$ of order $n$ on $\Sigma$ there exists a constant $c_D>0$ such that 
\[||D\psi||_2\leq c_D||K^{m}\psi||_2\]  
with $m$ the smallest integer larger than or equal to $n/2$ \citep{Hormander-III}. Hence spatial derivatives of the functions $\psi_j$ can be absorbed by multiplication with the corresponding eigenvalues of $K$. Similarly, time derivatives amount to multiplication with factors $\omega_j$ and exchanges between the functions $S_j$ and $C_j$. Since their $L^2$-norms are uniformly bounded in $j$, it remains to show that
\[\sum_j\omega_j^n\delta_j<\infty\ \forall\ n\in\mathbb{N}_0 \ .\]
We first observe that $||C_{j}||^{2}$ and $||S_{j}||^{2}$ can be expressed in terms of the Fourier transform of the square of the test function $f$:
\begin{equation}
||C_{j}||^{2}=\int\textrm{d}t\, f(t)^{2}\left(\frac{e^{2i(\omega_{j}t-\theta)}+e^{-2i(\omega_{j}t-\theta)}+2}{4}\right)=\frac{1}{2}+\frac{\widetilde{f^{2}}(2\omega_{j})e^{-2i\theta}+\widetilde{f^{2}}(-2\omega_{j})e^{2i\theta}}{4}
\label{Norm_C}
\end{equation}
and
\begin{equation}
||S_{j}||^{2}=\frac{1}{2}-\frac{\widetilde{f^{2}}(2\omega_{j})e^{-2i\theta}+\widetilde{f^{2}}(-2\omega_{j})e^{2i\theta}}{4} \; .
\label{Norm_S}
\end{equation}
Since $f$ is a smooth test function, so is $f^{2}$, and $\forall n\in\mathds{R}$,
\begin{equation}
\lim_{\omega\rightarrow \infty}\omega^{n}\widetilde{f^{2}}(2\omega)=0 \; .
\label{omega-f-smooth}
\end{equation}
It follows immediately that
\begin{equation}
\lim_{j\rightarrow \infty}\omega_{j}^{n}\delta_{j}=0 \; .
\label{omega-delta-smooth}
\end{equation}
 
The last information we need concerns the behavior of the eigenvalues of $K$. In order to analyse this behavior, we need a couple of definitions: for a positive operator $B$ with eigenvalues $\mu_{j}$, the {\it p-Schatten norm} \citep{Simon05} is defined as
\[|B|_{p}\coloneqq\Big(\sum_{j}\mu_{j}^{p}\Big)^{1/p} \; .\]
If this norm is finite, it is said that the operator $B$ is in the Schatten class $L_{p}$. Taking $B=(K+\mathds{1})^{-1}$ the {\it resolvent of $K$} and $\mu_{j}=(\lambda_{j}+1)^{-1}$, since $K$ is a self-adjoint elliptic positive operator of order 2 on a $d$-dimensional compact space, its resolvent is in the Schatten classes $L_{d/2+\epsilon}$, for $\epsilon>0$ \citep{Shubin01}. Hence, for $p>d/2+\epsilon$ ($p\in\mathbb{N}$), $\sum_j\omega_j^{-p}<\infty$ and we finally obtain the estimate
\[\sum_j\omega_j^n\delta_j\le (\sum_j\omega_j^{-p})(\sup_k\omega_k^{n+p}\delta_k)\le \infty\ .\]
\end{proof}

Before we proceed to the case of expanding spacetimes, we remark that the smoothness of the function $f$ was crucial for getting a Hadamard state. The state depends via the expansion coefficients $\delta_j$ and the phases $\theta_j$ on the values of the Fourier transform of $f^2$ at the points $2\omega_j$, and it is the fast decrease of these values as $j$ tends to infinity that implies the Hadamard property. Hence, if $f\notin \mathcal{C}_{0}^{\infty}(\mathbb{R})$ then, in general,  \eqref{omega-f-smooth} and \eqref{omega-delta-smooth} would not be satisfied, and the state would not be a Hadamard state.

\subsection{Expanding spacetimes}\label{subsec_expand-Had}

The advanced-minus-retarded-operator is now
\begin{equation}
 \mathds{E}(t,\uline{x};t',\uline{x'})=\sum_{j}\frac{(\overline{T}_{j}(t)T_{j}(t')-T_{j}(t)\overline{T}_{j}(t'))}{2i}\psi_{j}(\uline{x})\overline{\psi}_{j}(\uline{x'}) \; .
 \label{E-prop-expanding-st}
\end{equation}
We decompose $fT_j$ into its real and imaginary parts, $fT_j=B_{j}-iD_j$, and obtain for the integral kernel of the operator $A_j$
\begin{equation}
 A_j(t',t)=i\left(D_{j}(t')B_{j}(t)-B_{j}(t')D_{j}(t)\right) \; .
\end{equation}
$A_j$ is a self-adjoint antisymmetric rank 2 operator.

We can choose the phase of $T_{j}$ such that
\begin{equation}
 \int B_{j}(t)D_{j}(t)dt \equiv 0 \; .
 \label{B-D-0}
\end{equation}

Analogous to the static case we obtain
\begin{equation}
 A_{j}^{+}(t',t)=
 \frac{1}{2||B_{j}||||D_{j}||}\left(||D_{j}||B_{j}(t')-i||B_{j}||D_{j}(t')\right)\left(||D_{j}||B_{j}(t)+i||B_{j}||D_{j}(t)\right) \; .
 \label{A-pos}
\end{equation}
Setting again
\begin{equation}
\delta_j=1-\frac{||B_{j}||}{||D_{j}||} \; ,
\end{equation}
we find for the two-point function of the modified S-J state on $\M$
\begin{equation}
 W_{SJ_{f}}(t,\uline{x};t',\uline{x}')=\sum_j \frac{1}{2}\left(\frac{1}{1-\delta_j}B_{j}(t')-iD_{j}(t')\right)\left(B_{j}(t)+i(1-\delta_j)D_{j}(t)\right)\psi_{j}(\uline{x})\overline{\psi}_{j}(\uline{x}') \; .
 \label{W-SJ-expand}
\end{equation}

We now investigate the wave front set of this two-point function. We proceed as in the proof of the Hadamard condition for states of low energy (see section \ref{SLE-Hadamard}) by comparing \eqref{W-SJ-expand} with the two-point functions of adiabatic states of finite order. It suffices to prove that for all adiabatic orders $N$ the two-point functions \eqref{W-SJ-expand} and the one corresponding to an adiabatic state differ only by a function which is in the local Sobolev space of order $s$ satisfying $s<N+3/2$ (see the definition of adiabatic states in section \ref{adiabaticstates}).


As in the last chapter, the adiabatic ansatz will provide initial conditions for the solution $T_{j}$ of the Klein-Gordon equation:  
\begin{equation}
 T_{j}^{(N)}(t)=\left( \varsigma_{j}^{(N)}(t)W_{j}^{(N)}(t)+\xi_{j}^{(N)}(t)\overline{W}_{j}^{(N)}(t) \right)e^{i\theta_{j}} \; ,
\end{equation}
where $\theta_{j}$ is the phase factor introduced so that \eqref{B-D-0} is satisfied, $W_{j}^{(N)}$ is given by
\[W_{j}^{(N)}(t)=\frac{1}{\sqrt{2\Omega_{j}^{(N)}c(t)^{3}}}\exp\left(i\int_{t_{0}}^{t}dt'\Omega_{j}^{(N)}(t')\right) \; ,\]
where $\Omega_j^{(N)}$ is bounded from below by a constant times $\sqrt{\lambda_j}$, and together with its derivatives, bounded from above by constants times $\sqrt{\lambda_j}$. The functions $\varsigma_j^{(N)}$ and $\xi_j^{(N)}$ satisfy the estimates given in \eqref{WKBpar-j}. Thus, as in \eqref{WKBsol-j} and \eqref{Ad-J},
\begin{equation}
 \lvert W_{j}^{(N)}(t) \rvert =\mathcal{O}((1+\lambda_{j})^{-1/4}) \; \; \textrm{and} \; \; \lvert T_{j}^{(N)}(t) \rvert =\mathcal{O}((1+\lambda_{j})^{-1/4}) \; .
  \label{adiabatic-lambda}
\end{equation}

The proof that the two-point function \eqref{W-SJ-expand} has the Hadamard property will be presented in the theorem below. For clarity of the argument, we will repeat some statements and results which were presented in the proof that the SLE are Hadamard states (section \ref{subsecfulfillhadamard}).

\begin{thm}
Let $\N=J\times \Sigma$ be an expanding spacetime with $\Sigma$ compact and $J$ an open interval on the real axis. Let $I$ be a finite open interval with closure contained in $J$, and let $f\in\mathcal{C}^{\infty}_{0}(J)$ such that $f$ is equal to 1 on $I$. Then the modified Sorkin-Johnston state $\omega_{SJ_{f}}$, as defined above, is a Hadamard state on the expanding spacetime $\M=I\times\Sigma$.
\end{thm}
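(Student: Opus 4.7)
The plan is to adapt the approach of Section~\ref{subsecfulfillhadamard}: compare $W_{SJ_{f}}$ with the two-point function $W_{N}$ of an adiabatic state of order $N$ (built from $T_{j}=T_{j}^{(N)}$), show that $W_{SJ_{f}}-W_{N}\in\cC^{\infty}(\M\times\M)$, and conclude via \eqref{WFsum} and Definition~\ref{Adiabatic-def} that $WF^{s}(W_{SJ_{f}})=C^{+}$ for every $s<N+3/2$. Letting $N\to\infty$ then yields the Hadamard condition of Definition~\ref{Hadamard-wf}. Since both $W_{SJ_{f}}$ and $W_{N}$ are two-point functions of quasifree states for the same Klein--Gordon field on $\M$ with the same commutator $i\mathds{E}_{\M}$, they share the same antisymmetric part, so $W_{SJ_{f}}-W_{N}$ is symmetric. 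Because $f\equiv 1$ on $\M$, there the decomposition $fT_{j}^{(N)}=B_{j}-iD_{j}$ coincides with the real/imaginary decomposition of $T_{j}^{(N)}$ itself, and expanding both kernels shows that every bilinear term in $B_{j},D_{j},\psi_{j},\overline{\psi}_{j}$ enters $W_{SJ_{f}}-W_{N}$ with a coefficient of order $\delta_{j}$.

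The central technical step is the rapid decay of $\delta_{j}$. By \eqref{WKBpar-j}, one has $T_{j}^{(N)}(t)=W_{j}^{(N)}(t)e^{i\theta_{j}}+O(\lambda_{j}^{-N-1/2})$ uniformly on the compact support of $f$, with $W_{j}^{(N)}(t)=(2\Omega_{j}^{(N)}(t)c(t)^{3})^{-1/2}\exp\bigl(i\int_{t_{0}}^{t}\Omega_{j}^{(N)}(t')\,dt'\bigr)$ and $\theta_{j}$ chosen to enforce \eqref{B-D-0}. Setting $\varphi_{j}(t)=\theta_{j}+\int_{t_{0}}^{t}\Omega_{j}^{(N)}(t')\,dt'$, the defining norms give
\[\lVert B_{j}\rVert^{2}-\lVert D_{j}\rVert^{2}=\int\frac{f(t)^{2}\cos\bigl(2\varphi_{j}(t)\bigr)}{2\Omega_{j}^{(N)}(t)c(t)^{3}}\,dt+O(\lambda_{j}^{-N-1/2})\,.\]
Since $\Omega_{j}^{(N)}$ together with all its derivatives is smooth and bounded below by a constant times $\sqrt{\lambda_{j}}$, repeated integration by parts (a standard stationary-phase estimate, valid because $f$ has compact support in $J$ and $c$ is smooth) shows that the oscillatory integral decays faster than any inverse power of $\lambda_{j}$. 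Combined with $\lVert B_{j}\rVert,\lVert D_{j}\rVert=O(\lambda_{j}^{-1/4})$ and the identity $\delta_{j}=(\lVert D_{j}\rVert^{2}-\lVert B_{j}\rVert^{2})/[\lVert D_{j}\rVert(\lVert B_{j}\rVert+\lVert D_{j}\rVert)]$, this yields decay of $\delta_{j}$ faster than any inverse power of $\lambda_{j}$.

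To conclude, I estimate term-by-term each derivative of $W_{SJ_{f}}-W_{N}$: time derivatives on $B_{j},D_{j}$ contribute factors bounded by \eqref{Ad-J}, spatial derivatives on $\psi_{j}$ obey H\"ormander's bound \eqref{eigen-lapl-J}, and Weyl's law \eqref{lambda-J} controls the mode count. The rapid decay of $\delta_{j}$ dominates all these polynomial factors in $\lambda_{j}$, so the series for every derivative $\partial_{x,x'}^{|k|}(W_{SJ_{f}}-W_{N})$ converges absolutely and uniformly on $\M\times\M$, giving $W_{SJ_{f}}-W_{N}\in\cC^{\infty}(\M\times\M)$. The main obstacle will be confirming that the stationary-phase decay of $\lVert B_{j}\rVert^{2}-\lVert D_{j}\rVert^{2}$ is not spoiled by the subleading contributions from $\xi_{j}^{(N)}$, $1-\varsigma_{j}^{(N)}$ and the phase $\theta_{j}$; the uniform estimates \eqref{WKBpar-j} make these $O(\lambda_{j}^{-N-1/2})$ for any fixed $N$, which is enough.
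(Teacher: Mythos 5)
Your strategy is essentially the paper's: choose modes with adiabatic initial data, reduce the comparison of $W_{SJ_{f}}$ with the adiabatic two-point function to the size of $\delta_{j}$, and estimate the WKB scalar products by repeated integration by parts. The genuine gap is your claim that $\delta_{j}$ decays faster than any inverse power of $\lambda_{j}$. Writing $T_{j}^{(N)}=\bigl(\varsigma_{j}^{(N)}W_{j}^{(N)}+\xi_{j}^{(N)}\overline{W}_{j}^{(N)}\bigr)e^{i\theta_{j}}$, the quantity $\lVert B_{j}\rVert^{2}-\lVert D_{j}\rVert^{2}=\mathrm{Re}\int f^{2}\,(T_{j}^{(N)})^{2}\,dt$ contains, besides the genuinely oscillatory piece proportional to $\int f^{2}\varsigma_{j}^{2}(W_{j}^{(N)})^{2}e^{2i\theta_{j}}dt$ (which your stationary-phase argument does make rapidly decreasing), the \emph{non-oscillatory} cross term proportional to $\int f^{2}\,\varsigma_{j}^{(N)}\xi_{j}^{(N)}\,\lvert W_{j}^{(N)}\rvert^{2}dt$, in which the phases cancel. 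By \eqref{WKBpar-j} this term is only $O(\lambda_{j}^{-N-1/2})\cdot O(\lambda_{j}^{-1/2})$, so for fixed $N$ one obtains $\delta_{j}=O(\lambda_{j}^{-N-1/2})$ and nothing better. The observation that this holds ``for any fixed $N$'' does not upgrade to rapid decay of a single sequence: the modes $T_{j}^{(N)}$, the phases $\theta_{j}$, the comparison state $W_{N}$ and the constants $C_{\xi},C_{\varsigma}$ all change with $N$, so there is no $N$-independent sequence $\delta_{j}$ to which the bound applies uniformly. Consequently your conclusion that $W_{SJ_{f}}-W_{N}\in\cC^{\infty}(\M\times\M)$ for a fixed $N$ is not justified; the difference is only finitely regular, with regularity growing with $N$.

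This is precisely where the paper's proof proceeds differently: it accepts $\delta_{j}=O(\lambda_{j}^{-N-1/2})$, shows that $R_{s}=\sum_{j}\lambda_{j}^{s}\bigl(A^{+}_{j}-\tfrac{1}{2}|fT_{j}\rangle\langle fT_{j}|\bigr)\otimes|\psi_{j}\rangle\langle\psi_{j}|$ is Hilbert--Schmidt for $s<N+1-\tfrac{m}{4}$ (so the difference lies in a Sobolev space of order growing with $N$), and then exploits the fact that $\omega_{SJ_{f}}$ itself is independent of the adiabatic order to conclude that its Sobolev wave front sets coincide with $C^{+}$ for every $s$, whence the Hadamard property. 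Your argument is repaired in exactly this way: keep your estimates but claim only $\delta_{j}=O(\lambda_{j}^{-N-1/2})$, deduce $H^{s}$ (or $\cC^{k}$) regularity of $W_{SJ_{f}}-W_{N}$ with $s$ (or $k$) growing with $N$, and finish with the independence-of-$N$ argument you already sketch when you ``let $N\to\infty$''. (Rapid decay of $\delta_{j}$ is genuinely available in the static case, where no adiabatic corrections occur and everything reduces to the Fourier transform of $f^{2}$; in the expanding case it is not established, and the proof does not need it.)
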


\begin{proof}
We want to show that for each $s>0$ there is an $N\in\mathbb{N}$ such that the difference of the two-point functions of the state $\omega_{SJ_f}$ and the adiabatic state of $N$-th order is an element of the Sobolev space of order $s$. As in the static case, we use the fact that spatial derivatives can be estimated in terms of the elliptic operator and amount to multiplication with powers of the corresponding eigenvalues $\lambda_j$. For the time derivatives we exploit the fact that the functions $T_j$ are solutions of a second order differential equation, which again will allow us to replace derivatives by multiplication with powers of $\lambda_j$.

Therefore, In order to verify the Hadamard property of $W_{SJ_{f}}$, we investigate for which index $s\in\mathbb{R}$ the operator
\begin{equation}
R_s=\sum_j \lambda_j^{s}(A^+_j-\frac{1}{2}|fT_j\rangle\langle fT_j|)\otimes |\psi_{j}\rangle\langle \psi_{j}|
\end{equation}  
is Hilbert-Schmidt. For this purpose we have to estimate the $L^{2}$ scalar products of the WKB functions. We have
\begin{equation}
\left(fW_j^{(N)},fW_j^{(N)}\right)_{L^{2}}=\int dtf(t)^2\frac{1}{2c(t)\Omega_j^{(N)}(t)}
\label{fW.fW}
\end{equation} 
which can be bounded from above and from below by a constant times $(1+\lambda_j)^{-\frac12}$. On the other hand, the scalar product
\begin{equation}
\left(\overline{fW}_j^{(N)},fW_j^{(N)}\right)_{L^{2}}=\int dtf(t)^{2}\frac{1}{2c(t)\Omega_j^{(N)}(t)}\exp{2i\int_{t_0}^t\Omega_j^{(N)}(t')dt'}
\label{fW*.fW}
\end{equation}
is rapidly decaying in $\lambda_j$. This follows from the stationary phase approximation. It can be directly seen by exploiting the identity
\[\exp{2i\int_{t_0}^t\Omega_j^{(N)}(t')dt'}=\frac{1}{2i\Omega_j^{(N)}(t)}\frac{\partial}{\partial t}\exp{2i\int_{t_0}^t\Omega_j^{(N)}(t')dt'}\]
several times and subsequent partial integration. The estimates on $\Omega_j^{(N)}$ and its derivatives together with the smoothness of $c(t)$ and $f(t)$ then imply the claim.

Now, the term $(A^+_j-\frac{1}{2}|fT_j\rangle\langle fT_j|)$ reads
\begin{align}
 A_{j}^{+}(t',t)-\frac{f(t')T_{j}(t')f(t)\overline{T}_{j}(t)}{2} &=\frac{1}{8(1-\delta_{j})}f(t')\left\{(\delta_{j})^{2}\left(\overline{T}_{j}(t')T_{j}(t)+T_{j}(t')\overline{T}_{j}(t)\right) \right. \nonumber \\
 &\left. +2\textrm{Re}\left[\delta_{j}(2-\delta_{j})T_{j}(t')T_{j}(t)\right]\right\}f(t) \; .
 \label{A-fT-expand}
\end{align}

On the static case, the terms $\lVert B_{j} \rVert$ and $\lVert D_{j} \rVert$ were written as combinations of the Fourier transform of a smooth function (see equations \eqref{Norm_C} and \eqref{Norm_S}). This is no longer valid in the expanding case. We now have
\begin{align*}
||B_{j}^{(N)}||^{2} &=\int\textrm{d}t\, B_{j}^{(N)}(t)^{2} \\
&=\frac{1}{2}\int\textrm{d}t\, f(t)^{2}\left[\textrm{Re}\left((\varsigma_{j}^{(N)}(t)+\overline{\xi}_{j}^{(N)}(t))^{2}W_{j}^{(N)}(t)^{2}\right)+\lvert\varsigma_{j}^{(N)}(t)+\overline{\xi}_{j}^{(N)}(t)\rvert^{2}\lvert W_{j}^{(N)}(t) \rvert^{2}\right]
\end{align*}
and
\begin{align*}
||D_{j}^{(N)}||^{2} &=\int\textrm{d}t\, D_{j}^{(N)}(t)^{2} \\
&=\frac{1}{2}\int\textrm{d}t\, f(t)^{2}\left[\lvert\varsigma_{j}^{(N)}(t)+\overline{\xi}_{j}^{(N)}(t)\rvert^{2}\lvert W_{j}^{(N)}(t) \rvert^{2}-\textrm{Re}\left((\varsigma_{j}^{(N)}(t)+\overline{\xi}_{j}^{(N)}(t))^{2}W_{j}^{(N)}(t)^{2}\right)\right]
\end{align*}
Taking into account \eqref{WKBpar-j} and the estimates below equations \eqref{fW.fW} and \eqref{fW*.fW}, we get
\begin{equation*}
 \delta_{j}=\mathcal{O}(\lambda_{j}^{-N-1/2}) \; .
\end{equation*}
The pre-factor of the first term in \eqref{A-fT-expand} is of order
\[(\delta_{j})^{2}=\mathcal{O}(\lambda_{j}^{-2N-1}) \; ,\]
while the one of the second term,
\[(\delta_{j})(2-\delta_{j})=\mathcal{O}(\lambda_{j}^{-N-1/2}) \; .\]
This last one imposes more stringent restrictions. Taking \eqref{adiabatic-lambda} into account, we obtain for the Hilbert-Schmidt norm of $R_s$
\begin{equation}
||R_s||^2_2\leq\sum_j (1+\lambda_j)^{2s-2N-2} \; .
\end{equation}  
For the Laplacian on a compact Riemannian space of dimension $m$ we know from Weyl's estimate \citep{Jost11} that $\lambda_j$ is bounded by some constant times $j^{\frac{2}{m}}$. Hence the Hilbert-Schmidt norm of $R_s$ is finite if 
\begin{equation}
s<N+1-\frac{m}{4} \; .
\label{SJ_f-Hadamard}
\end{equation}

The modified S-J states are independent of the order of the adiabatic approximation. They thus have the same Sobolev wave front sets as Hadamard states for every index $s$ and therefore fulfill the Hadamard condition.
\end{proof}

We note that, in comparison with the Hilbert-Schmidt norm appearing in the paper containing the results of this chapter, \citep{BrumFredenhagen14}, we have now corrected a factor of $1/4$ in \eqref{SJ_f-Hadamard}. This correction does not change the conclusion of the paper at all.

We remark further that for $f\notin \mathcal{C}_{0}^{\infty}(\M)$, the proof that the scalar product \eqref{fW*.fW} decays faster than any power of $\lambda_{j}$ breaks down, and thus there could be some $s$ for which the operator $R_s$ would not be Hilbert-Schmidt, thus $\omega_{SJ_{f}}$ would not be a Hadamard state.



\chapter{Hadamard state in Schwarzschild-de Sitter spacetime}\label{chap_Sch-dS}

\section{Introduction}

We will construct here a Hadamard state in the Schwarzschild-de Sitter spacetime. To our knowledge, this is the first explicit example of such a state in this spacetime (the general existence of Hadamard states in a globally hyperbolic spacetime was proved in \citep{FullingNarcowichWald81}. See the discussion in section \ref{secstates}). The state will be defined in some regions of the Penrose diagram, not in its full Kruskal extension, and we will show that it is invariant under the action of the group of symmetries of this spacetime. Hence, the state constructed here will not be the Hartle-Hawking state for this spacetime, whose nonexistence was proven in \citep{KayWald91}. As we will make clear later, neither can this state be interpreted as the Unruh state.

One point of the nonexistence theorems mentioned above which must be already emphasized is that the state which will arise in our construction does not satisfy the KMS condition at any point of the region where it is defined. The impossibility of having a thermal state, invariant under the action of the group of symmetries of the Schwarzschild-de Sitter spacetime, was already pointed in \citep{GibbonsHawking77}, where the authors showed the existence of a temperature in the de Sitter spacetime, related to the surface gravity of the event horizon, as in the classical Hawking effect \citep{Hawking74}.

Concerning the thermal nature of these states, it is important to mention, as stressed in \citep{BuchholzSolveen13,Solveen12}, temperature here is meant only in the sense of an ``absolute temperature'', as appears in the efficiency of a Carnot cycle. The ``calorimetric temperature'', the parameter used to define thermal equilibrium among thermodynamic states, on the other hand, is a local concept, defined as the expectation value of an observable. As they proved, in the Schwarzschild spacetime, the absolute temperature corresponding to the surface gravity of the black hole horizon, and uniquely determining a ``thermal state'', is equal to the calorimetric temperature, evaluated in this state. On the other hand, in the de Sitter spacetime, while the absolute temperature is given by the surface gravity of the cosmological horizon, the calorimetric temperature evaluated in the corresponding thermal state is zero. In this work, since we will be concerned with the KMS property of states, we will be talking about absolute temperatures.

Our construction is based on the bulk-to-boundary technique developed in \citep{DappiaggiMorettiPinamonti06,DappiaggiMorettiPinamonti09b,DappiaggiMorettiPinamonti09c} and applied in the construction of the Unruh state in the Schwarzschild spacetime in \citep{DappiaggiMorettiPinamonti09}. As in \citep{DappiaggiMorettiPinamonti09}, we apply that technique to the construct a Hadamard state in the Schwarzschild-de Sitter spacetime.

\section{Schwarzschild-de Sitter Spacetime}

The Schwarzschild-de Sitter (SdS) spacetime is a spherically symmetric solution of the Einstein equations in the presence of a positive cosmological constant. Its metric, in the coordinates $(t,r,\theta,\varphi)$, has the form \citep{GriffithsPodolsky09}
\begin{equation}
ds^{2}=\left(1-\frac{2M}{r}-\frac{\Lambda}{3}r^{2}\right)dt^{2}-\left(1-\frac{2M}{r}-\frac{\Lambda}{3}r^{2}\right)^{-1}dr^{2}-r^{2}(d\theta^{2}+\sin^{2}(\theta)d\varphi^{2}) \; ,
\label{SdS-metric}
\end{equation}
where $M>0$ is the black hole mass and $\Lambda$ is the cosmological constant (we will consider only $\Lambda>0$, the other case being the so-called {\it Anti-de Sitter spacetime}). The coordinates $(\theta,\varphi)$ have the usual interpretation of polar angles. The interpretation of the coordinates $t$ and $r$ will be given below. The first question we ask ourselves is about the location of horizons, i.e., positive roots of $F(r)\coloneqq\left(1-\frac{2M}{r}-\frac{\Lambda}{3}r^{2}\right)$. One can see that, if $3M\sqrt{\Lambda}>1$, $F(r)$ has only one negative real root,
\[r_{0}=\left[\frac{-3M}{\Lambda}-\sqrt{\frac{9M^{2}}{\Lambda^{2}}-\frac{1}{\Lambda^{3}}}\right]^{1/3}+\left[\frac{-3M}{\Lambda}-\sqrt{\frac{9M^{2}}{\Lambda^{2}}-\frac{1}{\Lambda^{3}}}\right]^{-1/3} \; ,\]
the other two roots being complex,
\[z_{\pm}=e^{\mp i\pi/3}\left[\frac{-3M}{\Lambda}-\sqrt{\frac{9M^{2}}{\Lambda^{2}}-\frac{1}{\Lambda^{3}}}\right]^{1/3}+e^{\pm i\pi/3}\left[\frac{-3M}{\Lambda}-\sqrt{\frac{9M^{2}}{\Lambda^{2}}-\frac{1}{\Lambda^{3}}}\right]^{-1/3} \; .\]
If $3M\sqrt{\Lambda}=1$, $F(r)$ has two coincident positive real roots,
\[r_{1}=\frac{1}{\sqrt{\Lambda}} \; ,\]
and one negative real root,
\[r_{2}=\frac{-2}{\sqrt{\Lambda}} \; .\]
The case on which we will be interested, $3M\sqrt{\Lambda}<1$, is when $F(r)$ has two distinct positive real roots, corresponding to the horizons. Defining $\xi=\arccos(-3M\sqrt{\Lambda})$ ($\pi<\xi<3\pi/2$), the positive roots are located at
\begin{align}
r_{b}&=\frac{2}{\sqrt{\Lambda}}\cos\left(\frac{\xi}{3}\right) \nonumber \\
r_{c}&=\frac{2}{\sqrt{\Lambda}}\cos\left(\frac{\xi}{3}+\frac{4\pi}{3}\right) \; .
\label{event_hor}
\end{align}
The negative real root is located at
\[r_{-}=\frac{2}{\sqrt{\Lambda}}\cos\left(\frac{\xi}{3}+\frac{2\pi}{3}\right)=-(r_{b}+r_{c}) \; .\]
One can easily see that $2M<r_{b}<3M<r_{c}$ \citep{LakeRoeder77}. The horizon located at $r_{b}$ is a black hole horizon. One can see that $\lim_{\Lambda\rightarrow 0}r_{b}=2M$ and $\lim_{M\rightarrow 0}r_{b}=0$. On the other hand, the horizon located at $r_{c}$ is a cosmological horizon, $\lim_{\Lambda\rightarrow 0}r_{c}=\infty$ and $\lim_{M\rightarrow 0}r_{c}=\sqrt{3/\Lambda}$. Besides, $F(r)$ attains a maximum at $\overline{r}=\sqrt[3]{\frac{3M}{\Lambda}}$, $3M<\overline{r}<r_{c}$.

One can see from equation \eqref{SdS-metric} that the character of the coordinates $t$ and $r$ changes as one crosses the horizons. For $r_{b}<r<r_{c}$, $F(r)>0$ and $t$ is a timelike coordinate, $r$ being spacelike. If either $r<r_{b}$ or $r>r_{c}$, $F(r)<0$, $t$ becomes a spacelike coordinate and $r$, a timelike coordinate. Besides, it is immediate to see that the vector $X=\frac{\partial}{\partial_{t}}$ is a Killing vector. For $r_{b}<r<r_{c}$, $F(r)>0$ and the Killing vector is a timelike vector, thus this region of spacetime is a static region. If either $r<r_{b}$ or $r>r_{c}$, $F(r)<0$ and this vector becomes spacelike. Thus these are not static regions. On the horizons $r=r_{b}$ or $r=r_{c}$, $X$ is a null vector, $X^{a}X_{a}=0$. $X$ is normal to the horizon, and so is $\nabla^{a}(X^{b}X_{b})$. Therefore there exists a constant $\kappa$, the {\it surface gravity}, defined on the horizon such that
\begin{equation}
\nabla^{a}(X^{b}X_{b})=-2\kappa X^{a} \; .
\label{surface_grav_def}
\end{equation}
One can show that the surface gravity is a constant along the orbits of $X$ and also a constant over the horizon. Besides, one can also prove that \citep{Wald84}
\[\kappa^{2}=-\frac{1}{2}(\nabla^{a}X^{b})(\nabla_{a}X_{b}) \; .\]
For a spherically symmetric spacetime whose metric is of the form \eqref{SdS-metric}, but with a general function $F(r)$, one can prove that
\begin{equation}
\kappa=\frac{1}{2}F'(r)\vert_{r=r_{H}} \; ,
\label{kappa_horizon}
\end{equation}
where $r_{H}$ is a positive root of $F(r)$, i.e., a horizon. The surface gravities, in the particular case of the SdS spacetime, are given by
\begin{align}
\kappa_{b}&=\frac{1}{2}F'\vert_{r=r_{b}}=(r_{c}-r_{b})(r_{c}+2r_{b})\frac{\Lambda}{6r_{b}} \nonumber \\
\kappa_{c}&=\frac{1}{2}F'\vert_{r=r_{c}}=(r_{c}-r_{b})(2r_{c}+r_{b})\frac{\Lambda}{6r_{c}} \; .
\label{surface_grav}
\end{align}
It is immediate to see that $\kappa_{b}>\kappa_{c}$.

The metric \eqref{SdS-metric} is not regular at the horizons. 
As shown in \citep{BazanskiFerrari86}, one cannot obtain a coordinate system on which the metric is regular at both horizons, but we can construct a pair of coordinate systems such that each one renders the metric regular at one of the horizons. First, we define the usual tortoise coordinate $r_{\ast}$:
\begin{equation}
r_{\ast}=\int\frac{dr}{F(r)}=\frac{1}{2\kappa_{b}}\log\left(\frac{r}{r_{b}}-1\right)-\frac{1}{2\kappa_{c}}\log\left(1-\frac{r}{r_{c}}\right)-\frac{1}{2}\left(\frac{1}{\kappa_{b}}-\frac{1}{\kappa_{c}}\right)\log\left(\frac{r}{r_{b}+r_{c}}+1\right) \; .
\label{tortoise}
\end{equation}
It maps the region $r\in(r_{b},r_{c})$ into $r^{\ast}\in(-\infty,+\infty)$. We define null coordinates as $u=t-r_{\ast}$, $v=t+r_{\ast}$. The coordinate system which renders the metric regular at $r=r_{b}$ is defined as \citep{ChoudhuryPadmanabhan07}
\begin{equation}
U_{b}\coloneqq \frac{-1}{\kappa_{b}}e^{-\kappa_{b}u} \quad ; \quad V_{b}\coloneqq \frac{1}{\kappa_{b}}e^{\kappa_{b}v} \; .
\label{coord_UbVb}
\end{equation}
Since $u,v\in(-\infty,+\infty)$, $U_{b}\in(-\infty,0)$ and $V_{b}\in(0,+\infty)$. In these coordinates, the metric becomes (neglecting the angular part)
\begin{equation}
ds^{2}=\frac{2M}{r}\left(1-\frac{r}{r_{c}}\right)^{1+\kappa_{b}/\kappa_{c}}\left(\frac{r}{r_{b}+r_{c}}+1\right)^{2-\kappa_{b}/\kappa_{c}}dU_{b}dV_{b} \; .
\label{metric_UbVb}
\end{equation}
This expression is regular at $r_{b}$, but not at $r_{c}$. Thus, in this coordinate system, the metric covers the whole region $(0,r_{c})$ regularly. Therefore, we can extend $U_{b}$ to positive values and $V_{b}$ to negative values across the horizon at $r_{b}$. The Kruskal extension of this region is similar to the corresponding extension of the Schwarzschild spacetime \citep{Wald84}:

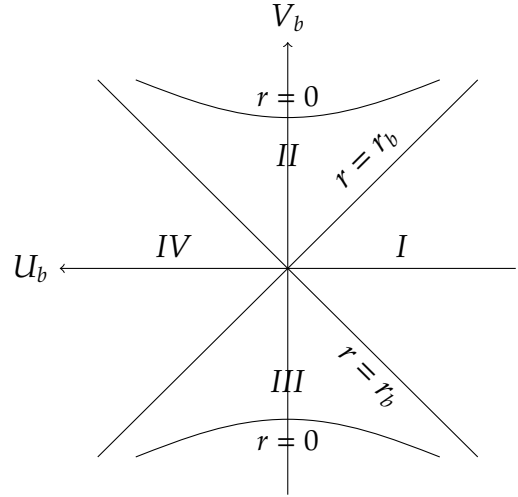
\begin{figure}[h]
\begin{center}
\begin{tikzpicture}
\draw[<-] (-3,0) -- (3,0);
\draw[->] (0,-3) -- (0,3);
\draw[thin] (-2.5,-2.5) -- node[near end,sloped,above] {$r=r_{b}$} (2.5,2.5);
\draw[thin] (2.5,-2.5) -- node[near start,sloped,below] {$r=r_{b}$} (-2.5,2.5);
\draw (-2,2.5) sin (0,2) cos (2,2.5); curve
\draw (-2,-2.5) sin (0,-2) cos (2,-2.5); curve
\node[above] at (1.5,0) {$I$};
\node[above] at (-1.5,0) {$IV$};
\node at (0,1.5) {$II$};
\node at (0,-1.5) {$III$};
\node[above] at (0,2) {\small $r=0$};
\node[below] at (0,-2) {\small $r=0$};
\node[left] at (-3,0) {$U_{b}$};
\node[above] at (0,3) {$V_{b}$};
\end{tikzpicture}
\end{center}
\caption{Conformal diagram of the Schwarzschild-de Sitter spacetime, extended only across the horizon at $r=r_{b}$.}
\label{Kruskal_rb}
\end{figure}

The region $I$ in figure \ref{Kruskal_rb} is the exterior region. Asymptotically, it tends to $r=r_{c}$. We call attention to the fact that $U_{b}$ increases to the left. Region $II$ is the black hole region. Any infalling observer initially at $I$ will fall inside this region and reach the singularity at $r=0$. Regions $III$ and $IV$ are copies of $II$ and $I$, the only difference being that, now, time runs in the opposite direction.

Similarly, we define the coordinate system which renders the metric regular at $r=r_{c}$:
\begin{equation}
U_{c}\coloneqq \frac{1}{\kappa_{c}}e^{\kappa_{c}u} \quad ; \quad V_{c}\coloneqq \frac{-1}{\kappa_{c}}e^{-\kappa_{c}v} \; ,
\label{coord_UcVc}
\end{equation}
where $U_{c}\in(0,+\infty)$ and $V_{c}\in(-\infty,0)$. In these coordinates,
\begin{equation}
ds^{2}=\frac{2M}{r}\left(\frac{r}{r_{b}}-1\right)^{1+\kappa_{c}/\kappa_{b}}\left(\frac{r}{r_{b}+r_{c}}+1\right)^{2-\kappa_{c}/\kappa_{b}}dU_{c}dV_{c} \; .
\label{metric_UcVc}
\end{equation}
This expression is regular at $r_{c}$, but not at $r_{b}$. Now, the metric covers the region $(r_{b},\infty)$ regularly. We can extend $U_{c}$ to negative values and $V_{c}$ to positive values across the horizon at $r_{c}$. The Kruskal extension of this region is similar to the corresponding extension of the de Sitter spacetime \citep{GibbonsHawking77}:

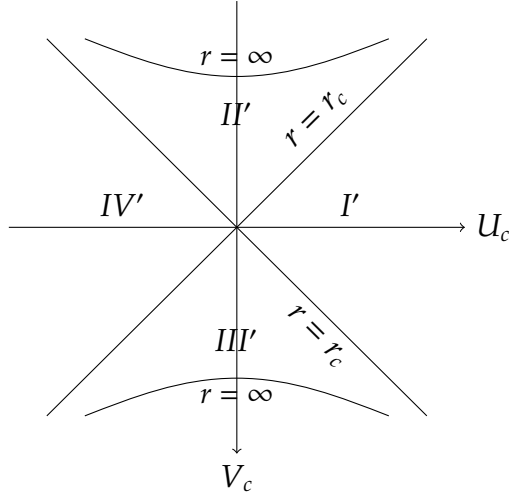
\begin{figure}[h]
\begin{center}
\begin{tikzpicture}
\draw[->] (-3,0) -- (3,0);
\draw[<-] (0,-3) -- (0,3);
\draw[thin] (-2.5,-2.5) -- node[near end,sloped,above] {$r=r_{c}$} (2.5,2.5);
\draw[thin] (2.5,-2.5) -- node[near start,sloped,below] {$r=r_{c}$} (-2.5,2.5);
\draw (-2,2.5) sin (0,2) cos (2,2.5); curve
\draw (-2,-2.5) sin (0,-2) cos (2,-2.5); curve
\node[above] at (1.5,0) {$I'$};
\node[above] at (-1.5,0) {$IV'$};
\node at (0,1.5) {$II'$};
\node at (0,-1.5) {$III'$};
\node[above] at (0,2) {\small $r=\infty$};
\node[below] at (0,-2) {\small $r=\infty$};
\node[right] at (3,0) {$U_{c}$};
\node[below] at (0,-3) {$V_{c}$};
\end{tikzpicture}
\end{center}
\caption{Conformal diagram of the Schwarzschild-de Sitter spacetime, extended only across the horizon at $r=r_{c}$.}
\label{Kruskal_rc}
\end{figure}

The region $I'$ in figure \ref{Kruskal_rc} is identical to region $I$ in figure \ref{Kruskal_rb}. Asymptotically, it tends to $r=r_{b}$. We call the attention to the fact that, now, $V_{c}$ increases downwards. Region $II'$ is the region exterior to the cosmological horizon. Any outwards directed observer initially at $I'$ will fall inside this region and reach the singularity at $r=\infty$. Regions $III'$ and $IV'$ are copies of $II'$ and $I'$, the only difference being that, now, times runs in the opposite direction.



The authors of \citep{BazanskiFerrari86} have also shown that transformations of coordinates of the form \eqref{coord_UbVb} and \eqref{coord_UcVc} are the only ones which give rise to a metric that is regular at one of the horizons. 

To obtain a maximally extended diagram, we first identify the regions $I$ and $I'$ of figures \ref{Kruskal_rb} and \ref{Kruskal_rc}, respectively. The wedges $IV$ and $IV'$ are also identical, hence we can combine new diagrams, identifying these wedges to the newly introduced wedges $IV'$ and $IV$, respectively. Now, the wedges $I$ and $I'$ can be combined with new wedges $I'$ and $I$, and this process is repeated indefinitely. Thus the maximally extended diagram is an infinite chain. In figure \ref{Sch-dS_conformal_diagram} below we depict part of the Penrose diagram of this maximally extended manifold (where we will also rename some of the regions):

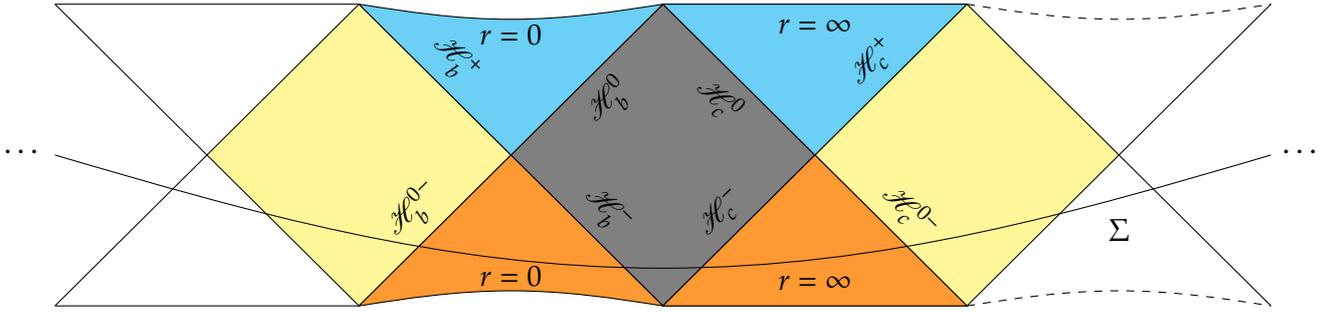
\begin{figure}[h]
\begin{center}
\begin{tikzpicture}
\draw[fill=yellow!50!] (0,0) -- (2,2) -- node[sloped,above] {\small $\Hor_{b}^{+}$} (4,0) -- (2,-2) -- (0,0); 
\draw[fill=gray] (4,0) -- node[sloped,below] {\small $\Hor_{b}^{0}$} (6,2) -- node[sloped,below] {\small $\Hor_{c}^{0}$} (8,0) -- node[sloped,above] {\small $\Hor_{c}^{-}$} (6,-2) -- node[sloped,above] {\small $\Hor_{b}^{-}$} (4,0); 
\draw[fill=yellow!50!] (8,0) -- node[sloped,above] {\small $\Hor_{c}^{+}$} (10,2) -- (12,0) -- (10,-2) -- (8,0); 
\draw (14,2) -- (12,0) -- (14,-2); 
\draw (-2,2) -- (0,0) -- (-2,-2); 
\draw (-2,2) -- (2,2);
\draw (-2,-2) -- (2,-2);
\draw (6,2) -- (10,2);
\draw (6,-2) -- (10,-2);
\draw[dashed] (10,2) sin (12,1.8) cos (14,2); curve
\draw[dashed] (10,-2) sin (12,-1.8) cos (14,-2); curve
\node[below=2pt] at (4,0) {\small $\cB_{b}$};
\node[below=2pt] at (8,0) {\small $\cB_{c}$};
\draw [fill=cyan!50!] (2,2) sin (4,1.8) cos (6,2) -- (4,0) -- node[sloped,above] {\small $\Hor_{b}^{+}$} (2,2); curve
\draw [fill=cyan!50!] (6,2) -- (10,2) -- node[sloped,above] {\small $\Hor_{c}^{+}$} (8,0) -- (6,2);
\draw [fill=orange!80!] (2,-2) sin (4,-1.8) cos (6,-2) -- (4,0) -- node[sloped,above] {\small $\Hor_{b}^{0-}$} (2,-2); curve
\draw [fill=orange!80!] (6,-2) -- (10,-2) -- node[sloped,above] {\small $\Hor_{c}^{0-}$} (8,0) -- (6,-2);
\draw (-2,0) sin (6,-1.5) cos (14,0); curve
\node at (12,-1) {$\Sigma$};
\node at (4,1.6) {\small $r=0$};
\node at (4,-1.6) {\small $r=0$};
\node at (8,1.7) {\small $r=\infty$};
\node at (8,-1.7) {\small $r=\infty$};
\node [left] at (-2,0) {$\cdots$};
\node [right] at (14,0) {$\cdots$};
\end{tikzpicture}
\end{center}
\caption{Maximally extended conformal diagram of the Schwarzschild-de Sitter spacetime.}
\label{Sch-dS_conformal_diagram}
\end{figure}

The region in gray is the region between the horizons. It will be denoted by $\sD$. The black-hole horizon is located at the surfaces denoted by $\Hor_{b}^{\pm (0,0-)}$, and the cosmological horizon, at $\Hor_{c}^{\pm (0,0-)}$.

In the coordinate systems $(U_{b(c)},V_{b(c)},\theta,\varphi)$, the horizons are defined as:
\[\Hor_{b}^{+}:\{(U_{b},V_{b},\theta,\varphi)\in\mathbb{R}^{2}\times\mathbb{S}^{2};\, U_{b}>0,V_{b}=0\}\; ; \; \Hor_{b}^{-}:\{(U_{b},V_{b},\theta,\varphi)\in\mathbb{R}^{2}\times\mathbb{S}^{2};\, U_{b}<0,V_{b}=0\}\; ;\]
\[\Hor_{b}^{0}:\{(U_{b},V_{b},\theta,\varphi)\in\mathbb{R}^{2}\times\mathbb{S}^{2};\, U_{b}=0,V_{b}>0\}\; ; \; \Hor_{b}^{0-}:\{(U_{b},V_{b},\theta,\varphi)\in\mathbb{R}^{2}\times\mathbb{S}^{2};\, U_{b}=0,V_{b}<0\}\; ;\]
\[\Hor_{c}^{+}:\{(U_{c},V_{c},\theta,\varphi)\in\mathbb{R}^{2}\times\mathbb{S}^{2};\, U_{c}=0,V_{c}>0\}\; ; \; \Hor_{c}^{-}:\{(U_{c},V_{c},\theta,\varphi)\in\mathbb{R}^{2}\times\mathbb{S}^{2};\, U_{c}=0,V_{c}<0\}\; ;\]
\[\Hor_{c}^{0}:\{(U_{c},V_{c},\theta,\varphi)\in\mathbb{R}^{2}\times\mathbb{S}^{2};\, U_{c}>0,V_{c}=0\}\; ; \Hor_{c}^{0-}:\{(U_{c},V_{c},\theta,\varphi)\in\mathbb{R}^{2}\times\mathbb{S}^{2};\, U_{c}<0,V_{c}=0\}\; .\]
The completely extended manifold will be denoted by $\sK$, and $\Sigma$ is a smooth Cauchy surface of $\sK$. In the region $\sD$, the Killing vector $X$ is timelike and future pointing. In the yellow regions, this Killing vector is also timelike, but past directed. The yellow region on the left will be called $IV$, and the one on the right, $IV'$. In the blue and orange regions, the Killing vector $X$ is spacelike. In the coordinate systems $(U_{b(c)},V_{b(c)},\theta,\varphi)$, these regions are defined as follows:
\[\sD\coloneqq \{(U_{b},V_{b},\theta,\varphi)\in\mathbb{R}^{2}\times\mathbb{S}^{2};U_{b}<0,V_{b}>0\}\; ;\]
\begin{alignat*}{3}
II\coloneqq \{(U_{b},V_{b},\theta,\varphi)\in\mathbb{R}^{2}\times\mathbb{S}^{2};U_{b}>0,V_{b}>0\} \quad && ; \quad && II'\coloneqq \{(U_{c},V_{c},\theta,\varphi)\in\mathbb{R}^{2}\times\mathbb{S}^{2};U_{c}>0,V_{c}>0\} \; ; \\
III\coloneqq \{(U_{b},V_{b},\theta,\varphi)\in\mathbb{R}^{2}\times\mathbb{S}^{2};U_{b}<0,V_{b}<0\} \quad && ; \quad && III'\coloneqq \{(U_{c},V_{c},\theta,\varphi)\in\mathbb{R}^{2}\times\mathbb{S}^{2};U_{c}<0,V_{c}<0\} \; .
\end{alignat*}
The region $\sD$ can be equivalently defined as
\[\sD=\{(U_{c},V_{c},\theta,\varphi)\in\mathbb{R}^{2}\times\mathbb{S}^{2};U_{c}>0,V_{c}<0\} \; .\]
The bifurcation spheres are defined as:
\[\cB_{b}:\{(U_{b},V_{b},\theta,\varphi)\in\mathbb{R}^{2}\times\mathbb{S}^{2};\, U_{b}=V_{b}=0\}\; ;\; \cB_{c}:\{(U_{c},V_{c},\theta,\varphi)\in\mathbb{R}^{2}\times\mathbb{S}^{2};\, U_{c}=V_{c}=0\}\; .\]
We note that the Killing vector $X$ vanishes on these spheres.

We will construct a Hadamard state in the region 
\[\M\coloneqq II\cup \sD\cup II'\]
(the gray and the blue regions in figure \ref{Sch-dS_conformal_diagram}). It is easy to see that any past inextensible causal curve passing through any point of $\M$ passes through $\sB\cup\sC$, where $\sB:\{(U_{b},V_{b},\theta,\varphi)\in\mathbb{R}^{2}\times\mathbb{S}^{2};\, V_{b}=0\}=\Hor_{b}^{+}\cup\cB_{b}\cup\Hor_{b}^{-}$, and $\sC:\{(U_{c},V_{c},\theta,\varphi)\in\mathbb{R}^{2}\times\mathbb{S}^{2};\, U_{c}=0\}=\Hor_{c}^{+}\cup\cB_{c}\cup\Hor_{c}^{-}$. Since this surface is achronal, $\sB\cup\sC$ is a Cauchy surface for $\M$. We will also show how the state can be restricted to the past horizons $\sB\cup\sC$ and will investigate the physical properties of this restriction.

\section{Algebras and State}

\subsection{Algebras}\label{sec-Sch-dS-algebra}

The algebra on which the state will operate as a functional is constructed from the solutions of the Klein-Gordon equation $\Box_{g}\phi=0$, where $g$ is the metric on the spacetime and $\Box_{g}$ is the d'Alembert operator corresponding to this metric. The symplectic space is given by the pair $\left(S(\M),\sigma_{\M}\right)$, where $S(\M)$ is the vector space of solutions of the Klein-Gordon equation having particular decaying properties (see complete definition below) and $\sigma_{\M}$ is the symplectic form built out of the {\it advanced-minus-retarded operator} (see section \ref{sec-CCR-Weyl}). Dafermos and Rodnianski \citep{DafermosRodnianski07} showed that, if the solutions of the Klein-Gordon equation have smooth initial data on $\Sigma$, then there exist, due to spherical symmetry, a constant $c$ which depends on $M$ and $\Lambda$, and another constant $C$ depending on $M$, $\Lambda$, the geometry of $\Sigma\cap J^{-}(\sD)$ and on the initial values of the field, such that
\[|\phi_{l}(u,v)|\leq C(e^{-cv_{+}/l^{2}}+e^{-cu_{+}/l^{2}})\]
and
\begin{equation}
|\phi_{0}(u,v)-\uline{\phi}|\leq C(e^{-cv_{+}/l^{2}}+e^{-cu_{+}/l^{2}})
\label{Sch-dS_decay}
\end{equation}
are valid in $J^{+}(\Sigma)\cap\sD$. Here,
\[|\uline{\phi}|\leq \underset{x\in\Sigma}{\textrm{inf}}\, |\phi_{0}(x)|+C \; ,\]
$u_{+}=\mathrm{max}\{u,1\}$, $v_{+}=\mathrm{max}\{v,1\}$ and $l$ is the spherical harmonic. These bounds are also valid on the horizons, and this feature will play a crucial role in the restriction of the algebra to the horizons and in the subsequent construction of the state. The regions $IV$ and $IV'$ are also static regions, with time running in the opposite direction. Therefore, this fast decay is also verified on $\Hor_{b}^{+}$ and on $\Hor_{c}^{+}$. Moreover, we make the further requirement that the solutions vanish at the bifurcation spheres $\cB_{b}$ and $\cB_{c}$ (as remarked in \citep{DafermosRodnianski07}, this requirement creates no additional complications).

The vector spaces of solutions in $\M$ and on the horizons are defined as
\begin{align}
S(\M):\Big\{&(\phi-\phi_{0})\, |\; \phi\in\cC^{\infty}(\M;\mathbb{R}), \Box_{g}\phi=0;\; \phi_{0}=\textrm{constant}\Big\} \; ,
\label{S_M} \\
S(\sB):\Big\{&(\phi-\phi_{0})\, |\; \phi\in\cC^{\infty}(\sB;\mathbb{R}), \Box_{g}\phi=0, \phi=0 \textrm{ at }\cB_{b};\; \phi_{0}=\textrm{constant;}\; \exists\exists C_{\phi}>0, C'>1 \nonumber \\
&\textrm{with } |\phi-\phi_{0}|<\frac{C_{\phi}}{U_{b}},\, |\partial_{U_{b}}\phi|<\frac{C_{\phi}}{U_{b}^{2}}\, \textrm{for } |U_{b}|>C'\Big\} \; ,
\label{S_sB} \\
S(\Hor_{b}^{-}):\Big\{&(\phi-\phi_{0})\, |\; \phi\in\cC^{\infty}(\Hor_{b}^{-};\mathbb{R}), \Box_{g}\phi=0;\; \phi_{0}=\textrm{constant;}\; \exists\exists C_{\phi}>0, C'>1 \nonumber \\
&\textrm{with } |\phi-\phi_{0}|<C_{\phi}e^{-u},\, |\partial_{u}\phi|<C_{\phi}e^{-u}\, \textrm{for } |u|>C'\Big\} \; ,
\label{S_Hor_b-} \\
S(\sC):\Big\{&(\phi-\phi_{0})\, |\; \phi\in\cC^{\infty}(\sC;\mathbb{R}), \Box_{g}\phi=0, \phi=0 \textrm{ at }\cB_{c};\; \phi_{0}=\textrm{constant;}\; \exists\exists C_{\phi}>0, C'>1 \nonumber\\
&\textrm{with } |\phi-\phi_{0}|<\frac{C_{\phi}}{V_{c}},\, |\partial_{V_{c}}\phi|<\frac{C_{\phi}}{V_{c}^{2}}\, \textrm{for } |V_{c}|>C'\Big\} \; ,
\label{S_sC} \\
S(\Hor_{c}^{-}):\Big\{&(\phi-\phi_{0})\, |\; \phi\in\cC^{\infty}(\Hor_{c}^{-};\mathbb{R}), \Box_{g}\phi=0;\; \phi_{0}=\textrm{constant;}\; \exists\exists C_{\phi}>0, C'>1 \nonumber \\
&\textrm{with } |\phi-\phi_{0}|<C_{\phi}e^{-v},\, |\partial_{v}\phi|<C_{\phi}e^{-v}\, \textrm{for } |v|>C'\Big\} \; .
\label{S_Hor_c-}
\end{align}

The Weyl algebras $\W(S(\M))$, $\W(S(\sB))$, $\W(S(\sC))$, $\W(S(\Hor_{b}^{-}))$, $\W(S(\Hor_{c}^{-}))$ (we will omit the $\sigma$'s to simplify the notation) are constructed from each of the symplectic spaces as was done at the end of section \ref{sec-CCR-Weyl}. We will now prove the existence of an injective isometric $\ast$-homomorphism between the Weyl algebra defined in $\M$ and the tensor product of the algebras defined on $\sB$ and on $\sC$. In the next subsection, we will show that the pullback of this map yields such a mapping of the corresponding state spaces (of course, in the opposite sense).

\begin{thm}\label{algebra-horizon-isomorphism}
For every $\phi\in S(\M)$, let us define
\[\phi_{\sB}\coloneqq \phi_{\upharpoonright\sB} \quad ; \quad \phi_{\sC}\coloneqq \phi_{\upharpoonright\sC} \; .\]
Then, the following holds:

(a)\ \ The linear map
\[\Gamma:S(\M) \ni \phi \mapsto \left(\phi_{\sB},\phi_{\sC}\right)\]
is an injective symplectomorphism of $S(\M)$ into $S(\sB)\oplus S(\sC)$ equipped with the symplectic form, s.t., for $\phi,\phi'\in S(\M)$:
\begin{equation}
\sigma_{\M}(\phi,\phi')\coloneqq\sigma_{\sB}(\phi_{\sB},\phi'_{\sB})+\sigma_{\sC}(\phi_{\sC},\phi'_{\sC}) \; .
\label{sigma-B;C}
\end{equation}

(b)\ \ There exists a corresponding injective isometric $^{\ast}$-homomorphism
\[\iota:\W(S(\M))\rightarrow \W(S(\sB))\otimes\W(S(\sC)) \; ,\]
which is uniquely individuated by
\begin{equation}
\iota(W_{\M}(\phi))=W_{\sB}(\phi_{\sB})W_{\sC}(\phi_{\sC}) \; .
\label{iota-B;C}
\end{equation}
\end{thm}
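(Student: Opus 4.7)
The plan is to establish (a) by first checking that the restriction map is well-defined with image in the correct symplectic spaces, then checking injectivity, then checking that the symplectic form decomposes as claimed; after that, (b) will follow from the universal property of Weyl algebras.

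For well-definedness, I would fix $\phi \in S(\M)$ and verify that $\phi_\sB$ and $\phi_\sC$ lie in $S(\sB)$ and $S(\sC)$. Since $\phi$ extends smoothly across $\sB$ and $\sC$ when expressed in the regular Kruskal coordinates $(U_b,V_b)$ and $(U_c,V_c)$, the restrictions are well-defined smooth functions satisfying $\Box_g\phi=0$ on each horizon. The required decay properties in \eqref{S_sB}--\eqref{S_Hor_c-} follow from the Dafermos--Rodnianski estimate \eqref{Sch-dS_decay}: translating the exponential bound in $u$ (resp.\ $v$) to the affine coordinate $U_b = -\kappa_b^{-1}e^{-\kappa_b u}$ (resp.\ $V_c$) gives the polynomial falloff required on $\sB$ and $\sC$, and the exponential falloff on $\Hor_b^-$, $\Hor_c^-$. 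Injectivity is immediate: $\sB\cup\sC$ is an achronal null surface whose domain of dependence in $\sK$ contains $\M$, so it is a (characteristic) Cauchy surface for $\M$, and a smooth solution of $\Box_g\phi=0$ vanishing on it must vanish identically on $\M$.

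The main step, and the main technical obstacle, is the identity \eqref{sigma-B;C}. I would fix a smooth spacelike Cauchy surface $\Sigma_0 \subset \M$ in $\sD$ and apply Stokes' theorem to the conserved current $j^a = \phi\nabla^a\phi' - \phi'\nabla^a\phi$ on the region bounded by $\Sigma_0$ to the past and by (approximations to) $\sB\cup\sC$ to the future--past. Because $\sigma_\M$ is independent of the Cauchy surface, the left-hand side equals the integral of $j^a$ across $\Sigma_0$; deforming $\Sigma_0$ toward the horizons, the bulk integral vanishes (by $\nabla_a j^a = 0$), and the surface integral decomposes into integrals over $\sB$ and $\sC$. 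The standard computation on a null hypersurface with affine parameter reduces these to the canonical horizon symplectic forms
\[\sigma_\sB(\phi_\sB,\phi'_\sB) = \int_{\mathbb{R}\times\mathbb{S}^2} r_b^{\,2}\,dU_b\,d\Omega\,(\phi_\sB\partial_{U_b}\phi'_\sB - \phi'_\sB\partial_{U_b}\phi_\sB),\]
and analogously for $\sigma_\sC$ in the coordinate $V_c$. The delicate points are (i) convergence of these integrals, which is secured by the $1/U_b$ (resp.\ $1/V_c$) decay of the field together with the $1/U_b^2$ decay of its transverse derivative built into \eqref{S_sB} and \eqref{S_sC}; and (ii) the vanishing of corner contributions at the asymptotic ends of the horizons and near the bifurcation spheres, which follows from the exponential decay on $\Hor_b^-$, $\Hor_c^-$ and from the requirement that solutions vanish on $\cB_b$, $\cB_c$.

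For part (b), once $\Gamma$ is an injective symplectomorphism, the existence and uniqueness of $\iota$ are a direct application of the standard functoriality of the Weyl C$^*$-algebra construction \citep{BraRob-II}: an injective symplectic map between pre-symplectic spaces lifts uniquely to an injective isometric $^*$-homomorphism between the associated Weyl C$^*$-algebras, defined on generators by $\iota(W_\M(\phi)) = W_{\sB\oplus\sC}(\Gamma\phi)$. Since the symplectic form on $S(\sB)\oplus S(\sC)$ is the direct sum \eqref{sigma-B;C}, the Weyl relation factorises as $W_{\sB\oplus\sC}(\phi_\sB,\phi_\sC) = W_\sB(\phi_\sB)\otimes W_\sC(\phi_\sC)$ inside $\W(S(\sB))\otimes\W(S(\sC))$, yielding precisely \eqref{iota-B;C}. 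Isometry is automatic from the fact that the Weyl C$^*$-norm is unique.
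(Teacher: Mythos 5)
Your plan for parts (a)(i) and (b) is sound and close to the paper's (the paper likewise gets the membership $\phi_{\sB}\in S(\sB)$, $\phi_{\sC}\in S(\sC)$ from the Dafermos--Rodnianski bounds \eqref{Sch-dS_decay} plus smoothness, and proves (b) via the uniqueness theorem for Weyl algebras over a symplectic space, Theorem 5.2.8 of \citep{BraRob-II}). The genuine gap is in your treatment of the key identity \eqref{sigma-B;C}. You start from ``a smooth spacelike Cauchy surface $\Sigma_{0}\subset\M$ in $\sD$'' and equate $\sigma_{\M}(\phi,\phi')$ with the flux of the current through it. But a spacelike surface contained in $\sD$ is a Cauchy surface only for $\sD$, not for $\M=II\cup\sD\cup II'$: a past-inextendible causal curve lying in $II$ which leaves $\M$ through $\Hor_{b}^{+}$ never meets such a surface (similarly for $II'$ and $\Hor_{c}^{+}$). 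Consequently the flux through $\Sigma_{0}$ equals only the flux through the past horizons $\Hor_{b}^{-}\cup\Hor_{c}^{-}$, and deforming $\Sigma_{0}$ toward the horizons can never produce the contributions of $\Hor_{b}^{+}$ and $\Hor_{c}^{+}$ --- which are parts of $\sB$ and $\sC$, carry data independent of that on the past horizons (in the Kruskal picture it ``comes from'' regions $IV$, $IV'$), and give nonvanishing terms in $\sigma_{\sB}$, $\sigma_{\sC}$ in general. Relatedly, the region ``bounded by $\Sigma_{0}$ to the past and by $\sB\cup\sC$'' to which you want to apply Stokes is not of the form you describe: $II$ and $II'$ lie to the future of parts of $\sB$, $\sC$ and are bounded in the future by the singularities, not by $\Sigma_{0}$. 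The paper avoids this by using as Cauchy surface for $\M$ the bent surface $(\Sigma\cap\sD)\cup\Hor_{b}^{+}\cup\Hor_{c}^{+}$, which yields \eqref{sigma-to-bulk-and-boundary} directly (so the future-horizon terms are present from the start), and only then shows that the remaining $\sD$-slice integral equals the flux through $\Hor_{b}^{-}\cup\Hor_{c}^{-}$, eq.\ \eqref{sigmabulk-to-boundary}, by flowing the slice along the Killing field to $t\to-\infty$, cutting off at finite $u_{1}$, controlling the auxiliary null surface $S_{t}^{(u_{1})}$, and using dominated convergence together with the decay built into \eqref{S_sB}--\eqref{S_Hor_c-}. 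Your ``deform toward the horizons'' step, restricted to the $\sD$-piece, is essentially this second half; as written, however, what it would establish is \eqref{sigmabulk-to-boundary}, not \eqref{sigma-B;C}.

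Two secondary remarks. For injectivity you invoke uniqueness of the characteristic (Goursat) problem with data on the null surface $\sB\cup\sC$; this is plausible but needs its own justification, whereas the paper obtains injectivity for free from the preservation of the symplectic form together with the (weak) nondegeneracy of $\sigma_{\M}$ --- a cheaper route, given that the symplectic identity must be proved anyway. Part (b) of your argument is correct and coincides with the paper's, modulo the standard identification $W_{\sB}(\phi_{\sB})W_{\sC}(\phi_{\sC})=W_{\sB}(\phi_{\sB})\otimes W_{\sC}(\phi_{\sC})$ inside $\W(S(\sB))\otimes\W(S(\sC))$.
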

\begin{proof}
(a)\ \ Every $\phi\in S(\M)$ can be thought of as a restriction to $\M$ of a solution $\phi'$ of the Klein-Gordon equation in the whole Kruskal manifold. Similarly, the Cauchy surface of $\M$ can be seen as a subset of a Cauchy surface of $\sK$ \citep{BernalSanchez03,ONeill83}. Therefore $\phi_{\sB}\coloneqq \phi'_{\upharpoonright\sB}$ and $\phi_{\sC}\coloneqq \phi'_{\upharpoonright\sC}$ are well-defined and smooth.

Since the map $\Gamma$ is linear by construction, it remains to prove that (i) $\phi_{\sB}\in S(\sB)$ and $\phi_{\sC}\in S(\sC)$ and that (ii) $\Gamma$ preserves the symplectic form, i.e.,
\begin{equation}
\sigma_{\M}(\phi_{1},\phi_{2})=\sigma_{S(\sB)\oplus S(\sC)}(\Gamma\phi_{1},\Gamma\phi_{2}) \; .
\label{sigma-isomorphism}
\end{equation}
Since $\sigma_{\M}$ is nondegenerate, this identity implies that the linear map $\Gamma$ is injective. Moreover, from the bounds \eqref{Sch-dS_decay} and smoothness we have, automatically, that
\begin{equation}
\phi\in S(\M)\Rightarrow \phi_{\upharpoonright\Hor_{b}^{-}}\in S(\Hor_{b}^{-}) \; ,
\label{phi_restr-H_b-}
\end{equation}
and similarly to $\Hor_{c}^{-}$. Furthermore, using the corresponding bounds in regions $IV$ and $IV'$ of figure \ref{Sch-dS_conformal_diagram} and smoothness of the solutions in the whole $\sK$, we find that \eqref{phi_restr-H_b-} is valid also for $\Hor_{b}^{+}$ and $\Hor_{c}^{+}$. The restrictions of the solution to the bifurcation spheres $\cB_{b}$ and $\cB_{c}$ vanish, again because of the smoothness. Finally,
\begin{equation}
\phi\in S(\M)\Rightarrow \phi_{\upharpoonright\sB}\in S(\sB) \; ,
\label{phi-restr_sB}
\end{equation}
and similarly for the restriction to $\sC$.

Let us finally prove (ii), i.e., equation \eqref{sigma-isomorphism}. 
Consider $\phi,\phi'\in S(\M)$ and a spacelike Cauchy surface $\Sigma$ of $\M$ so that
\[\sigma_{\M}(\phi,\phi')=\int_{\Sigma}\left(\phi'\nabla_{n}\phi-\phi\nabla_{n}\phi'\right)d\mu_{g}\]
where $n$ is the unit normal to the surface $\Sigma$ and $\mu_{g}$ is the metric induced measure on $\Sigma$. Since both $\phi$ and $\phi'$ vanish for sufficiently large $U_{b}$ and $U_{c}$ we can use the surface $\Sigma$, defined as the locus $t=0$ in $\sD$, and, out of the Poincar\'e theorem, we can write
\begin{align}
\sigma_{\M}(\phi,\phi')=&\int_{\Sigma\cap\sD}\left(\phi' X(\phi)-\phi X(\phi')\right)d\mu_{g}+r_{b}^{2}\int_{\Hor_{b}^{+}}\left(\phi'\partial_{U_{b}}\phi-\phi\partial_{U_{b}}\phi'\right)dU_{b}\wedge d\mathbb{S}^{2} \nonumber \\
+&r_{c}^{2}\int_{\Hor_{c}^{+}}\left(\phi'\partial_{V_{c}}\phi-\phi\partial_{V_{c}}\phi'\right)dV_{c}\wedge d\mathbb{S}^{2}
\label{sigma-to-bulk-and-boundary}
\end{align}
where we have used the fact that $\cB_{b} \cap \Sigma$ and $\cB_{c} \cap \Sigma$ have measure zero. We shall prove that, if one restricts the integration to $\sD$,
\begin{align}
\int_{\Sigma\cap\sD}\left(\phi' X(\phi)-\phi X(\phi')\right)d\mu_{g}=&r_{b}^{2}\int_{\Hor_{b}^{-}}\left(\phi'\partial_{U_{b}}\phi-\phi\partial_{U_{b}}\phi'\right)dU_{b}\wedge d\mathbb{S}^{2} \nonumber \\
+&r_{c}^{2}\int_{\Hor_{c}^{-}}\left(\phi'\partial_{V_{c}}\phi-\phi\partial_{V_{c}}\phi'\right)dV_{c}\wedge d\mathbb{S}^{2} \; .
\label{sigmabulk-to-boundary}
\end{align}

In order to prove this last identity, we notice first that
\[\int_{\Sigma\cap\sD}\left(\phi' X(\phi)-\phi X(\phi')\right)d\mu_{g}=\int_{[r_{b},r_{c}]\times\mathbb{S}^{2}}\frac{r^{2}\left(\phi' X(\phi)-\phi X(\phi')\right)}{1-2m/r-\Lambda r^{2}/3}\Bigg\rvert_{t=0}dr\wedge d\mathbb{S}^{2}\]
Afterwards, we will break the integral on the rhs into two pieces with respect to the coordinate $r^{\ast}$:
\begin{align*}
&\int_{[r_{b},r_{c}]\times\mathbb{S}^{2}}\frac{r^{2}\left(\phi' X(\phi)-\phi X(\phi')\right)}{1-2m/r-\Lambda r^{2}/3}\Bigg\rvert_{t=0}dr\wedge d\mathbb{S}^{2}= \\
&\int_{[-\infty,\hat{R}^{\ast}]\times\mathbb{S}^{2}}r^{2}\left(\phi' X(\phi)-\phi X(\phi')\right)\big\rvert_{t=0}dr^{\ast}\wedge d\mathbb{S}^{2}+\int_{[\hat{R}^{\ast},+\infty]\times\mathbb{S}^{2}}r^{2}\left(\phi' X(\phi)-\phi X(\phi')\right)\big\rvert_{t=0}dr^{\ast}\wedge d\mathbb{S}^{2} \; .
\end{align*}
We assumed $t=0$, but the value of $t$ is immaterial, since we can work with a different surface $\Sigma_{t}$ obtained by evolving $\Sigma$ along the flux of the Killing vector $X$ ($X_{\upharpoonright\sD}=\partial_{t}$ and $X_{\upharpoonright\cB_{b}}=X_{\upharpoonright\cB_{c}}=0$). Furthermore, the symplectic form $\sigma_{\M}(\phi,\phi')$ is constructed in such a way that its value does not change with varying $t$.

Since $\cB_{b}$ and $\cB_{c}$ are fixed under the flux of $X$, per direct application of the Stokes-Poincar\'e theorem, one sees that this invariance holds also for the integration restricted to $\sD$. In other words, $\forall t>0$:
\begin{align*}
&\int_{\Sigma\cap\sD}\left(\phi' X(\phi)-\phi X(\phi')\right)d\mu_{g}=\int_{\Sigma_{t}\cap\sD}\left(\phi' X(\phi)-\phi X(\phi')\right)d\mu_{g}= \\
&\int_{(u_{0}(t),+\infty)\times\mathbb{S}^{2}}r^{2}\left(\phi' X(\phi)-\phi X(\phi')\right)\big\rvert_{(t,t-u,\theta,\varphi)}du\wedge d\mathbb{S}^{2}+ \\
&\int_{(v_{0}(t),+\infty)\times\mathbb{S}^{2}}r^{2}\left(\phi' X(\phi)-\phi X(\phi')\right)\big\rvert_{(t,v-t,\theta,\varphi)}dv\wedge d\mathbb{S}^{2}
\end{align*}
where we have also changed the variables of integration from $r^{\ast}$ to either $u=t-r^{\ast}$ or $v=t+r^{\ast}$, and both $u_{0}(t)\coloneqq t-\hat{R}^{\ast}$ and $v_{0}(t)\coloneqq t+\hat{R}^{\ast}$ are functions of $t$. Hence
\begin{align}
&\int_{\Sigma\cap\sD}\left(\phi' X(\phi)-\phi X(\phi')\right)d\mu_{g}=\lim_{t\rightarrow-\infty}\int_{(u_{0}(t),+\infty)\times\mathbb{S}^{2}}r^{2}\left(\phi' X(\phi)-\phi X(\phi')\right)du\wedge d\mathbb{S}^{2} \nonumber \\
&+\lim_{t\rightarrow-\infty}\int_{(v_{0}(t),+\infty)\times\mathbb{S}^{2}}r^{2}\left(\phi' X(\phi)-\phi X(\phi')\right)dv\wedge d\mathbb{S}^{2} \; .
\label{sigma-past_horizons}
\end{align}
The former limit should give rise to an integral over $\Hor_{b}^{-}$, whereas the latter to an analogous one over $\Hor_{c}^{-}$. Let us examine the former one (the latter being similar):

Fix $u_{1}-\hat{R}^{\ast}\in\mathbb{R}$ and the following decomposition
\begin{align*}
&\int_{(u_{0}(t),+\infty)\times\mathbb{S}^{2}}r^{2}\left(\phi' X(\phi)-\phi X(\phi')\right)\big\rvert_{(t,t-u,\theta,\varphi)}du\wedge d\mathbb{S}^{2}=\int_{\Sigma_{t}^{u_{1}}}\left(\phi' X(\phi)-\phi X(\phi')\right)d\mu_{g} \\
&+\int_{(u_{0}(t),u_{1})\times\mathbb{S}^{2}}r^{2}\left(\phi' X(\phi)-\phi X(\phi')\right)\big\rvert_{(t,t-u,\theta,\varphi)}du\wedge d\mathbb{S}^{2} \; .
\end{align*}
Here we have used the initial expression for the first integral, which is performed over the compact subregion $\Sigma_{t}^{u_{1}}$ of $\Sigma_{t}\cap\sD$ which contains the points with null coordinate $U_{b}$ included in $\left[-(1/\kappa_{b})e^{-\kappa_{b}u_{1}},0\right]$ (see Figure \ref{decomposition_Cauchy}). It is noteworthy that such integral is indeed the one of the smooth 3-form
\[\eta=\eta[\phi,\phi']\coloneqq\frac{1}{6}\left(\phi\nabla^{a}\phi'-\phi'\nabla^{a}\phi\right)\sqrt{-g}\epsilon_{abcd}dx^{b}\wedge dx^{c}\wedge dx^{d}\]
and, furthermore, in view of the Klein-Gordon equation, $d\eta\equiv 0$. Thus, by means of an appropriate use of the Stokes-Poincar\'e theorem, this integral can be rewritten as an integral of $\eta$ over two regions. One of them is a compact subregion of $\Hor_{b}^{-}$ which can be constructed as the points with coordinate $U_{b}\in\left[U_{b_{1}},0\right]$, where $U_{b_{1}}=-(1/\kappa_{b})e^{-\kappa_{b}u_{1}}$ (this region is named $\Hor_{b_{1}}^{-}$ in figure \ref{decomposition_Cauchy}); the other one is the compact null 3-surface $S_{t}^{(u_{1})}$ formed by the points in $\M$ with $U_{b}=U_{b_{1}}$ and lying between $\Sigma_{t}$ and $\Hor_{b}^{-}$.

\begin{figure}
\begin{center}
\begin{tikzpicture}
\draw (0,0) -- (2,2) -- (4,0) -- (2,-2) -- (0,0); 
\draw (0,0) sin (2,-0.8) cos (4,0); curve 
\path [name path=U] (1,-1) -- (2,0);
\path [name path=s] (0,0) sin (2,-0.8) cos (4,0); curve
\draw [name intersections={of=U and s, by=x}] (1,-1) -- (x); 
\draw [->] (3,-0.6) -- (3.5,-1);
\node [below right] at (3.5,-1) {$\Sigma_{t}$};
\draw [->] (0.5,-0.2) -- (0.5,1);
\node [above] at (0.5,1) {$\Sigma_{t}^{(u_{1})}$};
\draw [->] (1.2,-0.9) -- (1.4,-1.1);
\node [right] at (1.4,-1.1) {$S_{t}^{(u_{1})}$};
\draw [->] (0.5,-0.6) -- (0.3,-0.9);
\node [below] at (0.3,-0.9) {$\Hor_{b_{1}}^{-}$};
\end{tikzpicture}
\end{center}
\caption{Only the region $\sD$ of figure \ref{Sch-dS_conformal_diagram} is depicted here, and $\Hor_{b_{1}}^{-} \coloneqq \Hor_{b}^{-}\cap\{U_{b_{1}}\leq U_{b}\leq 0\}$.}
\label{decomposition_Cauchy}
\end{figure}


To summarise:
\[\int_{\Sigma_{t}^{u_{1}}}\left(\phi' X(\phi)-\phi X(\phi')\right)d\mu_{g}=\int_{\Hor_{b_{1}}^{-}}\eta +\int_{S_{t}^{(u_{1})}}\eta \; .\]
If we adopt coordinates $U_{b},V_{b},\theta,\varphi$, the direct evaluation of the first integral on the rhs yields:
\[\int_{\Sigma_{t}^{u_{1}}}\left(\phi' X(\phi)-\phi X(\phi')\right)d\mu_{g}=r_{b}^{2}\int_{\Hor_{b_{1}}^{-}}\left(\phi'\partial_{U_{b}}\phi-\phi\partial_{U_{b}}\phi'\right)dU_{b}\wedge d\mathbb{S}^{2}+\int_{S_{t}^{(u_{1})}}\eta \; .\]
We have obtained:
\begin{align}
&\lim_{t\rightarrow-\infty}\int_{(u_{1},+\infty)\times\mathbb{S}^{2}}r^{2}\left(\phi' X(\phi)-\phi X(\phi')\right)\big\rvert_{(t,t-u,\theta,\varphi)}du\wedge d\mathbb{S}^{2}= \nonumber \\
&r_{b}^{2}\int_{\Hor_{b_{1}}^{-}}\left(\phi'\partial_{U_{b}}\phi-\phi\partial_{U_{b}}\phi'\right)dU_{b}\wedge d\mathbb{S}^{2}+\lim_{t\rightarrow-\infty}\int_{S_{t}^{(u_{1})}}\eta+ \nonumber \\
&\lim_{t\rightarrow-\infty}\int_{[u_{0}(t),u_{1}]\times\mathbb{S}^{2}}r^{2}\left(\phi' X(\phi)-\phi X(\phi')\right)\big\rvert_{(t,t-u,\theta,\varphi)}du\wedge d\mathbb{S}^{2} \; .
\label{sigma-hor_b}
\end{align}

If we perform the limit as $t\rightarrow-\infty$, one has $\int_{S_{t}^{(u_{1})}}\eta\rightarrow0$, because it is the integral of a smooth form over a vanishing surface (as $t\rightarrow-\infty$), whereas
\begin{align*}
&\lim_{t\rightarrow-\infty}\int_{[u_{0}(t),u_{1}]\times\mathbb{S}^{2}}r^{2}\left(\phi' X(\phi)-\phi X(\phi')\right)\big\rvert_{(t,t-u,\theta,\varphi)}du\wedge d\mathbb{S}^{2} \\
&=r_{b}^{2}\int_{\Hor_{b}^{-}\cap\{u_{1}\ge u\}\times\mathbb{S}^{2}}\chi_{(u_{0}(t),+\infty)}(u)\left(\phi'\partial_{u}\phi-\phi\partial_{u}\phi'\right)du\wedge d\mathbb{S}^{2} \\
&=r_{b}^{2}\int_{\Hor_{b}^{-}\cap\{U_{b_{1}}\ge U_{b}\}\times\mathbb{S}^{2}}\left(\phi'\partial_{U_{b}}\phi-\phi\partial_{U_{b}}\phi'\right)dU_{b}\wedge d\mathbb{S}^{2} \; ,
\end{align*}
where $\chi_{(u_{0}(t),+\infty)}$ is the characteristic function of the interval $(u_{0}(t),+\infty)$. Here we stress that the final integrals are evaluated over $\Hor_{b}^{-}$ and we have used again Lebesgue's dominated convergence theorem. Inserting the achieved results in the rhs of \eqref{sigma-hor_b}, we find that:
\[\lim_{t\rightarrow-\infty}\int_{\mathbb{R}\times\mathbb{S}^{2}}r^{2}\left(\phi' X(\phi)-\phi X(\phi')\right)\big\rvert_{(t,t-u,\theta,\varphi)}du\wedge d\mathbb{S}^{2}=r_{b}^{2}\int_{\Hor_{b}^{-}\times\mathbb{S}^{2}}\left(\phi'\partial_{U_{b}}\phi-\phi\partial_{U_{b}}\phi'\right)dU_{b}\wedge d\mathbb{S}^{2} \; .\]
Such identity, brought in \eqref{sigma-past_horizons} yields, together with a similar one for the integral over $v$, \eqref{sigmabulk-to-boundary}. Together with \eqref{sigma-to-bulk-and-boundary} this concludes the proof of (a).

Item (b) can be proved as follows: In the following, $S\coloneqq S(\sB)\oplus S(\sC)$ and $\sigma$ is the natural symplectic form on such space. Let us consider the closure of the sub $\ast$-algebra generated by all the generators $W_{S}(\Gamma\phi)\in\W(S)$ for all $\phi\in S(\M)$. This is still a C$^{\ast}$-algebra which, in turn, defines a realization of $\W(S(\M))$ because $\Gamma$ is an isomorphism of the symplectic space $(S(\M),\sigma_{\M})$ onto the symplectic space $(\Gamma(S(\M)),\sigma_{\upharpoonright\Gamma(S(\M))\times\Gamma(S(\M))})$. As a consequence of theorem 5.2.8 in \citep{BraRob-II}, there is a $\ast$-isomorphism, hence isometric, between $\W(S(\M))$ and the other, just found, realization of the same Weyl algebra, unambiguously individuated by the requirement $\iota_{\M}(W_{\M})(\phi)\coloneqq W_{S}(\Gamma\phi)$. This isometric $\ast$-isomorphism individuates an injective $\ast$-homomorphism of $\W(S(\M))$ into $\W(S,\sigma)\equiv\W(S(\sB))\otimes\W(S(\sC))$.
\end{proof}

The proof of theorem \ref{algebra-horizon-isomorphism} establishes the following
\begin{thm}\label{state-homomorphism}
With the same definitions as in the theorem \ref{algebra-horizon-isomorphism} and defining, for $\phi\in S(\sD)$, $\phi_{\upharpoonright\Hor_{b}^{-}}=\lim_{\rightarrow\Hor_{b}^{-}}\phi$ and $\phi_{\upharpoonright\Hor_{b}^{0}}=\lim_{\rightarrow\Hor_{b}^{0}}\phi$ (similarly for $\Hor_{c}^{-}$ and $\Hor_{c}^{0}$), the linear maps
\begin{align*}
\Gamma_{-}:S(\sD)\ni\phi \mapsto \left(\phi_{\Hor_{b}^{-}},\phi_{\Hor_{c}^{-}}\right)\in S(\Hor_{b}^{-})\oplus S(\Hor_{c}^{-}) \\
\Gamma_{0}:S(\sD)\ni\phi \mapsto \left(\phi_{\Hor_{b}^{0}},\phi_{\Hor_{c}^{0}}\right)\in S(\Hor_{b}^{0})\oplus S(\Hor_{c}^{0}) \; ,
\end{align*}
are well-defined injective symplectomorphisms. As a consequence, there exists two corresponding injective isometric $\ast$-homomorphisms:
\begin{align*}
\iota^{-}:\W(S(\sD))\rightarrow \W(S(\Hor_{b}^{-}))\otimes\W(S(\Hor_{c}^{-})) \\
\iota^{0}:\W(S(\sD))\rightarrow \W(S(\Hor_{b}^{0}))\otimes\W(S(\Hor_{c}^{0})) \; .
\end{align*}
\end{thm}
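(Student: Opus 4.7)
The plan is to reduce both statements to the decomposition identity \eqref{sigmabulk-to-boundary} already derived inside the proof of Theorem \ref{algebra-horizon-isomorphism}, applied now to the bulk region $\sD$ with two distinct asymptotic ``sides''. First I would observe that equation \eqref{sigmabulk-to-boundary} is exactly the statement that $\sigma_{\sD}(\phi,\phi')$, evaluated on a spacelike Cauchy surface of $\sD$ as a bulk integral, equals $\sigma_{\Hor_{b}^{-}}(\phi_{\Hor_{b}^{-}},\phi'_{\Hor_{b}^{-}})+\sigma_{\Hor_{c}^{-}}(\phi_{\Hor_{c}^{-}},\phi'_{\Hor_{c}^{-}})$ for every $\phi,\phi'\in S(\sD)$. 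Consequently $\Gamma_{-}$ is automatically a symplectomorphism, and injectivity follows from the nondegeneracy of $\sigma_{\sD}$.

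Next I would verify that the boundary restrictions land in the correct function spaces. For any $\phi\in S(\sD)$, the Dafermos--Rodnianski bounds \eqref{Sch-dS_decay} give exponential decay of $\phi-\uline{\phi}$ on $J^{+}(\Sigma)\cap\sD$ in the $u,v$ coordinates, which translates via \eqref{coord_UbVb} and \eqref{coord_UcVc} to the polynomial decay required in \eqref{S_sB}--\eqref{S_Hor_c-}. Smoothness of $\phi$ up to each horizon together with the convention $\phi=0$ at $\cB_{b}$ and $\cB_{c}$ guarantees that $\phi_{\Hor_{b}^{-}}\in S(\Hor_{b}^{-})$, $\phi_{\Hor_{c}^{-}}\in S(\Hor_{c}^{-})$, and analogously for the restrictions to $\Hor_{b}^{0}$ and $\Hor_{c}^{0}$.

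For $\Gamma_{0}$ I would repeat verbatim the Stokes--Poincar\'e argument from the proof of Theorem \ref{algebra-horizon-isomorphism}, but evolving the Cauchy surface $\Sigma_{t}\cap\sD$ along the flux of the Killing vector $X$ as $t\to+\infty$ rather than $t\to-\infty$. Since $X$ is Killing, $\sigma_{\sD}(\phi,\phi')$ is preserved along this flow; the bifurcation spheres remain pointwise fixed; and the compact auxiliary null surfaces (the analogues of $S_{t}^{(u_{1})}$) again shrink to zero measure in the limit, this time collapsing onto segments of the future horizons $\Hor_{b}^{0}$ and $\Hor_{c}^{0}$. The asymptotic behaviour of $\phi$ toward $\Hor_{b}^{0}$ (captured by the $e^{-cv_{+}/l^{2}}$ term of \eqref{Sch-dS_decay} in the limit $u\to\infty$) and toward $\Hor_{c}^{0}$ (captured by the $e^{-cu_{+}/l^{2}}$ term as $v\to\infty$) ensures that the boundary integrals converge absolutely, yielding the analogue of \eqref{sigmabulk-to-boundary} with the past horizons replaced by the future horizons, and hence the symplectomorphism property of $\Gamma_{0}$.

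Finally, the injective isometric $\ast$-homomorphisms $\iota^{-}$ and $\iota^{0}$ are obtained by exactly the application of Theorem 5.2.8 of \citep{BraRob-II} used at the end of the proof of Theorem \ref{algebra-horizon-isomorphism}: the sub-$C^{\ast}$-algebra of $\W(S(\Hor_{b}^{\bullet}))\otimes\W(S(\Hor_{c}^{\bullet}))$ generated by the elements $W_{\Hor_{b}^{\bullet}}(\phi_{\Hor_{b}^{\bullet}})\, W_{\Hor_{c}^{\bullet}}(\phi_{\Hor_{c}^{\bullet}})$ with $\phi\in S(\sD)$ is a concrete realization of $\W(S(\sD))$, whence there is a unique isometric $\ast$-isomorphism sending $W_{\sD}(\phi)\mapsto W_{\Hor_{b}^{\bullet}}(\phi_{\Hor_{b}^{\bullet}})W_{\Hor_{c}^{\bullet}}(\phi_{\Hor_{c}^{\bullet}})$ for $\bullet\in\{-,0\}$. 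The most delicate point I anticipate is the verification that the Dafermos--Rodnianski decay, originally formulated for solutions evolving forward from $\Sigma$ in $\M$, suffices on both the past horizons $\Hor_{b}^{-},\Hor_{c}^{-}$ of $\sD$ (via the $t\mapsto -t$ reflection symmetry of the static region $\sD$) and on the future horizons $\Hor_{b}^{0},\Hor_{c}^{0}$; this reversibility is ultimately what permits the same boundary identity to hold in both time directions, and is the step I would expect to require the most care in a full write-up.
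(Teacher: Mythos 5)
Your proposal is correct and follows essentially the same route as the paper, which offers no separate argument for this theorem but simply notes that the proof of Theorem \ref{algebra-horizon-isomorphism} already establishes it: the identity \eqref{sigmabulk-to-boundary} (and its time-reversed analogue obtained by running the Killing flow toward $t\to+\infty$) gives the symplectomorphism property of $\Gamma_{-}$ and $\Gamma_{0}$, injectivity follows from nondegeneracy, the Dafermos--Rodnianski bounds plus smoothness place the restrictions in the correct boundary spaces, and Theorem 5.2.8 of \citep{BraRob-II} yields $\iota^{-}$ and $\iota^{0}$. Your flagged concern about the decay toward the past horizons is handled in the paper exactly as you suggest, via the static (time-reflection) structure of the relevant regions.
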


As a prelude to the next subsection, we note that if the linear functional $\omega:\W(S(\sB))\otimes\W(S(\sC))\rightarrow\mathbb{C}$ is an algebraic state, the isometric $\ast$-homomorphism $\iota$ constructed in theorem \ref{algebra-horizon-isomorphism} gives rise to a state $\omega_{\M}:\W(S(\M))\rightarrow\mathbb{C}$ defined as
\[\omega_{\M}\coloneqq \iota^{\ast}(\omega),\; \textrm{where }\iota^{\ast}(\omega)(W)=\omega(\iota(W)),\; \forall W\in S(\M)\; .\]
Specializing to quasifree states, we know that the ``quasifree property'' is preserved under pull-back and such a state is unambiguously defined on $\W(S(\sB))\otimes\W(S(\sC))$ by
\[\omega_{\M}(W_{\sB\cup\sC}(\psi))=e^{-\mu(\psi,\psi)/2},\qquad \forall\psi\in S(\sB)\oplus S(\sC) \; ,\]
where $\mu:(S(\sB)\oplus S(\sC))\times(S(\sB)\oplus S(\sC))\rightarrow\mathbb{R}$ is a real scalar product which majorizes the symplectic product.

\subsection{State}

Before we start the construction of the state, we should comment on the theorems in \citep{KayWald91} which proved that there does not exist any Hadamard state on the whole Kruskal extension of the SdS spacetime. The first nonexistence Theorem proved in section 6.3 of that reference is based on the fact that, from the Domain of Dependence property, the union of the algebras defined on the horizons $\Hor_{c}^{+}$ and $\Hor_{c}^{0-}$ (we will call this algebra $\W(S_{c}^{R})$; see figure \ref{Sch-dS_conformal_diagram}) would be orthogonal to the union of the algebras defined on the horizons $\Hor_{b}^{+}$ and $\Hor_{b}^{0-}$ ($\W(S_{b}^{L})$). On the other hand, they proved that $\W(S_{c}^{R})$ is dense in the union of the algebras defined on all the horizons corresponding to the cosmological horizon at $r=r_{c}$, $\W(S_{c})$, while $\W(S_{b}^{L})$ is dense in the union of the algebras defined on all the horizons corresponding to the black hole horizon at $r=r_{b}$, $\W(S_{b})$. But $\W(S_{c})$ and $\W(S_{b})$, again from the Domain of Dependence property, cannot be orthogonal, thus there is a contradiction.

We avoid this problem by not defining the state in the orange regions in figure \ref{Sch-dS_conformal_diagram}. The algebras $\W(S(\sB))$ and $\W(S(\sC))$ are not orthogonal, the same being valid for $\W(S(\Hor_{b}^{-}))$ and $\W(S(\Hor_{c}^{-}))$. The algebras $\W(S(\Hor_{b}^{+}))$ and $\W(S(\Hor_{c}^{+}))$ are indeed orthogonal, but they are not dense in $\W(S(\sB))$ and $\W(S(\sC))$, thus there is no contradiction in our case.

The second nonexistence Theorem proved there arrives again at a contradiction by using properties of a KMS state. As it will be clear below, the state we will construct here is not a KMS state, thus we are not troubled by the contradiction they arrive at.

Now, we will go on with the construction of our state.

On the set of complex, compactly supported smooth functions $\cC_{0}^{\infty}(\sB;\mathbb{C})$, we define its completion $\overline{\left(\cC_{0}^{\infty}(\sB;\mathbb{C}),\lambda\right)}$ in the norm defined by the scalar product \citep{Moretti08}
\begin{equation}
\lambda(\psi_{1},\psi_{2})\coloneqq \lim_{\epsilon\rightarrow 0^{+}}-\frac{r_{b}^{2}}{\pi}\int_{\mathbb{R}\times\mathbb{R}\times\mathbb{S}^{2}}\frac{\overline{\psi_{1}(U_{b1},\theta,\varphi)}\psi_{2}(U_{b2},\theta,\varphi)}{(U_{b1}-U_{b2}-i\epsilon)^{2}}dU_{b1}\wedge dU_{b2}\wedge d\mathbb{S}^{2} \; .
\label{product-U_b}
\end{equation}
Thus, $\overline{\left(\cC_{0}^{\infty}(\sB;\mathbb{C}),\lambda\right)}$ is a Hilbert space.

The $U_{b}$-Fourier-Plancherel transform\footnote{The Fourier-Plancherel transform is the unique extension of the Fourier transform to a map of $L^{2}(\mathbb{R}^{n})$ onto $L^{2}(\mathbb{R}^{n})$ \citep{ReedSimon-II,Yosida95}.} of $\psi$ is given by (we denote $(\theta,\varphi)$ by $\omega$)
\begin{equation}
\mathcal{F}(\psi)(K,\omega)\coloneqq \frac{1}{\sqrt{2\pi}}\int_{\mathbb{R}}e^{iKU_{b}}\psi(U_{b},\omega)dU_{b} \eqqcolon \hat{\psi}(K,\omega) \; .
\end{equation}

We can, more conveniently, write the scalar product \eqref{product-U_b} in the Fourier space:
\begin{equation}
\lambda(\psi_{1},\psi_{2})=\int_{\mathbb{R}\times\mathbb{S}^{2}}\overline{\hat{\psi}_{1}(K,\omega)}\hat{\psi}_{2}(K,\omega)2KdK\wedge r_{b}^{2}d\mathbb{S}^{2} \; .
\label{product-Fourier}
\end{equation}

Let $\hat{\psi}_{+}(K,\omega)\coloneqq \mathcal{F}(\psi)(K,\omega)_{\upharpoonright\{K\ge0\}}$. Then, the linear map
\begin{equation}
\cC_{0}^{\infty}(\sB;\mathbb{C})\ni\psi \mapsto \hat{\psi}_{+}(K,\omega)\in L^{2}\left(\mathbb{R}_{+}\times\mathbb{S}^{2},2KdK\wedge r_{b}^{2}d\mathbb{S}^{2}\right) \eqqcolon H_{\sB}
\label{Fourier_C-H}
\end{equation}
is isometric and uniquely extends, by linearity and continuity, to a Hilbert space isomorphism of
\begin{equation}
F_{U_{b}}:\overline{\left(\cC_{0}^{\infty}(\sB;\mathbb{C}),\lambda\right)} \rightarrow H_{\sB} \; .
\label{Fourier_C-H_isomorphism}
\end{equation}
One can similarly define the $(V_{b},V_{c},U_{c})$-Fourier-Plancherel transforms acting on the spaces of complex, compactly supported smooth functions restricted to the hypersurfaces $\Hor_{b}^{0}$, $\sC$ and $\Hor_{c}^{0}$ respectively, all completed in norms like \eqref{product-U_b}, and extend the transforms to Hilbert space isomorphisms.

Not every solution of the Klein-Gordon equation belonging to the space $S(\sB)$ (or any other of the spaces defined in \eqref{S_Hor_b-}-\eqref{S_Hor_c-}) is compactly supported, but we can still form isomorphisms between the completion of each of these spaces (in the norm $\lambda$ defined above) and the corresponding Hilbert space, as in \eqref{Fourier_C-H} and \eqref{Fourier_C-H_isomorphism}. Firstly, we note that the decay estimates found in \citep{DafermosRodnianski07} and presented at the definition of $S(\sB)$, together with smoothness of the functions in this space, let us conclude that these functions (and their derivatives) are square integrable in the measure $dU_{b}$, hence we can apply the Fourier-Plancherel transform to these functions. Thus, the product \eqref{product-Fourier} gives, for $\psi_{1},\psi_{2}\in S(\sB)$
\begin{align}
&\left|\lambda(\psi_{1},\psi_{2})\right|= \nonumber \\
&\left|\int_{\mathbb{R}\times\mathbb{S}^{2}}\overline{\hat{\psi}_{1}(K,\omega)}\hat{\psi}_{2}(K,\omega)2KdK\wedge r_{b}^{2}d\mathbb{S}^{2}\right|=2\left|\int_{\mathbb{R}\times\mathbb{S}^{2}}\overline{\hat{\psi}_{1}(K,\omega)}\Big(K\hat{\psi}_{2}(K,\omega)\Big)dK\wedge r_{b}^{2}d\mathbb{S}^{2}\right|= \nonumber \\
&2\left|\int_{\mathbb{R}\times\mathbb{S}^{2}}\overline{\hat{\psi}_{1}(K,\omega)}\widehat{\partial_{U_{b}}\psi}_{2}(K,\omega)dK\wedge r_{b}^{2}d\mathbb{S}^{2}\right|=2\left|\int_{\mathbb{R}\times\mathbb{S}^{2}}\overline{\psi_{1}(U_{b},\omega)}\partial_{U_{b}}\psi_{2}(U_{b},\omega)dU_{b}\wedge r_{b}^{2}d\mathbb{S}^{2}\right|<\infty \; .
\label{product-sB-integrable}
\end{align}

Let again $\hat{\psi}_{+}(K,\omega)\coloneqq \mathcal{F}(\psi)(K,\omega)_{\upharpoonright\{K\ge0\}}$, but now $\psi\in S(\sB)$. Then, the linear map
\begin{equation}
S(\sB)\ni\psi \mapsto \hat{\psi}_{+}(K,\omega)\in L^{2}\left(\mathbb{R}_{+}\times\mathbb{S}^{2},2KdK\wedge r_{b}^{2}d\mathbb{S}^{2}\right) \eqqcolon H_{\sB}
\label{Fourier_sB-H}
\end{equation}
is isometric and uniquely extends, by linearity and continuity, to a Hilbert space isomorphism of
\begin{equation}
F_{U_{b}}:\overline{\left(S(\sB),\lambda\right)} \rightarrow H_{\sB} \; ,
\label{Fourier_sB-H_isomorphism}
\end{equation}
and similarly for the horizon $\sC$. We then define the real-linear map $K_{\sB}$ as
\begin{equation}
F_{U_{b}}\eqqcolon K_{\sB}:\overline{\left(S(\sB),\lambda\right)} \rightarrow H_{\sB} \; .
\label{linear_map}
\end{equation}

When proving some properties of the state individuated by the two-point function \eqref{product-U_b} (Theorem \ref{state-existence} below), it will be convenient to analyse the restrictions of such two-point function to $\Hor_{b}^{\pm}$. The initial point of this analysis is the following

\begin{prop}
Let the natural coordinates covering $\Hor_{b}^{+}$ and $\Hor_{b}^{-}$ be $u\coloneqq (1/\kappa_{b})\ln(\kappa_{b}U_{b})$ and $u\coloneqq (-1/\kappa_{b})\ln(-\kappa_{b}U_{b})$, respectively. Let also $\mu(k)$ be the positive measure on $\mathbb{R}$, given by
\begin{equation}
d\mu(k)\equiv \frac{1}{2}\frac{ke^{\pi k/\kappa_{b}}}{e^{\pi k/\kappa_{b}}-e^{-\pi k/\kappa_{b}}}dk \; .
\label{measure-Fourier_u}
\end{equation}
Then, if $\widetilde{\psi}=(\mathcal{F}(\psi))(k,\omega)$ denotes the Fourier transform of either $\psi\in S(\Hor_{b}^{+})$ or $\psi\in S(\Hor_{b}^{-})$ with respect to $u$, then the maps
\begin{equation}
S(\Hor_{b}^{\pm})\ni\psi \mapsto \widetilde{\psi}(k,\omega)\in L^{2}\left(\mathbb{R}\times\mathbb{S}^{2},d\mu(k)\wedge r_{b}^{2}d\mathbb{S}^{2}\right) \eqqcolon H_{\Hor_{b}^{\pm}}
\label{Fourier_Hor_b+-}
\end{equation}
are isometric (when $S(\Hor_{b}^{\pm})$ are equipped with the scalar product $\lambda$) and uniquely extend, by linearity and continuity, to Hilbert space isomorphisms of
\begin{equation}
F_{u}^{(\pm)}:\overline{\left(S(\Hor_{b}^{\pm}),\lambda\right)} \rightarrow H_{\Hor_{b}^{\pm}} \; .
\label{Fourier_Hor_b-H_isomorphism}
\end{equation}
\end{prop}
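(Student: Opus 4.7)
The strategy is to reduce the statement to the computation of a single Fourier transform, by changing variables from $U_b$ to $u$ in the defining scalar product $\lambda$ and recognizing the resulting kernel as the well-known two-point function of a thermal state on the real line at inverse temperature $2\pi/\kappa_b$. I will prove the isometry first on the dense subspace $\cC_0^\infty(\Hor_b^\pm;\mathbb{C})\subset\overline{(S(\Hor_b^\pm),\lambda)}$ and extend by continuity at the end.

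First I would restrict to one of the horizons and carry out the substitution dictated by the statement. On $\Hor_b^+$, using $U_b=(1/\kappa_b)e^{\kappa_b u}$, a direct algebraic manipulation, based on
\begin{equation*}
U_{b1}-U_{b2}=\tfrac{2}{\kappa_b}\,e^{\kappa_b(u_1+u_2)/2}\sinh\!\big(\tfrac{\kappa_b(u_1-u_2)}{2}\big),\qquad dU_b=e^{\kappa_b u}du,
\end{equation*}
produces (after the exponential factors cancel)
\begin{equation*}
\frac{dU_{b1}\,dU_{b2}}{(U_{b1}-U_{b2}-i\epsilon)^{2}}=\frac{\kappa_b^{2}/4}{\sinh^{2}(\kappa_b(u_1-u_2)/2-i\epsilon')}\,du_{1}\,du_{2},
\end{equation*}
with $\epsilon'\to 0^+$ a renormalized positive infinitesimal. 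The parallel substitution $U_b=-(1/\kappa_b)e^{-\kappa_b u}$ on $\Hor_b^-$ yields exactly the same kernel, since the additional sign changes combine into expressions that are symmetric in $u_1-u_2$. Hence on either horizon $\lambda(\psi_1,\psi_2)$ becomes, up to the overall volume factor $r_b^2\, d\mathbb{S}^2$, the smearing of $\bar\psi_1\psi_2$ against the KMS kernel $-\kappa_b^{2}/(4\pi\sinh^{2}(\kappa_b(u_1-u_2)/2-i0^+))$ on the real line.

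The key analytic input is then the Fourier transform in $u_1-u_2$ of this $\sinh^{-2}$ kernel, which is standard: closing the contour in the appropriate half plane according to the sign of the dual variable $k$ and summing the geometric series of residues at the simple zeros $v_n=2\pi in/\kappa_b+i0^+$, $n\in\mathbb{Z}$, of $\sinh(\kappa_b\,\cdot/2-i0^+)$, produces a distribution proportional to $k/(1-e^{-2\pi k/\kappa_b})$. After elementary manipulation of hyperbolic exponentials this coincides with the density $d\mu/dk$ of \eqref{measure-Fourier_u}. Applying Plancherel's formula in $u$ then converts the preceding expression for $\lambda(\psi_1,\psi_2)$ into
\begin{equation*}
\lambda(\psi_{1},\psi_{2})=\int_{\mathbb R\times\mathbb S^{2}}\overline{\widetilde{\psi_{1}}(k,\omega)}\,\widetilde{\psi_{2}}(k,\omega)\,d\mu(k)\wedge r_b^{2}d\mathbb S^{2},
\end{equation*}
which is exactly the isometry property of $F_u^{(\pm)}$ on the dense subspace.

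The extension of $F_u^{(\pm)}$ to the entire completion $\overline{(S(\Hor_b^\pm),\lambda)}$ is then automatic by linearity and continuity; surjectivity, which upgrades the isometry to a Hilbert-space isomorphism, follows because the Fourier image of $\cC_0^\infty(\mathbb{R}\times\mathbb{S}^2)$ is dense in $L^2(\mathbb{R}\times\mathbb{S}^2,d\mu\wedge r_b^2 d\mathbb{S}^2)$, the density $d\mu/dk$ being smooth and strictly positive on $\mathbb{R}\setminus\{0\}$. The main technical delicacy is the interchange of the distributional limit $\epsilon'\to 0^+$ with the Fourier transform; this is justified by regarding $1/\sinh^{2}(\kappa_b\,\cdot/2-i0^+)$ as the boundary value of a function holomorphic in the strip $\{0<\Im v<2\pi/\kappa_b\}$, which both legitimates the contour shift and controls the absolute convergence of the residue sum.
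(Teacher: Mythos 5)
Your strategy is the same as the paper's: change variables from $U_{b}$ to $u$ in the kernel of \eqref{product-U_b} and Fourier transform in $u$ to read off the measure \eqref{measure-Fourier_u}. The paper compresses the transform step into the distributional identity $\lim_{\epsilon\rightarrow 0^{+}}1/(x-i\epsilon)^{2}=1/x^{2}-i\pi\delta'(x)$, whereas you transform the resulting $\sinh^{-2}$ kernel by a residue sum; these are equivalent computations, and your algebra for both horizons (cancellation of the $e^{\pm\kappa_{b}(u_{1}+u_{2})/2}$ factors, preservation of the $i\epsilon$ prescription) is correct.

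The one genuine defect is your choice of dense subspace. You prove the isometry on $\cC_{0}^{\infty}(\Hor_{b}^{\pm};\mathbb{C})$ and assert that this sits densely inside $\overline{\left(S(\Hor_{b}^{\pm}),\lambda\right)}$, but compactly supported functions are not elements of $S(\Hor_{b}^{\pm})$ (which consists of restrictions of solutions with the decay specified in \eqref{S_Hor_b-}), and neither their membership nor their density in that abstract completion is established; moreover the proposition asserts the isometry on $S(\Hor_{b}^{\pm})$ itself, which does not follow from an isometry on a different subspace without an approximation argument compatible with both $\lambda$ and $\mathcal{F}$. The repair is precisely the paper's remark: the exponential decay in \eqref{S_Hor_b-} makes elements of $S(\Hor_{b}^{\pm})$ and their $u$-derivatives integrable, so your computation (change of variables, Fubini, exchange of the $\epsilon\rightarrow 0^{+}$ limit, ordinary Fourier transform and Plancherel) goes through verbatim for $\psi_{1},\psi_{2}\in S(\Hor_{b}^{\pm})$, and the extension to the completion is then routine; for surjectivity the density to be argued is that of $F_{u}^{(\pm)}\left(S(\Hor_{b}^{\pm})\right)$ in $H_{\Hor_{b}^{\pm}}$, parallel to the corresponding statement for $S(\sB)$, not that of the Fourier image of $\cC_{0}^{\infty}$. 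Finally, since an isometry claim is sensitive to normalization, carry the constants through the residue sum (the prefactor $-r_{b}^{2}/\pi$, the Jacobian factor $\kappa_{b}^{2}/4$, and the $1/\sqrt{2\pi}$ in $\mathcal{F}$) and confirm they reproduce \eqref{measure-Fourier_u} exactly: the functional form $k/(1-e^{-2\pi k/\kappa_{b}})$ you obtain already secures detailed balance at $\beta_{b}=2\pi/\kappa_{b}$, but not by itself the precise density.
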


\begin{proof}
The measure \eqref{measure-Fourier_u} is obtained if one starts from \eqref{product-U_b}, makes the change of variables from $U_{b}$ to $u$ and then takes the Fourier transform with respect to $u$, keeping in mind that $\lim_{\epsilon\rightarrow 0^{+}}1/(x-i\epsilon)^{2}=1/x^{2}-i\pi\delta'(x)$ \citep{DuistermaatKolk10}. The other statements of the proposition follow exactly as the corresponding ones for the Fourier-Plancherel transform of $\psi\in S(\sB)$. The only formal difference is that, from the decay estimate \eqref{S_Hor_b-}, we now employ the usual Fourier transform.
\end{proof}

Hence, the real-linear map $K_{\Hor_{b}^{\pm}}^{\beta_{b}}$ is defined as $K_{\Hor_{b}^{\pm}}^{\beta_{b}} \coloneqq F_{u}^{(\pm)}$.

We will now state and prove a theorem that shows that we can construct a quasifree pure state $\omega_{\M}$ on the Weyl algebra defined on the region $\sB\cup\sC$. On the next subsection we will show that $\omega_{\M}$ satisfies the Hadamard condition.

\begin{thm}\label{state-existence}\ \

(a)\ \ The pair $(H_{\sB},K_{\sB})$ is the one-particle structure for a quasifree pure state $\omega_{\sB}$ on $\W(S(\sB))$ uniquely individuated by the requirement that its two-point function coincides with the rhs of
\[\lambda(\psi_{1},\psi_{2})\coloneqq \lim_{\epsilon\rightarrow 0^{+}}-\frac{r_{b}^{2}}{\pi}\int_{\mathbb{R}\times\mathbb{R}\times\mathbb{S}^{2}}\frac{\overline{\psi_{1}(U_{b1},\theta,\varphi)}\psi_{2}(U_{b2},\theta,\varphi)}{(U_{b1}-U_{b2}-i\epsilon)^{2}}dU_{b1}\wedge dU_{b2}\wedge d\mathbb{S}^{2} \; .\]

(b)\ \ The state $\omega_{\sB}$ is invariant under the action of the one-parameter group of $\ast$-automorphisms generated by $X_{\upharpoonright\sB}$ and of those generated by the Killing vectors of $\mathbb{S}^{2}$.

(c)\ \ The restriction of $\omega_{\sB}$ to $\W(S(\Hor_{b}^{\pm}))$ is a quasifree state $\omega_{\Hor_{b}^{\pm}}^{\beta_{b}}$ individuated by the one-particle structure $\left(H_{\Hor_{b}^{\pm}}^{\beta_{b}},K_{\Hor_{b}^{\pm}}^{\beta_{b}}\right)$ with:
\[H_{\Hor_{b}^{\pm}}^{\beta_{b}}\coloneqq L^{2}\left(\mathbb{R}\times\mathbb{S}^{2},d\mu(k)\wedge r_{b}^{2}d\mathbb{S}^{2}\right) \quad \textrm{and} \quad K_{\Hor_{b}^{\pm}}^{\beta_{b}}={F_{u}^{\pm}}_{\upharpoonright S(\Hor_{b}^{\pm})} \; .\]

(d)\ \ If $\{\beta_{\tau}^{(X)}\}_{\tau\in\mathbb{R}}$ denotes the pull-back action on $S(\Hor_{b}^{-})$ of the one-parameter group generated by $X_{\upharpoonright\sB}$, that is $\left(\beta_{\tau}(\psi)\right)(u,\theta,\varphi)=\psi(u-\tau,\theta,\varphi),\forall\tau\in\mathbb{R},\psi\in S(\Hor_{b}^{-})$, then it holds:
\[K_{\Hor_{b}^{-}}^{\beta_{b}}\beta_{\tau}^{(X)}(\psi)=e^{i\tau\hat{k}}K_{\Hor_{b}^{-}}^{\beta_{b}}\psi\]
where $\hat{k}$ is the k-multiplicative self-adjoint operator on $L^{2}\left(\mathbb{R}\times\mathbb{S}^{2},d\mu(k)\wedge d\mathbb{S}^{2}\right)$. An analogous statement holds for $\Hor_{b}^{+}$.

(e)\ \ The states $\omega_{\Hor_{b}^{\pm}}^{\beta_{b}}$ satisfy the KMS condition with respect to the one-parameter group of $\ast$-automorphisms generated by, respectively, $\mp X_{\upharpoonright\sB}$, with Hawking's inverse temperature $\beta_{b}=\frac{2\pi}{\kappa_{b}}$.
\end{thm}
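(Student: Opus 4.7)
The plan is to verify each part sequentially, with parts (a) and (e) being the technically substantive ones and parts (b)--(d) following from the explicit Fourier/geometric description.

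For part (a), I would appeal directly to Proposition \ref{one-particle_structure}. One needs to check that $\mu(\psi_1,\psi_2) \coloneqq \mathrm{Re}\,\lambda(\psi_1,\psi_2)$ is a real symmetric positive bilinear form majorising the symplectic form $\sigma_{\sB}(\psi_1,\psi_2) = 2\,\mathrm{Im}\,\lambda(\psi_1,\psi_2)$, and that the one-particle structure data coincides with $(H_{\sB},K_{\sB})$. The positivity and symmetry of $\mu$ follow immediately from the Fourier-space form \eqref{product-Fourier} with the measure $2K\,dK$ on $\mathbb{R}_+$. The antisymmetric part reproduces $\sigma_\sB$ because, using $\lim_{\epsilon\to 0^+}(x-i\epsilon)^{-2} = x^{-2} - i\pi\delta'(x)$, the imaginary part of the integrand gives exactly the symplectic form on $\sB$ obtained by restriction from $\sigma_\M$. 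The purity is equivalent, by Proposition \ref{one-particle_structure}, to the density of $K_\sB \,S(\sB)$ in $H_\sB$ alone (not needing its complexification), which follows because $F_{U_b}$ restricted to $K \geq 0$ already exhausts $L^2(\mathbb{R}_+\times\mathbb{S}^2, 2K\,dK \wedge r_b^2\,d\mathbb{S}^2)$.

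For part (b), the action of $X$ on $\sB$ is the dilation $U_b \mapsto e^{\kappa_b\tau}U_b$ (since $X = \partial_t = \kappa_b(V_b\partial_{V_b} - U_b\partial_{U_b})$ and on $\sB$ we have $V_b = 0$). Under this change of variables in \eqref{product-U_b}, the factor $(U_{b1}-U_{b2}-i\epsilon)^{-2}$ picks up $e^{-2\kappa_b\tau}$ while the measure $dU_{b1}\wedge dU_{b2}$ picks up $e^{2\kappa_b\tau}$, so $\lambda$ is invariant. Invariance under $SO(3)$ is manifest from the $d\mathbb{S}^2$ factor. For part (c), change coordinates from $U_b$ to $u=\pm\kappa_b^{-1}\ln|\kappa_bU_b|$ on $\Hor_b^{\pm}$ and substitute into the kernel; one then uses the Fourier-Plancherel identity for the $u$-transform, where the measure $d\mu(k)$ in \eqref{measure-Fourier_u} arises from regularising the resulting distribution $\lim_{\epsilon\to 0^+}(u_1-u_2-i\epsilon)^{-2}$ on the horizon, exactly as in Sewell's derivation of Hawking radiation. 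Part (d) is then immediate: translation in $u$ becomes multiplication by $e^{i\tau k}$ under $F_u^{(\pm)}$, and by construction $K_{\Hor_b^{\pm}}^{\beta_b}$ intertwines the two.

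For part (e), rewrite the measure as
\[
d\mu(k) = \frac{1}{2}\frac{k}{1-e^{-\beta_b k}}\,dk, \qquad \beta_b = \frac{2\pi}{\kappa_b},
\]
which makes the Bose-Einstein factor explicit. The symmetric part of the two-point function in momentum space is then $\frac{1}{2}\coth(\beta_b k/2)$ times the commutator, which is the defining relation of a KMS state at inverse temperature $\beta_b$ with respect to the group generated by $\mp X_{\upharpoonright\sB}$ (the sign depending on whether one is on $\Hor_b^-$ or $\Hor_b^+$, reflecting the past/future-pointing character of $X$). To conclude rigorously, I would exhibit the two analytic functions $F_{A,B}(z)$ and $G_{A,B}(z)$ of section \ref{sec-alg_state} and verify the boundary relation $G(t)=F(t+i\beta_b)$ by direct contour shift in the $k$-integral; the required analyticity in the strip $0 < \mathrm{Im}\,z < \beta_b$ follows from the exponential decay of the integrand for $k \to +\infty$ together with the $k$-factor killing the would-be singularity at $k=0$.

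The main technical obstacle I anticipate is controlling the various limits $\epsilon\to 0^+$ on functions in $S(\sB)$ (which are not compactly supported but only satisfy the decay estimates \eqref{S_sB}); this has to be done carefully to justify the Fourier transform on $S(\sB)$ and the passage to $H_\sB$, essentially by the estimate \eqref{product-sB-integrable} together with a density argument approximating elements of $\overline{(S(\sB),\lambda)}$ by elements of $\cC_0^\infty(\sB;\mathbb{C})$, where all manipulations are unambiguous.
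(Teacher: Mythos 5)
Your proposal is correct and follows essentially the same route as the paper: (a) via Proposition \ref{one-particle_structure} with $\mu=\mathrm{Re}\,\lambda$ and Cauchy--Schwarz plus density of $K_{\sB}S(\sB)$ for purity, (b)--(d) from the explicit action of $X$ as a dilation in $U_{b}$/translation in $u$ and the Fourier--Plancherel isomorphisms with measure \eqref{measure-Fourier_u}, and (e) from the Bose factor $\frac{k}{2(1-e^{-\beta_{b}k})}$ in momentum space. The only cosmetic differences are that the paper checks invariance in (b) by a Fourier-space isometry computation rather than your position-space kernel scaling, and verifies (e) by the algebraic identity $\lambda(\beta_{\tau}^{(X)}\psi,\psi')=\lambda(\psi',\beta_{\tau+i\beta_{b}}^{(X)}\psi)$ using $\overline{\tilde{\psi}(k)}=\tilde{\psi}(-k)$ and $k\to-k$ instead of your contour-shift argument; both are equivalent computations.
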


One can equally define a quasifree pure KMS state $\omega_{\sC}^{\beta_{c}}$ on $S(\sC)$, at inverse temperature $\beta_{c}=\frac{2\pi}{\kappa_{c}}$.
\begin{proof}
(a)\ \ Recall the definition of one-particle structure given right below Proposition \ref{one-particle_structure}. The map $K_{\sB}$, as defined in \eqref{linear_map}, is a real-linear map which satisfies $\overline{K_{\sB}S(\sB)}=H_{\sB}$. Therefore, we only need to show that $K_{\sB}$ satisfies the other hypotheses of that Proposition. First,
\[K_{\sB}:S(\sB)\ni\psi\mapsto K_{\sB}\psi\in H_{\sB}\]
and
\[\lambda(\psi_{1},\psi_{2})=\langle K_{\sB}\psi_{1},K_{\sB}\psi_{2} \rangle_{H_{\sB}} \; .\]
The symmetric part of this two-point function is given by
\[\mu_{\sB}(\psi_{1},\psi_{2})=\textrm{Re}\langle K_{\sB}\psi_{1},K_{\sB}\psi_{2} \rangle_{H_{\sB}} \; .\]

We need to check that $\mu_{\sB}$ majorizes the symplectic form. Since
\[\sigma_{\sB}(\psi_{1},\psi_{2})=-2\textrm{Im}\langle K_{\sB}\psi_{1},K_{\sB}\psi_{2} \rangle_{H_{\sB}} \; ,\]
we have
\begin{align*}
\lvert\sigma_{\sB}(\psi_{1},\psi_{2})\rvert^{2} &=4\lvert\textrm{Im}\langle K_{\sB}\psi_{1},K_{\sB}\psi_{2} \rangle_{H_{\sB}}\rvert^{2}\le 4\lvert\langle K_{\sB}\psi_{1},K_{\sB}\psi_{2} \rangle_{H_{\sB}}\rvert^{2} \\
&\le 4\langle K_{\sB}\psi_{1},K_{\sB}\psi_{1} \rangle_{H_{\sB}}\langle K_{\sB}\psi_{2},K_{\sB}\psi_{2} \rangle_{H_{\sB}}=4\mu_{\sB}(\psi_{1},\psi_{1})\mu_{\sB}(\psi_{2},\psi_{2}) \; .
\end{align*}
We thus proved that $(H_{\sB},K_{\sB})$ is the one-particle structure associated to the state $\omega_{\sB}$. Since $\overline{K_{\sB}S(\sB)}=H_{\sB}$, this state is pure.

(b)\ \ On $\sB$, defining
\begin{align*}
u &\coloneqq \frac{1}{\kappa_{\sB}}\ln(\kappa_{\sB}U_{b}) \,\textrm{on}\, \Hor_{b}^{+} \; , \\
u &\coloneqq -\frac{1}{\kappa_{\sB}}\ln(-\kappa_{\sB}U_{b}) \,\textrm{on}\, \Hor_{b}^{-} \; ,
\end{align*}
we have
\[{\partial_{t}}_{\upharpoonright\Hor_{b}^{-}}=\partial_{u}=-\kappa_{\sB}U_{b}\partial_{U_{b}}\]
(on $\Hor_{b}^{+}$, the future-pointing Killing vector is $-\partial_{t}=-\partial_{u}=-\kappa_{\sB}U_{b}\partial_{U_{b}}$).

The one-parameter group of symplectomorphisms $\beta_{\tau}^{(X)}$ generated by $X$ individuates $\beta_{\tau}^{(X)}(\psi)\in S(\sB)$ such that $\beta_{\tau}^{(X)}(\psi)(U_{b},\theta,\varphi)=\psi(e^{\kappa_{b}\tau}U_{b},\theta,\varphi)$. Since $\beta_{\tau}^{(X)}$ preserves the symplectic form $\sigma_{\sB}$, there must be a representation $\alpha^{(X)}$ of $\beta_{\tau}^{(X)}$ in terms of $\ast$-automorphisms of $\W(S(\sB))$. From the definition of $K_{\sB}$, one has
\begin{align*}
K_{\sB}(\beta_{\tau}^{(X)}(\psi))(K,\omega)&=\frac{1}{\sqrt{2\pi}}\int_{\mathbb{R}}e^{iKU_{b}}\psi(e^{\kappa_{b}\tau}U_{b},\omega)dU_{b} \\
&=e^{-\kappa_{b}\tau}\frac{1}{\sqrt{2\pi}}\int_{\mathbb{R}}e^{i(Ke^{-\kappa_{b}\tau})U'}\psi(U',\omega)dU'=e^{-\kappa_{b}\tau}\hat{\psi}(e^{-\kappa_{b}\tau}K,\omega) \; .
\end{align*}

One then has $K_{\sB}(\beta_{\tau}^{(X)}(\psi))(K,\omega)\eqqcolon (U_{\tau}^{(X)}\psi)(K,\omega)= e^{-\kappa_{b}\tau}K_{\sB}(\psi)(e^{-\kappa_{b}\tau}K,\omega)$, $\forall\psi\in S(\sB)$. Thus,
\begin{align*}
&\langle K_{\sB}(\beta_{\tau}^{(X)}(\psi_{1})),K_{\sB}(\beta_{\tau}^{(X)}(\psi_{2})) \rangle_{H_{\sB}}=\int_{\mathbb{R}\times\mathbb{S}^{2}}e^{-\kappa_{b}\tau}\overline{\hat{\psi}_{1}(e^{-\kappa_{b}\tau}K,\omega)}e^{-\kappa_{b}\tau}\hat{\psi}_{2}(e^{-\kappa_{b}\tau}K,\omega)2KdK\wedge r_{b}^{2}d\mathbb{S}^{2}= \\
&\int_{\mathbb{R}\times\mathbb{S}^{2}}\overline{\hat{\psi}_{1}(e^{-\kappa_{b}\tau}K,\omega)}\hat{\psi}_{2}(e^{-\kappa_{b}\tau}K,\omega)2\left(e^{-\kappa_{b}\tau}K\right)d\left(e^{-\kappa_{b}\tau}K\right)\wedge r_{b}^{2}d\mathbb{S}^{2}=\langle K_{\sB}\psi_{1},K_{\sB}\psi_{2} \rangle_{H_{\sB}} \; ,
\end{align*}
hence $U_{\tau}^{(X)}$ is an isometry of $L^{2}\left(\mathbb{R}\times\mathbb{S}^{2},2KdK\wedge r_{b}^{2}d\mathbb{S}^{2}\right)$. In view of the definition of $\omega_{\sB}$, it yields that $\omega_{\sB}(W_{\sB}(\beta_{\tau}^{(X)}(\psi)))=\omega_{\sB}(W_{\sB}(\psi))$ $\forall\psi\in S(\sB)$, and, per continuity and linearity, this suffices to conclude that $\omega_{\sB}$ is invariant under the action of the group of $\ast$-automorphisms $\alpha^{(X)}$ induced by $X$. The proof for the Killing vectors of $\mathbb{S}^{2}$ is similar.

(c)\ \ We only consider $\Hor_{b}^{+}$, the other case being analogous. The state $\omega_{\Hor_{b}^{+}}^{\beta_{b}}$, which is the restriction of $\omega_{\sB}$ to $\W(S(\Hor_{b}^{+}))$, is individuated by
\[\omega_{\Hor_{b}^{+}}^{\beta_{b}}(W_{\Hor_{b}^{+}}(\psi))=e^{-\mu_{\Hor_{b}^{+}}(\psi,\psi)/2} \quad , \quad \textrm{for } \psi\in S(\Hor_{b}^{+}) \; .\]
Then, if $\psi,\psi'\in S(\Hor_{b}^{+})$, the symmetric part of $\lambda$ is given by
\[\mu_{\Hor_{b}^{+}}(\psi,\psi')=\textrm{Re}\lambda(\psi,\psi')=\textrm{Re}\langle F_{u}^{(+)}\psi,F_{u}^{(+)}\psi' \rangle_{H_{\Hor_{b}^{+}}^{\beta_{b}}}=\textrm{Re}\langle K_{\Hor_{b}^{+}}^{\beta_{b}}\psi,K_{\Hor_{b}^{+}}^{\beta_{b}}\psi' \rangle_{H_{\Hor_{b}^{+}}^{\beta_{b}}} \; .\]
It is immediate that
\[\sigma_{\Hor_{b}^{+}}(\psi,\psi')=-2\textrm{Im}\langle K_{\Hor_{b}^{+}}^{\beta_{b}}\psi,K_{\Hor_{b}^{+}}^{\beta_{b}}\psi' \rangle_{H_{\Hor_{b}^{+}}^{\beta_{b}}} \; .\]
Therefore,
\[|\sigma_{\Hor_{b}^{+}}(\psi,\psi')|^{2}\leq 4\mu_{\Hor_{b}^{+}}(\psi,\psi)\mu_{\Hor_{b}^{+}}(\psi',\psi') \; .\]
This, and the fact that $K_{\Hor_{b}^{+}}^{\beta_{b}}$ is a real-linear map which satisfies $\overline{K_{\Hor_{b}^{+}}^{\beta_{b}}S(\Hor_{b}^{+})}=H_{\Hor_{b}^{+}}^{\beta_{b}}$, suffice to conclude that $(H_{\Hor_{b}^{+}}^{\beta_{b}},K_{\Hor_{b}^{+}}^{\beta_{b}})$ is the one-particle structure of the quasifree pure state $\omega_{\Hor_{b}^{+}}^{\beta_{b}}$ (a completely analogous statement is valid for the state $\omega_{\Hor_{b}^{-}}^{\beta_{b}}$).

(d)\ \ In $S(\Hor_{b}^{-})$, the natural action of the one-parameter group of isometries generated by $X_{\upharpoonright\Hor_{b}^{-}}$ is $\beta_{\tau}^{(X)}:\psi\mapsto\beta_{\tau}^{(X)}(\psi)$ with $\beta_{\tau}^{(X)}(\psi)(u,\theta,\varphi)\coloneqq \psi(u-\tau,\theta,\varphi)$, for all $u,\tau\in\mathbb{R}$, $(\theta,\varphi)\in\mathbb{S}^{2}$ and for every $\psi\in S(\Hor_{b}^{-})$. As previously, this is an obvious consequence of $X=\partial_{u}$ on $\Hor_{b}^{-}$. Since $\beta^{(X)}$ preserves the symplectic form $\sigma_{\Hor_{b}^{-}}$, there must be a representation $\alpha^{(X)}$ of $\beta^{(X)}$ in terms of $\ast$-automorphisms of $W(S(\Hor_{b}^{-}))$. Let us prove that $\alpha^{(X)}$ is unitarily implemented in the GNS representation of $\omega_{\Hor_{b}^{-}}^{\beta_{b}}$. To this end, we notice that $\beta$ is unitarily implemented in $H_{\Hor_{b}^{-}}$, the one-particle space of $\omega_{\Hor_{b}^{-}}^{\beta_{b}}$, out of the strongly-continuous one-parameter group of unitary operators $V_{\tau}$ such that $(V_{\tau}\widetilde{\psi})(k,\theta,\varphi)=e^{ik\tau}\widetilde{\psi}(k,\theta,\varphi)$. This describes the time-displacement with respect to the Killing vector $\partial_{u}$. Thus the self-adjoint generator of $V$ is $h:\textrm{Dom}(\hat{k})\subset L^{2}\left(\mathbb{R}\times\mathbb{S}^{2},d\mu(k)\wedge r_{b}^{2}d\mathbb{S}^{2}\right)\rightarrow L^{2}\left(\mathbb{R}\times\mathbb{S}^{2},d\mu(k)\wedge r_{b}^{2}d\mathbb{S}^{2}\right)$ with $\hat{k}(\phi)(k,\theta,\varphi)=k\phi(k,\theta,\varphi)$ and
\[\textrm{Dom}(\hat{k}) \coloneqq \left\{\phi \in L^{2}\left(\mathbb{R}\times\mathbb{S}^{2},d\mu(k)\wedge r_{b}^{2}d\mathbb{S}^{2}\right)\, \Big|\, \int_{\mathbb{R}\times\mathbb{S}^{2}}|k\phi(k,\theta,\varphi)|^{2}d\mu(k)\wedge r_{b}^{2}d\mathbb{S}^{2}<+\infty\right\} \; .\]
Per direct inspection, if one employs the found form for $V$ and exploits
\[\omega_{\Hor_{b}^{-}}^{\beta_{b}}(W_{\Hor_{b}^{-}}(\psi))=e^{-\frac{1}{2}\langle \widetilde{\psi},\widetilde{\psi} \rangle_{L^{2}\left(\mathbb{R}\times\mathbb{S}^{2},d\mu(k)\wedge r_{b}^{2}d\mathbb{S}^{2}\right)}} \; ,\]
one sees, by the same argument as in the proof of item c) above, that $\omega_{\Hor_{b}^{-}}^{\beta_{b}}$ is invariant under $\alpha^{(X)}$, so that it must admit a unitary implementation.

(e)\ \ We will prove this statement by explicitly calculating the two-point function and verifying that it satisfies the KMS condition. Let $\psi,\psi'\in S(\Hor_{b}^{-})$. Since these are real functions, $\overline{\tilde{\psi}(k,\theta,\varphi)}=\tilde{\psi}(-k,\theta,\varphi)$. Then
\begin{align}
\lambda(\beta_{\tau}^{(X)}(\psi),\psi')&=\langle e^{i\tau\hat{k}}K_{\Hor_{b}^{-}}^{\beta_{b}}\psi,K_{\Hor_{b}^{-}}^{\beta_{b}}\psi' \rangle_{H_{\Hor_{b}^{-}}^{\beta_{b}}} \nonumber \\
&=\frac{r_{b}^{2}}{2}\int_{\mathbb{R}\times\mathbb{S}^{2}} e^{-i\tau k}\overline{\tilde{\psi}(k,\theta,\varphi)}\tilde{\psi}'(k,\theta,\varphi)\frac{ke^{\pi k/\kappa_{b}}}{e^{\pi k/\kappa_{b}}-e^{-\pi k/\kappa_{b}}}dk\wedge d\mathbb{S}^{2} \nonumber \\
&=\frac{r_{b}^{2}}{2}\int_{\mathbb{R}\times\mathbb{S}^{2}} e^{-i\tau k}\overline{\tilde{\psi}'(-k,\theta,\varphi)}\tilde{\psi}(-k,\theta,\varphi)\frac{ke^{\pi k/\kappa_{b}}}{e^{\pi k/\kappa_{b}}-e^{-\pi k/\kappa_{b}}}dk\wedge d\mathbb{S}^{2} \nonumber \\
&\overset{k\rightarrow -k}{=}\frac{r_{b}^{2}}{2}\int_{\mathbb{R}\times\mathbb{S}^{2}} \overline{\tilde{\psi}'(k,\theta,\varphi)}e^{i\tau k}\tilde{\psi}(k,\theta,\varphi)\frac{ke^{-\pi k/\kappa_{b}}}{e^{\pi k/\kappa_{b}}-e^{-\pi k/\kappa_{b}}}dk\wedge d\mathbb{S}^{2} \nonumber \\
&=\frac{r_{b}^{2}}{2}\int_{\mathbb{R}\times\mathbb{S}^{2}} \overline{\tilde{\psi}'(k,\theta,\varphi)}e^{-2\pi k/\kappa_{b}}e^{i\tau k}\tilde{\psi}(k,\theta,\varphi)\frac{ke^{\pi k/\kappa_{b}}}{e^{\pi k/\kappa_{b}}-e^{-\pi k/\kappa_{b}}}dk\wedge d\mathbb{S}^{2} \nonumber \\
&=\langle K_{\Hor_{b}^{-}}^{\beta_{b}}\psi',e^{i\hat{k}(\tau+2\pi i/\kappa_{b})}K_{\Hor_{b}^{-}}^{\beta_{b}}\psi \rangle_{H_{\Hor_{b}^{-}}^{\beta_{b}}}=\lambda(\psi',\beta_{\tau+i\beta_{b}}^{(X)}(\psi))
\end{align}
\end{proof}

We have successfully applied the bulk-to-boundary technique to construct two quasifree pure KMS states, one on $\W(S(\sB))$ and the other one on $\W(S(\sC))$, where the temperatures are given by $\kappa_{b}$ and $\kappa_{c}$, respectively. Thus, by the remarks after theorems \ref{algebra-horizon-isomorphism} and \ref{state-homomorphism}, we can define a state on $\M$ such that, for $\psi\in S(\M)$,
\begin{align}
\omega_{\M}(W_{\M}(\psi))&=e^{-\mu(\psi,\psi)}=e^{-\mu_{\sB}(\psi_{\upharpoonright\sB},\psi_{\upharpoonright\sB})-\mu_{\sC}(\psi_{\upharpoonright\sC},\psi_{\upharpoonright\sC})}=e^{-\mu_{\sB}(\psi_{\upharpoonright\sB},\psi_{\upharpoonright\sB})}e^{-\mu_{\sC}(\psi_{\upharpoonright\sC},\psi_{\upharpoonright\sC})} \nonumber \\
&=\omega_{\sB}(W_{\sB}(\psi_{\upharpoonright\sB}))\omega_{\sC}(W_{\sC}(\psi_{\upharpoonright\sC})) \; .
\label{state}
\end{align}

The resulting state is thus the tensor product of two states, each a quasifree pure state, but each KMS at a different temperature. Thus $\omega_{\M}$ is not a KMS state, and neither can it be interpreted as a superposition, a mixture or as an entangled state.

Still, we must prove that the two-point function of this state is a bidistribution in $\left(\cC^{\infty}_{0}\right)^{'}(\M\times\M)$. This will be easily proved in the following
\begin{prop}\label{prop_M=B+C}
The smeared two-point function $\Lambda_{\M}:\cC_{0}^{\infty}\left(\M;\mathbb{R}\right)\times\cC_{0}^{\infty}\left(\M;\mathbb{R}\right) \rightarrow \mathbb{C}$ of the state $\omega_{\M}$ can be written as the sum
\begin{equation}
\Lambda_{\M}=\Lambda_{\sB}+\Lambda_{\sC} \; ,
\label{2ptfcn_M=B+C}
\end{equation}
with $\Lambda_{\sB}$ and $\Lambda_{\sC}$  defined from the following relations as in \eqref{product-U_b},
\[\Lambda_{\sB}(f,h)=\lambda_{\sB}(\psi^{f}_{\sB},\psi^{h}_{\sB}) \quad ; \quad \Lambda_{\sC}(f,h)=\lambda_{\sC}(\psi^{f}_{\sC},\psi^{h}_{\sC})\]
for every $f,h\in\cC_{0}^{\infty}\left(\M;\mathbb{R}\right)$.

Separately, $\Lambda_{\sB}$, $\Lambda_{\sC}$ and $\Lambda_{\M}$ individuate elements of $\left(\cC^{\infty}_{0}\right)^{'}(\M\times\M)$ that we will denote, respectively, by the same symbols. These are uniquely individuated by complex linearity and continuity under the assumption \eqref{2ptfcn_M=B+C}, as
\begin{equation}
\Lambda_{\sB}(f\otimes h)\coloneqq \lambda_{\sB}(\psi^{f}_{\sB},\psi^{h}_{\sB}) \quad ; \quad \Lambda_{\sC}(f\otimes h)\coloneqq \lambda_{\sC}(\psi^{f}_{\sC},\psi^{h}_{\sC}) \; ,
\label{bidistr_M=B+C}
\end{equation}
for every $f,h\in\cC_{0}^{\infty}\left(\M;\mathbb{R}\right)$. Here, $\psi^{f}_{\sB}$ is a ``smeared solution'', $\psi^{f}_{\sB}=\left(\mathds{E}(f)\right)_{\upharpoonright\sB}$ (similarly for the other solutions).
\end{prop}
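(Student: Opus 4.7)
The plan is to verify the algebraic decomposition \eqref{2ptfcn_M=B+C} by direct computation, and then to establish the continuity (and hence bidistribution property) by combining the Fourier representation of $\lambda_\sB, \lambda_\sC$ with standard continuity of the causal propagator together with the Dafermos-Rodnianski decay estimates.

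First, for every real $f \in \cC^{\infty}_0(\M;\mathbb{R})$, the smeared solution $\psi^f \coloneqq \mathds{E}(f)$ is smooth, and by the decay \eqref{Sch-dS_decay} valid on the horizons (and in the static wedges $IV$, $IV'$) its restrictions $\psi^f_\sB$, $\psi^f_\sC$ belong to $S(\sB)$ and $S(\sC)$, respectively. The algebraic identity \eqref{2ptfcn_M=B+C} then follows directly from the very definition $\omega_\M = \iota^\ast(\omega_\sB \otimes \omega_\sC)$ in \eqref{state}: the symmetric form $\mu$ defining $\omega_\M$ splits as $\mu(\psi,\psi') = \mu_\sB(\psi_\sB,\psi'_\sB) + \mu_\sC(\psi_\sC,\psi'_\sC)$ by construction, while the antisymmetric part $\tfrac{i}{2}\sigma_\M$ splits by Theorem \ref{algebra-horizon-isomorphism}(a). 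Together they yield $\Lambda_\M(f,h) = \lambda_\sB(\psi^f_\sB,\psi^h_\sB) + \lambda_\sC(\psi^f_\sC,\psi^h_\sC)$ for real $f,h$, and the identity extends by complex linearity.

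To show that $\Lambda_\sB$ (and analogously $\Lambda_\sC$) defines an element of $(\cC^{\infty}_0)'(\M\times\M)$, I would start from the estimate already derived in \eqref{product-sB-integrable},
\[
|\lambda_\sB(\psi_1,\psi_2)| \le 2\,\|\psi_1\|_{L^2(\sB,\,dU_b \wedge r_b^2 d\mathbb{S}^2)}\,\|\partial_{U_b}\psi_2\|_{L^2(\sB,\,dU_b \wedge r_b^2 d\mathbb{S}^2)},
\]
and reduce the claim to the continuity of the linear map $f \mapsto \psi^f_\sB$ from $\cC^{\infty}_0(\M;\mathbb{R})$ into the Sobolev-type space on the right-hand side. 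Since $\mathds{E}:\cC^{\infty}_0(\M) \to \cC^{\infty}(\M)$ is sequentially continuous and since the Dafermos-Rodnianski bounds depend continuously (via a finite sum of $\cC^k$-seminorms of the initial data on a Cauchy hypersurface) on the source $f$ supported in any fixed compactum $K \subset \M$, one obtains a bound of the form
\[
\|\psi^f_\sB\|_{L^2} + \|\partial_{U_b}\psi^f_\sB\|_{L^2} \le C_K \sum_{|\alpha|\le N}\sup|\partial^\alpha f|,
\]
and similarly on $\sC$. The Schwartz kernel theorem then produces the claimed bidistributions, and \eqref{2ptfcn_M=B+C} transfers the distributional property to $\Lambda_\M$.

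The main obstacle is producing the seminorm estimate uniformly in $f$ with support in a fixed compactum, because the horizons $\Hor_b^\pm$ and $\Hor_c^\pm$ are noncompact and the Killing energy associated to $X$ degenerates on them. My approach is to propagate a finite $H^1$-type energy from a compact Cauchy hypersurface of $\M$ lying to the past of $\sB \cup \sC$ using local energy estimates for the Klein-Gordon equation, and then invoke the exponential/polynomial decay in $u$/$U_b$ (respectively in $v$/$V_c$) from \eqref{Sch-dS_decay} to secure integrability at the noncompact ends of the horizons and at the bifurcation spheres (where $\psi$ vanishes). The $\sC$ case is handled by the same argument after exchanging the roles of the two horizons.
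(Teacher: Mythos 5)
Your proposal is correct and follows essentially the same route as the paper: the splitting $\Lambda_{\M}=\Lambda_{\sB}+\Lambda_{\sC}$ is read off from the definition \eqref{state} of $\omega_{\M}$ together with the symplectic splitting of Theorem \ref{algebra-horizon-isomorphism} (and Theorem \ref{state-homomorphism}), and the bidistribution property is obtained from continuity of $f\mapsto\psi^{f}_{\sB},\psi^{f}_{\sC}$ in the $\lambda$-norms followed by the Schwartz kernel theorem. The only difference is one of detail: where the paper simply appeals to the definition of $\lambda_{\sB},\lambda_{\sC}$ and the continuity of the Fourier--Plancherel transform to get separate continuity, you make the same step quantitative via the Cauchy--Schwarz bound extracted from \eqref{product-sB-integrable} and an energy/decay estimate for $\mathds{E}f$ on the horizons, which is a finer justification of the same argument rather than a different one.
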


\begin{proof}
The first statement follows trivially from the definition \eqref{state}, theorems \eqref{algebra-horizon-isomorphism} and \eqref{state-homomorphism} and the remarks at the end of section \ref{sec-Sch-dS-algebra}.

To prove the second statement, we have to prove that $\Lambda_{\sB}$ and $\Lambda_{\sC}$ are bidistributions in $\left(\cC^{\infty}_{0}\right)^{'}(\M\times\M)$. For this purpose, we note that
\[f\mapsto\Lambda_{i}(f,\cdot) \quad \textrm{and} \quad h\mapsto\Lambda_{i}(\cdot,h) \qquad ; \quad i=\sB,\sC \; ,\]
are continuous in the sense of distributions. This is true from the definition of $\lambda_{i}(\cdot,\cdot)$ and the fact that the Fourier-Plancherel transform is a continuous map. Thus, both $\Lambda_{i}(f,\cdot)$ and $\Lambda_{i}(\cdot,h)$ are in $\left(\cC^{\infty}_{0}\right)^{'}(\M)$. The Schwarz kernel theorem \citep{Hormander-I} shows that $\Lambda_{i}\in\left(\cC^{\infty}_{0}\right)^{'}(\M\times\M)$.
\end{proof}

Before we proceed to the proof that $\omega_{\M}$ is a Hadamard state, we have to clarify its interpretation. The fact that our state is not defined in regions $III$ and $IV$ of figure \ref{Sch-dS_conformal_diagram} makes $\omega_{\sB}$ very similar to the Unruh state defined in the Schwarzschild spacetime. Also the fact that $\omega_{\M}$ is Hadamard (see next section) on $\Hor_{b}^{0}$, but not on $\Hor_{b}^{\pm}$, as in the Schwarzschild case \citep{DappiaggiMorettiPinamonti09}, reinforces this similarity. But since neither is $\omega_{\M}$ defined in regions $III'$ and $IV'$, and neither is it Hadamard on $\Hor_{c}^{\pm}$, although it is Hadamard on $\Hor_{c}^{0}$, $\omega_{\sC}$ is not similar to the Unruh state in de Sitter spacetime. As shown in \citep{NarnhoferPeterThirring96}, the Unruh state in the de Sitter spacetime is the unique KMS state which can be extended to a Hadamard state in the whole spacetime. The Unruh state in Schwarzschild-de Sitter spacetime, if it existed, should be well defined and Hadamard in $\M \cup III' \cup IV'$. But such a state cannot exist, by the nonexistence theorems proved in \citep{KayWald91}. Therefore $\omega_{\M}$ cannot be interpreted as the Unruh state in Schwarzschild-de Sitter spacetime.

\subsection{The Hadamard Condition}

We must analyse the wave front set of the bidistribution individuated in Proposition \ref{prop_M=B+C} and show that it satisfies equation \eqref{Wfcond}. The proof will be given in two parts: the first part will be devoted to prove the Hadamard condition in the region $\sD$. Here we can repeat {\it verbatim} the first part of the proof given in \citep{DappiaggiMorettiPinamonti09}, where the authors showed that the Unruh state in Schwarzschild spacetime is a Hadamard state in the wedge region. Their proof could be almost entirely repeated from \citep{SahlmannVerch00}. We will thus present the statements and the main points of the proof. The second part of the proof consists of extending these results to the regions $II$ and $II'$ (see figure \ref{Sch-dS_conformal_diagram}). This part of the proof can be repeated almost {\it verbatim} from the second part of the proof given in \citep{DappiaggiMorettiPinamonti09}, when the authors proved that their state is a Hadamard state inside the black hole region. The main differences rely on the fact that here we can apply the Fourier-Plancherel transform directly to the functions in $S(\sB)$ and in $S(\sC)$, since they are square-integrable, a fact which does not hold in \citep{DappiaggiMorettiPinamonti09}, because the decay rates of the solutions in the Schwarzschild case are slower than in our case. Besides, we do not have to handle the solutions at infinity, only on the event horizons. Thus, our proof is technically simpler than the one given in \citep{DappiaggiMorettiPinamonti09}. As a last remark, we note that the proof of the Hadamard condition given there for the region inside the black hole is equally valid for the region outside the cosmological horizon (region $II'$), in our case.

{\it Part 1:}\ \ In this first part, we will prove the following
\begin{lem}\label{lemma-omega_D_Hadamard}
The wave front set of the two-point function $\Lambda_{\M}$ of the state $\omega_{\M}$, individuated in \eqref{2ptfcn_M=B+C}, restricted to a functional on $\sD\times\sD$, is given by
\begin{equation}
WF((\Lambda_{\M})_{\upharpoonright\sD\times\sD})=\left\{\left(x_{1},k_{1};x_{2},-k_{2}\right) | \left(x_{1},k_{1};x_{2},k_{2}\right)\in {\mathcal T}^{*}\left(\sD\times\sD\right) \diagdown \{0\} ; (x_{1},k_{1})\sim (x_{2},k_{2}) ; k_{1}\in \overline{V}_{+}\right\} \; .
\label{Wf-omega_D}
\end{equation}
thus the state ${\omega_{\M}}_{\upharpoonright\sD}$ is a Hadamard state.
\end{lem}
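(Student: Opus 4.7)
I plan to adapt the two-step microlocal strategy of \citep{SahlmannVerch00,DappiaggiMorettiPinamonti09}. By Proposition \ref{prop_M=B+C} the restriction of $\Lambda_\M$ to $\sD\times\sD$ splits as the sum $\Lambda_\sB+\Lambda_\sC$ of the bulk lifts of the two horizon two-point functions, so by the sub-additivity \eqref{WFsum} it suffices to control the wave front sets of the two summands separately and show that their union reproduces the cone $C^+$.

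First I would verify that $(\Lambda_\M)_{\upharpoonright \sD\times\sD}$ is a bisolution of the Klein--Gordon equation whose antisymmetric part equals $\tfrac{i}{2}\mathds{E}_\M$. The bisolution property is automatic from the construction, since the extension $f\mapsto \psi^f_\sB=(\mathds{E}_\M f)_{\upharpoonright \sB}$ is built out of the causal propagator; the commutator identity follows from $\Gamma$ in Theorem \ref{algebra-horizon-isomorphism} being a symplectomorphism and from $\sigma_\sB,\sigma_\sC$ being the antisymmetric parts of $\lambda_\sB,\lambda_\sC$. By the result recalled after Proposition \ref{one-particle_structure} (Sanders' theorem), this reduces the Hadamard property to the one-sided inclusion $WF((\Lambda_\M)_{\upharpoonright \sD\times\sD})\subseteq C^+$.

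Second, I would compute the wave front set of $\Lambda_\sB$ directly on $\sB\times\sB$. Using $\lim_{\epsilon\to 0^+}(x-i\epsilon)^{-2}=x^{-2}+i\pi\delta'(x)$ together with the Fourier--Plancherel rewriting \eqref{product-Fourier}, the kernel selects only the positive-$K$ sector in the variable $U_{b1}-U_{b2}$; hence
\[
WF(\Lambda_\sB)_{\upharpoonright \sB\times\sB}\subseteq\{(y_1,K\,dU_b,0;\,y_2,-K\,dU_b,0):y_1,y_2\text{ on a common null generator of }\sB,\ K>0\},
\]
and analogously for $\Lambda_\sC$ with $V_c$ in place of $U_b$ and the cosmological horizon generators.

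The main obstacle, and the step demanding the most care, is transferring this horizon-level information to the bulk $\sD$. For this I would exploit that, since $\sB\cup\sC$ is a Cauchy surface of $\M$, the bulk-to-boundary map $f\mapsto \psi^f_\sB$ is microlocally a Fourier integral operator whose canonical relation consists of pairs $(x,k;y,q)$ lying on a common bicharacteristic of $\Box_g$ with $y\in\sB$ and $x\in\sD$; composing with its adjoint and applying the H\"ormander--Duistermaat propagation of singularities theorem to $\Lambda_\sB$ (symmetrically to $\Lambda_\sC$), the propagated wave front set consists of exactly those pairs $(x_1,k_1;x_2,-k_2)\in\cT^\ast(\sD\times\sD)\setminus\{0\}$ with $k_1,k_2$ null covectors that are parallel transports of each other along a null geodesic crossing $\sB$ (resp.\ $\sC$) in the past, and with $k_1\in\overline{V}_+$ inherited from the positivity of $K$ (resp.\ $K'$). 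Since every past-inextendible null geodesic in $\sD$ meets $\sB\cup\sC$ exactly once, the union of the two propagated sets is contained in $C^+$, which together with the bisolution argument of the first step yields \eqref{Wf-omega_D} and thus the Hadamard property on $\sD$.
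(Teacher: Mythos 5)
Your proposal follows a genuinely different route from the paper's: the paper does \emph{not} propagate microlocal data from the horizons into $\sD$ at all for this lemma. Instead it exploits that $X=\partial_t$ is timelike on $\sD$ and that each tensor factor $\omega_{\sB}$, $\omega_{\sC}$ satisfies a KMS relation \eqref{Lambda-timeshift} with respect to the Killing flow; this yields the asymptotic pair correlation spectrum property (Proposition \ref{prop-ACS}), which is then substituted for passivity in Theorem 5.1 of \citep{SahlmannVerch00} to obtain the inclusion $WF((\Lambda_{\M})_{\upharpoonright\sD\times\sD})\subseteq C^{+}$, and equality follows from Remark 5.9 of \citep{SahlmannVerch01} together with the fact that the antisymmetric part is the commutator. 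Your reduction to a one-sided inclusion via the bisolution property and the antisymmetric part is consistent with that last step, so the frame of your argument is sound.

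However, the core of your argument --- transferring the horizon wave front set into the bulk --- contains genuine gaps. First, your claimed bound for $WF(\Lambda_{\sB})$ on $\sB\times\sB$ is incomplete: the kernel is $(U_{b1}-U_{b2}-i0)^{-2}$ tensored with $\delta_{\mathbb{S}^{2}}$, so its wave front set contains, besides the positive-frequency covectors along $dU_{b}$, arbitrary angular components at coincident points and a whole ``zero-frequency'' component with $k_{U_{b}}=0$ and opposite angular covectors at arbitrary pairs of points on the same generator; this is exactly the decomposition $A\cup B$ of Moretti's Lemma 4.4 quoted in \eqref{WF_T}, and showing that the $B$-part does not feed singularities into the bulk is part of the work, not a consequence of your stated bound. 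Second, the assertion that $f\mapsto\psi^{f}_{\sB}=(\mathds{E}f)_{\upharpoonright\sB}$ is ``microlocally an FIO'' whose composition with its adjoint can simply be fed into the propagation of singularities theorem is not justified: $\sB$ and $\sC$ are characteristic (null) hypersurfaces, so the conormal directions to the horizon --- precisely the directions along the generators that carry the singularities of $\lambda_{\sB}$ --- are the ones where the standard restriction/trace calculus and the wave-front-set composition criteria fail or require case-by-case control; moreover the horizons are non-compact, so one needs the Dafermos--Rodnianski decay estimates \eqref{Sch-dS_decay} even to control $\psi^{f}_{\sB}$ at large $|U_{b}|$, and the composition of the horizon bidistribution with $\mathds{E}_{\upharpoonright\sB}\otimes\mathds{E}_{\upharpoonright\sB}$ must be legitimized via the hypotheses of H\"ormander's Theorem 8.2.13, as the paper does in \eqref{WF_comp_cond}--\eqref{WF_chiTchiEotimesE} when it \emph{does} use this machinery (but only for the regions $II$, $II'$, with cutoffs $\chi,\chi'$ and Proposition \ref{prop_rapid-decrease}). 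As written, your transfer step is an assertion of exactly the technically hard content, and for the region $\sD$ it is also unnecessary, since the Sahlmann--Verch-type argument is available there.
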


\begin{proof}
In \citep{SahlmannVerch00} the authors proved that, given a state $\omega$, if it can be written as a convex combination\footnote{See the definition of states after definition \ref{def_Banach-star}} of ground and KMS states at an inverse temperature $\beta>0$ (those authors named such state a {\it strictly passive state}), then its two-point function satisfies the microlocal spectrum condition, thus being a Hadamard state. Unfortunately, our state $\omega_{\M}$ is not such a state, then we cannot directly apply this result. Nonetheless, as remarked in \citep{DappiaggiMorettiPinamonti09}, the passivity of the state is not an essential condition of the proof. Hence we will present here the necessary material to complete the proof, along the lines of the above cited papers, that our state $\omega_{\M}$ is a Hadamard state in the region $\sD$.

First we note that, for every $f\in\cC_{0}^{\infty}\left(\mathbb{R};\mathbb{R}\right)$ and $h_{1},h_{2}\in\cC_{0}^{\infty}\left(\sD;\mathbb{R}\right)$, $\Lambda_{\sB}$ and $\Lambda_{\sC}$ satisfy
\begin{equation}
\int_{\mathbb{R}}\hat{f}(t)\Lambda_{\sB}(h_{1}\otimes\beta_{t}^{(X)}(h_{2}))dt=\int_{\mathbb{R}}\hat{f}(t+i\beta_{b})\Lambda_{\sB}(\beta_{t}^{(X)}(h_{2})\otimes h_{1})dt
\label{Lambda-timeshift}
\end{equation}
(for $\sC$, just change $\beta_{b}\rightarrow\beta_{c}$). For these states, we can define a subset of $\mathbb{R}^{2}\diagdown\{0\}$, the {\it global asymptotic pair correlation spectrum}, in the following way: we call a family $(A_{\lambda})_{\lambda>0}$ with $A_{\lambda}\in W(S(\sD))$ a {\it global testing family} in $W(S(\sD))$ provided there is, for each continuous semi-norm $\sigma$, an $s\geq 0$ (depending on $\sigma$ and on the family) such that
\[\underset{\lambda}{\textrm{sup}}\, \lambda^{s}\sigma(A_{\lambda}^{\ast}A_{\lambda})<\infty \; .\]
The set of global testing families will be denoted by {\bf A}.

Let $\omega$ be a state on $W(S(\sD))$ and ${\bf \xi}=(\xi_{1},\xi_{2})\in\mathbb{R}^{2}\diagdown\{0\}$. Then we say that ${\bf \xi}$ is a {\it regular direction} for $\omega$, with respect to the continuous one-parametric group of $\ast$-automorphisms $\{\alpha_{t}\}_{t\in\mathbb{R}}$ induced by the action of the Killing vector field\footnote{We remind the reader that, in the region $\sD$, $X=\partial_{t}$.} $X$, if there exists some $h\in\cC_{0}^{\infty}(\mathbb{R}^{2})$ and an open neighborhood $V$ of ${\bf \xi}$ in $\mathbb{R}^{2}\diagdown\{0\}$ such that, for each $s\in\mathbb{N}_{+}$, there are $C_{s},\lambda_{s}>0$ so that
\[\underset{{\bf k}\in V}{\textrm{sup}}\left|\int e^{-i\lambda^{-1}(k_{1}t_{1}+k_{2}t_{2})}h(t_{1},t_{2})\omega\left(\alpha_{t_{1}}(A_{\lambda})\alpha_{t_{2}}(B_{\lambda})\right)dt_{1}dt_{2}\right|<C_{s}\lambda^{s} \quad \textrm{as }\lambda\rightarrow 0\]
holds for all $(A_{\lambda})_{\lambda>0},(B_{\lambda})_{\lambda>0}\in {\bf A}$, and for $0<\lambda<\lambda_{s}$.

The complement in $\mathbb{R}^{2}\diagdown\{0\}$ of the set of regular directions of $\omega$ is called the {\it global asymptotic pair correlation spectrum} of $\omega$, $ACS_{\bf A}^{2}(\omega)$.

As noted in \citep{DappiaggiMorettiPinamonti09}, the fact that the two-point functions $\Lambda_{\sB}$ and $\Lambda_{\sC}$ satisfy \eqref{Lambda-timeshift}, suffices to prove
\begin{prop}\label{prop-ACS}
Let $\omega$ be an $\{\alpha_{t}\}_{t\in\mathbb{R}}$-invariant KMS state at inverse temperature $\beta >0$. Then,
\begin{align}
& either \quad ACS_{\bf A}^{2}(\omega)=\emptyset \; , \nonumber \\
& or \quad ACS_{\bf A}^{2}(\omega)=\left\{(\xi_{1},\xi_{2})\in{\mathcal T}^{*}\left(\sD\times\sD\right)\diagdown\{0\}\, |\, \xi_{1}(X)+\xi_{2}(X)=0\right\} \; .
\label{ACS}
\end{align}
\end{prop}
The proof of this Proposition can be found in the proof of item (2) of Proposition 2.1 in \citep{SahlmannVerch00}.

With this result, we can turn our attention to Theorem 5.1 in \citep{SahlmannVerch00}, where they prove that the wave front set of the two-point function of a strictly passive state which satisfies weakly the equations of motion\footnote{We say that a functional $F$ is a weak solution of a differential operator $P$ if, for $phi$ such that $P\phi=0$, $PF[\phi]=F[P\phi]=0$.}, in both variables, and whose symmetric and antisymmetric parts are smooth at causal separation, is contained in the rhs of \eqref{Wf-omega_D}. As further noted in \citep{DappiaggiMorettiPinamonti09}, the passivity of the state is only employed in the proof of step (2) of the mentioned Theorem. However, what is actually needed for this proof is the result of Proposition \ref{prop-ACS}. Moreover, as proved in step (3) of the mentioned Theorem, the antisymmetric part of the two-point function of the state is smooth at causal separation if and only if the symmetric part is also smooth at causal separation. The antisymmetric part of the two-point function of our state, by definition, satisfies this condition. Besides, the two-point function of our state $\omega_{\M}$ satisfies weakly the equations of motion in both variables. Therefore, with the only modification being the substitution of the passivity of the state by the result of Proposition \ref{prop-ACS}, we have proved, as the authors of \citep{DappiaggiMorettiPinamonti09} did, an adapted version of Theorem 5.1 of \citep{SahlmannVerch00}. At last, as stated in item (ii) of Remark 5.9 in \citep{SahlmannVerch01}, the wave front set of the two-point function of a state being contained in the rhs of \eqref{Wf-omega_D} implies that the wave front set is equal to this set. 
\end{proof}

{\it Part 2:}\ \ Our analisys here will be strongly based on the Propagation of Singularities Theorem (Theorem 6.1.1 in \citep{DuistermaatHormander72}), which we already mentioned in section \ref{subsec_wave-eq}. Let us recall the concept of {\it characteristics} and {\it bicharacteristics}, which will be used a lot in the following discussion: we are going to analyse some features of the solutions of a normally hyperbolic differential operator. Hence the characteristics are the bundle of null cones $\mathcal{N}_{g}\subset{\mathcal T}^{\ast}\M\diagdown\{0\}$ defined by
\[\mathcal{N}_{g}\coloneqq \left\{(x,k_{x})\in{\mathcal T}^{\ast}\M\diagdown\{0\}\, |\, g^{\mu\nu}(x)(k_{x})_{\mu}(k_{x})_{\nu}=0\right\} \; .\]
The {\it bicharacteristic strip} generated by $(x,k_{x})\in\mathcal{N}_{g}$ is given by
\[B(x,k_{x})\coloneqq \left\{(x',k_{x'})\in\mathcal{N}_{g}\, |\, (x',k_{x'})\sim (x,k_{x})\right\} \; .\]

The PST, applied to the weak bisolution $\Lambda_{\M}$ implies, on the one hand, that
\begin{equation}
WF(\Lambda_{\M})\subset\left(\{0\}\cup\mathcal{N}_{g}\right)\times\left(\{0\}\cup\mathcal{N}_{g}\right) \; ,
\label{WF_0+null}
\end{equation}
while, on the other hand,
\begin{equation}
\textrm{if }(x,y;k_{x},k_{y})\in WF(\Lambda_{\M}) \; , \; \textrm{then} \; B(x,k_{x})\times B(y,k_{y})\subset WF(\Lambda_{\M}) \; .
\label{bicharac_WF}
\end{equation}

We will now quote from \citep{DappiaggiMorettiPinamonti09} a couple of technical results which will be useful in the final proof.

The first proposition characterizes the decay properties, with respect to $p\in{\mathcal T}^{\ast}\M$, of the distributional Fourier transforms:
\[\psi^{f_{p}}_{\sB} \coloneqq \mathds{E}\left(fe^{i\langle p,\cdot \rangle}\right)_{\upharpoonright\sB} \quad ; \quad \psi^{f_{p}}_{\sC} \coloneqq \mathds{E}\left(fe^{i\langle p,\cdot \rangle}\right)_{\upharpoonright\sC} \; ,\]
where we have used the complexified version of the causal propagator, which enjoys the same causal and topological properties as those of the real one. Henceforth $\langle \cdot , \cdot \rangle$ denotes the scalar product in $\mathbb{R}^{4}$ and $|\cdot|$ the corresponding norm.

\begin{prop}\label{prop_rapid-decrease}
Let us take $(x,k_{x})\in\mathcal{N}_{g}$ such that (i) $x\in II$ (or $II'$) and (ii) the unique inextensible geodesic $\gamma$ cotangent to $k_{x}$ at $x$ intersects $\sB$ ($\sC$) in a point whose $U_{b}$ ($V_{c}$) coordinate is non-negative. Let us also fix $\chi'\in\cC_{0}^{\infty}(\sB;\mathbb{R})$ ($\chi'\in\cC_{0}^{\infty}(\sC;\mathbb{R})$) with $\chi'=1$ if $U_{b}\in\left(-\infty,U_{b_{0}}\right]$ and $\chi'=0$ if $U_{b}\in\left[U_{b_{1}},+\infty\right)$ ($\chi'=1$ if $V_{c}\in\left(-\infty,V_{c_{0}}\right]$ and $\chi'=0$ if $V_{c}\in\left[V_{c_{1}},+\infty\right)$) for constants $U_{b_{0}}<U_{b_{1}}<0$ ($V_{c_{0}}<V_{c_{1}}<0$).

For any $f\in\cC_{0}^{\infty}(\M)$ with $f(x)=1$ and sufficiently small support, $k_{x}$ is a direction of rapid decrease for both $p\mapsto\lVert\chi'\psi^{f_{p}}_{\sB}\rVert_{\sB}$ and $p\mapsto\lVert\psi^{f_{p}}_{\sC}\rVert_{\sC}$ ($p\mapsto\lVert\psi^{f_{p}}_{\sB}\rVert_{\sB}$ and $p\mapsto\lVert\chi'\psi^{f_{p}}_{\sC}\rVert_{\sC}$), where $\rVert\cdot\lVert_{\sB}$ is the norm induced by $\lambda_{\sB}$ (and similarly for $\sC$; see equations \eqref{Fourier_sB-H}-\eqref{linear_map}).
\end{prop}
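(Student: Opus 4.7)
The proof is a microlocal argument that reduces the claimed rapid decrease in $p$ to the absence of one specific covector from the wave front set of a bidistribution on $\M\times\M$. Writing $f_{p}\coloneqq f e^{i\langle p,\cdot\rangle}$, the reality of $f$ gives
\begin{equation*}
\lVert\chi'\psi^{f_{p}}_{\sB}\rVert_{\sB}^{2}\;=\;\int_{\M\times\M} G_{\sB}(x_{1},x_{2})\,f(x_{1})\,f(x_{2})\,e^{i\langle p,\,x_{2}-x_{1}\rangle}\,dx_{1}\,dx_{2}\; ,
\end{equation*}
where the bidistribution $G_{\sB}\in\left(\cC^{\infty}_{0}\right)^{'}(\M\times\M)$ is given by
\begin{equation*}
G_{\sB}(x_{1},x_{2})\;\coloneqq\;\lambda_{\sB}\!\left(\chi'\,\mathds{E}(\cdot,x_{1})_{\upharpoonright\sB}\,,\;\chi'\,\mathds{E}(\cdot,x_{2})_{\upharpoonright\sB}\right)\; ,
\end{equation*}
and an analogous $G_{\sC}$ controls $\lVert\psi^{f_{p}}_{\sC}\rVert_{\sC}^{2}$ (with the cutoff replaced by $1$). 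Since $f(x)=1$ and the support of $f\otimes f$ may be shrunk around $(x,x)$, Definition \ref{smooth-wf} of the wave front set reduces the claim to
\[(x,x;k_{x},-k_{x})\;\notin\;WF(G_{\sB})\quad\text{and}\quad(x,x;k_{x},-k_{x})\;\notin\;WF(G_{\sC})\; .\]

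The next step is to describe $WF(G_{\sB})$ through the calculus of wave front sets under composition. Two ingredients are needed: first, the standard fact that the wave front set of the causal propagator of a normally hyperbolic operator on a globally hyperbolic spacetime is contained in the set of pairs $(y,x;\eta,\xi)$ such that $(y,\eta)$ and $(x,-\xi)$ are related by parallel transport along a null geodesic; second, the wave front set of the distributional kernel of $\lambda_{\sB}$, which, owing to the factor $(U_{b,1}-U_{b,2}-i0)^{-2}$ on top of $\delta_{\mathbb{S}^{2}}$, is concentrated on covectors along the diagonal of $\sB\times\sB$ with strictly positive $U_{b}$-frequency. Composing these with multiplication by $\chi'$ on either side, one sees that a covector $(x,\xi)\in{\mathcal T}^{*}\M\diagdown\{0\}$ can generate a contribution $(x,x;\xi,-\xi)\in WF(G_{\sB})$ only if the null geodesic through $(x,\xi)$ actually intersects $\operatorname{supp}\chi'\subset\{U_{b}\le U_{b,1}<0\}\subset\sB$. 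The analogous statement holds for $G_{\sC}$, with the further requirement that the geodesic reach $\sC$ at all.

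The geometric hypothesis is then used decisively. Since the null geodesic $\gamma$ issued from $(x,k_{x})$ meets $\sB$ at a point with $U_{b}\ge 0$, while $\operatorname{supp}\chi'\subset\{U_{b}\le U_{b,1}<0\}$, the smooth dependence of the geodesic flow on its initial data, together with the openness of the condition $U_{b}>U_{b,1}$, furnishes an open neighborhood $\mathcal{U}$ of $x$ in $\M$ and an open conic neighborhood $V$ of $k_{x}$ in $\mathcal{T}_{x}^{\ast}\M$ such that every null geodesic starting from $(y,\eta)\in\mathcal{U}\times V$ either misses $\sB$ altogether or meets $\sB$ only at points with $U_{b}>U_{b,1}$. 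Choosing $f$ with $\operatorname{supp}f\subset\mathcal{U}$ and $f(x)=1$ excludes the troublesome covectors from $WF(G_{\sB})$, proving the rapid decrease for $\chi'\psi^{f_{p}}_{\sB}$. For $G_{\sC}$ the conclusion is even easier, because no past-directed null geodesic issued from a point of $II$ can reach $\sC$, so the relevant bidistribution is smooth at $(x,x)$ in every direction and rapid decrease along $k_{x}$ is automatic. The cases $x\in II'$ are obtained by the same argument with the roles of $\sB$ and $\sC$ interchanged.

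\textbf{Main obstacle.} The delicate step will be the precise bookkeeping in the composition of $WF(\chi'\,\mathds{E})$ with the kernel of $\lambda_{\sB}$: the one-sided frequency selection built into the latter (positive $U_{b}$-frequencies only) must be combined with the two-sided character of $WF(\mathds{E})$ in a way that rules out any spurious ``reflected'' contribution to the diagonal singularity at $(x,x;k_{x},-k_{x})$. Separate care is required near the bifurcation sphere $\cB_{b}$, where both $U_{b}$ and $V_{b}$ vanish and the null generator of $\sB$ degenerates; this will be handled by a direct adaptation of the analogous point in the Schwarzschild analysis of Dappiaggi, Moretti and Pinamonti.
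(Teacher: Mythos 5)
Your reduction of the claim to the absence of $(x,x;k_{x},-k_{x})$ from $WF(G_{\sB})$, $WF(G_{\sC})$ misses the part of the problem that the paper's proof is actually about. The $\lambda$-norms are global along the horizons, and the horizons are noncompact: your $\chi'$ equals $1$ on the entire half-line $U_{b}\in(-\infty,U_{b_{0}}]$, and for $\lVert\psi^{f_{p}}_{\sC}\rVert_{\sC}$ there is no cutoff at all. Wave front set calculus is purely local, so even if the composition bookkeeping went through it would only control the singular behaviour of $G_{\sB}$, $G_{\sC}$ near points of $\M\times\M$; it says nothing about whether the integrals over the asymptotic ends of $\sB$ and $\sC$ converge uniformly in $p$ and decay rapidly as $p\to\infty$ along $k_{x}$. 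That uniformity is precisely what the paper's proof supplies: following \citep{DappiaggiMorettiPinamonti09}, it tracks the $p$-dependence of the constant $C_{\phi}$ appearing in the decay bounds \eqref{S_sB}, \eqref{S_sC}, which by \citep{DafermosRodnianski07} is controlled by the energy ${\bf E}_{0}$ of the Cauchy data of $\mathds{E}f_{p}$ on $\Sigma\cap J^{-}(\sD)$; inserting a cutoff $\rho$ on the Cauchy surface, equal to $1$ on $J^{-}(\mathrm{supp}f)\cap\Sigma$ but missed by the null geodesics emanating from $\mathrm{supp}f$ with covectors near $k_{x}$, makes this energy rapidly decreasing in $p$, and the decay estimate then converts that into rapid decrease of the horizon norms uniformly out to $U_{b}\to-\infty$ (resp.\ $V_{c}\to-\infty$). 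Nothing in your proposal plays this role. In addition, your composition step needs the wave front set of the kernel of $f\mapsto(\mathds{E}f)_{\upharpoonright\sB}$; since $\sB$ is a null hypersurface its conormal directions lie in $WF(\mathds{E})$, so H\"ormander's pullback theorem does not apply and this set would have to be computed by hand — and note that the paper's later composition arguments (Case B) use the present Proposition as quantitative input, so proving it by that same machinery risks circularity.

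There is also a concretely false step: the assertion that no past-directed null geodesic from a point of $II$ can reach $\sC$. Past-directed null geodesics from $x\in II$ that cross $\Hor_{b}^{0}$ enter $\sD$ and can terminate on $\Hor_{c}^{-}\subset\sC$; only the particular geodesic determined by $(x,k_{x})$, which by hypothesis meets $\sB$ at $U_{b}\geq 0$, avoids $\sC$. Consequently $G_{\sC}$ is not smooth in a neighbourhood of $(x,x)$, the ``automatic'' rapid decrease of $\lVert\psi^{f_{p}}_{\sC}\rVert_{\sC}$ does not follow, and this case requires exactly the same combination of a direction-selective microlocal statement with the uniform horizon decay as the $\sB$ case — i.e.\ the energy-estimate argument the paper uses.
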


\begin{proof}
The proof here is an adapted version of the proof of Proposition 4.4 of \citep{DappiaggiMorettiPinamonti09}. It consists in analysing the behavior of the constant $C_{\phi}$ appearing in \eqref{S_sB} and \eqref{S_sC} for large values of $p\in V_{k}$ ($k$ a direction of rapid decrease). The constant $C_{\phi}$ is given in \citep{DafermosRodnianski07} as a constant dependent on the geometry of the spacetime multiplied by the square root of
\begin{equation}
{\bf E}_{0}(\phi_{l},\dot{\phi}_{l})=\lVert \nabla\phi_{l} \rVert^{2}+\lVert \dot{\phi}_{l} \rVert^{2} \; ,
\end{equation}
where $\lVert \cdot \rVert$ is the Riemannian $L^{2}$ norm on $\Sigma\cap J^{-}(\sD)$ (see Figure \ref{Sch-dS_conformal_diagram}).

Now, we can choose the support of $f$ so small that every inextensible geodesic starting from $supp(f)$, with cotangent vector equal to $k_{x}$, intersects $\sB$ in a point with coordinate $U_{b}>0$ (similarly for $\sC$). Hence, we can fix $\rho\in\cC_{0}^{\infty}(\sK;\mathbb{R})$ such that (i) $\rho=1$ on $J^{-}(supp(f);\M)\cap\Sigma$ and (ii) the null geodesics emanating from $supp(f)$ with $k_{x}$ as cotangent vector do not meet the support of $\rho$. Henceforth we can proceed exactly as in the proof given in \citep{DappiaggiMorettiPinamonti09} using the properties mentioned in this paragraph, together with the compactness of the support of $f$, to coclude the proof of this proposition. We only remark that, differently from the Schwarzschild case, our Cauchy surface $\Sigma$ does not intercept the bifurcation surfaces $\cB_{b}$ and $\cB_{c}$, hence the reasoning depicted here is valid both for $x$ in $II$ ($II'$) and for $x$ on $\Hor_{b}^{0}$ ($\Hor_{c}^{0}$).
\end{proof}

The second technical result is the following Lemma, which states that 
\begin{lem}\label{lem_isolated_sing}
Isolated singularities do not enter the wave front set of $\Lambda_{\M}$, i.e.
\[(x,y;k_{x},0)\notin WF(\Lambda_{\M}) \quad ; \quad (x,y;0,k_{y})\notin WF(\Lambda_{\M})\]
\[\textrm{if }x,y,\in\M \quad ; \quad k_{x}\in{\mathcal T}^{\ast}_{x}\M \, , \, k_{y}\in{\mathcal T}^{\ast}_{y}\M \; .\]
Hence, \eqref{WF_0+null} yields
\begin{equation}
WF(\Lambda_{\M})\subset\mathcal{N}_{g}\times\mathcal{N}_{g} \; .
\label{WF_null}
\end{equation}
\end{lem}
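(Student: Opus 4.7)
The plan is to exploit the decomposition $\Lambda_{\M}=\Lambda_{\sB}+\Lambda_{\sC}$ of Proposition \ref{prop_M=B+C}, prove that neither $WF(\Lambda_{\sB})$ nor $WF(\Lambda_{\sC})$ contains isolated singularities, and conclude via the inclusion \eqref{WFsum}. The first step would be to realize each of $\Lambda_{\sB}$ and $\Lambda_{\sC}$ as a composition of three bidistributional kernels: formally,
\[
\Lambda_{\sB}(x,y)=\int_{\sB\times\sB}\mathds{E}(x,b_{1})\,\lambda_{\sB}(b_{1},b_{2})\,\mathds{E}(b_{2},y)\,db_{1}\,db_{2},
\]
i.e.\ the boundary kernel $\lambda_{\sB}$ of \eqref{product-U_b} flanked on each side by the bulk-to-boundary propagator $\mathds{E}$, and analogously for $\Lambda_{\sC}$ with $\lambda_{\sC}$. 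This reinterpretation of the construction makes H\"ormander's wave front calculus directly applicable.

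I would then compute the wave front set of the middle factor from its explicit product structure
\[
\lambda_{\sB}\big((U_{1},\omega_{1}),(U_{2},\omega_{2})\big)=-\frac{r_{b}^{2}}{\pi}\,\frac{\delta_{\mathbb{S}^{2}}(\omega_{1}-\omega_{2})}{(U_{1}-U_{2}-i0)^{2}},
\]
combining the wave front sets of the angular Dirac delta (supported on the angular diagonal, with conormal covectors $(\eta,-\eta)$, $\eta\neq 0$) and of the one-dimensional boundary-value distribution $(U-i0)^{-2}$ (supported at $U_{1}=U_{2}$, with covectors $(\kappa,-\kappa)$, $\kappa>0$). The crucial qualitative feature to isolate is that at every point of $WF(\lambda_{\sB})$ the covectors over $b_{1}$ and over $b_{2}$ are both nonzero. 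For the outer factors I would invoke the standard description of $WF(\mathds{E})$: namely, $(x,x';k,-k')\in WF(\mathds{E})$ requires $(x,k)\sim(x',k')$ along a null geodesic with parallel transport, which forces $k$ and $k'$ to vanish simultaneously or not at all.

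With these three inputs in hand, the absence of isolated singularities would follow from a short matching argument. Suppose $(x,y;k_{x},0)\in WF(\Lambda_{\sB})$ with $k_{x}\neq 0$. The composition rule for wave front sets furnishes $b_{1},b_{2}\in\sB$ and boundary covectors $\xi_{1},\xi_{2}$ such that $(x,b_{1};k_{x},\xi_{1}),(b_{2},y;\xi_{2},0)\in WF(\mathds{E})$ and $(b_{1},b_{2};\xi_{1},-\xi_{2})\in WF(\lambda_{\sB})$. The second $\mathds{E}$-condition and parallel transport force $\xi_{2}=0$; the two-nonzero-covectors property of $WF(\lambda_{\sB})$ then forces $\xi_{1}=0$; and the first $\mathds{E}$-condition finally forces $k_{x}=0$, contradicting the hypothesis. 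The companion case $(x,y;0,k_{y})\notin WF(\Lambda_{\sB})$ and the parallel argument for $\Lambda_{\sC}$ are entirely analogous.

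The main obstacle I foresee is the rigorous justification that the composition $\mathds{E}\circ\lambda_{\sB}\circ\mathds{E}$ is a well-defined bidistribution on $\M\times\M$ in H\"ormander's sense, i.e.\ that no forbidden pairings of opposite nonzero covectors occur along the boundary integrations over $\sB$. This is where the rapid-decrease input of Proposition \ref{prop_rapid-decrease} becomes essential: it ensures that the restrictions to the horizon decay rapidly in the problematic cotangent directions, so that the pullback from $\M\times\M$ to $\sB\times\sB$ and the subsequent pairing with $\lambda_{\sB}$ are controlled, turning the matching argument above into a proper proof.
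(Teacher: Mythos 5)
Your route is genuinely different from the paper's, and it has a gap at its central step: the application of H\"ormander's composition calculus to $\Lambda_{\sB}=\mathds{E}_{\upharpoonright\sB}\circ\lambda_{\sB}\circ\mathds{E}_{\upharpoonright\sB}$ cannot be made in the clean form your matching argument needs. First, the composition theorem (Thm.\ 8.2.14 of H\"ormander) requires properness/compact-support hypotheses that fail here: the horizon $\sB$ is noncompact and neither $\mathds{E}_{\upharpoonright\sB}$ nor $\lambda_{\sB}$ is properly supported; what controls the integration at large $|U_{b}|$ is the Dafermos--Rodnianski decay built into $S(\sB)$ and the finiteness of the $\lambda$-norms, i.e.\ norm estimates, not microlocal hypotheses. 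Proposition \ref{prop_rapid-decrease} cannot play that role: it gives rapid decrease in the covector of specific cut-off boundary norms $\lVert\chi'\psi^{f_{p}}_{\sB}\rVert_{\sB}$ for null directions whose geodesic meets $\sB$ at non-negative $U_{b}$ --- an input for a Cauchy--Schwarz estimate, not a justification for composing kernels over a noncompact null boundary. Second, even where the composition bound applies, its right-hand side contains, besides the chain term you use, extra sets of the form $WF_{X}(\cdot)\times\{0\}$ and $\{0\}\times WF'_{Y}(\cdot)$; these are exactly covector pairs with one vanishing entry, i.e.\ the isolated singularities you are trying to exclude. Your matching argument silently discards them, and ruling them out amounts to proving the absence of isolated singularities for the bulk-to-horizon restriction of $\mathds{E}$ --- a statement of the same nature as the Lemma, and delicate because the conormal directions of the null surface $\sB$ lie in the characteristic set. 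So the argument is circular/incomplete precisely where the work has to be done.

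For comparison, the paper never composes kernels to prove this Lemma (it does so only later, in Case B, and there only after inserting compactly supported cutoffs $\chi T\chi$ and quoting the explicitly computed $WF(T)$). It first uses that the antisymmetric part of $\Lambda_{\M}$ is $\mathds{E}$, whose wave front set contains no zero covectors, to reduce $(x,y;0,k_{y})$ to $(x,y;k_{x},0)$; then it argues by cases: the Hadamard form on $\sD\times\sD$ (Lemma \ref{lemma-omega_D_Hadamard}) together with the propagation of singularities when $B(x,k_{x})$ meets ${\mathcal T}^{\ast}\sD$; a partition of unity on $\sB$ combined with continuity in the $\lambda$-norms, Proposition \ref{prop_rapid-decrease} and the smoothness of $\chi T\chi''$ for disjoint supports, as in \eqref{Lambda-M_part_unity}, when $x\in II$ and the bicharacteristic misses $\sD$; and, for the zero covector sitting over the point outside $\sD$, an invariance argument under the Killing flow based on Proposition 2.1 of Verch, swapping the $t$-components of the covectors --- a step with no counterpart in your plan. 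To salvage your approach you would need cutoffs making the compositions proper, a computation of the wave front set of $\mathds{E}_{\upharpoonright\sB}$, and norm estimates for the non-compactly-supported remainders; at that point you essentially reproduce the paper's proof.
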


\begin{proof}
We start by noting that the antisymmetric part of $\Lambda_{\M}$ is the advanced-minus-retarded operator $\mathds{E}$ and that the wave front set of $\mathds{E}$ contains no null covectors. Hence, $(x,y;k_{x},0)\in WF(\Lambda_{\M}) \Leftrightarrow (y,x;0,k_{x})\in WF(\Lambda_{\M})$, otherwise $WF(\mathds{E})$ would contain a null covector. Thus it suffices to analyse $(x,y;k_{x},0)\in{\mathcal T}^{\ast}\left(\M\times\M\right)\diagdown\{0\}$ and to show that it does not lie in $WF(\Lambda_{\M})$. Besides, from the proof of {\it Part 1} above, if $(x,y)\in\sD\times\sD\Rightarrow(x,y;k_{x},0)\notin WF(\Lambda_{\M})$. From the Propagation of Singularities Theorem, if there exists $(q,k_{q})\in B(x,k_{x})$ such that $q\in\sD$ $(x\notin\sD)$, then again $(x,y;k_{x},0)\notin WF(\Lambda_{\M})$.

For the case $x\in II$, $y\in\sD$ with $B(x,k_{x})\cap{\mathcal T}^{\ast}(\sD)\diagdown 0=\emptyset$, there must exist $q\in \Hor_{b}^{+}\cup\cB_{b}$ such that $(q,k_{q})\in B(x,k_{x})$. Besides, we can introduce a partition of unit with $\chi,\chi'\in\cC_{0}^{\infty}(\sB;\mathbb{R})$, $\chi+\chi'=1$ such that $\chi=1$ in a neighborhood of $q$. Hence, with the same definitions as in the Proposition above,
\begin{equation}
\Lambda_{\M}(f_{k_{x}},h)=\lambda_{\sB}(\chi\phi^{f_{k_{x}}}_{\sB},\phi^{h}_{\sB})+\lambda_{\sB}(\chi'\phi^{f_{k_{x}}}_{\sB},\phi^{h}_{\sB})+\lambda_{\sC}(\phi^{f_{k_{x}}}_{\sC},\phi^{h}_{\sC}) \; .
\label{Lambda-M_part_unity}
\end{equation}
Since all the terms in equation \eqref{Lambda-M_part_unity} are continuous with respect to the corresponding $\lambda$-norms, the second and third terms in \eqref{Lambda-M_part_unity} are dominated by $C\lVert \chi' \psi^{f_{k_{x}}}_{\sB} \rVert_{\sB} \lVert \psi^{h}_{\sB} \rVert_{\sB}$ and $C'\lVert \psi^{f_{k_{x}}}_{\sC} \rVert_{\sC} \lVert \psi^{h}_{\sC} \rVert_{\sC}$, respectively, where $C$ and $C'$ are positive constants. From Proposition \ref{prop_rapid-decrease}, we know that $\lVert \chi' \psi^{f_{k_{x}}}_{\sB} \rVert_{\sB}$ and $\lVert \psi^{f_{k_{x}}}_{\sC} \rVert_{\sC}$ are rapid decreasing terms in $k_{x}\in\mathcal{T}^{\ast}(\M)\diagdown\{0\}$ for any $f$ with sufficiently small support and for $k_{x}$ in an open conical neighborhood of any null direction. By a similar argument as the one presented in the proof of that Proposition, one can conclude that $\lVert \psi^{h}_{\sB} \rVert_{\sB}$ and $\lVert \psi^{h}_{\sC} \rVert_{\sC}$ are bounded. Hence, we need only focus our attention on the first term, $\lambda_{\sB}(\chi \psi^{f_{k_{x}}}_{\sB},\psi^{h}_{\sB})$.

Choosing again $f$ and $h$ with sufficiently small, compact support, we can choose $\chi''\in\cC_{0}^{\infty}(\sB;\mathbb{R})$ such that both $\chi''(p)=1$ for every $p\in supp(\psi^{h}_{\sB})$ and $supp(\chi)\cap supp(\chi'')=\emptyset$. We can write the $\lambda$-product as
\begin{equation}
\lambda_{\sB}(\psi^{f_{k_{x}}}_{\sB},\psi^{h}_{\sB})=\int_{\sB\times\sB}\chi(x')(\mathds{E}(f_{k_{x}}))(x')T(x',y')\chi''(y')\psi^{h}_{\sB}dU_{x'}d\mathbb{S}^{2}(\theta_{x'},\varphi_{x'})dU_{y'}d\mathbb{S}^{2}(\theta_{y'},\varphi_{y'}) \; .
\label{prod-chi_E_T_chi''}
\end{equation}
$\psi^{f_{k_{x}}}_{\sB}$ was written as $(\mathds{E}(f_{k_{x}}))(x')$ and $T(x',y')$ is the integral kernel of $\lambda_{\sB}$, viewed as a distribution in $(\cC_{0}^{\infty})'(\sB\times\sB)$. The integral kernel $\chi T \chi''(x',y')$, with one entry $x'$ restricted to the support of $\chi$, and the other $y'$, restricted to the support of $\chi''$, is always smooth. Besides, if one keeps $x'$ fixed, this kernel is dominated by a smooth function whose $H^{1}$-norm in $y'$ is, uniformly in $x'$, finite\footnote{The $H^{1}$-norm of a function $f$ is defined as \[\lVert f \rVert_{H^{1}(\Omega)}=\left(\sum_{|\alpha|\leq 1} \lVert D^{\alpha}f \rVert_{L^{2}(\Omega)}^{2}\right)^{1/2} \; ,\] where $\Omega$ is an open measurable space and $\alpha$ is a multi-index.}. Hence the $H^{1}(\sB)_{U_{b}}$-norm $\lVert (T\chi'')\circ\chi\mathds{E}f_{k_{x}}  \rVert_{H^{1}(\sB)_{U_{b}}}$ is dominated by the product of two integrals, one over $x'$ and one over $y'$. Since $\chi$ is a compactly supported function, the integral kernel of $\chi T \chi''$ is rapidly decreasing in $k_{x}$. Furthermore, as stated above, $\lVert \psi^{h}_{\sB} \rVert_{\sB}$ is bounded. Putting all this together, we have
\begin{equation}
\lvert \lambda_{\sB}(\psi^{f_{k_{x}}}_{\sB},\psi^{h}_{\sB}) \rvert \leq C''\lVert (T\chi'')\circ\chi\mathds{E}f_{k_{x}}  \rVert_{H^{1}(\sB)_{U_{b}}}\lVert \psi^{h}_{\sB} \rVert_{\sB} \; .
\end{equation}
The fast decrease of the first norm, together with the boundedness of the second norm, imply that $(k_{x},0)$ is a direction of fast decrease of $\lambda_{\sB}(\psi^{f_{k_{x}}}_{\sB},\psi^{h}_{\sB})$.

Now, let us look at the case $x\in\sD$, $y\in II$. Adopting a coordinate system in which the coordinate along the integral lines of $X$ is denoted by $t$, and the others are denoted by $\underline{x}$, the pull-back action of the one parameter group generated by $X$ acts like $(\beta^{(X)}_{\tau}f)(t,\underline{x})=(t-\tau,\underline{x})$. Exploiting the same splitting for the covectors, we write ${\mathcal T}^{\ast}_{x}(\M\diagdown\sD)\diagdown\{0\}\equiv \mathbb{R}^{4}\ni k_{x}=(k_{xt},\underline{k_{x}})$.

We can now construct the two non-null and non-vanishing covectors $q=(0,\underline{k_{x}})$ and $q'=(-k_{xt},0)$. Since $(x,y;q,q')\notin WF(\Lambda_{\M})$, from Proposition 2.1 in \citep{Verch99} there exists an open neighborhood $V'$ of $(q,q')$, as well as a function $\psi'\in\cC_{0}^{\infty}(\mathbb{R}^{4}\times\mathbb{R}^{4};\mathbb{C})$ with $\psi'(0,0)=1$ such that, denoting $x'=(\tau,\underline{x'})$, $y'=(\tau',\underline{y'})$, there exist constants $C_{n}\geq 0$ and $\lambda_{n}>0$, such that for all $p\geqslant 1$, for all $0<\lambda<\lambda_{n}$ and for all $n\geq 1$,
\begin{equation}
\underset{k,k'\in V'}{\textrm{sup}}\left|\int d\tau d\tau' d\underline{x'}d\underline{y'}\psi'(x',y') e^{i\lambda^{-1}(k_{t}\tau+\underline{k}\underline{x'})}e^{i\lambda^{-1}(k'_{t}\tau'+\underline{k'}\underline{y'})}\Lambda_{\M}\left(\beta^{(X)}_{\tau}\otimes\beta^{(X)}_{\tau'}(F_{(\underline{x'},\underline{y'}),\lambda}^{(p)})\right)\right|<C_{n}\lambda^{n} \; ,
\label{Lambda_M-rapid_decrease}
\end{equation}
as $\lambda\rightarrow 0$, where
\[F_{(\underline{x'},\underline{y'}),\lambda}^{(p)}(z,u)\coloneqq F(x+\lambda^{-p}(z-\underline{x'}-x),y+\lambda^{-p}(u-\underline{y'}-y)) \quad \textrm{and} \quad \widehat{F}(0,0)=1 \; ,\]
where $\widehat{F}$ is the usual Fourier transform. Since $\Lambda_{\M}$ is invariant under $\beta^{(X)}_{-\tau-\tau'}\otimes\beta^{(X)}_{-\tau-\tau'}$, we infer that $\Lambda_{\M}\left(\beta^{(X)}_{\tau}\otimes\beta^{(X)}_{\tau'}(F_{(\underline{x'},\underline{y'}),\lambda}^{(p)})\right)=\Lambda_{\M}\left(\beta^{(X)}_{-\tau'}\otimes\beta^{(X)}_{-\tau}(F_{(\underline{x'},\underline{y'}),\lambda}^{(p)})\right)$. This implies that \eqref{Lambda_M-rapid_decrease} also holds if one replaces (i) $\psi'$ by $\psi(x',y')=\psi((\tau',\underline{x}),(\tau,\underline{y'}))$ and (ii) $V'$ by $V=\left\{(-k'_{t},\underline{k}),(-k_{t},\underline{k'})\in\mathbb{R}^{4}\times\mathbb{R}^{4}\, |\, ((k_{t},\underline{k}),(k'_{t},\underline{k'}))\in V'\right\}$. This is an open neighborhood of $(k_{x},0)$ as one can immediately verify since $(q,q')\in V'$, so that $(k_{x},0)\in V$, and the map $\mathbb{R}^{4}\times\mathbb{R}^{4}\ni((k_{t},\underline{k}),(k'_{t},\underline{k'}))\mapsto((-k'_{t},\underline{k}),(-k_{t},\underline{k'}))\in\mathbb{R}^{4}\times\mathbb{R}^{4}$ is an isomorphism. Hence, once again from Proposition 2.1 in \citep{Verch99}, $(x,y;k_{x},0)\notin WF(\Lambda_{\M})$.

For the case when both $x,y\in\M\diagdown\sD$, if a representative of either $B(x,k_{x})$ or $B(y,k_{y})$ lies in ${\mathcal T}^{\ast}(\sD)$, then we fall back in the case above. If no representative of both $B(x,k_{x})$ and $B(y,k_{y})$ lies in ${\mathcal T}^{\ast}(\sD)$, we can introduce a partition of unit on $\sB$ (or $\sC$) for both variables, and get a decomposition like \eqref{Lambda-M_part_unity}, for both variables. The terms of this decomposition can be analysed exactly as above.
\end{proof}

Now, we need to analyse the points of $\Lambda_{\M}$ such that $(x,y;k_{x},k_{y})\in\mathcal{N}_{g}\times\mathcal{N}_{g}$ with either $x$, either $y$, or both of them in $\M\diagdown\sD$. The case where either $x$ or $y$ is in $\M\diagdown\sD$ will be treated in {\bf Case A} below. The case when both $x$ and $y$ lie in $\M\diagdown\sD$ will be treated in {\bf Case B}.


{\bf Case A:}\ \ If $x\in\M\diagdown\sD$ and $y\in\sD$ (the symmetric case being analogous), suppose that $(x,k_{x};y,-k_{y})\in WF(\Lambda_{\M})$ and there exists a representative of $(q,k_{q})\in B(x,k_{x})$ such that $(q,k_{q})\in{\mathcal T}^{\ast}(\sD)\diagdown\{0\}$. Then $(q,y;k_{q},-k_{y})\in WF((\Lambda_{\M})_{\upharpoonright(\sD\times\sD)})$ and, by the results of {\it Part 1} above, $WF((\Lambda_{\M})_{\upharpoonright(\sD\times\sD)})$ is of Hadamard form. Since there exists only one geodesic passing through a point with a given cotangent vector, the Propagation of Singularities Theorem allow us to conclude that $(x,k_{x})\sim(y,k_{y})$ with $k_{x}\in\overline{V}_{+}$, thus $WF(\Lambda_{\M})$ is of Hadamard form. We remark that this reasoning is valid for both $x\in II$ and $x\in II'$.

We are still left with the possibility that $x\in\M\diagdown\sD$ and $y\in\sD$, but no representative of $B(x,k_{x})$ lies in ${\mathcal T}^{\ast}(\sD)\diagdown\{0\}$. We intend to show that, in this case, $(x,y;k_{x},-k_{y})\notin WF(\Lambda_{\M})$ for every $k_{y}$. Without loss of generality, we will consider $x\in II$, the case $x\in II'$ being completely analogous.

We start by choosing two functions $f,h\in\cC_{0}^{\infty}(\M;\mathbb{R})$ such that $f(x)=1$ and $h(y)=1$. Since $B(x,k_{x})$ has no representative in $\sD$, there must exist $(q,k_{q})\in B(x,k_{x})$ with $q\in\sB$ such that the coordinate $U_{q}$ is non-negative. Now, considering the supports of $f$ and $h$ to be sufficiently small, we can devise a function $\chi$ such that $\chi(U_{q},\theta,\varphi)=1$ for all $(\theta,\varphi)\in\mathbb{S}^{2}$ and $\chi=0$ on $J^{-}(supp \, h)\cap\sB$. Besides, we can define $\chi'\coloneqq 1-\chi$ and, by using a coordinate patch which identifies an open neighborhood of $supp(f)$ with $\mathbb{R}^{4}$, one can arrange a conical neighborhood $\Gamma_{k_{x}}\in\mathbb{R}^{4}\diagdown\{0\}$ of $k_{x}$ such that all the bicharacteristics $B(s,k_{s})$ with $s\in supp(f)$ and $k_{s}\in\Gamma_{k_{x}}$ do not meet any point of $supp(\chi')$. One can analyse the two-point function $\Lambda_{\M}$ as
\begin{equation}
\Lambda_{\M}(f_{k_{x}}\otimes h_{k_{y}})=\lambda_{\sB}(\chi \psi^{f_{k_{x}}}_{\sB},\psi^{h_{k_{y}}}_{\sB})+\lambda_{\sB}(\chi' \psi^{f_{k_{x}}}_{\sB},\psi^{h_{k_{y}}}_{\sB})+\lambda_{\sC}(\psi^{f_{k_{x}}}_{\sC},\psi^{h_{k_{y}}}_{\sC}) \; .
\label{Lambda_M-chi}
\end{equation}
Lemma \ref{lem_isolated_sing} above tells us that only nonzero covectors are allowed in the wave front set of $\Lambda_{\M}$. The analysis of the points of the form $(x,y;k_{x},k_{y})\in\mathcal{N}_{g}\times\mathcal{N}_{g}$ is similar to the analysis presented after equation \eqref{Lambda-M_part_unity} in the proof of the mentioned Lemma.  

{\bf Case B:}\ \ The only situation not yet discussed is the case of $x,y\notin\sD$ and $B(x,k_{x})$, $B(y,k_{y})$ having no representatives in ${\mathcal T}^{\ast}(\sD)\diagdown\{0\}$ (if either $B(x,k_{x})$ or $B(y,k_{y})$ has a representative in ${\mathcal T}^{\ast}(\sD)\diagdown\{0\}$, then we fall back in the previous cases).

As in {\bf Case A}, we will consider $x,y\in II$, the case $x,y\in II'$ being completely analogous. We introduce a partition of unit $\chi,\chi'$ on $\sB$, $\chi,\chi'\in\cC_{0}^{\infty}(\sB;\mathbb{R})$ and $\chi+\chi'=1$. Moreover, these functions can be devised such that the inextensible null geodesics $\gamma_{x}$ and $\gamma_{y}$, which start respectively at $x$ and $y$ with cotangent vectors $k_{x}$ and $k_{y}$ intersect $\sB$ in $U_{x}$ and $U_{y}$ (possibly $U_{x}=U_{y}$; we omit the subscript $_{b}$ to simplify the notation), respectively, included in two open neighborhoods, $\mathcal{O}_{x}$ and $\mathcal{O}_{y}$ (possibly $\mathcal{O}_{x}=\mathcal{O}_{y}$) where $\chi'$ vanishes. Hence, the two-point function reads
\begin{align}
\lambda_{\M}(\psi^{f_{k_{x}}},\psi^{h_{k_{y}}})&=\lambda_{\sB}(\chi\psi^{f_{k_{x}}}_{\sB},\chi\psi^{h_{k_{y}}}_{\sB})+\lambda_{\sB}(\chi\psi^{f_{k_{x}}}_{\sB},\chi'\psi^{h_{k_{y}}}_{\sB}) \nonumber \\
&+\lambda_{\sB}(\chi'\psi^{f_{k_{x}}}_{\sB},\chi\psi^{h_{k_{y}}}_{\sB})+\lambda_{\sB}(\chi'\psi^{f_{k_{x}}}_{\sB},\chi'\psi^{h_{k_{y}}}_{\sB})+\lambda_{\sC}(\psi^{f_{k_{x}}}_{\sC},\psi^{h_{k_{y}}}_{\sC}) \; .
\label{lambda-chi_chi'}
\end{align}
The results of Proposition \ref{prop_rapid-decrease}, Lemma \ref{lem_isolated_sing} and of {\bf Case A} above tell us that all but the first term in the rhs of \eqref{lambda-chi_chi'} are smooth. We will then focus on the first term. Writing the integral kernel of $\lambda_{\sB}$ as $T$, interpreted as a distribution in $(\cC_{0}^{\infty})'(\sB\times\sB)$, we notice that, as an element of $(\cC_{0}^{\infty})'(\sB\times\sB)$, $\lambda_{\sB}$ can be written as
\begin{equation}
\lambda_{\sB}(\chi\psi^{f_{k_{x}}}_{\sB},\chi\psi^{h_{k_{y}}}_{\sB})=\chi T\chi\left(\mathds{E}_{\upharpoonright \sB}\otimes\mathds{E}_{\upharpoonright \sB}(f\otimes h)\right) \; ,
\end{equation}
where $\mathds{E}_{\upharpoonright \sB}$ is the causal propagator with one entry restricted to $\sB$ and $\chi T\chi\in(\cC^{\infty})'(\sB\times\sB)$ (as an element of the dual space to $\cC^{\infty}$, $\chi T\chi$ is itself a compactly supported bidistribution). For the composition $\chi T\chi(\mathds{E}_{\upharpoonright \sB}\otimes\mathds{E}_{\upharpoonright \sB})$ to make sense as a composition of bidistributions, Theorem 8.2.13 of \citep{Hormander-I} shows that it is sufficient that
\begin{equation}
WF(\chi T\chi)\cap WF'(\mathds{E}_{\upharpoonright \sB}\otimes\mathds{E}_{\upharpoonright \sB})_{Y\times Y}=\emptyset \; .
\label{WF_comp_cond}
\end{equation}
The subscript $_{Y}$ makes sense if the bidistribution is viewed as an element of $(\cC_{0}^{\infty})'(X\times Y)$ and, for a general bidistribution $\Lambda_{2}$ of this sort\footnote{The subscript $_{Y}$ means that the ``original'' wave front set must contain the zero covector of ${\mathcal T}^{\ast}X$ and the $'$ means that the nonzero covector has its sign inverted. For more details, see section 8.2 of \citep{Hormander-I}},
\begin{equation}
WF'(\Lambda_{2})_{Y}=\left\{(y,\eta);\, (x,y;0,-\eta)\in WF(\Lambda_{2})\, \textrm{for }x\in X\right\} \; .
\label{WF'_Y}
\end{equation}
The wave front set of $\mathds{E}$ was calculated in \citep{Radzikowski96}:
\begin{equation}
WF(\mathds{E})=\left\{(x,y;k_{x},k_{y})\in{\mathcal T}^{\ast}(\M\times\M)\diagdown\{0\} \vert (x,k_{x})\sim(y,-k_{y})\right\} \; .
\label{WF_E}
\end{equation}
The wave front set of $\mathds{E}\otimes\mathds{E}$ can be read out from Theorem 8.2.9 of \citep{Hormander-I} as
\begin{equation}
WF(\mathds{E}\otimes\mathds{E})\subset\left(WF(\mathds{E})\times WF(\mathds{E})\right)\cup\left((supp\mathds{E}\times\{0\})\times WF(\mathds{E})\right)\cup\left(WF(\mathds{E})\times(supp\mathds{E}\times\{0\})\right) \; .
\label{WF_E_otimes_E}
\end{equation}
From this last equation and the fact that the zero covector is not contained in $WF(\mathds{E})$, we conclude that
\begin{equation}
WF'(\mathds{E}_{\upharpoonright \sB}\otimes\mathds{E}_{\upharpoonright \sB})_{Y\times Y}=\emptyset \; .
\label{WF_E_otimes_E_YxY}
\end{equation}
Thus the composition makes sense as a composition of bidistributions, and Theorem 8.2.13 of \citep{Hormander-I} proves that, under these conditions,
\begin{equation}
WF(\chi T\chi(\mathds{E}_{\upharpoonright \sB}\otimes\mathds{E}_{\upharpoonright \sB}))\subset WF(\mathds{E}_{\upharpoonright \sB}\otimes\mathds{E}_{\upharpoonright \sB})_{X\times X}\cup WF'(\mathds{E}_{\upharpoonright \sB}\otimes\mathds{E}_{\upharpoonright \sB})\circ WF(\chi T\chi) \; .
\label{WF_chiTchiEotimesE}
\end{equation}
The same reasoning which led to equation \eqref{WF_E_otimes_E_YxY} leads to the conclusion that the first term in the rhs of \eqref{WF_chiTchiEotimesE} is empty.

The wave front set of $T$ was calculated in Lemma 4.4 of \citep{Moretti08}. We will again introduce a coordinate system at which the coordinate along the integral lines of $X$ is denoted by $t$, the remaining coordinates being denoted by $\underline{x}$. The same splitting will be used for covectors. The wave front set of $T$ is written as
\[WF(T)=A\cup B \; ,\]
where
\begin{align}
A&\coloneqq\left\{\left((t,\underline{x}),(t',\underline{x'});(k_{t},k_{\underline{x}}),(k_{t'},k_{\underline{x'}})\right)\in{\mathcal T}^{\ast}(\sB\times\sB)\diagdown\{0\} \, \vert \, x=x' ; k_{x}=-k_{x'} ; k_{t}>0\right\} \nonumber \\
B&\coloneqq\left\{\left((t,\underline{x}),(t',\underline{x'});(k_{t},k_{\underline{x}}),(k_{t'},k_{\underline{x'}})\right)\in{\mathcal T}^{\ast}(\sB\times\sB)\diagdown\{0\} \, \vert \, \underline{x}=\underline{x'} ; k_{\underline{x}}=-k_{\underline{x'}} ; k_{t}=k_{t'}=0\right\} \; .
\label{WF_T}
\end{align}
With these at hand, the author of \citep{Moretti08} proved that the wave front set \eqref{WF_chiTchiEotimesE} is of Hadamard form.

Hence we have completed the proof of
\begin{thm}\label{theorem-omega_M_Hadamard}
The wave front set of the two-point function $\Lambda_{\M}$ of the state $\omega_{\M}$, individuated in \eqref{2ptfcn_M=B+C} is given by
\begin{equation}
WF(\Lambda_{\M})=\left\{\left(x_{1},k_{1};x_{2},-k_{2}\right) | \left(x_{1},k_{1};x_{2},k_{2}\right)\in {\mathcal T}^{*}\left(\M\times\M\right) \diagdown \{0\} ; (x_{1},k_{1})\sim (x_{2},k_{2}) ; k_{1}\in \overline{V}_{+}\right\} \; ,
\label{Wf-omega_M}
\end{equation}
thus the state $\omega_{\M}$ is a Hadamard state.
\end{thm}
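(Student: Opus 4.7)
The plan is to establish the Hadamard form of $WF(\Lambda_{\M})$ in two stages. First I would prove that the restriction $(\Lambda_{\M})_{\upharpoonright \sD\times\sD}$ already has the required wave front set, and then extend this result to the full region $\M = II\cup\sD\cup II'$ by exploiting the Propagation of Singularities Theorem (PST) applied to the bisolution $\Lambda_{\M}$ of the Klein--Gordon operator.

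For the first stage, the subtlety is that $\omega_{\M}$ is a tensor product of two KMS states at \emph{different} temperatures $\beta_{b}\neq\beta_{c}$, so it is not itself KMS (nor strictly passive) and Sahlmann--Verch's theorem for strictly passive states is not directly applicable. However, Proposition~\ref{prop_M=B+C} decomposes $\Lambda_{\M}=\Lambda_{\sB}+\Lambda_{\sC}$ and each summand inherits the KMS-type time-shift identity \eqref{Lambda-timeshift} with its own inverse temperature. This is exactly the structure needed to compute the global asymptotic pair correlation spectrum, so I would run the Sahlmann--Verch argument with the passivity hypothesis replaced by the weaker Proposition~\ref{prop-ACS}, exactly as done in \citep{DappiaggiMorettiPinamonti09} for the Schwarzschild--Unruh case. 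This yields the Hadamard form on $\sD\times\sD$, i.e.\ Lemma~\ref{lemma-omega_D_Hadamard}.

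For the extension to all of $\M$, I would combine the PST with Lemma~\ref{lem_isolated_sing} (only nonzero null covectors survive, so $WF(\Lambda_{\M})\subset\mathcal{N}_{g}\times\mathcal{N}_{g}$) and Proposition~\ref{prop_rapid-decrease} (rapid decrease of the horizon norms along appropriate null directions). The argument splits according to how the bicharacteristic strips $B(x,k_{x})$ and $B(y,k_{y})$ sit relative to $\sD$. When either strip has a representative in ${\mathcal T}^{\ast}\sD\setminus\{0\}$, PST together with Lemma~\ref{lemma-omega_D_Hadamard} immediately propagates the Hadamard form through $\Hor_{b}^{0}$ or $\Hor_{c}^{0}$ into $II$ or $II'$; here uniqueness of the null geodesic through $(x,k_{x})$ delivers the relation $(x,k_{x})\sim(y,k_{y})$ with $k_{1}\in\overline{V}_{+}$. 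The remaining configuration, in which neither strip reaches $\sD$, is handled by introducing a partition of unity $\chi+\chi'=1$ on each horizon chosen so that the cotangent directions of interest avoid $\mathrm{supp}\,\chi'$, and decomposing $\Lambda_{\M}(f_{k_{x}}\otimes h_{k_{y}})$ via \eqref{lambda-chi_chi'} into five terms; Proposition~\ref{prop_rapid-decrease} and Lemma~\ref{lem_isolated_sing} show that four of the terms are smooth.

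The step I expect to be the main obstacle is the surviving ``diagonal'' term $\lambda_{\sB}(\chi\psi^{f_{k_{x}}}_{\sB},\chi\psi^{h_{k_{y}}}_{\sB})$ (and its $\sC$ analogue). This must be analysed as the composition of the compactly supported bidistribution $\chi T\chi$, whose wave front set is given by \eqref{WF_T} after Moretti, with $\mathds{E}_{\upharpoonright\sB}\otimes\mathds{E}_{\upharpoonright\sB}$, whose wave front set is read off from \eqref{WF_E}--\eqref{WF_E_otimes_E}. Verifying the composability condition $WF(\chi T\chi)\cap WF'(\mathds{E}_{\upharpoonright\sB}\otimes\mathds{E}_{\upharpoonright\sB})_{Y\times Y}=\emptyset$, which follows because $WF(\mathds{E})$ contains no zero covector, and then estimating the wave front set of the composition via H\"ormander's Theorem 8.2.13 of \citep{Hormander-I}, will isolate precisely those pairs $(x,k_{x};y,-k_{y})$ with $(x,k_{x})\sim(y,k_{y})$ reaching $\sB$ on the admissible side and $k_{1}\in\overline{V}_{+}$. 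Combining this with the symmetric argument on $\sC$, noting that the inclusion of wave front sets obtained in this way is automatically saturated to an equality by the general fact recorded after Definition~\ref{Hadamard-wf}, will yield \eqref{Wf-omega_M} and hence the Hadamard property.
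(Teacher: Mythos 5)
Your overall plan reproduces the paper's own two-part proof: first the Hadamard form on $\sD\times\sD$ via the Sahlmann--Verch machinery with passivity replaced by Proposition \ref{prop-ACS} (made available by the KMS-type identity \eqref{Lambda-timeshift} for each of $\Lambda_{\sB}$, $\Lambda_{\sC}$ separately), and then the extension to $\M$ through the Propagation of Singularities Theorem, Lemma \ref{lem_isolated_sing}, Proposition \ref{prop_rapid-decrease}, a partition of unity on the past horizons, and the analysis of the surviving term as the composition $\chi T\chi\left(\mathds{E}_{\upharpoonright\sB}\otimes\mathds{E}_{\upharpoonright\sB}\right)$ via Theorem 8.2.13 of \citep{Hormander-I} together with Moretti's computation \eqref{WF_T}. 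That is exactly the route taken in the text.

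There is, however, one configuration that your case split mishandles, and as stated the corresponding step would fail. You dichotomize the extension into ``either strip has a representative in ${\mathcal T}^{\ast}\sD\setminus\{0\}$, then PST propagates the Hadamard form'' versus ``neither strip reaches $\sD$, then partition of unity.'' Consider $x\in II$ (or $II'$), $y\in\sD$, with $B(x,k_{x})$ never entering ${\mathcal T}^{\ast}(\sD)\setminus\{0\}$ (the null geodesic from $x$ cotangent to $k_{x}$ meets $\sB$ at $U_{b}\geq 0$). Since $y\in\sD$, the strip $B(y,k_{y})$ trivially meets ${\mathcal T}^{\ast}\sD$, so this falls into your first branch; but PST cannot reduce anything to $WF((\Lambda_{\M})_{\upharpoonright\sD\times\sD})$ there, because only the $y$-entry can be transported inside $\sD$, and no null geodesic connects $x$ and $y$, so ``uniqueness of the null geodesic delivers $(x,k_{x})\sim(y,k_{y})$'' is empty. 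What must actually be shown for the asserted equality \eqref{Wf-omega_M} is that every such $(x,y;k_{x},-k_{y})$ is a \emph{regular} direction. The paper proves this (second half of its Case A) by the splitting \eqref{Lambda_M-chi} with $\chi$ equal to $1$ near the crossing point of the geodesic with $\sB$ and vanishing on $J^{-}(\mathrm{supp}\,h)\cap\sB$: the $\chi'$- and $\sC$-terms decay rapidly in $k_{x}$ by Proposition \ref{prop_rapid-decrease}, while the $\chi$-term is controlled through the smoothness of the kernel $\chi T\chi''$ (disjoint supports) giving rapid decrease, plus a translation-invariance argument \`a la Proposition 2.1 of \citep{Verch99} for the mirror situation $x\in\sD$, $y\in II$. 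Your toolkit contains all the needed ingredients, but this exclusion argument has to be added explicitly; it is not a PST propagation. As a minor point, the saturation of the inclusion to the equality \eqref{Wf-omega_M} is not the remark following Definition \ref{Hadamard-wf} (which concerns differences of Hadamard states); it follows because the antisymmetric part of $\Lambda_{\M}$ is fixed to be the commutator with Radzikowski's wave front set \eqref{WF_E}, as in item (ii) of Remark 5.9 of \citep{SahlmannVerch01}, which is what the paper invokes.
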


\chapter{Conclusions}\label{chap_concl}

The existence of Hadamard states on a general globally hyperbolic spacetime is long known \citep{FullingNarcowichWald81}, although it was proven in a rather indirect way. On the one hand, the Hadamard condition is known to be sufficient for the quantum field theory to be renormalizable \citep{BruFreKoe96,BruFre00,HoWa02}. Hence, Hadamard states have been successfully employed in the renormalization of the energy-momentum tensor \citep{Moretti03,Wald77} and the evaluation of the backreaction of quantum fields in the background spacetime \citep{DappiaggiFredenhagenPinamonti08,DappiaggiHackPinamonti09,DappiaggiHackPinamonti10} (see \citep{FredenhagenHack13} for a recent review). On the other hand, it was recently proven that it is not possible to uniquely construct a Hadamard state in any globally hyperbolic spacetime \citep{FewsterVerch12,FewsterVerch_sf12}. Moreover, it is not yet known whether, generically, the Hadamard condition is a necessary condition for renormalizability \citep{FewsterVerch13}.

It is thus interesting to construct explicit examples of Hadamard states in specific (classes of) spacetimes. This was the objective of this Thesis. Although our first two examples can be applied in quite a broad class of spacetimes, they are not uniquely singled out by the spacetime geometry. The last example presented here, on the other hand, is constructed solely from the geometrical features of the Schwarzschild-de Sitter spacetime. In other spacetimes possessing two horizons but without spherically symmetric spatial sections, the decay rates \eqref{Sch-dS_decay} would be different (see \citep{DafermosRodnianski07}).

The {\it States of Low Energy} \citep{ThemBrum13} formulated in chapter \ref{chap_sle} were intended to minimize the expectation value of the smeared energy density and were constructed in expanding spacetimes without spatial symmetries. 
Despite its quite general definition, we do not claim that this is the most general possible, because, generically, the metric in a globally hyperbolic spacetime is of the form \eqref{metric-GH}. Our construction relies on the existence of time-independent modes. A more general construction would require different techniques that do not rely on the ocurrence of such modes, such as an analysis based on pseudo-differential calculus, as the authors of \citep{GerardWrochna12} made in order to construct Hadamard states. Furthermore, the states here defined depend on the particular test function chosen for the smearing. As noted at the end of section \ref{SLE-est}, such a dependence would only be dropped in a static spacetime, the case in which our state would reduce to the static vacuum. From a more practical point of view, the particle production process on RW spacetimes was shown by Degner \citep{Degner09} not to be strongly dependent on the test function chosen. He also showed \citep{Degner13} that, on de Sitter spacetime, if the support of the test function is taken on the infinite past, the SLE asymptotically converges to the so-called Bunch-Davies vacuum state \citep{Allen85}. This result is also valid for asymptotically de Sitter spacetimes \citep{DappiaggiMorettiPinamonti09c}. In general, these states differ from the SLE because, in their case, whenever the spacetime possesses an everywhere timelike Killing vector field, the one-parameter group which implements the action of this vector field on the GNS representation associated to that state has a positive self-adjoint generator. In the coordinate system used in the present work, this would amount to $\beta_{j}=0$. Thus the coincidence is only asymptotic, in the sense described above.

The {\it ``Vacuum-like'' states} \citep{BrumFredenhagen14} formulated in chapter \ref{chap_vac-like} were defined as a variation of the proposal of Sorkin and Johnston \citep{AfshordiAslanbeigiSorkin12}, who tried to construct a vacuum state in a generic globally hyperbolic spacetime. Such a state would be in conflict with the results mentioned in the first paragraph \citep{FewsterVerch12,FewsterVerch_sf12} and it was rapidly shown that the original proposal did not give rise, in general, to a Hadamard state \citep{FewsterVerch-SJ12}. The state formulated in \citep{AfshordiAslanbeigiSorkin12} was defined in a relatively compact globally hyperbolic spacetime isometrically embedded in a larger, globally hyperbolic spacetime. The kernel of the two-point function of these states is the positive spectral part of the commutator function. Their proposal failed to generate a Hadamard state because the time coordinate, in the submanifold, ranged over a closed interval. We considered an open timelike interval and, moreover, we smeared the commutator function with positive test functions. We tested this idea in static and expanding spacetimes and proved that our states are well-defined pure Hadamard states. They are, however, in contrast to the S-J states, not uniquely associated to the spacetime. Several interesting questions might then be posed. First one would like to generalize the construction to generic hyperbolic spacetimes which are relatively compact subregions of another spacetime. This involves some technical problems but we do not see an unsurmountable obstruction (for example, the construction of these states also relies on the existence of time-independent modes, as above). In the best case scenario these states (as also the original S-J states) might converge to a Hadamard state as the subregion increases and eventually covers the full larger spacetime. Such a situation occurs in static spacetimes, and it would be interesting to identify the properties of a spacetime on which this procedure works. There is an interesting connection to the proposal of the fermionic projector of Finster \citep{Finster11} where an analogous construction for the Dirac field was considered. The case of the scalar field is however much easier because of the Hilbert space structure of the functions on the manifold, in contrast to the indefinite scalar product on the spinor bundle of a Lorentzian spacetime. Another interesting question concerns the physical interpretation. We do not expect that these states should be interpreted as some kind of vacuum, but we would like to better understand the relation of these states with the States of Low Energy. A numerical analysis of the expectation value of the energy momentum tensor in these states, similarly to \citep{Degner09,Degner13}, might be a good firts step in this direction. Both the {\it ``Vacuum-like'' states} and the {\it States of Low Energy} are suitable as initial states for the problem of backreaction.

The state we constructed in the Schwarzschild-de Sitter spacetime in chapter \ref{chap_Sch-dS} \citep{BrumJoras14}, to our knowledge, is the first explicit example of a Hadamard state in this spacetime. It is not defined in the complete extension of the spacetime, but rather in the ``physical'' (nonextended) region between the singularity at $r=0$ and the singularity at $r=\infty$. In this sense, our state is not to be interpreted as the Hartle-Hawking state in this spacetime, whose nonexistence was proven in \citep{KayWald91}. Neither can it be interpreted as the Unruh state because, in the de Sitter spacetime, the Unruh state can be extended to the whole spacetime while retaining the Hadamard property \citep{NarnhoferPeterThirring96}. Our state cannot have such a feature. We constructed the state solely from geometrical features of the Schwarzschild-de Sitter spacetime and, consequently, it is invariant under the action of its group of symmetries. We showed it can be isometrically mapped to a state on the past horizons, a feature which, for the analogous state constructed in the Schwarzschild case \citep{DappiaggiMorettiPinamonti09}, sufficed to prove it a KMS state. Our state is not a KMS state because, under this mapping, the functional is written as the tensor product of two functionals, each corresponding to a KMS state at a different temperature. We further remark that, even in the Schwarzschild spacetime, the existence of the Hartle-Hawking state, whose features were analysed in \citep{KayWald91}, was only recently proved in \citep{Sanders13}, where the author analysed a Wick rotation in the Killing time coordinate. We believe that the method put forward in \citep{Sanders13}, if applied to the Schwarzschild-de Sitter spacetime, would give rise to the contradictions pointed out in \citep{KayWald91}.



\chapter*{References}










\end{document}